\documentclass[11pt]{article}

\usepackage[utf8]{inputenc}

 \setlength{\oddsidemargin}{0.25in}
 \setlength{\evensidemargin}{\oddsidemargin}
 \setlength{\textwidth}{6in}
 \setlength{\textheight}{8in}
 \setlength{\topmargin}{-0.0in}
\usepackage[margin=1in]{geometry}
\usepackage{booktabs} 
\usepackage[ruled]{algorithm2e} 

\SetAlFnt{\small}
\SetAlCapFnt{\small}
\SetAlCapNameFnt{\small}
\SetAlCapHSkip{0pt}
\IncMargin{-\parindent}

\usepackage{pifont}
\usepackage{nicefrac}
\usepackage[numbers]{natbib}

\usepackage{color}
\usepackage{wrapfig}
\usepackage{graphicx}
\usepackage{amsmath,amssymb,amsthm}
\usepackage{bm}
\usepackage{rotating}
\usepackage{multicol}
\usepackage{titlesec}
\usepackage{latexsym}
\usepackage{enumitem}
\usepackage{hyperref}[backref=page]
\usepackage{float}
\usepackage{caption}
\hypersetup{
     colorlinks   = true,
     linkcolor    = red, 
     urlcolor     = blue, 
	 citecolor    = blue 
}



\newcounter{newct}

\newcommand{\bv}{\begin{array}}


\newcommand{\x}{{ \bf x}}

\newcommand{\appProp}[3]{\vspace{1mm}\noindent{\bf Proposition~\ref{#1}.} {\bf (#2). }{\em #3 \vspace{1mm}}}

\newcommand{\appLem}[3]{\vspace{1mm}\noindent{\bf Lemma~\ref{#1}.} {\bf (#2). } {\em #3}}
\newcommand{\appThm}[3]{\vspace{1mm}\noindent{\bf Theorem~\ref{#1}.} {\bf (#2). }{\em #3 \vspace{1mm}}}

\newcommand{\appPropnoname}[2]{\vspace{1mm}\noindent{\bf Proposition~\ref{#1}.} {\em #2}}


\newtheorem{thm}{Theorem}
\newtheorem{dfn}{Definition}

\newtheorem{lem}{Lemma}
\newtheorem{ex}{Example}

\newtheorem{prop}{Proposition}
\newtheorem{coro}{Corrollary}

\newtheorem{claim}{Claim}

\newcommand{\vbb}{{\vec{b}}}

\newcommand{\piuni}{{\pi}_{\text{uni}}}
\newcommand{\ma}{\mathcal A}

\newcommand{\mm}{\mathcal M}

\newcommand{\ml}{\mathcal L}
\newcommand{\mV}{\mathcal V}

\newcommand{\calW}{\mathcal W}
\newcommand{\calG}{\mathcal G}
\newcommand{\calS}{\mathcal S}

\newcommand{\calI}{\mathcal I}
\newcommand{\ra}{\rightarrow}

\newcommand{\cor}{r} 

\newcommand{\rank}{\text{Rank}}
\newcommand{\sign}{\text{Sign}}

\newcommand{\pc}[2]{\text{PS}_{#1}(#2)}

\newcommand{\puts}{W}
\newcommand{\others}{\text{others}}

\newcommand{\kt}{\text{KT}}

\newcommand{\slt}[5]{\widetilde{\text{Tie}}^{\max}_{#1}(#2,#4,#5)}

\newcommand{\ilt}[5]{\widetilde{\text{Tie}}^{\min}_{#1}(#2,#4,#5)}

\newcommand{\expect}{{\mathbb E}}

\newcommand{\hist}{\text{Hist}}

\newcommand{\ind}{\text{Id}}

\newcommand\lirong[1]{{\color{red} \footnote{\color{red}Lirong: #1}} }

\newcommand{\Omit}[1]{}

\newcommand{\mc}{\text{MC}}

\newcommand{\md}[2]{{\dim_{#1,n}^{\max}}(#2)}
\newcommand{\mo}{\mathcal O}

\newcommand{\me}{\mathcal E}

\newcommand{\ba}{{\mathbf A}}
\newcommand{\be}{{\mathbf E}}
\newcommand{\bs}{{\mathbf S}}
\newcommand{\pba}[1]{{\mathbf A}^{#1}}
\newcommand{\pbe}[1]{{\mathbf E}^{#1}}
\newcommand{\pbs}[1]{{\mathbf S}^{#1}}

\newcommand{\vXp}{{\vec X}_{\vec\pi}}
\newcommand{\pvX}[1]{{\vec X}_{#1}}

\newcommand{\bd}{{\mathbf D}}
\newcommand{\bF}{{\mathbf F}}

\newcommand{\conv}{\text{CH}}
\newcommand{\cone}{\text{Cone}}

\newcommand{\GCD}[1]{\text{GCD}(#1)}
\newcommand{\wmg}{\text{WMG}}
\newcommand{\umg}{\text{UMG}}

\newcommand{\pair}{\text{Pair}}

\newcommand{\score}{\text{Score}}
\newcommand{\plu}{\text{Plu}}
\newcommand{\copeland}{\text{Cd}_\alpha}
\newcommand{\maximin}{\text{MM}}
\newcommand{\rp}{\text{RP}}
\newcommand{\schulze}{\text{Sch}}
\newcommand{\stv}{\text{STV}}
\newcommand{\coombs}{\text{Coombs}}

\newcommand{\eo}{\text{EO}}
\newcommand{\ties}{\text{Ties}}

\newcommand{\poly}{{\mathcal H}}
\newcommand{\polyn}{{\mathcal H}_{n}}
\newcommand{\polynint}{{\mathcal H}_n^{\mathbb Z}}
\newcommand{\polyz}{{\mathcal H}_{\leqslant 0}}

\newcommand{\ppoly}[1]{{\mathcal H}^{#1}}
\newcommand{\ppolyz}[1]{{\mathcal H}_{\leqslant 0}^{#1}}

\newcommand{\upoly}{{\mathcal C}}

\newcommand{\upolynint}{{\mathcal C}_n^{\mathbb Z}}
\newcommand{\upolyz}{{\mathcal C}_{\leqslant 0}}

\newcommand{\cpoly}[1]{{\mathcal H}_{#1}}

\newcommand{\cpolynint}[1]{{\mathcal H}_{#1,n}^{\mathbb Z}}
\newcommand{\cpolyz}[1]{{\mathcal H}_{#1,\leqslant 0}}

\newcommand{\invert}[1]{\left(#1\right)^\top}
\newcommand{\restrict}[2]{#1|_{#2}}

\newcommand{\pale}{\mo_\ma}
\newcommand{\palo}{\mo_\ma'}
\newcommand{\pal}[1]{\mo_\ma^{#1}}
\newcommand{\uge}{{\mathcal G}_\ma}
\newcommand{\ugo}{{\mathcal G}_\ma'}
\newcommand{\ug}[1]{{\mathcal G}_\ma^{#1}}


\begin{document}
\setcounter{page}{0}
\title{
How Likely Are Large Elections Tied?} 

\author{ Lirong Xia, 
Rensselaer Polytechnic Institute, Troy, NY 12180, USA,\\
xialirong@gmail.com }
\date{}

\maketitle
\begin{abstract} 
Understanding the likelihood for an election to be tied is a classical topic in many disciplines including social choice, game theory, political science, and public choice. The problem is important not only as a fundamental problem in probability theory and statistics, but also because  it plays a  critical role in many other important issues such as indecisiveness of voting, strategic voting, privacy of voting, voting power, voter turnout, etc. Despite a large body of literature and the common belief that ties are rare, little is known about how rare ties are in large elections except for a few simple positional scoring rules under the i.i.d.~uniform distribution over the votes, known as the {\em Impartial Culture (IC)} in social choice. In particular, little progress was made after Marchant explicitly posed the likelihood of $k$-way ties under IC as an open question in 2001~\citep{Marchant2001:The-probability}.

We give an asymptotic answer to this open question for a wide range of commonly studied voting rules under a more general and realistic model, called the {\em smoothed social choice framework}~\citep{Xia2020:The-Smoothed}, which was inspired by the celebrated smoothed complexity analysis~\citet{Spielman2009:Smoothed}. We prove dichotomy theorems on the smoothed likelihood of ties under   positional scoring rules, edge-order-based rules, and some multi-round score-based elimination rules, which include commonly studied voting rules such as plurality, Borda, veto, maximin, Copeland, ranked pairs, Schulze, STV, and Coombs  as special cases. We also complement the theoretical results by experiments on synthetic data  and real-world rank  data on Preflib~\citep{Mattei13:Preflib}.  Our main technical tool is an improved  characterization of the smoothed likelihood for a Poisson multinomial variable to be in a polyhedron,  by exploring the interplay between the V-representation and the matrix representation of polyhedra and might be of independent interest. 

\end{abstract}

\newpage
\setcounter{page}{1}
\section{Introduction}

Suppose a presidential election between two alternatives (candidates) $a$ and $b$ will be held soon, and there are  $n$ agents (voters). Each agent independently votes for an alternative with probability $0.5$, and the alternative with more votes wins. How likely will the election end up with a tie? What if there are more than two alternatives, agents rank the alternatives, and a rank-based voting rule is used to choose the winner? What if the distribution is not independent and identically distributed (i.i.d.)~and is controlled by an adversary?


Understanding the likelihood of tied elections is an important and classical topic in many disciplines including social choice, game theory, political science, and public choice, not only because it is a fundamental problem in probability theory and statistics, but also because it plays a critical role in many important issues. For example, ties are undesirable in the context of indecisiveness of voting~\cite{Gillett1977:Collective}, strategic voting~\cite{Gibbard73:Manipulation,Satterthwaite75:Strategy}, privacy of voting~\cite{Liu2020:How}, etc. On the other hand, ties are desirable in the context of voting power~\cite{Banzhaf-III1968:One-Man}, voter turnout~\citep{Downs1957:An-Economic,Riker1968:A-Theory}, etc.

While the likelihood of ties for two alternatives is well-understood~\citep{Banzhaf-III1968:One-Man,Beck75:Note}, we are not aware of a rigorous mathematical analysis for elections with three or more alternatives except for a few simple voting rules.
Previous studies were mostly done along three dimensions: (1) the voting rule used in the election, (2) the indecisiveness of the outcome, measured by the number of tied alternatives $k$, and (3) the statistical model for generating votes.   See Section~\ref{sec:related-work} for more discussions.

Despite these efforts, the following question  largely remains open. 

\vspace{2mm}
{\em \hfill How likely are large elections tied under realistic models?\hfill}
\vspace{2mm}

Specifically, Marchant~\cite{Marchant2001:The-probability} posed the likelihood of ties beyond certain positional scoring rules under the i.i.d.~uniform distribution, known as the {\em Impartial Culture (IC)} in social choice, as an open question in 2001, but we are not aware of any progress afterwards. While IC has been a popular choice in social choice theory, it has also been widely criticized of being unrealistic~\citep{Lehtinen2007:Unrealistic}.

In fact, the question is already highly challenging  under IC as illustrated in Example~\ref{ex:intro} below. Consider the probability of $3$-way ties under the Borda rule for $3$ alternatives and $n$ agents. Borda is a {positional scoring rule}, which scores every alternative according to its rank. Under Borda, each agent uses a linear order over the alternatives to represent his/her preferences, and the $i$-th ranked alternative gets $m-i$ points. The winners are the alternatives with maximum total points. 

\begin{ex}
\label{ex:intro}
Let $\ma = \{1,2,3\}$ denote the set of alternatives. For each linear order $R$ over $\ma$, let $X_R$ denote the random variable that represents the number of agents whose votes are $R$, when their votes are generated uniformly at random (i.e., IC). For example, $X_{123}$ represents the multiplicity of $1\succ 2\succ 3$. Then, the probability of $3$-way ties under Borda  w.r.t.~IC can be represented by the probability for the following system of linear equations to hold, where (\ref{eq:firstcond}) states that alternatives $1$ and $2$ are tied, (\ref{eq:secondcond}) states that $2$ and $3$ are tied, and (\ref{eq:thirdcond}) states that $1$ and $3$ are tied.
\vspace{-1mm}
\begin{align}
&2X_{123}+2 X_{132} + X_{213}+ X_{312} = 2X_{213}+2 X_{231} + X_{123}+ X_{321}\label{eq:firstcond}\\
& 2X_{213}+2 X_{231} + X_{123}+ X_{321} = 2X_{312}+2 X_{321} + X_{132}+ X_{231} \label{eq:secondcond}\\
&2X_{123}+2 X_{132} + X_{213}+ X_{312} = 2X_{312}+2 X_{321} + X_{132}+ X_{231} \label{eq:thirdcond}
\end{align}
\end{ex} 

The difficulty in accurately bounding the likelihood of ties comes from two types of statistical correlations.  The first type consists of correlations among components of $X$. That is, for any pairs of linear orders $R$ and $W$, $X_R$ and $X_W$ are statistically dependent. The second type consists of correlations among equations, and more generally, inequalities as we will see in the general problem studied in this paper. For example, while it is straightforward to see that (\ref{eq:firstcond}) and (\ref{eq:secondcond})  implies (\ref{eq:thirdcond}) in Example~\ref{ex:intro}, it is unclear how much correlation exists between (\ref{eq:firstcond}) and (\ref{eq:secondcond}). 
Existing asymptotic tools  such as multivariate Central Limit Theorems and Berry-Esseen-type Theorems~\cite{Bentkus2005:A-Lyapunov-type,Valiant2011:Estimating,Daskalakis2016:A-Size-Free,Diakonikolas2016:The-fourier,Raic2019:A-multivariate} (a.k.a.~Lyapunov-type bounds) contain  an $O(n^{-0.5})$ or higher error bound, which are too coarse and do not match the lower bound that will be proved in this paper.  The problem becomes more challenging for  inequalities, other voting rules, other number of alternatives, other $k$'s, and non-i.i.d.~distributions over votes.

\paragraph{\bf The Model.} We address the likelihood of ties under the {\em smoothed social choice} framework~\cite{Xia2020:The-Smoothed}, which is inspired by the celebrated {\em smoothed complexity analysis}~\cite{Spielman2009:Smoothed}. We believe that the framework is   more general and realistic than the  extensively studied i.i.d.~models, especially IC. In the framework, agents' ``ground truth'' preferences can be arbitrarily correlated and are chosen by an adversary, and then independent noises are added to form their votes. Mathematically,  the adversary chooses a distribution $\pi_j$ for each agent $j$ from a set $\Pi$ of distributions  over all linear orders over the alternatives, under which the probability of various events of interest are studied, for example Condorcet's paradox and satisfaction of axioms~\cite{Xia2020:The-Smoothed}.


\paragraph{\bf Our Contributions.}  In this paper, we adopt the statistical model in~\cite{Xia2020:The-Smoothed} to formulate and study the smoothed likelihood of ties.  Given an (irresolute) voting rule $\cor$,  $2\le k\le m$, and $n\in\mathbb N$ agents, the {\em max-adversary} aims to maximize the likelihood of $k$-way ties, denoted by $\slt{\Pi}{\cor}{m}{k}{n}$, by choosing  $\vec \pi = (\pi_1,\ldots,\pi_n)\in \Pi^n$. Formally,  
\begin{equation}
\label{equ:max-sTies}
\slt{\Pi}{\cor}{m}{k}{n} \triangleq \sup\nolimits_{\vec \pi\in\Pi^n}\Pr\nolimits_{P\sim\vec \pi}\left(|\cor(P)|=k\right)
\end{equation}
Similarly, the {\em min-adversary} aims to minimize the likelihood of $k$-way ties defined as follows:
\begin{equation}
\label{equ:min-sTies}\ilt{\Pi}{\cor}{m}{k}{n} \triangleq \inf\nolimits_{\vec \pi\in\Pi^n}\Pr\nolimits_{P\sim\vec \pi}\left(|\cor(P)|=k\right)
\end{equation}

We call $\slt{\Pi}{\cor}{m}{k}{n}$ (respectively, $\ilt{\Pi}{\cor}{m}{k}{n}$)  the {\em max} (respectively, {\em min}) {\em smoothed likelihood of ties}. When $\Pi$ consists of a single distribution $\pi$, $\slt{\Pi}{\cor}{m}{k}{n}$ and $\ilt{\Pi}{\cor}{m}{k}{n}$ coincide with each other and become the likelihood of ties under i.i.d.~distribution $\pi$. In particular, when $\Pi=\{\pi_{\text{uni}}\}$, where $\pi_{\text{uni}}$ is  the uniform distribution,  $\slt{\Pi}{\cor}{m}{k}{n}$ and $\ilt{\Pi}{\cor}{m}{k}{n}$  become the classical analysis of ties under IC. As discussed in~\citep{Xia2020:The-Smoothed}, the smoothed social choice framework allows agents' ground truth preferences to be arbitrarily correlated, while the noises are independent, which is a standard assumption in many literatures such as behavior science, economics, statistics, and smoothed complexity analysis.

Our main technical results are asymptotic characterizations of the smoothed likelihood of ties for a fixed number of at least three alternatives ($m\ge 3$) in large elections ($n\ra\infty$). Informally, our main results can be summarized as follows.


\vspace{2mm}\noindent{\bf Main results: smoothed likelihood of ties, informally put. }{\em Under mild assumptions on $\Pi$, for many commonly studied voting rules $r$, for any fixed $m\ge 3$, any  $2\le k\le m$,  and any $n\in \mathbb N$, $\slt{\Pi}{\cor}{m}{k}{n}$ (respectively, $\ilt{\Pi}{\cor_{\vec s}}{m}{k}{n}$) is either $0$, $\exp(-\Theta(n))$, or $\Theta(\text{poly}^{-1}(n))$.
\vspace{2mm}}

More precisely, we prove
 Theorem~\ref{thm:score} for integer positional scoring rules (including plurality, Borda, veto), Theorem~\ref{thm:eorule} for edge-order-based rules (including maximin, Schulze, ranked pairs, and Copeland, see Definition~\ref{dfn:eorule}), and Theorem~\ref{thm:STV-Coombs} for  STV and Coombs.  The formal statements of the theorems also characterize the condition for each ($0$, exponential, or polynomial) case as well as asymptotically tight bounds in the polynomial cases. 

When ties are undesirable, e.g., in the context of indecisiveness of voting, strategic voting, or privacy, a low max smoothed likelihood is good news. When ties are desirable, e.g., in the context of voting power and voter turnout, a high min smoothed likelihood is good news. Our theorems therefore completely characterize conditions for good news in different contexts.

Straightforward applications of our theorems answer the open question by Marchant~\cite{Marchant2001:The-probability} for many commonly studied voting rules as summarized in Table~\ref{tab:corollaries} below. 

\begin{table}[htp]
\centering
\caption{\small Probability of $k$-way ties ($2\le k\le m$) under some commonly studied voting rules w.r.t.~IC. For Copeland$_\alpha$, $l_\alpha$ is the minimum positive integer s.t.~$\alpha l_\alpha \in\mathbb Z$.\label{tab:corollaries}}
\resizebox{\textwidth}{!}
{
\begin{tabular}{ |@{}c@{}|}
\begin{tabular}{|@{}c@{}|@{\ }c@{\ }|@{}c@{}|}
\hline \begin{tabular}{@{}l@{}}\bf\boldmath Int.~Pos.~scoring (Coro.~\ref{coro:scoringtie})\\\bf\boldmath STV and Coombs (Prop.~\ref{prop:stv-Coombs}) \end{tabular} & \begin{tabular}{@{}l@{}}\bf\boldmath  maximin (Prop.~\ref{prop:maximin})\\ 
 \bf\boldmath  Schulze (Prop.~\ref{prop:schulze}) \end{tabular}& \bf\boldmath  ranked pairs (Prop.~\ref{prop:rp})\\
\hline 
$\left\{\begin{array}{@{}l l@{}}0 &\text{\begin{tabular}{@{}l}if no profile of $n$ votes\\ contains a  $k$-way tie\end{tabular}} \\
\Theta(n^{-\frac{k-1}{2}}) &\text{otherwise}
\end{array}\right.$&  \begin{tabular}{@{}c@{}}$\Theta(n^{-\frac{k-1}{2}})$ \end{tabular} &\begin{tabular}{@{}c@{}|@{\ }c@{\ }}
Lower & Upper\\
\hline
$\left\{\text{\begin{tabular}{@{}l@{}}
$\Omega(n^{-\frac{k-1}{2}})$ \\
$\Omega(n^{-\frac{\lceil \log k\rceil}{2}})$ if 
$m\ge k+5 \lceil \log k\rceil$
\end{tabular}}\right.$
&  $n^{-\Omega(\frac{\log k}{\log\log k})}$
\end{tabular}
\end{tabular}
\\
\hline \hline
\bf \boldmath Copeland$_\alpha$ ($0\le \alpha\le 1$) (Prop.~\ref{prop:copeland}) \\
\hline
$\left\{\begin{array}{ll}
0 &\text{if } 2\nmid n, 2\mid k, \text{and }k\ge m-1\\
\Theta(n^{-\frac{k}{4}}) &\text{if } 2\mid n,  2\mid  k,  { and }\left\{\begin{array}{l}(1)\ k=m, \text{ or} \\ 
(2)\ k=m-1 \text{ and } \alpha\ge \frac 12, \text{ or}\\
(3)\ k=m-1 \text{ and } k\le l_\alpha(l_\alpha+1)
\end{array}\right.\\
\Theta\left(n^{-\frac{l_\alpha(l_\alpha+1)}{4}}\right) &\text{if } 2\mid n,  2\mid  k,  k=m-1, \alpha<\frac12,\text{ and }k> l_\alpha(l_\alpha+1)\\
\Theta(1) &\text{otherwise (i.e., if }2\nmid k \text{ or }k\le m-2\text{)}
\end{array}\right.$\\
\hline
\end{tabular}
}
\end{table}

Roughly speaking, Table~\ref{tab:corollaries} reveals the following  ranking over the voting rules w.r.t.~their likelihood of $k$-way ties under IC, for every $2\le k\le m$.
$$\{\text{Int. Pos. Scoring, STV, Coombs}\}\le \{\text{maximin, Schulze}\}\le \text{ranked pairs}\le \text{Copeland}$$


A  closely related question  is the likelihood of {\em any-way} ties, sometimes referred to as {\em indecisiveness}~\citep{Marchant2001:The-probability},  under IC, i.e., the election admits a $k$-way tie for any $2\le k\le m$. It is not hard to see from Table~\ref{tab:corollaries} that such likelihood is dominated by the probability of $2$-way ties and is either $0$ or $\Theta(\frac{1}{\sqrt n})$ for all rules in the table except ranked pairs and Copeland, which are covered by Proposition~\ref{prop:rp}  and Proposition~\ref{prop:copeland}, respectively. To the best of our knowledge, these results are new, except for plurality and Borda. Experiments on synthetic data generated from IC confirm these observations, while experiments on Preflib data~\cite{Mattei13:Preflib} reveal a difference order, where ties are rare under Borda (1.6\%) and Copeland (2.6\%), and are quite common under veto (31.3\%) due to situations where $m>n$.

\paragraph{\bf Technical Innovations.} 
The proofs of the smoothed likelihood of ties in this paper follow the same high-level idea. We first model the existence of a  $k$-way tie by systems of linear inequalities that are similar to the ones in Example~\ref{ex:intro}. In this way, the likelihood of  ties becomes the likelihood for the histogram of the randomly generated profile, which is a {\em Poisson multivariate variable (PMV)}, to be in the polyhedron $\poly$  represented by the linear inequalities. 
Then, we prove a dichotomous characterization (Theorem~\ref{thm:maintechh})  for a PMV to be in $\poly$, and finally apply Theorem~\ref{thm:maintechh} (more precisely, its extension Theorem~\ref{thm:union-poly} to unions of multiple polyhedra) to characterize the smoothed likelihood of ties.  

More precisely, given $n,q\in\mathbb N$ and a vector $\vec \pi=(\pi_1,\ldots,\pi_n)$ of $n$ distributions over  $\{1,\ldots,q\}$, an $(n,q)$-PMV is denoted by $\vXp$, which represents the histogram of $n$ independent random variables whose distributions are $\{\pi_1,\ldots,\pi_n\}$, respectively.

\appThm{thm:maintechh}{The  PMV-in-polyhedron theorem, informally put} { Let $\poly$ denote a polyhedron and $\Pi$ denote a set of distributions that satisfy some mild conditions,  for any $n\in\mathbb N$, 

\hfill\begin{tabular}{l}
$\sup_{\vec\pi\in\Pi^n}\Pr (\vXp \in \poly )$ is $0$, $\exp(-\Theta(n))$, or $\Theta(\text{poly}(n))$, and\\
$\inf_{\vec\pi\in\Pi^n}\Pr (\vXp \in \poly )$ is $0$, $\exp(-\Theta(n))$, or $\Theta(\text{poly}(n))$
\end{tabular}\hfill
}

The bounds are asymptotically tight and the formal statement of the theorem also characterizes  conditions for the $0$, exponential, and polynomial cases, respectively.  As commented after Example~\ref{ex:intro}, we do not see a way to prove Theorem~\ref{thm:maintechh} by straightforward applications of existing asymptotic tools.  We also believe that Theorem~\ref{thm:maintechh} is a useful tool to study the smoothed likelihood of many events of interest in  social choice as commented in~\cite{Xia2020:The-Smoothed}. 

\subsection{Related Work and Discussions} 
\label{sec:related-work}
\noindent{\bf Ties in elections.}  The importance of estimating likelihood of ties has been widely acknowledged, for example, as~Mulligan and Hunter commented:
{\em ``Perhaps it is common knowledge that civic elections are not often decided by one vote...a precise calculation of the frequency of a pivotal vote can contribute to our understanding of how many, if any, votes might be rationally and instrumentally cast''}~\cite{Mulligan2003:The-Empirical}. 
In practice, however, ties are not as rare as commonly believed, even in high-stakes elections, and have led to pitfalls and consequent modifications of electoral systems and constitutional laws. For example, in the 1800 US presidential election, Jefferson and Burr tied in the electoral college votes. By the Constitution, the House of Representatives  should vote until a candidate wins the majority. However, in the subsequent 35 rounds of deadlocked voting, none of the two candidates got the majority. Eventually, Jefferson won the 36th revote to become the president. This {\em ``had demonstrated a fundamental flaw with the Constitution. As a result, the Twelfth Amendment to the Constitution was introduced and ratified''}~\cite{Campbell2015:Political}. 

\paragraph{\bf Three-or-more-way ties.}   Nowadays, legislators are well-aware of the possibility of two-way ties in elections and have specified tie-breaking mechanisms to handle them. However, three-or-more-way ties have not received their deserved attention and are sometimes overlooked. 
For example, the 2019 Code of Alabama Section 17-12-23 states: {\em ``In all elections where there is a tie between the two highest candidates for the same office, for all county or precinct offices, it shall be decided by lot by the sheriff of the county in the presence of the candidates''}. The Code does not specify what action should be taken when three or more candidates are tied for the first place. Results in this paper  characterize how rare this happens, so that the legislators can make an informed decision about whether the loophole is a significant concern in practice, and in case it is, how to fix it.

\paragraph{\bf Smoothed analysis.}  There is a large body of literature on the applications of smoothed analysis to mathematical programming, machine learning, numerical analysis, discrete math, combinatorial optimization, etc., see~\cite{Spielman2009:Smoothed} for a survey. Smoothed analysis has also been applied to various problems in economics, for example  price of anarchy~\cite{Chung2008:The-Price} and market equilibrium~\cite{Huang2007:On-the-Approximation}. 
 In a recent position paper, Baumeister, Hogrebe, and Rothe~\cite{Baumeister2020:Towards}  proposed a Mallows-based model to conduct smoothed analysis on computational aspects of social choice and commented that the model can be used to analyze voting paradoxes and ties, but the paper does not contain technical results.  \cite{Xia2020:The-Smoothed} independently proposed to conduct smoothed analysis for paradoxes and impossibility theorems in social choice,  characterized the smoothed likelihood of Condorcet's voting paradox and the ANR impossibility theorem, and proposed a new tie-breaking mechanism. We only use the probabilistic model in~\cite{Xia2020:The-Smoothed} and topic-wise, our paper is different from \cite{Xia2020:The-Smoothed}, because we formulate and study the smoothed likelihood of ties under commonly studied voting rules, which was not studied in~\cite{Xia2020:The-Smoothed}. 

\paragraph{\bf Technical novelty.}  We believe that the main technical tool of this paper (Theorem~\ref{thm:maintechh}) is a significant and non-trivial extension of Lemma~1 in~\cite{Xia2020:The-Smoothed} to arbitrary polyhedron represented by an integer matrix, every $n$, and the min-adversary. More  discussions can be found in the remark after Theorem~\ref{thm:maintechh}. 
 We believe that Theorem~\ref{thm:maintechh} is a useful tool to analyze smoothed likelihood of many other problems of interest in social choice. For example, all results in~\cite{Xia2020:The-Smoothed} can be immediately strengthened by  Theorem~\ref{thm:maintechh}.


\paragraph{\bf Previous work on likelihood of ties.} The following table summarizes previous works that are closest to this paper, whose main contributions are characterizations of likelihood of ties.

\begin{table}[htp]
\centering
\begin{tabular}{|c|c|c|c|c|}
\hline \bf Paper & $\bm m$ & $\bm   k$ &\bf  Voting rule&\bf  Distribution\\
\hline \citep{Beck75:Note}& $2$ & $2$& majority& two groups, i.i.d.~within each group\\
\hline \begin{tabular}{@{}c@{}}\citep{Margolis1977:Probability}\\ \citep{Chamberlain1981:A-note}\end{tabular} & $2$ & $2$& majority& i.i.d.~w.r.t.~an uncertain distribution\\
\hline \citep{Gillett1977:Collective} & any $m\in\mathbb N$ & $2\le k\le m$& plurality& uniformly i.i.d.~(IC)\\
\hline \citep{Gillett1980:The-Comparative} & $3$ & $2\le k\le m$&  Borda& uniformly i.i.d.~(IC)\\
\hline \citep{Marchant2001:The-probability} & any $m\in\mathbb N$  & $k=m$&  certain scoring rules& uniformly i.i.d.~over all score vectors\\
\hline
\end{tabular}
\end{table}

More precisely, Beck~\citep{Beck75:Note} studied the probability of ties under the majority rule (over two alternatives) with two groups of agents, whose votes are i.i.d.~within each group. Margolis~\citep{Margolis1977:Probability} and Chamberlain and Rothschild~\citep{Chamberlain1981:A-note} focused on the majority rule for two alternatives, where agents' preferences are i.i.d.~according to a randomly generated distribution.  
Gillett~\citep{Gillett1977:Collective} studied probability of all-way ties  (i.e., $k=m$)  under plurality w.r.t.~IC for arbitrary numbers of alternatives and agents. Gillett~\citep{Gillett1980:The-Comparative} obtained a closed-form formula for Borda indecisiveness (two or more alternatives being tied) for $m=3$ w.r.t.~IC.  Marchant~\citep{Marchant2001:The-probability} considered a class of scoring rules where each agent can choose a score vector from a given {\em scoring vectors set (SVS)}, and characterized the asymptotic 
probability of $m$-way ties under a class of scoring rule to be $\Theta(n^{\frac{1-m}{2}})$ w.r.t.~the i.i.d.~uniform distribution over all SVS. This result can be applied to Borda and approval voting but cannot be applied to plurality. Marchant~\citep{Marchant2001:The-probability} also noted  that Domb~\citep{Domb1960:On-the-theory} obtained equivalent formulas for $m=3$ under Borda. As discussed earlier, the smoothed social choice framework used in our paper is more general. In particular, corollaries of  our theorems  answer the open questions proposed by Marchant~\citep{Marchant2001:The-probability} and reveal a ranking over these rules according to the likelihood of ties under IC  as summarized in Table~\ref{tab:corollaries}.

\paragraph{Previous work related to likelihood of ties.}  \citep{LeBreton2016:Correlation} studied the setting where the agents are partitioned into multiple groups, and within each group, agents' votes are generated from the {\em impartial anonymous culture (IAC)} model.  The smoothed  social choice framework and IAC are not directly comparable. The former is more general  in the sense that agents' ``ground truth" preferences are arbitrarily correlated. The latter is more general in the sense that agents' ``noises'' are not independent. There is also a line of empirical and mixed empirical-theoretical work on the likelihood of ties under the US electoral college system~\cite{Gelman1998:Estimating,Gelman2012:What}. Studying the smoothed likelihood of ties under these settings are left for future work. 


The probability of tied elections is closely related to the probability for a single voter to be pivotal, sometimes called {\em voting power}, which plays an important role in the paradox of voting~\cite{Downs1957:An-Economic} and in definitions of power indices in cooperative game theory. There is a large body of work on voting games, where the probability for a voter to be pivotal, which is equivalent to the likelihood of ties among other voters, plays a central role in the analysis of voters' strategic behavior. Examples include seminal works~\citep{Austen96:Information,Feddersen96:Swing,Myerson2000:Large}, and more recent work~\citep{Nunez2019:Truth-revealing}.
 It is not hard to see that the voting power for two alternatives or for multiple alternatives under the plurality rule almost equals to the probability of tied elections with one less vote under certain tie-breaking rules, as pointed out in~\citep{Good1975:Estimating}. For three or more alternatives the two problems are closely related but technically different, which we leave for future work. 

The likelihood of ties is also related to the manipulability of voting rules~\cite{Conitzer16:Barrier}---if an election is not tied, then no single agent can change the outcome, therefore no agent alone has incentive to cast a manipulative vote. Our results are related to but different from the typical-case analysis of manipulability in the literature~\cite{Procaccia07:Junta,Xia08:Generalized,Mossel13:Smooth,Xia15:Generalized} and the quantitative Gibbard-Satterthwaite theorem~\cite{Friedgut2011:A-quantitative,Mossel2015:A-quantitative},  where votes are assumed to be i.i.d. Likelihood of ties are also related to but different from the margin of victory~\cite{Magrino11:Computing,Xia12:Computing} and more broadly, bribery and control in elections~\cite{Faliszewski2016:Control}.

\paragraph{Adding noise to study ties.} We are not aware of a previous work that characterized the smoothed likelihood of ties as we do in this paper. The idea of adding noise to study ties is not new. In the definition of {\em resolvability} in \citep{Tideman87:Independence}, an additional vote is added to break ties. In \citep{Freeman2015:General}, irresolute voting rules were defined by taking the union of winners under profiles around a given profile. 
Another {\em resolvability} studied in the literature (see e.g., Formulation\#1 in Section 4.2 of~\cite{Schulze11:New}\footnote{Wikipedia~\cite{wikiresolvability} contributes this definition to Douglas R.~Woodall but we were not able to find a formal reference.}) requires that the probability of ties goes to $0$ under the voting rule w.r.t.~IC, which is closely related to the literature in the indecisiveness of voting.  Our setting and results are more general because IC is  a special case of the smoothed social choice framework, and our results also characterize the rate of convergence.

\paragraph{Computational aspects of tie-breaking.}  There is a large body of recent work on computational aspects of tie-breaking. \citep{Conitzer09:Preference} proposed the {\em parallel universe tie-breaking (PUT)} for multi-stage voting rules and characterized the complexity of the STV rule. \citep{Brill12:Price} characterize the complexity of PUT under ranked pairs, whose smoothed likelihood of ties is studied in this paper. \citep{Mattei2014:How-hard} characterized complexity of PUT  under other multi-stage voting rules such  as Baldwin and Coombs. \citep{Obraztsova11:Ties,Obraztsova11:Complexity,Aziz2013:Ties,Obraztsova2013:On-manipulation} investigated the effect of different tie-breaking mechanisms to the complexity of manipulation. \citep{Freeman2015:General} propose a general way of defining ties under generalized scoring rules. \citep{Wang2019:Practical} proposed practical AI algorithms for computing PUT under STV and ranked pairs.



\section{Preliminaries}
\label{sec:prelim}
\noindent{\bf Basic Setting.} 
For any  $q\in\mathbb N$, we let $[q]=\{1,\ldots,q\}$. Let $\ma=[m]$ denote the set of $m\ge 3$ {\em alternatives}. Let $\ml(\ma)$ denote the set of all linear orders over $\ma$. Let $n\in\mathbb N$ denote the number of agents. Each agent uses a linear order to represent his or her preferences, called a {\em vote}. The vector of $n$ agents' votes, denoted by $P$, is called a {\em (preference) profile}, sometimes called an $n$-profile. A {\em fractional} profile is a preference profile $P$ together with a possibly non-integer and possibly negative weight vector $\vec \omega_P=(\omega_R:R\in P)\in{\mathbb R}^{n}$ for the votes in $P$. It follows that a non-fractional profile is a fractional profile with uniform weight, namely $\vec \omega_P = \vec 1$.  Sometimes we slightly abuse the notation by omitting the weight vector  when it is clear from the context or when $\vec\omega_P=\vec 1$. 

For any (fractional) profile $P$, let $\hist(P)\in {\mathbb Z}_{\ge 0}^{m!}$ denote the anonymized profile of $P$, also called the {\em histogram} of $P$, which contains the total weight of every  linear order in $\ml(\ma)$ according to $P$.  An {\em (irresolute) voting rule} $\cor$ is a
mapping from a profile to a non-empty set of winners in $\ma$. Below we recall the definitions of several commonly studied voting rules. 

\paragraph{\bf Integer positional scoring rules.}  An {\em (integer) positional scoring rule}  is characterized by an integer scoring vector $\vec s=(s_1,\ldots,s_m)\in{\mathbb Z}^m$ with $s_1\ge s_2\ge \cdots\ge s_m$ and $s_1>s_m$. For any alternative $a$ and any linear order $R\in\ml(\ma)$, we let $\vec s(R,a)=s_i$, where $i$ is the rank of $a$ in $R$. Given a profile $P$ with weights $\vec \omega_P$,  the  positional scoring rule $\cor_{\vec s}$ chooses all alternatives $a$ with maximum $\sum_{R\in P}\omega_R\cdot s(R,a)$. For example, {\em plurality} uses the scoring vector $(1,0,\ldots,0)$, {\em Borda} uses the scoring vector $(m-1,m-2,\ldots,0)$, and {\em veto} uses the scoring vector $(1,\ldots,1,0)$. 

\paragraph{\bf Weighted Majority Graphs.}  For any (fractional) profile $P$ and any pair of alternatives $a,b$, let $ P[a\succ b]$ denote the total weight of votes in $P$ where $a$ is preferred to $b$. Let $\wmg(P)$ denote the {\em weighted majority graph} of $P$, whose vertices are $\ma$ and whose weight on edge $a\ra b$ is $w_P(a,b) = P[a\succ b] - P[b\succ a]$. Sometimes a distribution $\pi$ over $\ml(\ma)$ is viewed as a fractional profile, where for each $R\in\ml(\ma)$ the weight on $R$ is $\pi(R)$. In this case we let $\wmg(\pi)$ denote the weighted majority graph of the {fractional} profile represented by $\pi$. 

A voting rule is said to be {\em weighted-majority-graph-based (WMG-based)} if its winners only depend on the WMG of the input profile. In this paper we consider the following commonly studied WMG-based rules.
\begin{itemize}
\item {\bf Copeland.} The Copeland rule is parameterized by a number $0\le \alpha\le 1$, and is therefore denoted by Copeland$_\alpha$, or $\copeland$ for short. For any fractional profile $P$, an alternative $a$ gets $1$ point for each other alternative it beats in their head-to-head competition, and gets $\alpha$ points for each tie. Copeland$_\alpha$ chooses all alternatives with the highest total score as the winners. 
\item  {\bf Maximin.} For each alternative $a$, its min-score  is defined to be $\min_{b\in\ma}w_P(a,b)$. Maximin, denoted by $\maximin$, chooses all alternatives with the max min-score as the winners.
\item {\bf Ranked pairs.} Given a profile $P$, an alternative $a$ is a winner under ranked pairs (denoted by $\rp$) if there exists a way to fix edges in $\wmg(P)$ one by one in a non-increasing order w.r.t.~their weights (and sometimes break ties), unless it creates a cycle with previously fixed edges, so that after all edges are considered, $a$ has no incoming edge. This is known as the {parallel-universes tie-breaking (PUT)}~\citep{Conitzer09:Preference}.
\item {\bf Schulze.} For any directed path in the WMG, its strength is defined to be the minimum weight on any single edge along the path. For any pair of alternatives $a,b$, let $s[a,b]$ be the highest weight among all paths from $a$ to $b$. Then, we write $a\succeq b$ if and only if $s[a,b]\ge s[b,a]$, and~\citet{Schulze11:New} proved that the strict version of this binary relation, denoted by $\succ$, is transitive. The Schulze rule, denoted by $\schulze$, chooses all alternatives $a$ such that for all other alternatives $b$, we have $a\succeq b$. 
\end{itemize}

\noindent{\bf Multi-round score-based elimination (MRSE) rules.} Another large class of voting rules studied in this paper select the winner(s) in $m-1$ rounds. In each round, an integer positional scoring rule is used to rank the remaining alternatives, and a {\em loser} (an alternative with the minimum total score) is removed from the election. Like in ranked pairs, PUT is used to select winners---an alternative $a$ is a winner if there is a way to break ties among  losers so that $a$ is the remaining alternative after $m-1$ rounds. For example, the STV rule uses the plurality rule in each round and the Coombs rule uses the veto rule in each round.

We now recall the  statistical model used in the smoothed social choice framework~\cite{Xia2020:The-Smoothed}.


\begin{dfn}[\bf Single-Agent Preference Model~\cite{Xia2020:The-Smoothed}]
\label{dfn:pref-model}
A {\em single-agent preference model} is denoted by $\mm=(\Theta,\ml(\ma),\Pi)$, where $\Theta$ is the parameter space, $\ml(\ma)$ is the sample space, and $\Pi$ consists of distributions indexed by $\Theta$. $\mm$ is {\em strictly positive} if there exists $\epsilon>0$ such that the probability of any linear order under any distribution in $\Pi$ is at least $\epsilon$.  
$\mm$ is {\em closed} if $\Pi$ (which is a subset of the probability simplex in $\mathbb R^{m!}$) is a closed set in $\mathbb R^{m!}$. 
\end{dfn}


For example, given $0<\underline{\varphi}\le \overline{\varphi}\le 1$, in the {\em single-agent Mallows model}~\cite[Example~2 in the appendix]{Xia2020:The-Smoothed}, denoted by $\mm_{[\underline{\varphi},\overline{\varphi}]}$, we have $\Theta= \ml(\ma)\times [\underline{\varphi},\overline{\varphi}]$. For any $\varphi\in [\underline{\varphi},\overline{\varphi}]$ and any $R\in\ml(\ma)$, $\pi_{(R,\varphi)}\in\Pi$ is the Mallows distribution with central ranking $R$ and dispersion parameter $\varphi$. That is, for any $W\in \ml(\ma)$, $\pi_{(R,\varphi)}(W) = \varphi^{\kt(R,W)}/Z_{\varphi}$, where $\kt(R,W)$ is the {\em Kendall Tau distance} between $R$ and $W$, namely the number of pairwise disagreements between $R$ and $W$, and $Z_\varphi$ is the normalization constant.   It follows that $\mm_{[\underline{\varphi},\overline{\varphi}]}$  is  strictly positive, closed, and $\conv(\Pi)$ contains the uniform distribution over all rankings, denoted by $\piuni$.

\begin{dfn}[\bf Smoothed likelihood of ties]\label{dfn:smoothed-ties}
Given a voting rule $r$, a single-agent preference model $\mm=(\Theta,\ml(\ma),\Pi)$,  $2\le k\le m$,  and $n\in\mathbb N$, the   {\em max (respectively, min) smoothed likelihood of ($k$-way) ties} is defined as $\slt{\Pi}{\cor}{m}{k}{n}$ in~(\ref{equ:max-sTies}) (\em respectively, $\ilt{\Pi}{\cor}{m}{k}{n}$ in~(\ref{equ:min-sTies})).
\end{dfn}

\section{PMV-in-Polyhedron Problem and Main Technical Theorems}
\label{sec:maintechthm}
We first formally define PMV and the PMV-in-polyhedron problem studied in this paper.
\begin{dfn}[\bf Poisson multivariate variables (PMVs)]
Given any $q,n\in \mathbb N$ and any vector $\vec \pi$ of $n $ distributions over $[q]$,  we let $\vXp$ denote the {\em $(n,q)$-PMV} that corresponds to $\vec \pi$. That is, let $Y_1,\ldots,Y_n$ denote $n$ independent random variables over $[q]$ such that for any $j\le n$, $Y_j$ is distributed as $\pi_j$. For any $1\le i\le q$, the $i$-th component of $\vXp$ is the number of $Y_j$'s that take value $i$. 
\end{dfn}

\begin{dfn}[\bf The PMV-in-polyhedron problem]
\label{dfn:PMV-in-Poly}
Given $q\in\mathbb N$, a polyhedron $\poly\subseteq \mathbb R^q$, and a set $\Pi$ of distributions over $[q]$, we are interested in  
$$\text{\bf the upper bound }\sup\nolimits_{\vec\pi\in\Pi^n}\Pr(\vXp \in \poly)\text{, and {\bf the  lower bound}}\inf\nolimits_{\vec\pi\in\Pi^n}\Pr(\vXp \in \poly)$$
\end{dfn}
In words, the former (respectively, latter) is the maximum (respectively, minimum) probability for the $(n,q)$-PMV  to be in $\poly$, where the distribution for each of the $n$ variables  is chosen from $\Pi$. 

To present the  theorem, we   introduce some notation   followed by an example. Given $q\in\mathbb N, L\in\mathbb N$, an $L\times q$ integer matrix $\ba$, a  $q$-dimensional row vector $\vec b$, and an  $n\in\mathbb N$,  we define $\poly,\polyz$, $\polyn$, and $\polynint$  as follows.
$$\begin{array}{ll}
\poly \triangleq \left\{\vec x\in {\mathbb R}^q: \ba\cdot \invert{\vec x}\le \invert{\vec b}\right\}, &\polyz \triangleq \left\{\vec x\in {\mathbb R}^q: \ba\cdot \invert{\vec x}\le \invert{\vec 0}\right\},\\
\polyn \triangleq\left \{\vec x\in\poly\cap {\mathbb R}_{\ge 0}^q:\vec x\cdot \vec 1=n\right\}, &\polynint \triangleq \polyn  \cap {\mathbb Z}_{\ge 0}^q.
\end{array}$$
That is, $\poly$ is the polyhedron represented by $\ba$ and ${\vec b}$; $\polyz$ is the {\em characteristic cone} of $\poly$, $\polyn$ consists of non-negative vectors in $\poly$ whose  $L_1$ norm is $n$, and $\polynint$ consists of non-negative integer vectors in $\polyn$. By definition, $\polynint\subseteq \polyn\subseteq \poly$. Let $\dim(\polyz)$ denote the {\em dimension} of $\polyz$, i.e., the dimension of the minimal linear subspace of $\mathbb R^q$ that contains $\polyz$. 

Throughout the paper, we assume that the set of distribution $\Pi$ is {\em strictly positive} and {\em closed}, which are mild assumptions as discussed in~\cite{Xia2020:The-Smoothed}, formally defined as follows.
\begin{dfn}[\bf \boldmath Strictly positive and closed $\Pi$]
\label{dfn:strict-closed-Pi}
Given any $q\in\mathbb N$. A probability distribution $\pi$ over $[q]$ is {\em strictly positive (by $\epsilon$)} for some $\epsilon>0$, if for all $i\in[q]$, $\pi(i)\ge \epsilon$. We say that a set $\Pi$ of distributions over $[q]$ is {\em strictly positive (by $\epsilon$)}  for some $\epsilon>0$, if every $\pi\in\Pi$ is  strictly positive by $\epsilon$.   $\Pi$ is {\em closed} if it is a closed set in ${\mathbb R}^q$.  
\end{dfn}
 Let $\conv(\Pi)$ denote the convex hull of $\Pi$ and let $\cone(\Pi)$ denote the convex cone generated by $\Pi$. 

\begin{ex}
\label{ex:maintechdef}  Figure~\ref{fig:mainthmex} illustrates two examples with $q = 2$ and $\Pi=\{\pi_1,\pi_2\}$, where $\pi_1=(\frac13, \frac 23)$ and $\pi_2=(\frac12, \frac 12)$.   
In both examples, $\conv(\Pi)$ is the line segment between $\pi_1$ and $\pi_2$, $\cone(\Pi)$  is the shaded area, $\poly$ is the red area, $\polyz$ is the blue area, the intersection of $\poly$ and $\polyz$ is the purple area, and $\polyn$ is the green line segment. 
A key difference between Figure~\ref{fig:mainthmex} (a) and (b) is whether $\conv(\Pi)\cap \polyz=\emptyset$ (which is true in Figure~\ref{fig:mainthmex} (a) but not in (b)). Also it is possible that $\polyz\nsubseteq  \poly$, as can be seen in Figure~\ref{fig:mainthmex} (b).
\end{ex}
\begin{figure}[htp]
\centering
\begin{tabular}{cc}
\includegraphics[width = 0.49\textwidth]{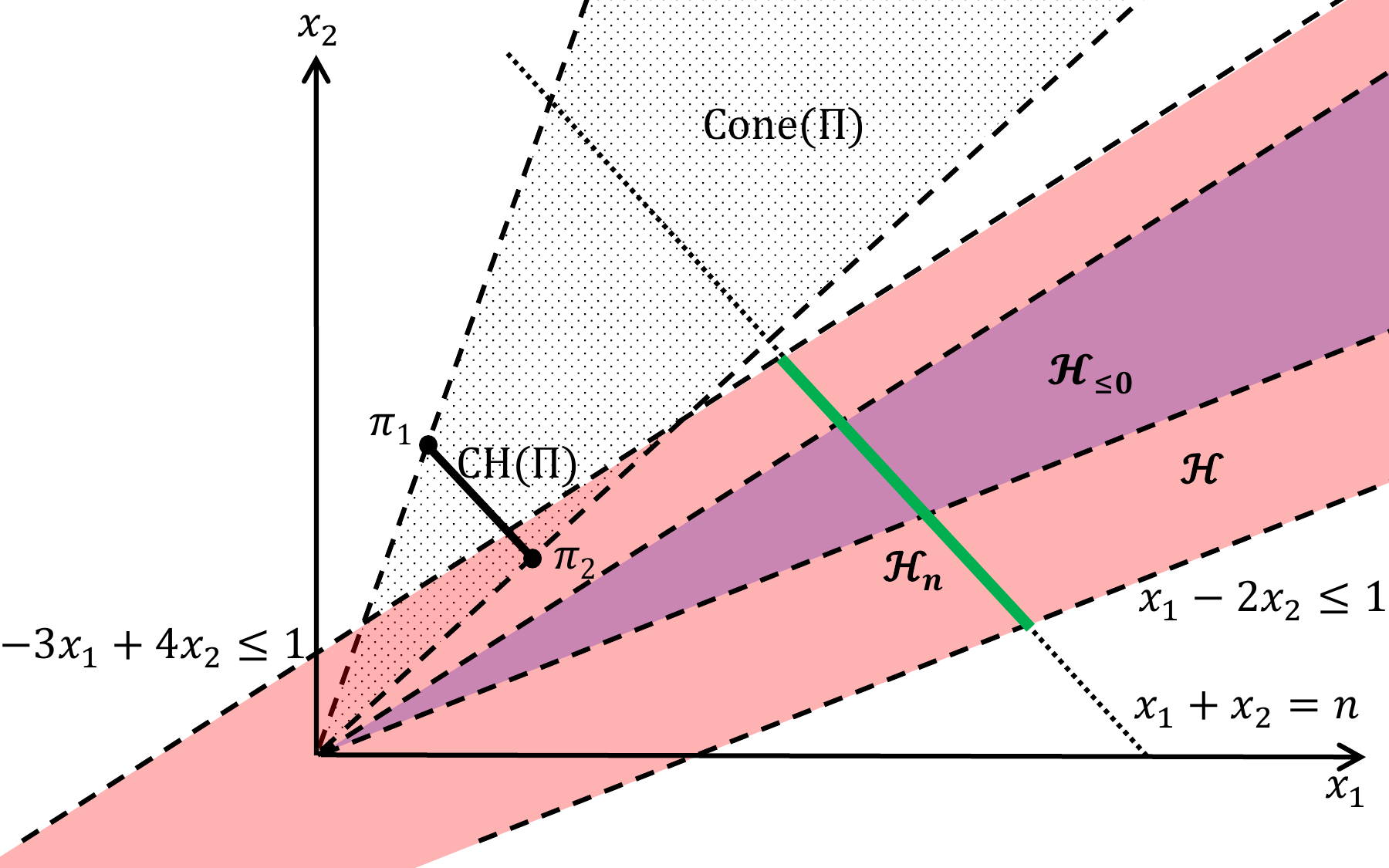} 
&\includegraphics[width = 0.49\textwidth]{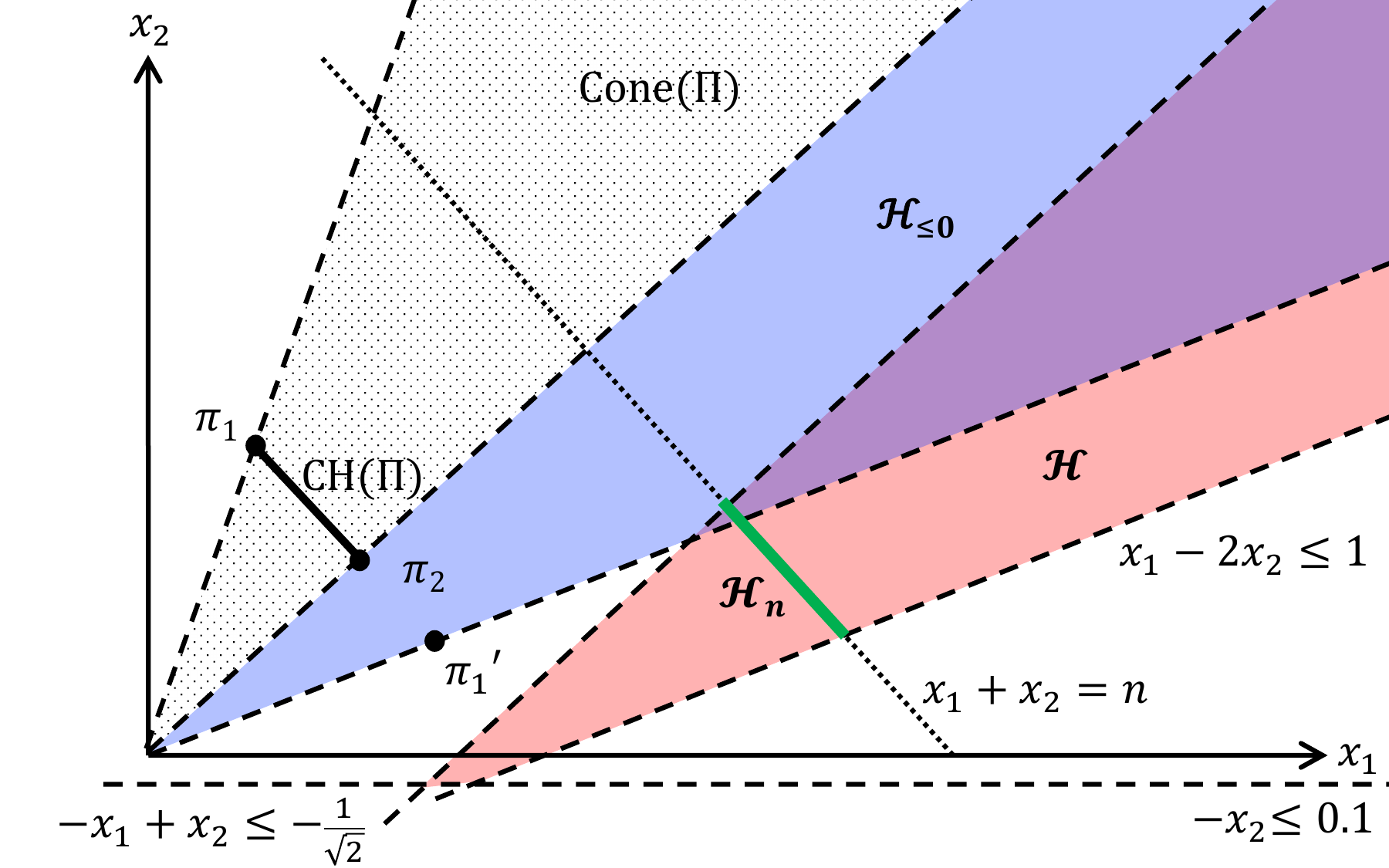} \\
\small (a) $\ba = \left[\begin{array}{rr}-3&4\\1&-2\end{array}\right]$ and $\vec b = \left[\begin{array}{r}1\\1\end{array}\right]$. 
&\small (b)  $\ba = \left[\begin{array}{rr}-1&1\\1&-2\\ 0& -1\end{array}\right]$ and $\vec b = \left[\begin{array}{r}-\frac 1{\sqrt 2}\\1\\0.1\end{array}\right]$.
\end{tabular}
\caption{\small Two examples of $\poly$, $\polyz$, $\polyn$, $\conv(\Pi)$, and $\cone(\Pi)$. \label{fig:mainthmex}}
\end{figure}
\paragraph{\bf A high-level attempt at the PMV-in-Polyhedron problem.} Before formally presenting the theorem, let us take a high-level attempt to develop intuition. Take the upper bound $\sup\nolimits_{\vec\pi\in\Pi^n}\Pr(\vXp \in \poly)$ for example, there are three cases. 

$\bullet$ {\bf\boldmath The $0$ case.} Clearly, if $\poly$ does not contain any non-negative integer whose $L_1$ norm is $n$, which is equivalent to $\polynint=\emptyset$, then   $\sup\nolimits_{\vec\pi\in\Pi^n}\Pr(\vXp \in \poly)=0$. 

$\bullet$  {\bf\boldmath The exponential case (Figure~\ref{fig:mainthmex} (a)).} Suppose $\polynint\ne\emptyset$. For any $\vec\pi=(\pi_1,\ldots,\pi_n)\in\Pi^n$ chosen by the max-adversary, we have $\vec\pi\cdot\vec 1= \sum_{j=1}^n\pi_j\in \cone(\Pi)$. According to various multivariate central limit theorems, $\vXp$ is ``centered'' around an $\Theta(\sqrt n)$ neighborhood of $\vec\pi\cdot\vec 1$ with high probability. Therefore, if $\vec\pi\cdot\vec 1$ is $\Theta(n)$ away from $\poly$, then $\sup\nolimits_{\vec\pi\in\Pi^n}\Pr(\vXp \in \poly)$ is exponentially small. This happens when $\conv(\Pi)\cap \polyz=\emptyset$  as shown in Figure~\ref{fig:mainthmex} (a).

$\bullet$  {\bf\boldmath The polynomial case (Figure~\ref{fig:mainthmex} (b)).} Otherwise we have $\conv(\Pi)\cap \polyz\ne \emptyset$ as shown in Figure~\ref{fig:mainthmex} (b). In this case, the max-adversary can choose $\vec \pi\in\Pi^n$ such that $\vec\pi\cdot\vec 1$ is either in $\cone(\Pi)$  or  close to it, which means that $\sup\nolimits_{\vec\pi\in\Pi^n}\Pr(\vXp \in \poly)$ should be larger than that in the exponential case.  However, it is not immediately clear that the probability is polynomial, because $\vec\pi\cdot\vec 1$ being close to $\cone(\Pi)$ and $\polynint\ne\emptyset$ does not immediately imply that $\vXp$ is close to $\polynint$. Even if we assume that $\vXp$ is close to some (integer) vectors in $\polynint$, it is unclear how ``dense'' such vectors  are in the $\Theta(\sqrt n)$ neighborhood of $\vec\pi\cdot\vec 1$, which $\vXp$ falls into with high probability.  In fact, accurately bounding the probability in the polynomial case  is the most challenging part of the problem, because existing asymptotic tools  fail to work due to their $O(n^{-0.5})$ error terms. 

The main technical theorem below confirms  the intuition developed above  when $\Pi$ is  closed and strictly positive (see Definition~\ref{dfn:strict-closed-Pi}), and the answer to the polynomial case is $\Theta\left(n^{\frac{\dim(\polyz)-q}{2}}\right)$, which is often much smaller than $n^{-0.5}$.
\begin{thm}[\bf Smoothed Likelihood of PMV-in-polyhedron]\label{thm:maintechh} Given any $q\in\mathbb N$, any closed and strictly positive $\Pi$ over $[q]$, and any polyhedron $\poly$ characterized by an integer matrix $\ba$, for any $n\in\mathbb N$, 
\begin{align*}
&\sup_{\vec\pi\in\Pi^n}\Pr\left(\vXp \in \poly\right)=\left\{\begin{array}{ll}0 &\text{if } \polynint=\emptyset\\
\exp(-\Theta(n)) &\text{if } \polynint \ne \emptyset \text{ and }\polyz\cap\conv(\Pi)=\emptyset\\
\Theta\left(n^{\frac{\dim(\polyz)-q}{2}}\right) &\text{otherwise (i.e. } \polynint\ne \emptyset \text{ and }\polyz\cap\conv(\Pi)\ne \emptyset\text{)}
\end{array}\right.,\\
&\inf_{\vec\pi\in\Pi^n}\Pr\left(\vXp \in \poly\right)=\left\{\begin{array}{ll}0 &\text{if } \polynint=\emptyset\\
\exp(-\Theta(n)) &\text{if } \polynint \ne \emptyset \text{ and }\\\
\Theta\left(n^{\frac{\dim(\polyz)-q}{2}}\right) &\text{otherwise (i.e. } \polynint\ne \emptyset \text{ and }\conv(\Pi)\subseteq\polyz
\text{)}\end{array}\right.
\end{align*}
\end{thm}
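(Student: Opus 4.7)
The plan is to handle the three regimes in parallel for both the sup and the inf, and to reduce the polynomial case to a local central limit theorem (local CLT) for Poisson multinomial variables. The zero case is immediate: $\vXp$ takes values only in $\{\vec x\in\mathbb Z_{\ge 0}^q:\vec x\cdot\vec 1=n\}$, so $\{\vXp\in\poly\}=\{\vXp\in\polynint\}$, which is empty whenever $\polynint=\emptyset$, forcing both the sup and the inf to $0$.

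\textbf{Exponential regimes.} For these I would use multivariate large deviations. Write $\vec\mu(\vec\pi)=\frac{1}{n}\sum_j\pi_j\in\conv(\Pi)$; coordinatewise Hoeffding gives $\Pr(\|\vXp/n-\vec\mu(\vec\pi)\|>\varepsilon)=\exp(-\Omega(n\varepsilon^2))$ uniformly in $\vec\pi$. In the sup case, closedness of $\Pi$ and of $\polyz$ together with $\polyz\cap\conv(\Pi)=\emptyset$ yield a uniform gap $\delta>0$ such that every $\vec\nu\in\conv(\Pi)$ violates some row of $\ba\cdot\invert{\vec x}\le\invert{\vec 0}$ by at least $\delta$; since $\vec b$ is fixed, $\vXp\in\poly$ forces $\vXp/n$ to lie within $O(1/n)$ of $\polyz$, an $\exp(-\Omega(n))$ event. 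The matching lower bound follows from Cram\'er's theorem applied to an i.i.d.~$\vec\pi$: strict positivity of $\Pi$ keeps the rate function finite, and $\polynint\ne\emptyset$ supplies a feasible target. The inf is symmetric: when $\conv(\Pi)\not\subseteq\polyz$, averaging shows some $\pi^*\in\Pi$ itself violates a row of $\polyz$, and $\vec\pi=(\pi^*,\ldots,\pi^*)$ delivers the exponential upper bound; the lower bound again uses $\polynint\ne\emptyset$.

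\textbf{Polynomial regime.} The exponent $(\dim(\polyz)-q)/2$ reflects the codimension of the linearity space of $\polyz$. Let $d=\dim(\polyz)$; the H-representation of $\polyz$ splits into an equality subsystem $\ba^=\invert{\vec x}=\invert{\vec 0}$ (the rows of $\ba$ that are tight everywhere on the linearity space of $\polyz$) together with the remaining strict rows, and the kernel of $\ba^=$ is $d$-dimensional. For the upper bound, a local CLT for PMVs recentered at $\vec\mu(\vec\pi)$ shows that the probability of $\vXp$ hitting any lattice translate of a $d$-dimensional integer subspace is $O(n^{(d-q)/2})$; the strict rows cost only a constant factor, since on $\polyz$ they hold with slack proportional to distance from the linearity space and typical PMV fluctuations are $O(\sqrt n)$. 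For the lower bound I would use the V-representation to extract $d$ independent integer rays spanning the linearity space, combine them with strict positivity of $\Pi$ to pick $\vec\pi\in\Pi^n$ whose mean lies in the relative interior of $\polyz\cap\conv(\Pi)$, so that the PMV density at a target point in $\polynint$ is $\Omega(n^{-q/2})$, and then use the $d$ rays to generate $\Omega(n^{d/2})$ admissible lattice points in $\polyn$, yielding the $\Omega(n^{(d-q)/2})$ matching bound.

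\textbf{Main obstacle.} The hardest step is the polynomial lower bound: it requires a local CLT for non-identically-distributed PMVs that is simultaneously uniform over admissible $\vec\pi\in\Pi^n$, sharp on the correct integer sub-lattice picked out by the V-representation of $\polyz$, and robust to the case where $\polyz\cap\conv(\Pi)$ is attained only on the relative boundary of $\conv(\Pi)$. The advertised interplay between V- and H-representations is exactly what lets one (a) upgrade the single feasibility witness $\polynint\ne\emptyset$ to a $d$-dimensional family of lattice targets and (b) verify the covariance non-degeneracy that the local CLT demands; coordinating the choice of $\vec\pi$ so that both representations cooperate is where I expect the real work to lie.
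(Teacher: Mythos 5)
Your plan matches the paper's proof in all essential respects: the zero case is handled the same way, the exponential regimes via separation plus Hoeffding, and the polynomial regime via the implicit-equality subsystem $\ba^=$ of $\polyz$ for the upper bound and the V-representation for the lower bound. You also correctly locate the hardest step where the paper does---a point-wise concentration/local-CLT bound for non-i.i.d.~PMVs uniform over $\vec\pi\in\Pi^n$, which is exactly the content of the paper's Lemma~\ref{lem:point-wise-concentration-PMV} (proved by a Hoeffding-1956-style constrained optimization, not quoted from the literature).

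Two places where your sketch is imprecise in ways you would need to repair. In the polynomial upper bound, you assert that $\polynint$ behaves like a single lattice translate of a $d$-dimensional subspace and that ``the strict rows cost only a constant factor'' because of slack; that is not the right reason. A priori $\polynint$ is the full set of integer points in a polyhedron and could spread across unboundedly many parallel translates of $\text{aff}(\polyz)$. The paper's Claim~\ref{claim:bounded} is what makes the constant factor legitimate: the Minkowski--Weyl decomposition $\poly=\mV+\polyz$ with $\mV$ bounded forces every $\vec x\in\polynint$ to lie within $O(1)$ (in $L_\infty$) of the affine set $\invert{\vec x_{I_0}}=\bd\cdot\invert{\vec x_{I_1},n}$, so each fixed $\vec h_1$ admits only $O(1)$ integer completions $\vec h_0$; the slack of the strict rows is irrelevant to this counting. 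Second, your lower-bound bookkeeping---density $\Omega(n^{-q/2})$ per point and $\Omega(n^{d/2})$ lattice targets---has an off-by-$\sqrt n$ in each factor relative to the correct $\Omega(n^{(1-q)/2})$ and $\Omega(n^{(d-1)/2})$ once the hard constraint $\vec x\cdot\vec 1=n$ is taken into account; the errors cancel so the final $\Omega(n^{(d-q)/2})$ is right, but it signals that the total-mass normalization is not consistently threaded through. Finally, to build $\vec\pi\in\Pi^n$ whose mean is $O(\sqrt n)$-close to a target $\vec x^*\in\polyz\cap\conv(\Pi)$ you need Carath\'eodory's theorem to write $\vec x^*$ as a convex combination of at most $q$ elements of $\Pi$ and then round the weights; strict positivity alone does not supply the mixture, and the adversary cannot in general hit $\vec x^*$ with an i.i.d.~choice from $\Pi$.
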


\paragraph{\bf Remarks on the  power of Theorem~\ref{thm:maintechh}.}  We believe that the main power of  Theorem~\ref{thm:union-poly} is that it provides a systematic way of reducing  probabilistic analysis  (asymptotically tight upper and lower bounds for the PMV-in-Polyhedron problem)  to  worst-case  non-probabilistic analysis, which are often easy to verify. In particular, when $\conv(\Pi)$ can be represented by the convex hull of a finite number of vectors, whether $\polyz\cap\conv(\Pi)=\emptyset$ and/or 
$\conv(\Pi)\not\subseteq \polyz$ can be verified by linear programming. 
Take the $\sup$ part of Theorem~\ref{thm:maintechh} in the setting of Example~\ref{ex:maintechdef} for instance.  Suppose $\polynint\ne\emptyset$.
\begin{itemize}
\item In Figure~\ref{fig:mainthmex} (a), it is easy to see that $\polyz\cap \conv(\Pi)=\emptyset$. Therefore,  
$$\forall \vec \pi \in \Pi^n, \Pr\left(\vXp \in \poly\right) \le  \exp(-\Theta(n))$$ 
\item In Figure~\ref{fig:mainthmex} (b), we have $\polyz\cap \conv(\Pi)=\{\pi_2\}\ne\emptyset$, $\conv(\Pi)\not\subseteq \polyz$, and  $\dim(\polyz) = 2$. Therefore, for any sufficiently large $n$ (for which it is not hard to prove that $\polynint\ne\emptyset$), we have
$$\forall \vec \pi \in \Pi^n,  \exp(-\Theta(n)) \le\Pr\left(\vXp \in \poly\right) \le\Theta(n^{-\frac{2-2}{2}}) = \Theta(1)$$
Notice that both bounds are asymptotically tight. For example, the lower bound can be achieved by  $\vec \pi = \{\pi_1\}^n$ and the upper bound can be achieved by   $\vec \pi = \{\pi_2\}^n$.
\end{itemize}
As an example of the inf part of Theorem~\ref{thm:maintechh}, suppose $\pi_1$ is replaced by $\pi_1'= (\frac 23,\frac 13)$ in Figure~\ref{fig:mainthmex} (b). Then, $\conv(\Pi)\subseteq \polyz$, which means that  $\forall \vec \pi \in \Pi^n,\Pr\left(\vXp \in \poly\right) \ge \Theta(1)$ and the lower bound is asymptotically tight.

\paragraph{\bf Remarks on the generality and limitations of Theorem~\ref{thm:maintechh}.} We believe that Theorem~\ref{thm:maintechh} is quite general, because first, it provides a dichotomy (more precisely, trichotomy) for the PMV-in-Polyhedron problem. Second, the upper and lower bounds are asymptotically tight. And third, the theorem works for arbitrary $\poly$ characterized by an integer matrix $\ma$ and arbitrary $\vec b$, and any closed and strictly positive   $\Pi$. As a notable special case, when $\Pi$ contains a single distribution $\pi$, the sup and inf parts of the theorem coincide, and the theorem characterizes the PMV-in-Polyhedron problem for i.i.d.~PMVs.

The main limitations are, first,  the constants in the asymptotic bounds depend on $q$, $\Pi$, and $\poly$, which are assumed to be fixed; and second, $\Pi$ must be strictly positive. Nevertheless, we believe that the two limitations are mild at least in the social choice context, because as can be seen in the next section as well as in~\cite{Xia2020:The-Smoothed}, applications of Theorem~\ref{thm:maintechh} (or more precisely, its extension to unions of multiple polyhedra in Theorem~\ref{thm:union-poly} in Section~\ref{sec:maintech-ext})  answer open questions in social choice under a more general and realistic model than IC. Moreover, as commented in~\cite{Xia2020:The-Smoothed}, many classical models, such as Mallows model and random utility models, are strictly positive. 

\paragraph{\bf Remarks on the comparison with~\citep[Lemma~1]{Xia2020:The-Smoothed}.} We first recall an equivalent and simplified version of~\citep[Lemma~1]{Xia2020:The-Smoothed} as Lemma$\ast$  below for easy reference.

\paragraph{\bf \boldmath Lemma$\ast$ (\cite[Lemma~1]{Xia2020:The-Smoothed}). }{\em Let $\poly = \{\vec x\in {\mathbb R}^q: {\mathbf E}\cdot \invert{\vec x} = \invert{\vec 0}\text{ and } {\mathbf S}\cdot \invert{\vec x} < \invert{\vec 0}\}$, where $\mathbf E$ and $\mathbf S$ are integer matrices and ${\mathbf E}\cdot \invert{\vec1} = \invert{\vec 0}$ and ${\mathbf S}\cdot \invert{\vec1} = \invert{\vec 0}$. Then, $\sup_{\vec\pi\in\Pi^n} \Pr\left(\vXp \in \poly\right)$ is $0$, $\exp(-\Omega(n))$, or $O(n^{- \rank(\mathbf E)/2})$, and the poly bound is asymptotically tight for infinitely many $n\in \mathbb N$.
}

\vspace{2mm} We believe that our Theorem~\ref{thm:maintechh} is a non-trivial and significant improvement of Lemma~$\ast$ in the following three aspects.

First, Theorem~\ref{thm:maintechh} works for any polyhedron $\poly = \left\{\vec x: \ba\cdot \invert{\vec x}\le \invert{\vec b}\right\}$ with arbitrary integer matrix $\ba$, while Lemma$\ast$ requires $\ba\cdot \invert{\vec 1}=\invert{\vec 0}$ and also essentially requires that elements in $\vec b$ to be either $0$ or $-1$, which correspond  to the $\mathbf E$ part and the $\mathbf S$ part in Lemma$\ast$, respectively. 

 Second,  Theorem~\ref{thm:maintechh} provides asymptotically tight bounds, while Lemma$\ast$ only claims that the bounds are asymptotically tight for infinitely many $n$'s. 

 Third,  Theorem~\ref{thm:maintechh} characterizes smoothed likelihood for the min-adversary, while  Lemma$\ast$ only works for the max-adversary. While the proof of the min-adversary part  of Theorem~\ref{thm:maintechh} is similar to its max-adversary part, it is due to the improved techniques and lemmas (Lemma~\ref{lem:point-wise-concentration-PMV} and~\ref{lem:lower-any-pi} in the appendix). Without them we do not see an easy way to generalize Lemma$\ast$ to the min-adversary.

The  proof can be found in Appendix~\ref{app:maintech}, where a proof sketch is presented in Appendix~\ref{app:maintech-proof-sketch} and the full proof is presented in Appendix~\ref{app:maintech-full-proof}. 

\subsection{An Extension of Theorem~\ref{thm:maintechh} to Unions of Polyhedra}
\label{sec:maintech-ext}
In this subsection, we present an extension of Theorem~\ref{thm:maintechh} to the union of $I\in\mathbb N$ polyhedra, denoted by  $\upoly = \bigcup_{i\le I}\cpoly{i}$, where $\cpoly{i} = \{\vec x\in {\mathbb R}^q: \ba_i \cdot\invert{\vec x}\le \invert{\vec b_i}\}$ and $\ba_{i}$ is an integer matrix of $q$ columns. We define the {\em PMV-in-$\upoly$} problem similarly as the PMV-in-Polyhedron problem (Definition~\ref{dfn:PMV-in-Poly}), except that $\poly$ is replaced by $\upoly$. 
\begin{dfn}[\bf\boldmath The PMV-in-$\upoly$ problem]
\label{dfn:PMV-in-C}
Given $q,I\in\mathbb N$,  $\upoly = \bigcup_{i\le I}\cpoly{i}$, where $\forall i\le I$,  $\cpoly{i}\subseteq \mathbb R^q$ is a polyhedron, and a set $\Pi$ of distributions over $[q]$, we are interested in 
$$\text{\bf the upper bound }\sup\nolimits_{\vec\pi\in\Pi^n}\Pr(\vXp \in \upoly)\text{, and {\bf the  lower bound}}\inf\nolimits_{\vec\pi\in\Pi^n}\Pr(\vXp \in \upoly)$$ 
\end{dfn}
The key observation is the following straightforward inequality for every  PMV $\vXp$:
\begin{equation}
\label{eq:upoly-high-level}
\max\nolimits_{i\le I}\Pr\left(\vXp\in\cpoly{i}\right)\le \Pr\left(\vXp\in\upoly\right)\le \sum\nolimits_{i\le I}\Pr\left(\vXp\in\cpoly{i}\right)
\end{equation}

See Figure~\ref{fig:upoly-high-level} for an illustration of $I=3$. Notice that the right hand side of (\ref{eq:upoly-high-level}) is no more than $I\cdot \max_{i\le I}\Pr (\vXp\in\cpoly{i} )$, which is $ \Theta(\max\nolimits_{i\le I}\Pr (\vXp\in\cpoly{i} ))$ because $I$ is a constant.

\begin{figure}[htp]
\centering
  \includegraphics[width = .8\linewidth]{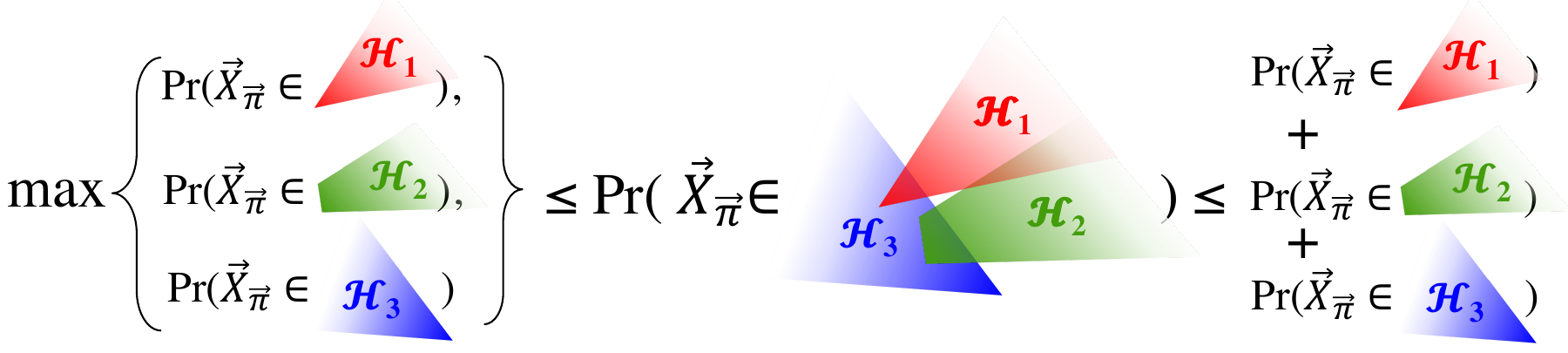}
\caption{\small Illustration of inequality (\ref{eq:upoly-high-level}), where $\upoly = \cpoly{1}\cup \cpoly{2}\cup  \cpoly{3}$. \label{fig:upoly-high-level}
}
\end{figure}

\ \\
\noindent{\bf The high-level idea behind the extension}  is based on a  weighted complete bipartite {\em activation graph} defined as follows, which represents the relationship between  $\conv(\Pi)$ and polyhedra in $\upoly$ in light of Theorem~\ref{thm:maintechh}. Let $\cpolyz{i}$ denote the characteristic cone of $\cpoly{i}$.

\begin{dfn}[\bf \boldmath Activation graph $\calG_{\Pi,\upoly,n}$] For any set of distributions $\Pi$ over $[q]$, any $\upoly = \bigcup_{i=1}^I \cpoly{i}$, and any $n\in \mathbb N$, $\cpoly{i}$ is said to be {\em active (at $n$)} if $\cpolynint{i}\ne\emptyset$; otherwise  $\cpoly{i}$ is said to be {\em inactive (at $n$)}. Moreover, we define the {\em activation graph} $\calG_{\Pi,\upoly,n}$ as follows.

$\bullet$ {\bf Vertices.} The vertices are $\conv(\Pi)$ and  $\{\cpoly{i}: 1\le i\le I\}$.  

$\bullet$  {\bf Edges and weights.} There is an edge between each $\pi\in \conv(\Pi)$ and each $\cpoly{i}$, whose weight is 
$$w_n(\pi,\cpoly{i})\triangleq\left\{\begin{array}{ll}-\infty&\text{if } \cpoly{i}\text{ is inactive at }n\\
-\frac{n}{\log n} &\text{otherwise, if }  \pi\notin \cpolyz{i} \\
\dim(\cpolyz{i})& \text{otherwise}
 \end{array}\right.$$
\end{dfn}

For example, in Figure~\ref{fig:upoly-graph},  $\cpoly{1}$ is inactive at $n$ and both  $\cpoly{2}$ and $\cpoly{3}$ are active,  $\pi\in \cpolyz{1}\cap \cpolyz{2}$ and $\pi\notin \cpolyz{3}$. Notice that the weight on $(\pi,\cpoly{2})$ is $\dim(\cpolyz{2})$ instead of $\dim(\cpoly{2})$.

\begin{figure}[htp]
\centering
  \includegraphics[width = 0.7\linewidth]{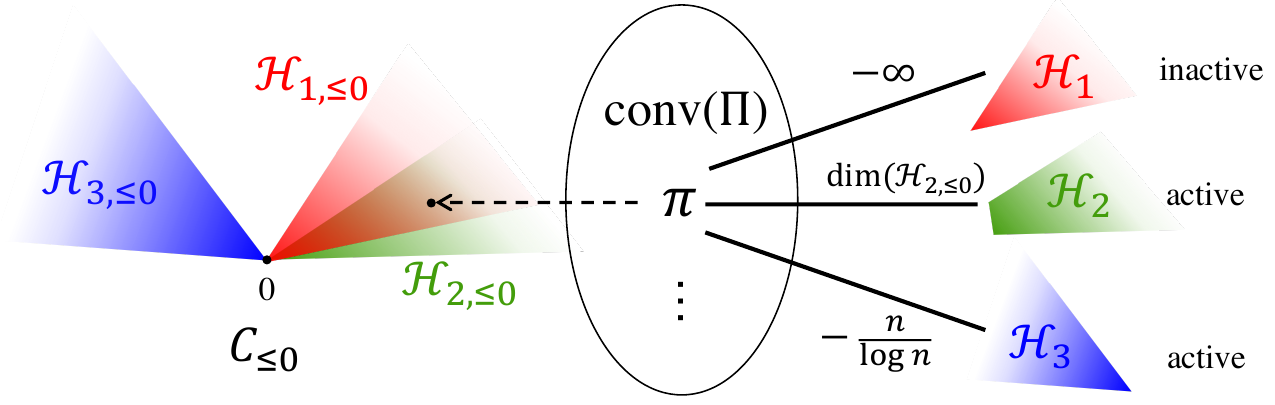}
\caption{\small Illustration of an activation graph $\calG_{\Pi,\upoly,n}$. \label{fig:upoly-graph}
}
\end{figure}

Intuitively, the $0$, exponential, and polynomial cases of Theorem~\ref{thm:maintechh} (applied to $\Pi = \{\pi\}$) corresponds to  the $-\infty$ edge, the $-\frac{n}{\log n}$ edge, and the $\dim(\cpolyz{i})$ edge, respectively. That is, for any $\vec\pi\in\Pi^n$ with $\sum_{j=1}^n \pi_j = n\cdot \pi$,  $\Pr(\vXp\in \cpoly{i})$ is roughly $n$ raise to the power of the weight between $\pi$ and $\cpoly{i}$ in the activation graph, i.e., $n^{w_n(\pi,\cpoly{i})}$. In particular, $n^{-\infty} =0$ and $n^{-\frac{n}{\log n}-q} = \exp(-\Theta(n))$. \footnote{This is the reason behind using $-\frac{n}{\log n}$.  Theorem~\ref{thm:union-poly} still  holds if  $-\frac{n}{\log n}$ is replace by any finite negative number.} 

Therefore, according to (\ref{eq:upoly-high-level}), $\Pr(\vXp\in\upoly)$ is primarily determined by the largest ${w_n (\pi, \cpoly{i})}$, i.e., the maximum weight of all edges connected to $\pi$ in the activation graph.  This is formally defined as follows.

\begin{dfn}[\bf Active dimension] Given $\upoly$, $n$, and $\pi\in \mathbb R^{q}$, we define {\em maximum active dimension of  $\upoly$ at $\pi$ and $n$} ({\em active dimension at $\pi$} for short, when $\upoly$ and $n$ are clear from the context), denoted by $\md{\upoly}{\pi}$, as follows. 
$$\md{\upoly}{\pi}\triangleq\max\nolimits_{i\le I} w_n (\pi, \cpoly{i})$$
\end{dfn}
Consequently, a max- (respectively, min-) adversary aims to choose $\vec\pi=(\pi_1,\ldots,\pi_n)\in \Pi^n$ to maximize (respectively, minimize) $\md{\upoly}{\frac 1n\sum_{j=1}^n \pi_j}$, which are characterized by  $\alpha_n$ (respectively, $\beta_n$) defined as follows.
\begin{align*}
&\alpha_{n} \triangleq \max\nolimits_{\pi\in\conv(\Pi)} \md{\upoly}{\pi}\\
&\beta_{n} \triangleq \min\nolimits_{\pi\in\conv(\Pi)} \md{\upoly}{\pi}
\end{align*}
We note that $\alpha_{n}$ and $\beta_n$ depend on $\Pi$ and $\upoly$, which are often clear from the context. Also, by definition, $\alpha_n =-\infty$ is equivalent to $\beta  = -\infty$, which is equivalent to $\upolynint = \emptyset$.  We are now ready to use $\alpha_n$ and $\beta_n$ to present the extension of Theorem~\ref{thm:maintechh} to the PMV-in-$\upoly$ problem.

\begin{thm}[\bf \boldmath Smoothed Likelihood of PMV-in-$\upoly$]
\label{thm:union-poly}
Given any $q, I\in\mathbb N$, any closed and strictly positive $\Pi$ over $[q]$, and any $\upoly = \bigcup_{i\in I}\cpoly{i}$  characterized by integer matrices, for any $n\in\mathbb N$, 
\begin{align*}
&\sup_{\vec\pi\in\Pi^n}\Pr\left(\vXp \in \upoly\right)=\left\{\begin{array}{ll}0 &\text{if } \alpha_n = -\infty\\
\exp(-\Theta(n)) &\text{if } -\infty<\alpha_n<0\\
\Theta\left(n^{\frac{\alpha_n-q}{2}}\right) &\text{otherwise (i.e. } \alpha_n>0\text{)}
\end{array}\right.,\\
&\inf_{\vec\pi\in\Pi^n}\Pr\left(\vXp \in \upoly\right)=\left\{\begin{array}{ll}0 &\text{if } \beta_n = -\infty\\
\exp(-\Theta(n)) &\text{if } -\infty<\beta_n<0\\
\Theta\left(n^{\frac{\beta_n-q}{2}}\right) &\text{otherwise (i.e. } \beta_n>0\text{)}\end{array}\right..
\end{align*}
\end{thm}
Roughly speaking, the max- (respectively, min-) smoothed likelihood for an $(n,q)$-PMV to be in $\upoly$ is approximately $n^{\frac{\alpha_n-q}{2}}$ (respectively, $n^{\frac{\beta_n-q}{2}}$).  The proof is done by combining the applications of Theorem~\ref{thm:maintechh} to $\Pi$ and every $\cpoly{i}$, and can be found in Appendix~\ref{sec:maintech-union-proof}.

\paragraph{\bf Remarks on the applications of Theorem~\ref{thm:union-poly}.} We believe that Theorem~\ref{thm:union-poly} is a useful and  general tool to study the smoothed likelihood of many events and properties in social choice, as shown in~\cite{Xia2020:The-Smoothed} as well as in the rest of this paper. Like Theorem~\ref{thm:maintechh}, the power of Theorem~\ref{thm:union-poly} is that it provides a systematic way of reducing probabilistic analysis   to worst-case and non-probabilistic analysis, i.e., the characterizations of $\alpha_n$, and $\beta_n$. 
Nevertheless, characterizing $\alpha_n$ and $\beta_n$ can still be challenging, which is equivalent to characterizing active $\cpoly{i}$, $\cpolyz{i}$, and $\dim(\cpolyz{i})$, as we will see in the next section.

\section{Smoothed Likelihood of Ties} 
\label{sec:smoothedties}
In this section, we apply Theorem~\ref{thm:union-poly} to provide dichotomous characterizations of the smoothed likelihood of ties (Definition~\ref{dfn:smoothed-ties}) under some commonly studied voting rules. 



\subsection{Integer Positional Scoring Rules}
\label{sec:posrules}
We first apply   Theorem~\ref{thm:union-poly} to polyhedra that are similar to those in Example~\ref{ex:intro} and obtain the following theorem for integer positional scoring rules.  
\begin{thm}[\bf Smoothed likelihood of ties: integer positional scoring rules]\label{thm:score} For any fixed  $m\ge 3$, let $\mm= (\Theta,\ml(\ma),\Pi)$ be a strictly positive and closed single-agent preference model and let $\vec s$ be an integer scoring vector. For any $2\le k\le m$ and any $n\in\mathbb N$,  
\begin{align*}
\slt{\Pi}{r_{\vec s}}{m}{k}{n}&=\left\{\begin{array}{ll}0 &\text{if } \forall P\in \ml(\ma)^n, |\cor_{\vec s}(P)|\ne k\\
\exp(-\Theta(n)) &\text{otherwise, if } \forall\pi\in \conv(\Pi), |\cor_{\vec s}(\pi)|< k\\
\Theta(n^{-\frac{k-1}{2}}) &\text{otherwise}
\end{array}\right.,\\
\ilt{\Pi}{r_{\vec s}}{m}{k}{n}&=\left\{\begin{array}{ll}0 &\text{if } \forall P\in \ml(\ma)^n, |\cor_{\vec s}(P)|\ne k\\
\exp(-\Theta(n)) &\text{otherwise, if }\exists\pi\in \conv(\Pi)\text{ s.t. } |\cor_{\vec s}(\pi)|< k\\
\Theta(n^{-\frac{k-1}{2}}) &\text{otherwise}
\end{array}\right..
\end{align*}
\end{thm}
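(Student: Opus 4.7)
The plan is to instantiate Theorem~\ref{thm:union-poly} on a union of polyhedra that encodes the event $|\cor_{\vec s}(P)|=k$. For each size-$k$ subset $S=\{a_1,\ldots,a_k\}\subseteq\ma$, I define $\cpoly{S}\subseteq\mathbb R^{m!}$ by the $k-1$ equations $\text{score}(a_1)=\text{score}(a_i)$ for $i=2,\ldots,k$ together with the $m-k$ inequalities $\text{score}(c)\le\text{score}(a_1)-1$ for $c\notin S$; since $\vec s$ is integer-valued, the scores of integer histograms are integers, so an integer histogram lies in $\cpoly{S}$ exactly when $S$ is the top-tied set under $\cor_{\vec s}$. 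Setting $\upoly=\bigcup_{|S|=k}\cpoly{S}$ then gives $\hist(P)\in\upoly\iff |\cor_{\vec s}(P)|=k$, and hence $\Pr_{P\sim\vec\pi}(|\cor_{\vec s}(P)|=k)=\Pr(\vXp\in\upoly)$ for every $\vec\pi\in\Pi^n$. Because $\Pi$ is closed and strictly positive and every $\cpoly{S}$ has an integer defining matrix, Theorem~\ref{thm:union-poly} applies; what remains is to compute $\dim(\cpolyz{S})$ and to translate the activation conditions into the combinatorial language of the theorem statement.

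The key geometric step is $\dim(\cpolyz{S})=m!-(k-1)$ for every $S$. The cone $\cpolyz{S}$ is cut out by the same $k-1$ equations and by the weakened inequalities $\text{score}(c)\le\text{score}(a_1)$. Its implicit equalities are exactly the $k-1$ score-equality functionals, which I would justify in two parts: (i) linear independence of the $k-1$ functionals $\text{score}(a_1)-\text{score}(a_i)$ on $\mathbb R^{m!}$ follows from the classical fact that when $s_1>s_m$ the score map $\mathbb R^{m!}\to\mathbb R^m$ surjects onto the affine hyperplane $\{\sum_a y_a=(\sum_i s_i)(\vec x\cdot\vec 1)\}$, so the independent vectors $e_{a_1}-e_{a_i}\in\mathbb R^m$ pull back to independent linear functionals on $\mathbb R^{m!}$; and (ii) no outside inequality is an implicit equality, witnessed for each $c\notin S$ by the symmetric fractional profile obtained by averaging over all $k!$ permutations of $S$ in the first $k$ positions while placing $c$ last and fixing the other outside alternatives in positions $k+1,\ldots,m-1$, which yields $\text{score}(a_i)=\frac{s_1+\cdots+s_k}{k}>s_m=\text{score}(c)$ thanks to $s_1>s_m$. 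Consequently $\alpha_n=\beta_n=m!-(k-1)$, so the polynomial exponent from Theorem~\ref{thm:union-poly} is $\frac{\alpha_n-m!}{2}=-\frac{k-1}{2}$, matching the theorem.

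Finally I translate the three-way case conditions. By $S_m$-equivariance the sets $\cpolynint{S}$ are simultaneously (non)empty across all $S$, so $\upolynint=\emptyset$ iff no $n$-profile admits exactly a $k$-way tie, matching the first case. Given $\upolynint\ne\emptyset$, a fractional profile $\pi$ lies in $\cpolyz{S}$ iff $S$ is a $k$-element subset of the weak top tier under $\cor_{\vec s}(\pi)$, so $\pi$ is activated by some $\cpoly{S}$ iff $|\cor_{\vec s}(\pi)|\ge k$. This converts $\Pi_{\upoly,n}=\emptyset$ into the sup-exponential condition $\forall\pi\in\conv(\Pi),\,|\cor_{\vec s}(\pi)|<k$ and $\Pi_{\upoly,n}=\conv(\Pi)$ into the inf-polynomial condition $\forall\pi\in\conv(\Pi),\,|\cor_{\vec s}(\pi)|\ge k$, matching the remaining cases in both the sup and inf parts of the theorem. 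The main obstacle in this plan is the dimension calculation, specifically the joint verification that the $k-1$ score-equality functionals are linearly independent and that none of the $m-k$ weak outside inequalities silently collapses into an implicit equality; once that polyhedral step is secured, the remainder is a direct application of Theorem~\ref{thm:union-poly}.
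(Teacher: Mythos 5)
Your proposal is correct and follows the same global strategy as the paper: encode $k$-way ties as a union of polyhedra $\upoly=\bigcup_{|S|=k}\cpoly{S}$, show $\dim(\cpolyz{S})=m!-(k-1)$ for every $S$, and read off the dichotomy from Theorem~\ref{thm:union-poly}. The difference is in how the cone dimension is established. The paper's Claim~\ref{claim:scoresolumg} proves that the implicit equalities are exactly the score-equality rows by constructing an integer profile from two interleaved cyclic permutations (one over $T$, one over $\ma\setminus T$), and then shows these rows have rank $|T|-1$ by a graph-connectivity contradiction: if fewer than $|T|-1$ rows spanned all intra-$T$ score equalities, one could exhibit a profile satisfying the chosen rows but violating another. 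You instead argue linear independence by pulling back the independent vectors $e_{a_1}-e_{a_i}\in\mathbb R^m$ along the score map, and you witness non-implicitness of each outside inequality with a symmetric fractional profile that averages the top $k$ positions over $S$. Both routes are valid; yours is more linear-algebraic and shorter, and your explicit use of $S_m$-equivariance to note that the $\cpolynint{S}$ are simultaneously (non)empty makes the translation of the activation condition cleaner than the paper's terse handling of that point. One small imprecision: the score map $\phi:\mathbb R^{m!}\to\mathbb R^m$ is linear, so it does not ``surject onto an affine hyperplane'' depending on $\vec x$. What you actually need, and what holds because $s_1>s_m$, is that $\mathrm{im}(\phi)$ contains the $(m-1)$-dimensional subspace spanned by the $e_a-e_b$ (take $\phi(e_R-e_{R'})$ for $R,R'$ differing by an adjacent transposition at a descent of $\vec s$); that inclusion already yields independence of the $k-1$ pullback functionals.
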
 
Take the max  smoothed likelihood of ties in Theorem~\ref{thm:union-poly} for example. Like Theorem~\ref{thm:maintechh}, the condition for the $0$ case is trivial. Assuming that the $0$ case does not happen, the exponential case happens if  
no distribution (viewed as a fraction profile) in the convex hull of $\Pi$ has {\em at least} $k$ winners under $\cor_{\vec s}$. Otherwise, the polynomial case happens. That is, there exists an $n$-profile $P$ with exactly $k$ winners under $\cor_{\vec s}$, and there exists $\pi\in\conv(\Pi)$ that has at least $k$ winners. Notice that the existence of such $P$  (which depends on $n$ but not $\Pi$) does not imply the existence of such $\pi$ (which depends on $\Pi$ but not $n$), nor vice versa. All proofs in this section are delegated to Appendix~\ref{app:smoothedties}.


We immediately have the follow corollary of Theorem~\ref{thm:score} when the uniform distribution $\piuni$ is in $ \conv(\Pi)$, because $\cor_{\vec s}(\piuni) = \ma$ and $|\ma| = m\ge k$, which means that the exponential case never happends.  Notice that the corollary does not require $\piuni\in \Pi$.
\begin{coro}[\bf Max smoothed likelihood of ties: positional scoring rules]
\label{coro:scoringtie}
 For any fixed  $m\ge 3$, let $\mm= (\Theta,\ml(\ma),\Pi)$ be a strictly positive and closed single-agent preference model  with $\piuni\in \conv(\Pi)$. For any integer scoring vector $\vec s$ and any $k\le m$, for any $n\in\mathbb N$, 

$\hfill\slt{\Pi}{r_{\vec s}}{m}{k}{n}=\left\{\begin{array}{ll}0 &\text{if } \forall P\in \ml(\ma)^n, |\cor_{\vec s}(P)|\ne k\\
\Theta(n^{-\frac{k-1}{2}}) &\text{otherwise}
\end{array}\right..\hfill$
\end{coro}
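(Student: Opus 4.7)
The plan is to derive this corollary as an immediate specialization of Theorem~\ref{thm:score}: the only thing that has disappeared from the three-case statement is the exponential branch, so everything reduces to checking that this branch cannot be triggered under the assumption $\pi_{\text{uni}}\in\conv(\Pi)$. Since the $0$-case condition in Theorem~\ref{thm:score} is unchanged, and the polynomial-case bound $\Theta(n^{-(k-1)/2})$ is also unchanged, no new asymptotic work is required; only the hypothesis on $\Pi$ needs to be converted into a structural statement about $\cor_{\vec s}$ acting on fractional profiles in $\conv(\Pi)$.

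The key step is to show that $\cor_{\vec s}(\pi_{\text{uni}})=\ma$, and hence $|\cor_{\vec s}(\pi_{\text{uni}})|=m\ge k$. I would prove this by the standard neutrality/symmetry argument: viewing $\pi_{\text{uni}}$ as the fractional profile that assigns weight $\tfrac{1}{m!}$ to every $R\in\ml(\ma)$, the total score of an alternative $a$ is $\tfrac{1}{m!}\sum_{R\in\ml(\ma)}\vec s(R,a)$. For any permutation $\sigma$ of $\ma$, sending $R\mapsto \sigma(R)$ is a bijection on $\ml(\ma)$ and sends $\vec s(R,a)$ to $\vec s(\sigma(R),\sigma(a))$, so the sum is invariant under $\sigma$, i.e.\ it does not depend on $a$. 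Consequently all $m$ alternatives have the same score under $\pi_{\text{uni}}$, and therefore tie.

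Second, I would plug this observation into the case analysis of Theorem~\ref{thm:score}. By assumption $\pi_{\text{uni}}\in\conv(\Pi)$, so the statement ``$\forall \pi\in\conv(\Pi),\ |\cor_{\vec s}(\pi)|<k$'' is falsified (witnessed by $\pi_{\text{uni}}$). Therefore the exponential branch of Theorem~\ref{thm:score} never applies, and only the $0$ branch and the $\Theta(n^{-(k-1)/2})$ branch can occur, with exactly the separating condition stated in the corollary.

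There is essentially no obstacle to this argument; the only care needed is to use the fractional-profile interpretation of $\cor_{\vec s}(\pi)$ consistently with Section~\ref{sec:prelim}, so that the symmetry computation for $\pi_{\text{uni}}$ actually matches the meaning of $\cor_{\vec s}(\pi)$ appearing in the statement of Theorem~\ref{thm:score}. Once that is pinned down, the corollary is a one-line consequence of Theorem~\ref{thm:score} together with the symmetry of $\pi_{\text{uni}}$.
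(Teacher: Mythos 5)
Your proposal is correct and matches the paper's own derivation: the paper likewise invokes Theorem~\ref{thm:score} and observes that $\cor_{\vec s}(\pi_{\text{uni}})=\ma$, so the exponential case vanishes. The only difference is that you spell out the symmetry computation showing all $m$ alternatives have equal score under $\pi_{\text{uni}}$, whereas the paper simply asserts this as ``not hard to verify.''
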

The $0$ case can indeed happen, for example, when $r$ is the plurality rule, $k=m$, and $m\nmid n$. 
As commented in~\citep{Xia2020:The-Smoothed}, $\piuni\in \conv(\Pi)$ is a natural assumption that holds for many  single-agent preference models. In particular, Corollary~\ref{coro:scoringtie} works for IC, which corresponds to $\Pi = \{\piuni\}$.

\subsection{Edge-Order-Based Rules}
\label{sec:eorules}
The characterization for edge-order-based rules is more complicated due to the hardness in characterizing active $\cpoly{i}$,  $\cpolyz{i}$, and $\dim(\cpolyz{i})$ in the polyhedra representation of $k$-way ties. We first introduce necessary notation to formally define edge-order-based rules, whose winners only depend on the order over all edges in WMG w.r.t.~their weights, called {\em palindromic orders}.
\begin{dfn} [\bf Palindromic orders]
A total preorder $O$ over $\me= \{(a,b)\in \ma\times \ma:a\ne b\}$ is {\em palindromic}, if for any pair of edges $(a,b),(c,d)$ in $\me$, $(a,b)\rhd_O (c,d)$ if and only if $(d,c)\rhd_O (b,a)$, where $e_1\rhd_O e_2$ means that $e_1$ is ranked strictly above $e_2$ in $O$, and $e_1\equiv_O e_2$ means that $e_1$ and $e_2$ are tied in $O$. Let $\pale$ denote the set of all palindromic orders over $\me$. 

For any weighted directed graph $G$ over $\ma$ with weights $\{w(a,b):a\ne b\}$ such that $w(a,b)=-w(b,a)$, let $\eo(G)\in\pale$ denote the palindromic order w.r.t.~the decreasing order of weights in $G$. For any profile $P$, let $\eo(P)=\eo(\wmg(P))$. 
\end{dfn}

In this paper we often use the  {\em tier representation} of palindromic orders, which partition edges  into equivalent classes (tiers). 

\begin{dfn}[\bf Tier representation and refinement of palindromic orders]
Any palindromic order $O\in \pale$ can be partitioned into {\em tiers}:  
$$O=T_1\rhd \cdots\rhd T_t \rhd T_0\rhd T_{t+1}\rhd\cdots \rhd T_{2t},$$
where for each $1\le i\le t$, edges in $T_i$ are tied, edges in $T_{2t+1-i}$ are tied, and edges in $T_{2t+1-i}$ are obtained by flipping edges in $T_i$. $T_0$ is called the {\em middle tier}, which consists of all edges $e$ with $e\equiv_O\bar e$, where $\bar e$ represents flipped $e$. 
 Only $T_0$ is allowed to be empty.  Let $\ties(O)=\sum_{i=1}^t(|T_i|-1)+|T_0|/2$.  
 
 $O_1\in \pale$  {\em refines}  $O_2\in \pale$, if for all pair of elements $(e_1,e_2)$, $e_1\rhd_{O_2} e_2$ implies $e_1\rhd_{O_1} e_2$. 
 \end{dfn}


 
\begin{ex} Figure~\ref{fig:ex-palindromic} illustrates an example of a profile $P$, its $\wmg$, and its corresponding palindromic order.  In Figure~\ref{fig:ex-palindromic} (b) only edges with positive weights are shown.
\begin{figure}[htp]
\centering
\begin{tabular}{@{}c c c@{}}
\begin{minipage}{0.23\linewidth}
\centering
$ \{1\succ 2\succ 3, 1\succ 3\succ 2\}$
\end{minipage} & \begin{minipage}{0.2\linewidth}
\centering
\includegraphics[width = \textwidth]{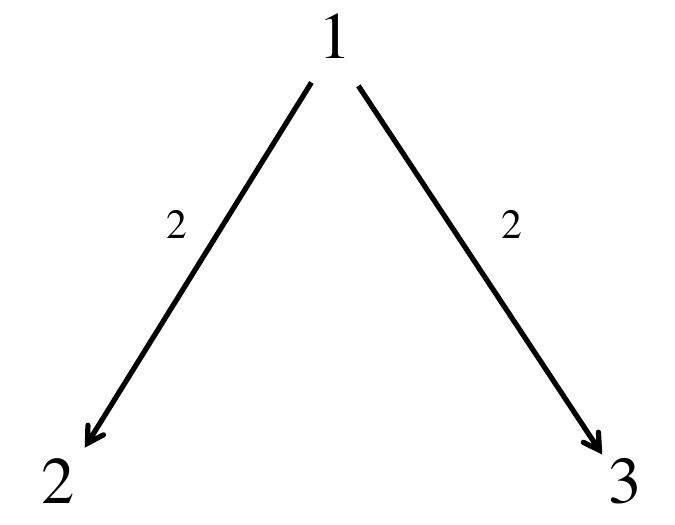}\end{minipage} & 
\begin{minipage}{0.5\linewidth}
\centering
$\underbrace{\{(1,2),(1,3)\}}_{T_1}\rhd \underbrace{\{(2,3),(3,2)\}}_{T_0}\rhd \underbrace{\{(2,1),(3,1)\}}_{T_2}$
\end{minipage}\\
\small (a) Profile $P$. &\small  (b) $\wmg(P)$. &\small  (c) $\eo(P)$.
\end{tabular}
\caption{\small An example of a profile, its $\wmg$, and its palindromic order. \label{fig:ex-palindromic}}
\end{figure}

Let $O = \underbrace{\{(1,2)\}}_{T_1}\rhd \underbrace{\{(1,3)\}}_{T_2}\rhd  \underbrace{\{(2,3),(3,2)\}}_{T_0}\rhd \underbrace{\{(3,1)\}}_{T_3}\rhd \underbrace{\{(2,1)\}}_{T_4}$. We have $\ties(\eo(P)) =3$, $\ties(O)=1$, and $O$ refines $\eo(P)$.
\end{ex}

We are now ready to formally define edge-order-based rules using palindromic orders.

\begin{dfn}[\bf Edge-order-based rules]\label{dfn:eorule}
A voting rule $\cor$ is said to be {\em edge-order-based}, if for every pair of profiles $P_1,P_2$ with $\eo(P_1)=\eo(P_2)$, we have $r(P_1)=r(P_2)$.
\end{dfn}

Many WMG-based rules, such as Copeland, Maximin, Schulze, and ranked pairs, are  edge-order-based. The domain of any edge-order-based rule $r$ can be naturally extended to palindromic orders. When applying Theorem~\ref{thm:union-poly} to  edge-order-based rules, each polyhedron $\cpoly{i}$ in $\upoly$ is indexed by a palindromic order $O$ with $k$ co-winners, such that $\cpolynint{i}$ corresponds to the histograms of $n$-profiles whose edge orders are $O$. 

We now define and characterize palindromic orders obtained from $n$-profiles.

\begin{dfn}   For any $n\in \mathbb N$, let $\pal{n}\triangleq\{\eo(P):P\in \ml(\ma)^n\}$. 
Let $\palo\subset\pale$ denote the set of palindromic orders $O$ whose middle tier is empty. \end{dfn}
\begin{prop}\label{prop:eo}
 For any $\ma$ and any $n\ge m^4$,
$\pal{n} = \left\{\begin{array}{ll} \pale &\text{if }2\mid n\\
\palo &\text{if }2\nmid n\\
\end{array}\right.$.
\end{prop}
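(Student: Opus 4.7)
The plan is to prove the two inclusions separately. The ``$\subseteq$'' direction is immediate: for any profile $P$ on $n$ agents and any pair $a\ne b$, $w_P(a,b)=P[a\succ b]-P[b\succ a]$ while $P[a\succ b]+P[b\succ a]=n$, so $w_P(a,b)\equiv n\pmod 2$. When $n$ is odd, no WMG weight is $0$, so the middle tier of $\eo(P)$ is empty and $\pal{n}\subseteq \palo$; for even $n$, $\pal{n}\subseteq \pale$ holds by definition of $\pale$.

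For the ``$\supseteq$'' direction I will, for every $O$ in the right-hand side, construct a profile of exactly $n$ agents realizing $O$. First, I assign target weights tier by tier with the correct parity: if $n$ is even, give tier $T_i$ (for $i=1,\dots,t$) the weight $2(t-i+1)$, the middle tier $T_0$ the weight $0$, and the reversed tier $T_{2t+1-i}$ the weight $-2(t-i+1)$; if $n$ is odd (so $T_0=\emptyset$), give $T_i$ the odd weight $2(t-i)+1$ and its reverse $-(2(t-i)+1)$. The resulting antisymmetric weight vector on $\me$ is strictly decreasing as one walks from $T_1$ down to $T_{2t}$ (with a plateau of zeros when applicable), so the palindromic order it induces equals $O$. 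Next, I realize it as the WMG of a base profile $P_0$ via McGarvey's classical construction: the pair of votes $(a\succ b\succ c_1\succ \cdots\succ c_{m-2},\ c_{m-2}\succ \cdots\succ c_1\succ a\succ b)$ contributes $+2$ to edge $(a,b)$ and $0$ to every other edge. For even $n$, stacking such pairs exactly realizes any even antisymmetric target; for odd $n$, seed with one arbitrary linear order (each pair then has $\pm 1$ contribution, preserving odd parity) and use McGarvey pairs to adjust the residual, which is even, up to the odd target.

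The size of $P_0$ is at most $\sum_{\{a,b\}}|w(a,b)|$, plus $1+\binom{m}{2}$ in the odd case to account for the seed vote and for the $\pm 1$ slack per pair. Since there are at most $t\le\binom{m}{2}$ tiers and the largest assigned weight is at most $2t\le m(m-1)$, we obtain $\sum_{\{a,b\}}|w(a,b)|\le \binom{m}{2}\cdot m(m-1)\le m^4/2$, and the odd-case overhead is dominated by this. Hence $n_0:=|P_0|\le m^4$, and by construction $n_0\equiv n\pmod 2$. Finally, I pad $P_0$ with $(n-n_0)/2$ pairs of mutually reversed linear orders; each such pair contributes $0$ to every edge, so the padded profile $P$ has $\wmg(P)=\wmg(P_0)$ and therefore $\eo(P)=O$. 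The hypothesis $n\ge m^4\ge n_0$ gives $n-n_0\ge 0$, and the parity match guarantees that $(n-n_0)/2$ is a non-negative integer.

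The main obstacle is the bookkeeping in the McGarvey step: one must verify that stacking the pairs (with the seed vote in the odd case) produces the exact antisymmetric target without accidentally perturbing some other edge, and that the total vote count stays under $m^4$. I would carry this out by summing $2$ votes per $+2$ unit of weight over all $\binom{m}{2}$ unordered pairs, bounding the per-pair weight by $2t\le m(m-1)$, and absorbing the parity correction in the odd case as a uniform additive $O(m^2)$ term.
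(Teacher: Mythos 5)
Your proposal is correct and mirrors the paper's own argument: assign integer weights tier by tier with the parity of $n$, realize the resulting antisymmetric weight function via a McGarvey construction within the $m^4$ budget, and pad with $(n-n_0)/2$ mutually reversed pairs to reach exactly $n$ votes. Your explicit handling of the odd case (seed vote plus even residual correction) spells out a step the paper leaves implicit when it invokes McGarvey's theorem for a target graph with odd weights.
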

The proof of Proposition~\ref{prop:eo} is delegated to Appendix~\ref{appendix:proof-lemeo}. Next, we define $\mo_{\cor,k,n}^{\pi}$ as the set of palindromic orders $O$ that satisfies three conditions:  (1) $O\in \pal{n}$; (2) there are exactly $k$ winners in $O$ under $\cor$, and (3) $O$ refines $\eo(\pi)$, where $\pi$ is viewed as a fractional profile. Let $\mo_{\cor,k,n}^{\Pi} \triangleq \bigcup_{\pi\in\conv(\Pi)}\mo_{\cor,k,n}^{\pi}$. When $\mo_{\cor,k,n}^{\Pi}\ne \emptyset$, we let $\ell_{\min}$ denote the minimum number of ties in palindromic orders in $\mo_{\cor,k,n}^{\Pi}$.  When $\mo_{\cor,k,n}^{\Pi}\ne \emptyset$, we let  $\ell_{\text{mm}}$ denote the maximin number of ties, where the maximum is taken for all $\pi\in \conv(\Pi)$, and for any given $\pi$, the minimum is taken for all palindromic orders in $\mo_{\cor,k,n}^{\pi}$.  We note that $\ell_{\min}$ and $\ell_{\text{mm}}$ depend on $\Pi$, $\cor$, $k$, and $n$, which are clear from the context. The formal definitions can be found in~Appendix~\ref{app:def-eo}. In fact, $\ell_{\min}$ and $\ell_{\text{mm}}$ correspond to $m!-\alpha_n$ and $m!-\beta_n$ in Theorem~\ref{thm:union-poly}. We are now ready to present  the theorem for EO-based rules.


\begin{thm}[\bf Smoothed likelihood of ties: edge-order-based rules]\label{thm:eorule}   For any fixed  $m\ge 3$, let $\mm= (\Theta,\ml(\ma),\Pi)$ be a strictly positive and closed single-agent preference model and let $\cor$ be an edge-order-based rule. For any $2\le k\le m$ and any $n\in\mathbb N$,  
\begin{align*}
&\slt{\Pi}{r}{m}{k}{n}=\left\{\begin{array}{ll}0 &\text{if } \forall O\in \pal{n}, |\cor(O)|\ne k\\
\exp(-\Theta(n)) &\text{otherwise if } \mo_{\cor,k,n}^{\Pi} = \emptyset\\
\Theta\left(n^{-\frac{\ell_{\min}}{2}}\right) &\text{otherwise}
\end{array}\right.\\
&\ilt{\Pi}{r}{m}{k}{n}=\left\{\begin{array}{ll}0 &\text{if } \forall O\in \pal{n}, |\cor(O)|\ne k\\
\exp(-\Theta(n)) &\text{otherwise if } \exists \pi\in\conv(\Pi) \text{ s.t. }\mo_{\cor,k,n}^{\pi} = \emptyset\\
\Theta\left(n^{-\frac{\ell_{\text{mm}}}{2}}\right) &\text{otherwise}
\end{array}\right.
\end{align*}
\end{thm}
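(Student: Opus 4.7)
The plan is to encode the event $|r(P)|=k$ as a union of polyhedra in histogram space and invoke Theorem~\ref{thm:union-poly}. For each palindromic order $O\in\pale$, I would define the polyhedron $\cpoly{O}\subseteq\mathbb R^{m!}$ by the constraints: for each $e_1\equiv_O e_2$, the equality $w_{\vec x}(e_1)-w_{\vec x}(e_2)=0$; for each $e_1\rhd_O e_2$, the inequality $w_{\vec x}(e_2)-w_{\vec x}(e_1)\le -1$. These are integer-linear in $\vec x$, so $\cpoly{O}$ has an integer constraint matrix, and an integer histogram $\vec x$ lies in $\cpoly{O}$ iff $\eo(\vec x)=O$. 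Since $r$ is edge-order-based, $|r(P)|=k$ iff $\hist(P)\in\upoly:=\bigcup_{O\in\pale:\,|r(O)|=k}\cpoly{O}$, a finite union, and $\Pr_{P\sim\vec\pi}(|r(P)|=k)=\Pr(\vXp\in\upoly)$ for every $\vec\pi\in\Pi^n$.

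Next I would identify the quantities in Theorem~\ref{thm:union-poly} with those in the statement. I claim (i) $\cpolynint{O}\ne\emptyset$ iff $O\in\pal{n}$, because $\cpolynint{O}$ is exactly the set of histograms of $n$-profiles $P$ with $\eo(P)=O$; and (ii) $\pi\in\cpolyz{O}$ iff $O$ refines $\eo(\pi)$, because the characteristic cone replaces each $-1$ right-hand side by $0$ and so captures precisely the fractional profiles whose WMG-tier structure is a coarsening of $O$. Consequently the activation set $\calI_{\upoly,n}^{\pi}$ of Theorem~\ref{thm:union-poly} coincides with $\mo_{r,k,n}^{\pi}$; so the max-adversary condition $\Pi_{\upoly,n}\ne\emptyset$ matches $\mo_{r,k,n}^{\Pi}\ne\emptyset$, the min-adversary condition $\Pi_{\upoly,n}=\conv(\Pi)$ matches $\mo_{r,k,n}^{\pi}\ne\emptyset$ for every $\pi\in\conv(\Pi)$, and the $0$ case $\upolynint=\emptyset$ aligns with $\forall O\in\pal{n},\,|r(O)|\ne k$.

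The key quantitative step is to show $\dim(\cpolyz{O})=m!-\ties(O)$, whence the polynomial exponent $\frac{\dim(\cpolyz{O})-m!}{2}$ equals $-\ties(O)/2$. I would count independent tier equalities as linear equations on the $\binom{m}{2}$-dimensional edge-weight space: each positive tier $T_i$ ($i\ge 1$) contributes $|T_i|-1$ equalities among edge weights, the middle tier contributes $|T_0|/2$ equalities of the form $w(e)=0$ (one per unordered pair $\{e,\bar e\}$), and the negative tiers $T_{2t+1-i}$ are then forced by sign-flipping; the total is $\ties(O)$. Pulling these back through the WMG map $\Phi:\mathbb R^{m!}\to\mathbb R^{\binom{m}{2}}$, which is surjective by McGarvey-type constructions, preserves linear independence and yields $\ties(O)$ independent equalities on $\vec x$; hence $\dim(\cpolyz{O})=m!-\ties(O)$. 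Combined with the identifications above, this gives $\alpha_n=m!-\ell_{\min}$ and $\beta_n=m!-\ell_{\text{mm}}$, matching the stated polynomial exponents.

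The main technical obstacle is the dimension count. One must carefully handle the flipping symmetry $w(a,b)=-w(b,a)$: mirror tiers $T_i$ and $T_{2t+1-i}$ yield the same equalities after sign-flipping (so only one side should be counted), and the middle tier is special because each pair $\{e,\bar e\}\subseteq T_0$ forces $w(e)=0$ rather than a pairwise identification. Establishing the surjectivity of $\Phi$ (a standard but necessary McGarvey-style argument) and verifying the resulting independence claims are the only non-routine parts; the three-way case split in both halves of the theorem then follows by direct translation from Theorem~\ref{thm:union-poly}.
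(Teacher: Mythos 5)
Your proposal is correct and follows essentially the same route as the paper's proof: encode each palindromic order $O$ as a polyhedron $\cpoly{O}$ via tier-equality and strict-ordering constraints on pairwise-difference vectors, show $\dim(\cpolyz{O})=m!-\ties(O)$, and translate the three cases of Theorem~\ref{thm:union-poly} into the three cases of the statement. The paper carries out the rank argument a bit more carefully---first establishing that the implicit equalities $\ba^{=}$ of $\cpoly{O}$ coincide exactly with the tier-equality rows (not more) via a McGarvey-type realization, and then proving $\rank(\ba^{=})=\ties(O)$ by a contradiction argument that again invokes McGarvey---whereas you compress this into a pullback-through-$\Phi$ sketch; but the substance and the identification $\alpha_n=m!-\ell_{\min}$, $\beta_n=m!-\ell_{\text{mm}}$ match the paper's proof.
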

The proof can be found in Appendix~\ref{app:proof-eorule}. In the remainder of this section, we apply Theorem~\ref{thm:eorule} to provide dichotomous characterizations of  $\max$-smooth likelihood of ties under Copeland$_\alpha$, maximin, Schulze, and ranked pairs for the model in Corollary~\ref{coro:scoringtie}, which includes IC as a special case. 


\begin{prop}[\bf \boldmath Max smoothed likelihood of ties: Copeland$_\alpha$]\label{prop:copeland}
 For any fixed  $m\ge 3$, let $\mm= (\Theta,\ml(\ma),\Pi)$ be a strictly positive and closed single-agent preference model with $\piuni\in \conv(\Pi)$. Let $l_\alpha = \min\{t\in{\mathbb N}: t\alpha\in\mathbb Z\}$. For any $2\le k\le m$ and any $n\in\mathbb N$,  
$$\slt{\Pi}{\copeland}{m}{k}{n}=\left\{\begin{array}{ll}
0 &\text{if } 2\nmid n, 2\mid k, \text{and }k\ge m-1\\
\Theta(n^{-\frac{k}{4}}) &\text{if } 2\mid n,  2\mid  k,  { and }\left\{\begin{array}{l}(1) k=m, \text{ or} \\ 
(2) k=m-1 \text{ and } \alpha\ge \frac 12, \text{ or}\\
(3) k=m-1 \text{ and } k\le l_\alpha(l_\alpha+1)
\end{array}\right.\\
\Theta\left(n^{-\frac{l_\alpha(l_\alpha+1)}{4}}\right) &\text{if } 2\mid n,  2\mid  k,  k=m-1, \alpha<\frac12,\text{ and }k> l_\alpha(l_\alpha+1)\\
\Theta(1) &\text{otherwise (i.e., if }2\nmid k \text{ or }k\le m-2\text{)}
\end{array}\right.$$
\end{prop}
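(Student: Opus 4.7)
The plan is to specialize Theorem~\ref{thm:eorule} to $\cor = \copeland$, which is manifestly edge-order-based. Because $\pi_{\text{uni}} \in \conv(\Pi)$ and $\wmg(\pi_{\text{uni}})$ is the zero graph, $\eo(\pi_{\text{uni}})$ is the palindromic order with empty strict tiers whose middle tier contains every edge, and this is refined by \emph{every} $O \in \pal{n}$. Consequently $\mo_{\copeland,k,n}^{\Pi} = \{O \in \pal{n} : |\copeland(O)| = k\}$, so the exponential regime of Theorem~\ref{thm:eorule} never triggers, and the task reduces to computing $\ell_{\min} = \min\{\ties(O): O \in \pal{n},\ |\copeland(O)|=k\}$. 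Since Copeland depends only on the sign pattern of the WMG, it is never beneficial to put several edges in the same strict tier, so the minimum is attained with singleton strict tiers and $\ties(O) = |T_0|/2 = T$, the number of head-to-head ties. Proposition~\ref{prop:eo} then says that for odd $n$ one must additionally take $T=0$, i.e.~a strict tournament.

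\textbf{The 0 case} (odd $n$, even $k$, $k \in \{m,m-1\}$). Copeland scores are integers summing to $\binom{m}{2}$. For $k=m$ all scores equal $(m{-}1)/2$, which is not an integer when $m$ is even. For $k=m-1$ and $m$ odd, writing the tied score as $s$ and the distinct score as $s' = (m-1)(m-2s)/2$, the constraints $s'\ge 0$ and $s' < s$ force $s\le (m-1)/2$ and $s> (m-1)/2$ respectively, a contradiction. \textbf{The $\Theta(1)$ cases} ($k$ odd, or $k\le m-2$). Here I construct an explicit strict tournament, valid for all sufficiently large $n$ via Proposition~\ref{prop:eo}, in which exactly $k$ alternatives attain a common integer Copeland score. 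A natural recipe is to take a balanced or near-balanced tournament on the $k$ intended winners and append $m-k$ ``dominated'' alternatives, arranging directions so that each loser's Copeland score stays strictly below the top score; feasibility of the arithmetic is easy whenever $k$ is odd or $m-k\ge 2$. This yields $\ties(O)=0$, hence $\Theta(1)$.

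\textbf{The polynomial cases} ($n$ even, $k\in\{m,m-1\}$ even). Writing each winner's score as $s = j_i + \alpha t_i$ with $(j_i,t_i)\in\mathbb Z_{\ge 0}^2$ and summing $\sum_a t_a = 2T$ yields the basic identity $ks + \sum_{b\text{ loser}}(j'_b+\alpha t'_b) = S + 2\alpha T$. For $k=m$ (all $m$ tied, $m$ even), $s=(m-1)/2 + (2\alpha-1)T/m$; combined with $s=j+\alpha t$ for each winner this forces $T\ge m/2 = k/2$, and the lower bound is tight: pair the $m$ alternatives into $m/2$ tied pairs and arrange the remaining edges as a balanced tournament. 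For $k=m-1$ the lone non-winner supplies additional flexibility, and an enumeration of admissible $(j_i,t_i)$ shows the same minimum $T=k/2$ when $\alpha\ge 1/2$ or $k\le l_\alpha(l_\alpha+1)$. When $\alpha < 1/2$ and $k > l_\alpha(l_\alpha+1)$ the denominator $l_\alpha$ constrains which scores are attainable: solutions of $j+\alpha t = s$ form a one-parameter family with $t$-step $l_\alpha$, so at most $l_\alpha$ of the $k$ winners may simultaneously realize the smallest admissible tie-count, the next $l_\alpha$ winners must use a tie-count larger by $l_\alpha$, and so on; summing the $t_i$'s in this forced staircase gives $T \ge l_\alpha(l_\alpha+1)/2$, matched by an explicit construction in which the tied pairs are arranged in a nested block structure. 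In every polynomial case I then apply Theorem~\ref{thm:eorule} with $\ell_{\min}\in\{k/2,\ l_\alpha(l_\alpha+1)/2\}$ to obtain the stated exponent.

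\textbf{Main obstacle.} The hardest step is the matching lower bound in the final regime $k=m-1$, $\alpha<1/2$, $k>l_\alpha(l_\alpha+1)$: one must rule out that some unforeseen distribution of $(j_i,t_i)$ among winners, possibly exploiting both the loser's slack and the global constraint $\sum t_a = 2T$, reduces $T$ below $l_\alpha(l_\alpha+1)/2$. This hinges on the Diophantine structure of $j+\alpha t = s$ together with a packing argument bounding how many winners can share each admissible $(j,t)$ pair under the tournament feasibility constraints $j_i+t_i\le m-1$ and the double-counting $\sum t_a = 2T$.
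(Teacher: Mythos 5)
Your framework is exactly the paper's: specialize Theorem~\ref{thm:eorule} to Copeland, observe that $\eo(\pi_\text{uni})$ has only a middle tier so the exponential regime is vacuous, and reduce $\ell_{\min}$ to the minimum number of head-to-head ties in an unweighted graph $G\in\ug{n}$ with $|\copeland(G)|=k$ (this is precisely the paper's Claim~\ref{claim:copelandwinner}). Your algebraic treatment of the $0$ case is correct and in fact cleaner than the paper's (which has some loose inequality handling there), and your $\Theta(1)$ sketch, while not carrying the explicit tournament constructions of Figure~\ref{fig:copeland}, is on track.

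The gap is in the polynomial cases, and it is not a small one. For the $\Theta(n^{-k/4})$ lower bound you write ``combined with $s=j+\alpha t$ for each winner this forces $T\ge m/2$,'' but no mechanism is given; the paper's argument is that $T<k/2$ implies some winner has no ties, hence integer score $u$, and then the total score $uk$ cannot lie in the achievable interval $\bigl[\binom m2-(1-2\alpha)T,\ \binom m2-(1-2\alpha)T\bigr]$ with $T\le k/2-1$ — that interval has length $<1$ and is centered off a half-integer. More seriously, your ``staircase'' heuristic for the $\Theta(n^{-l_\alpha(l_\alpha+1)/4})$ lower bound — that at most $l_\alpha$ winners can realize the smallest admissible tie count, the next $l_\alpha$ must take $t$ larger by $l_\alpha$, etc. — is not a correct reading of the constraints. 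Nothing bounds how many winners can have the same $t_i$; the constraint is graph-theoretic, not one about repeated $(j,t)$ pairs. The paper's argument is a degree-sum/clique argument: once some winner $a$ is tied, it has $t_a\ge l_\alpha$ tied neighbors, at least $l_\alpha-1$ of which are winners each with tie-degree $\ge l_\alpha$; summing degrees in the resulting near-clique of size $\ge l_\alpha+1$ forces $T\ge\binom{l_\alpha+1}{2}$. Your packing claim ``at most $l_\alpha$ winners may share each admissible $(j,t)$ pair'' does not substitute for this and would need its own proof (I do not see one). Finally, the matching upper-bound constructions for both polynomial cases — which in the paper require a careful three-step edge assignment, split by the parity of $l_\alpha$, to make $k=m-1$ alternatives achieve exactly the target score $k/2$ with exactly $l_\alpha(l_\alpha+1)/2$ or $k/2$ ties and keep alternative $m$ strictly below — are asserted (``nested block structure'') but not given. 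These constructions are where most of the real work of the proposition lives.
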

The $\Theta(1)$ case appears most typical, which happens when $k$ is odd or $k\le m-2$. The $\Theta (n^{-\frac{l_\alpha(l_\alpha+1)}{4}}\ )$ case appears most interesting, because its degree  depends on the smallest natural number $l_\alpha$ such that $\alpha l_\alpha$ is an integer. For example, $l_0 =1$, $l_{1/3} =3$, $l_{2/5} =5$, and $l_\alpha = \infty$ for any irrational number $\alpha$ (which means that the $\Theta (n^{-\frac{l_\alpha(l_\alpha+1)}{4}})$ case does not happen because $k<\infty = l_\alpha(l_\alpha+1)$). While the $\Theta(1)$ case can probably be proved by standard central limit theorem and the union bound, we are not aware of a previous work on it. Standard techniques are too coarse for other cases.



\begin{prop}[\bf Max smoothed likelihood of ties: maximin]\label{prop:maximin}
 For any fixed  $m\ge 3$, let $\mm= (\Theta,\ml(\ma),\Pi)$ be a strictly positive and closed single-agent preference model  with $\piuni\in \conv(\Pi)$. For any $2\le k\le m$ and any $n\in\mathbb N$,  $\slt{\Pi}{\maximin}{m}{k}{n}=\Theta(n^{-\frac{k-1}{2}})$.
\end{prop}

\begin{prop}[\bf Max smoothed likelihood of ties: Schulze]\label{prop:schulze}
 For any fixed  $m\ge 3$, let $\mm= (\Theta,\ml(\ma),\Pi)$ be a strictly positive and closed single-agent preference model  with $\piuni\in \conv(\Pi)$. For any $2\le k\le m$ and any $n\in\mathbb N$,   $\slt{\Pi}{\schulze}{m}{k}{n}=\Theta(n^{-\frac{k-1}{2}})$.
\end{prop}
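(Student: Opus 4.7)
The plan is to apply Theorem~\ref{thm:eorule} to the Schulze rule, which is edge-order-based since its winners depend only on $\eo(P)$. Under the hypothesis $\pi_{\text{uni}}\in\conv(\Pi)$, the weighted majority graph of $\pi_{\text{uni}}$ is identically zero, so $\eo(\pi_{\text{uni}})$ is the trivial palindromic order whose unique (middle) tier contains every directed edge; hence every $O\in\pal{n}$ refines $\eo(\pi_{\text{uni}})$, and $\mo_{\schulze,k,n}^{\Pi}\supseteq\mo_{\schulze,k,n}^{\pi_{\text{uni}}}=\{O\in\pal{n}:|\schulze(O)|=k\}$. Via Theorem~\ref{thm:eorule}, the proposition reduces to showing that for all sufficiently large $n$ this set is non-empty and its minimum $\ties$-value equals exactly $k-1$.

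For the upper bound $\ell_{\min}\le k-1$, I would exhibit an explicit palindromic order $O\in\pal{n}$ with $|\schulze(O)|=k$ and $\ties(O)=k-1$. Write $S=\{1,\ldots,k\}$ and $\bar S=\{k+1,\ldots,m\}$. For $k\ge 3$ the construction stacks (from top to bottom) all $k(m-k)$ cross-edges from $S$ to $\bar S$ in singleton tiers at distinct weights strictly above some threshold $c>0$, then a single tier $T^{\ast}$ containing the Hamilton cycle $(1,2),(2,3),\ldots,(k-1,k),(k,1)$ at common weight $c$, then the remaining $k(k-2)$ edges inside $S$ in singleton tiers with weights in $(0,c)$, then edges inside $\bar S$ in singleton tiers, with $T_0=\emptyset$ and palindromic reflections completing $O$; Proposition~\ref{prop:eo} then places $O$ in $\pal{n}$ for all $n\ge m^4$. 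A bottleneck-path computation gives $s[i,j]=s[j,i]=c$ for all $i,j\in S$ (the Hamilton cycle supplies equal-strength beatpaths in both rotational directions, while any detour through $\bar S$ must use a negatively weighted edge returning to $S$) and $s[i,a]>0>s[a,i]$ for $i\in S$, $a\in\bar S$, so $\schulze(O)=S$. Only $T^{\ast}$ is non-singleton in the upper half, giving $\ties(O)=|T^\ast|-1=k-1$. The case $k=2$ is handled separately: for even $n$ place $(1,2),(2,1)$ in $T_0$, and for odd $n$ build a three-alternative configuration involving a loser so that $s[1,2]=s[2,1]$ is realized by a common beatpath strength through the third alternative, again yielding $\ties(O)=1$.

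The main obstacle is the matching lower bound $\ell_{\min}\ge k-1$. Fix any $O\in\pal{n}$ with $|\schulze(O)|=k$ and Schulze winners $S^\ast=\{a_1,\ldots,a_k\}$. Since $O$ is fixed, for each ordered pair $(a,b)$ the strongest beatpath from $a$ to $b$ has a determined bottleneck edge $e(a,b)$, and $s[a,b]$ coincides with a specific edge-weight functional throughout the polyhedron of histograms whose edge-order refines $O$. The Schulze winner condition enforces $s[a_i,a_j]=s[a_j,a_i]$ for all $i\ne j$ on the entire polyhedron, which can only hold if each pair $(e(a_i,a_j),e(a_j,a_i))$ lies in a common tier of $O$ (or both in $T_0$). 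Transitivity of the Schulze preorder reduces the $\binom{k}{2}$ pairwise identities to $k-1$ independent ones carried by any spanning tree of the tie graph on $S^\ast$; each such identity must already be among the equalities counted by $\ties(O)$. A short dimensional argument, using that the characteristic cone of the polyhedron has dimension $m!-\ties(O)$ and that the ``$k$-way-tied on $S^\ast$'' semi-algebraic set has codimension at least $k-1$ (as perturbing any single spanning-tree identity while preserving the others would disconnect the tie graph and strictly decrease the number of tied winners), shows the $k-1$ identities are linearly independent, so $\ties(O)\ge k-1$. Combining the two bounds gives $\ell_{\min}=k-1$, and Theorem~\ref{thm:eorule} then yields $\slt{\Pi}{\schulze}{m}{k}{n}=\Theta(n^{-(k-1)/2})$.
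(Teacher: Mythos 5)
Your high-level plan (reduce to Theorem~\ref{thm:eorule}, rule out the zero and exponential cases via $\pi_{\text{uni}}$, then compute $\ell_{\min}$) matches the paper, and your upper-bound construction --- a length-$k$ cycle among the winners placed in a single non-singleton tier, with cross-edges to losers in singleton tiers --- is essentially the paper's own construction reused from the maximin proof. The genuine divergence is the lower bound $\ell_{\min}\ge k-1$, and that is where your argument has a real gap. You want to show that the $k-1$ spanning-tree identities $s[a_i,a_j]=s[a_j,a_i]$ give $k-1$ linearly independent functionals of the form $\pair_{e(a_i,a_j)}-\pair_{e(a_j,a_i)}$ lying in the row span of $\ba^=$, hence $\ties(O)=\rank(\ba^=)\ge k-1$ by Claim~\ref{claim:solpalindromic}(iii). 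But the linear independence is precisely the content you need to prove, and your ``short dimensional argument'' does not do it. Nothing in your sketch rules out that two distinct spanning-tree pairs share the same pair of bottleneck edges --- for instance $e(a_1,a_2)=e(a_1,a_3)$ and $e(a_2,a_1)=e(a_3,a_1)$, which would make the corresponding functionals identical --- nor does it handle the non-uniqueness of the bottleneck edge $e(a,b)$ on a strongest beatpath. The ``perturbing any single spanning-tree identity while preserving the others would disconnect the tie graph'' step is circular: whether you \emph{can} perturb one identity while fixing the others is exactly the independence claim you are trying to establish, and a priori the identities may not be freely perturbable because they are coupled through shared bottleneck edges. A codimension assertion about the ``$k$-way-tied semi-algebraic set'' also cannot be invoked without proof inside this proposition, since that assertion is equivalent to what you want.

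The paper's proof avoids this entirely with a combinatorial argument. It defines $G_i$ to be the unweighted graph consisting of the top $t+1-i$ tiers, proves (Claim~\ref{claim:undominated}) that the Schulze winners are undominated in every $G_i$, lets $\eta_i$ be the number of connected components of $G_i$ meeting the winner set, and then observes that each tier $T_{t+1-i}$ must contain at least $\eta_{i+1}-\eta_i+1$ edges whenever a split occurs, so telescoping $\eta_1=1$ to $\eta_{t+1}=k$ across the tiers gives $\ties(O)\ge k-1$ directly, with a separate short argument when $\eta_1>1$ via $T_0$. This sidesteps any need to identify or reason about individual bottleneck edges. If you want to salvage your linear-algebraic route, you would need to prove a structural lemma about beatpath bottleneck edges guaranteeing the required independence --- which is likely to be at least as involved as the paper's component-counting argument and is currently missing from your proposal.
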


\begin{prop}[\bf Max smoothed likelihood of ties: ranked pairs]\label{prop:rp}
 For any fixed  $m\ge 3$, let $\mm= (\Theta,\ml(\ma),\Pi)$ be a strictly positive and closed single-agent preference model  with $\piuni\in \conv(\Pi)$.  For any $2\le k\le m$ and any $n\in\mathbb N$,   
 $\Omega(n^{-\frac{k-1}{2}}) \le \slt{\Pi}{\rp}{m}{k}{n} \le n^{-\Omega(\frac{\log k}{\log\log k})}$. Moreover, when $m\ge  k+5\lceil \log k\rceil$, $\slt{\Pi}{\rp}{m}{k}{n}=\Omega(n^{-\frac{\lceil \log k\rceil}{2}})$. When $k=2$, we have $\slt{\Pi}{\rp}{m}{2}{n}=\Theta(n^{-0.5})$.
\end{prop}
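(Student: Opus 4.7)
The plan is to invoke Theorem~\ref{thm:eorule}. Since $\pi_\text{uni}\in\conv(\Pi)$ and $\wmg(\pi_\text{uni})=0$, the palindromic order $\eo(\pi_\text{uni})$ consists of a single middle tier containing every edge, so every palindromic order refines it. Hence $\mo_{\rp,k,n}^{\Pi}$ equals the set of all $O\in\pal{n}$ with $|\rp(O)|=k$, and Theorem~\ref{thm:eorule} gives $\slt{\Pi}{\rp}{m}{k}{n}=\Theta(n^{-\ell_{\min}/2})$, where $\ell_{\min}=\min\ties(O)$ over that set. Realizability in $\pal{n}$ for $n\ge m^4$ follows from Proposition~\ref{prop:eo}, with integer edge weights arranged by scaling and, if parity requires, one balancing vote. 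The remaining task is combinatorial: bound $\ties(O)$ for palindromic orders with exactly $k$ RP winners.

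For $\ell_{\min}\le k-1$ (giving $\slt\ge\Omega(n^{-(k-1)/2})$) I would use a single $k$-cycle: place $(a_1,a_2),(a_2,a_3),\ldots,(a_k,a_1)$ in the top tier, and put every other positive-weight edge into its own singleton tier, ordered so that each $a_i$ beats every alternative outside the cycle and the non-cycle alternatives form an acyclic sub-tournament. PUT applied to the top tier must drop exactly one of the $k$ cycle edges, making a different $a_i$ the unique source each time; so $|\rp(O)|=k$ and $\ties(O)=k-1$. For the improved $\ell_{\min}\le\lceil\log k\rceil$ under $m\ge k+5\lceil\log k\rceil$, I would build $\ell=\lceil\log k\rceil$ disjoint binary-choice gadgets, each a small structure contributing one tier of size $2$ (hence a single tie) and using at most $5$ auxiliary alternatives, and cascade their outcomes through an auxiliary DAG on $2^\ell$ candidate winners so that each of the $2^\ell\ge k$ combinations of binary choices produces a distinct source. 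This yields $k$ RP winners with $\ties(O)=\ell$ and fits within the alternative budget.

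For the upper bound $\ell_{\min}=\Omega(\log k/\log\log k)$: each RP outcome is determined by a linear refinement of the upper tiers $T_1,\ldots,T_t$ of $O$ (middle-tier zero-weight edges are never fixed by RP), so $|\rp(O)|\le\prod_{i=1}^t |T_i|!$. Subject to $\sum_i(|T_i|-1)\le\ties(O)=\ell$, this product is maximized by concentrating all the excess in a single tier, yielding the bound $(\ell+1)!$. Thus $(\ell+1)!\ge k$, and Stirling gives $\ell=\Omega(\log k/\log\log k)$. The $k=2$ case is separate: $\ell_{\min}\ge 1$ since $\ties(O)=0$ forces $O$ to be a strict total order (unique RP winner), while $\ell_{\min}\le 1$ is witnessed by the $3$-alternative gadget with $(x,y)$ above the tied pair $\{(y,z),(z,x)\}$ (extended by strictly ordered low-weight edges when $m>3$), whose two PUT refinements return $x$ and $z$ respectively.

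The main obstacle is designing the $\lceil\log k\rceil$-gadget cascade for the improved lower bound, so that the $2^\ell$ combinations of binary gadget-choices realize $2^\ell$ distinct RP sources while no non-gadget tier has size $>1$ (keeping total $\ties$ exactly $\ell$), and the alternative budget stays within $k+5\lceil\log k\rceil$.
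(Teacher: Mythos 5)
Your proposal follows essentially the same route as the paper: invoke Theorem~\ref{thm:eorule}, observe that $\eo(\pi_{\text{uni}})$ has only a middle tier so the $0$ and exponential cases vanish, bound $\ell_{\min}\le k-1$ via a $k$-cycle in a single tier with all other positive edges in singleton tiers (this is literally the construction the paper reuses from its maximin proposition), bound $\ell_{\min}$ from below via $(\ell_{\min}+1)!\ge |\rp(O)|$ (the paper phrases this by introducing $s_k$, the largest integer with $s_k!\le k$, and showing $\ell_{\min}\ge s_k-1$), and handle $k=2$ by noting $\ties(O)=0$ forces a unique winner. On the factorial bound the two arguments are the same; the paper's statement that the maximum over tier partitions is $(\ell+1)!$, achieved by concentrating ties in one non-middle tier, matches your $\prod_i|T_i|!\le(\ell+1)!$ subject to $\sum_i(|T_i|-1)\le\ell$.

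There is, however, a genuine gap in your sketch of the ``moreover'' construction, beyond the detail you acknowledge leaving open. You aim for ``$2^\ell$ combinations of binary gadget-choices realize $2^\ell$ distinct RP sources,'' but Theorem~\ref{thm:eorule} requires a palindromic order $O$ with $|\rp(O)|$ equal to $k$ \emph{exactly}, not at least $k$. Since $2^{\lceil\log k\rceil}$ generally exceeds $k$, a construction that makes all $2^\ell$ combinations yield distinct sources would overshoot to $|\rp(O)|=2^\ell>k$ and hence not certify $\ell_{\min}\le\lceil\log k\rceil$. The paper's construction instead fixes a target set of exactly $k$ alternatives $\{0,\ldots,k-1\}$; for each $i\le\lceil\log k\rceil$ there is a five-alternative gadget $\{a_i,b_i,c_i,d_i,f_i\}$ with the unique tied pair being $a_i\ra c_i$ and $b_i\ra d_i$, and each target $j$ is wired to $a_i$ or $b_i$ according to the $i$-th binary digit of $j$. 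The weight groups ensure that after ties are broken, the unique RP source is the $j$ whose digit pattern matches the tie-breaking, so each of the $k$ targets is realizable by \emph{some} tie-breaking and no non-target alternative ever becomes a source. When you fill in your cascade you need this ``exactly $k$, surjective onto $\{0,\ldots,k-1\}$'' behavior, not $2^\ell$ distinct sources; in particular you must argue that tie-breakings whose digit strings are $\ge k$ do not create an out-of-set winner, which is the delicate part of the paper's Group~1/2/3 weight layering.
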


\noindent{\bf Proof sketches of  Propositions~\ref{prop:copeland},~\ref{prop:maximin},~\ref{prop:schulze}, and~\ref{prop:rp}.} The proofs are done by applying  Theorem~\ref{thm:eorule}. For any EO-based rules $\cor$ studied in this paper,  the condition for the $0$ case can be verified efficiently using  Proposition~\ref{prop:eo}  for any sufficiently large $n$. If the $0$ case does not happen, then the exponential case does not happen either, because for any $O\in \pal{n}$ such that $|\cor(O)| =k$, $O$ extends $\eo(\piuni)$, which is the palindromic order that only has the middle tier $T_0$. This also means that for every $\pi\in\conv(\Pi)$, we have $\mo_{\cor,k,n}^{\pi} \subseteq \mo_{\cor,k,n}^{\piuni}$. Consequently,  $\ell_{\text{min}}$ is achieved at $\piuni$. 

The bulk of proof then focuses on characterizing $O\in \pal{n}$ with the minimum number of ties such that $|\cor(O)| = k$. This can be more complicated than it appears,  for example for ranked pairs and for Copeland$_\alpha$ when $2\nmid $ and $\alpha$ is not $0$ or $1$. In particular, for ranked pairs we were only able to obtain (non-tight) upper and lower bounds. 
The full proofs of Propositions~\ref{prop:copeland},~\ref{prop:maximin},~\ref{prop:schulze}, and~\ref{prop:rp} can be found in Appendix~\ref{app:proof-copeland},~\ref{app:proof-mm},~\ref{app:proof-schulze}, and~\ref{app:proof-rp}, respectively. \hfill$\Box$

\subsection{STV and Coombs}
\label{sec:mrerules}

\begin{thm}[\bf Smoothed likelihood of ties: STV and Coombs]\label{thm:STV-Coombs} For any fixed  $m\ge 3$, let $\cor\in\{\stv,\coombs\}$ and let $\mm= (\Theta,\ml(\ma),\Pi)$ be a strictly positive and closed single-agent preference model. For any $2\le k\le m$ and  $n\in\mathbb N$,   $\slt{\Pi}{\cor}{m}{k}{n}$ (respectively, $\ilt{\Pi}{\cor_{\vec s}}{m}{k}{n}$) is either $0$, $\exp(-\Theta(n))$, or $\Theta(\text{poly}(n))$.\end{thm}

The formal statement of the theorem and its proof are delegated to Appendix~\ref{app:mrse}.  To accurately characterize the degree in the polynomial case, we introduce {\em PUT structures} (Definition~\ref{dfn:PUT})  as the counterpart of palindromic orders to define and analyze active polyhedra and the dimensions of their characteristic cones. See Appendix~\ref{app:mrse} for its formal definitions and an example. Like in Section~\ref{sec:eorules}, the theorem can be applied to characterize  max smoothed likelihood of ties for STV and Coombs  for   distributions $\Pi$  where $\piuni\in \conv(\Pi)$, as shown in the following proposition. 

\begin{prop}[\bf Max smoothed likelihood of ties: STV and Coombs]\label{prop:stv-Coombs}
 For any fixed  $m\ge 3$, let $\cor\in\{\stv,\coombs\}$ and  let $\mm= (\Theta,\ml(\ma),\Pi)$ be a strictly positive and closed single-agent preference model  with $\piuni\in \conv(\Pi)$. For any $2\le k\le m$ and any $n\in\mathbb N$,   
$$\slt{\Pi}{\cor}{m}{k}{n}=\left\{\begin{array}{ll}\Theta(n^{-\frac{k-1}{2}})&\text{if }\left\{\begin{array}{l} (1) m\ge 4 \text{, or } \\
(2) m=3\text{ and }k=2, \text{ or }\\
(3) m=k=3\text{ and }(2\mid n \text{ or } 3\mid n)\end{array}\right.\\
0 &\text{otherwise (i.e., }m=k=3\text{,  }2\nmid n,\text{ and }3\nmid n\text{)}\end{array}\right.
$$
\end{prop}
To prove the proposition, we prove a \citet{McGarvey53:Theorem}-type result for STV and Coombs (Lemma~\ref{lem:construction-STV-Coombs}) to characterize active polyhedra. The lemma might be of independent interest.  

\section{Experimental Studies}
We examine the fraction of profiles with two-or-more-way ties using simulated data and Preflib data~\cite{Mattei13:Preflib} under  Borda, plurality,  veto,  maximin, ranked pairs, Schulze,  Copeland$_{0.5}$,   STV, and Coombs.  All experiments were implemented in Python 3 and were conducted on a MacOS laptop with 3.1 GHz Intel Core i7 CPU and 16 GB memory.\\
\begin{minipage}[t][][b]{0.47\textwidth}
{\bf Simulated data.} We  generate profiles of $m=4$ alternatives under IC.   $n$ ranges from  $20$ to $200$. In each setting we generate $100000$ profiles. 

The goal of experiments on synthetic data is to provide a sanity check for the theoretical results in this paper. For clarity, we present the results for Borda, STV, maximin, ranked pairs, and Copeland$_{0.5}$   in Figure~\ref{fig:exp-short}. Results for all voting rules described above can be found in Figure~\ref{fig:exp-full} in Appendix~\ref{sec:exp-full}.  Figure~\ref{fig:exp-short} confirms Corollary~\ref{coro:scoringtie} and Propositions~\ref{prop:copeland}, \ref{prop:maximin},   \ref{prop:rp}, and \ref{prop:stv-Coombs} under IC as discussed in the Introduction: the probability of any-way ties is $\Theta(1)$ for Copeland$_{0.5}$ and is $\Theta(\frac{1}{\sqrt n})$ for  other rules.
\end{minipage}
\hfill
\begin{minipage}[t][][b]{0.5\textwidth}
\centering
  \includegraphics[width = \linewidth]{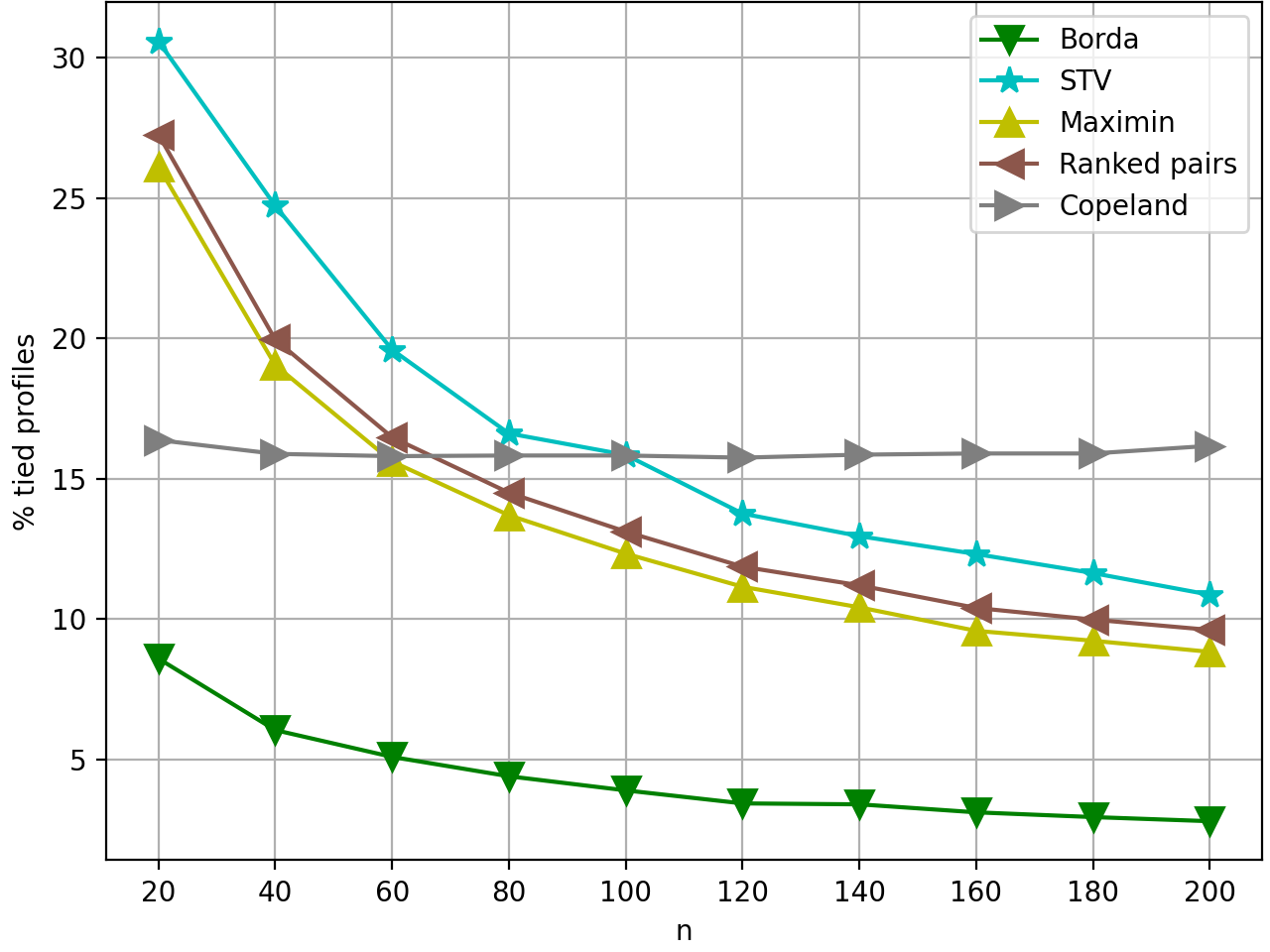}
\captionof{figure}{\small Fraction of tied profiles under IC. \label{fig:exp-short}}
\end{minipage}

\paragraph{\bf Preflib data.} Because the PUT versions of STV and ranked pairs are NP-hard to compute~\cite{Conitzer09:Preference,Brill12:Price}, we used AI-search-based implementations of STV and ranked pairs~\cite{Wang2019:Practical} and computed the fraction of tied profiles  among  the 307  profiles from Strict Order-Complete Lists (SOC) under election data category at Preflib~\cite{Mattei13:Preflib}.\footnote{
Preflib mentioned that this category  can be interpreted as election data, though not all of them come from real-life elections. 
See some statistics in Appendix~\ref{sec:exp-full}. We used the same dataset  in~\cite{Wang2019:Practical}, where PUT ranked pairs can finish in one hour.} The results are summarized in Table~\ref{tab:ties-preflib} below. We emphasize that the observations are drawn only from Preflib data and should not be interpreted as general conclusions on the likelihood of ties in presidential elections.

\begin{table}[htp]
\centering
\caption{Percentage of tied profiles in Preflib data in weakly increasing order.\label{tab:ties-preflib}}
\begin{tabular}{|c|c|c|c|c|c|c|c|c|c|}
\hline
 &\small  Borda&\small  Copeland$_{0.5}$& \small Plurality&\small  Maximin&\small  Schulze&\small   Ranked pairs&\small  STV &\small  Coombs&\small  Veto\\
\hline
Ties & 1.6\% &  2.6\% &  4.6\% &  6.8\% &  6.8\% &  6.8\%  &  7.5\% &  10.4\% &  31.3\% \\
\hline
\end{tabular}
\end{table}
Table~\ref{tab:ties-preflib} shows that ties occur least frequently under Borda (1.6\% of the profiles), which is consistent with the experiments on synthetic data in Figure~\ref{fig:exp-short}. Two interesting observations are: first,  ties are rare under Copeland$_{0.5}$ (2.6\%); and second, ties occur  frequently under veto (31.3\%), which mostly happen when the number of alternatives is larger than the number of voters---in such cases the election is guaranteed to be tied under veto. The two observations are quite different from  Figure~\ref{fig:exp-short}, which is probably because real-life preference data can be significantly different from IC,  as  widely acknowledged in the literature~\cite{Lehtinen2007:Unrealistic}.

\section{Future work} 
We see three immediate directions for future work. First, technically, how can we improve the results for more general models, especially by dropping the strictness assumption on $\Pi$? Second, how can we extend the study to other events of interest in voting, for example,  stability and margin of victory of voting rules, and more generally other topics such as multi-winner elections,  judgement aggregation, matching, and resource allocation? Third, what are the smoothed complexity in various computational aspects of voting~\citep{Baumeister2020:Towards}, such as winner determination~\citep{Xia2021:The-Smoothed}, manipulation, bribery and control?


\section{Acknowledgements}
We thank  Rupert Freeman, Qishen Han, Ao Liu, Marcus Pivato, Sikai Ruan, Rohit Vaish, Weiqiang Zheng, Bill Zwicker,  participants of the COMSOC video seminar, and anonymous reviewers for helpful comments. This work is supported by NSF \#1453542, ONR \#N00014-17-1-2621, and a gift fund from Google.


\bibliographystyle{ACM-Reference-Format}
\bibliography{/Users/administrator/GGSDDU/references}

\Omit{
\section{\bf Future Work} There are many directions and open questions for future work. The smoothed likelihood of ties can be studied for (1) more voting rules, for example Bucklin, STV, plurality with runoff, and other multi-stage voting rules, and/or (2) distributions where agents' noises are correlated, and/or (3) take $\inf$ instead of $\sup$ in Definition~\ref{dfn:smoothed-ties}, which can be used to prove positive results in the contexts where ties are desirable, e.g., voting power and voter turnout as discussed in the Introduction.
}
\newpage
\tableofcontents
\newpage

\appendix

\section{Materials for Section~\ref{sec:maintechthm}}
\subsection{Proof of Theorem~\ref{thm:maintechh}}
\label{app:maintech}
\appThm{thm:maintechh}{The main technical theorem}{
Given any $q\in\mathbb N$, any closed and strictly positive $\Pi$ over $[q]$, and any polyhedron $\poly$ with integer matrix $\ba$, for any $n\in\mathbb N$, 
\begin{align*}
&\sup_{\vec\pi\in\Pi^n}\Pr\left(\vXp \in \poly\right)=\left\{\begin{array}{ll}0 &\text{if } \polynint=\emptyset\\
\exp(-\Theta(n)) &\text{if } \polynint \ne \emptyset \text{ and }\polyz\cap\conv(\Pi)=\emptyset\\
\Theta\left(n^{\frac{\dim(\polyz)-q}{2}}\right) &\text{otherwise (i.e. } \polynint\ne \emptyset \text{ and }\polyz\cap\conv(\Pi)\ne \emptyset\text{)}
\end{array}\right.,\\
&\inf_{\vec\pi\in\Pi^n}\Pr\left(\vXp \in \poly\right)=\left\{\begin{array}{ll}0 &\text{if } \polynint=\emptyset\\
\exp(-\Theta(n)) &\text{if } \polynint \ne \emptyset \text{ and }\conv(\Pi)\not\subseteq \polyz\\
\Theta\left(n^{\frac{\dim(\polyz)-q}{2}}\right) &\text{otherwise (i.e. } \polynint\ne \emptyset \text{ and }\conv(\Pi)\subseteq\polyz
\text{)}\end{array}\right..
\end{align*}
}

\subsubsection{Proof Sketch of Theorem~\ref{thm:maintechh}}
\label{app:maintech-proof-sketch} In this subsection we  present a   proof sketch for the exponential and polynomial cases of the sup part, because the $0$ case is trivial. 

\paragraph{\bf  Intuition and proof sketch for the exponential bounds on Sup.}  We first note that  $\vXp$ is an integer-vector-valued random variable. Therefore, $\Pr(\vXp\in\poly) = \Pr(\vXp\in\polynint)$ and is mainly determined by two factors: (1) the distance between $\expect(\vXp)$ and $\polynint$, and (2) the density of integer vectors in $\polynint$. Standard concentration bounds, e.g., Hoeffding's inequality, tell us that when $\expect(\vXp)$ and $\polynint$ are $\Theta(n)$ away, the probability for $\vXp$ to be in $\polynint$ is exponentially small. This is the intuition  behind the exponential case, as illustrated in Figure~\ref{fig:mainthmex} (a). 

\paragraph{\bf  Intuition behind the polynomial  bounds on Sup.}  As illustrated in Figure~\ref{fig:mainthmex} (b), in the polynomial case,  $\cone(\Pi)$ is $O(1)$ away from $\poly$, and one may expect that $\sup_{\vec\pi\in\Pi^n}\Pr(\vXp\in\poly)$ is achieved when $\expect(\vXp)$ is close to $\poly$. Then, $\Pr(\vXp\in\poly)$ is mostly determined by the density of integer vectors in $\polynint$. A natural conjecture is that the density  can be measured by the dimension of $\poly$, but this is not true as  illustrated in Figure~\ref{fig:exthin} (a)  below, where 
$\dim(\poly)=2$, which is the same as $\dim(\poly)$  in  Figure~\ref{fig:mainthmex} (b). However,   $\Pr(\vXp\in\poly)$ in Figure~\ref{fig:exthin} (a) is smaller than that in Figure~\ref{fig:mainthmex} (b) as $n\ra\infty$, because the ``volume'' of $\polynint$ in Figure~\ref{fig:exthin} (a) does not increase as $n$ increases.

\begin{figure}[htp]
\centering 
\begin{tabular}{@{}c@{}c@{}c@{}}
\includegraphics[width = 0.48\textwidth]{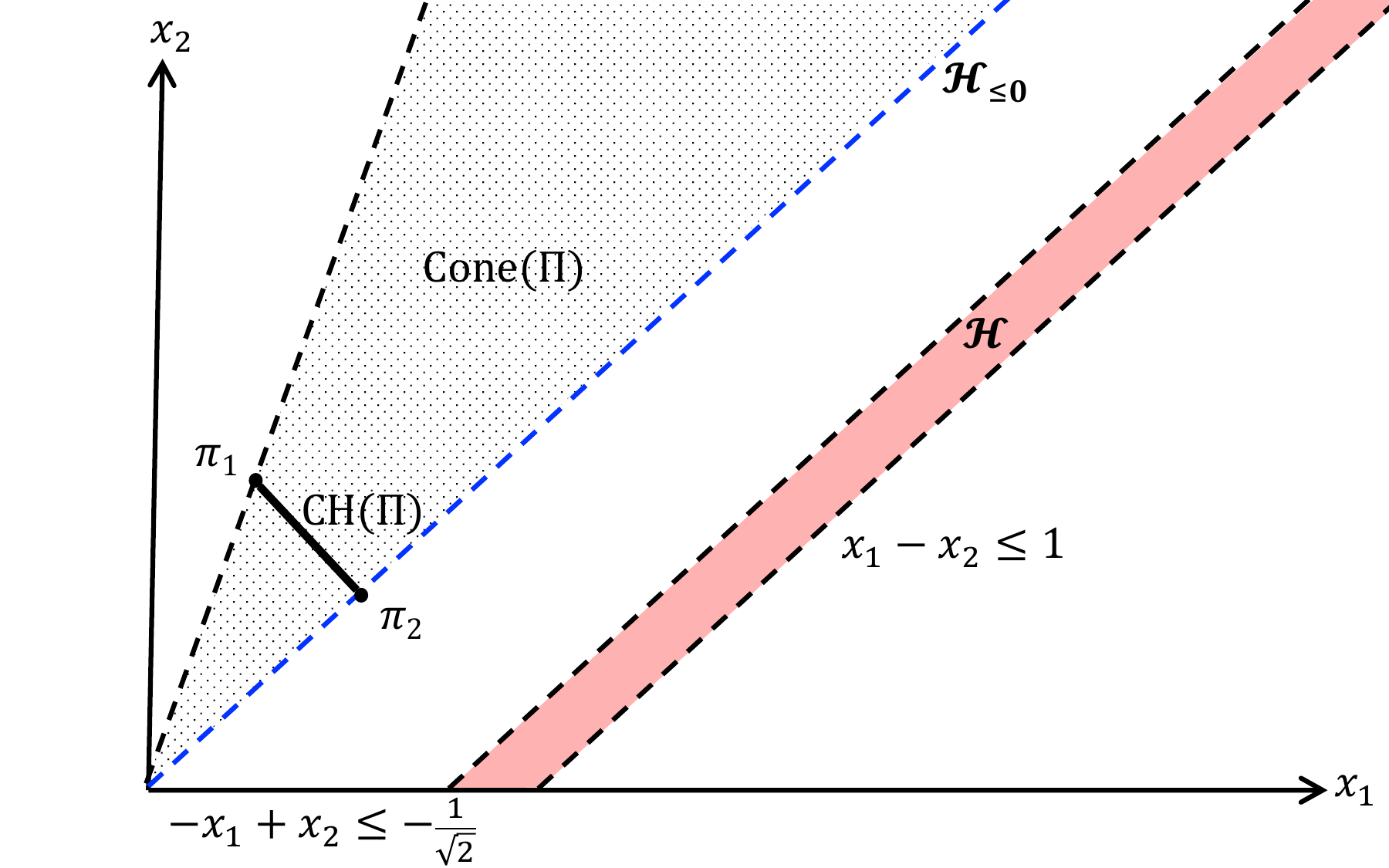} 
&\includegraphics[width = 0.24\textwidth]{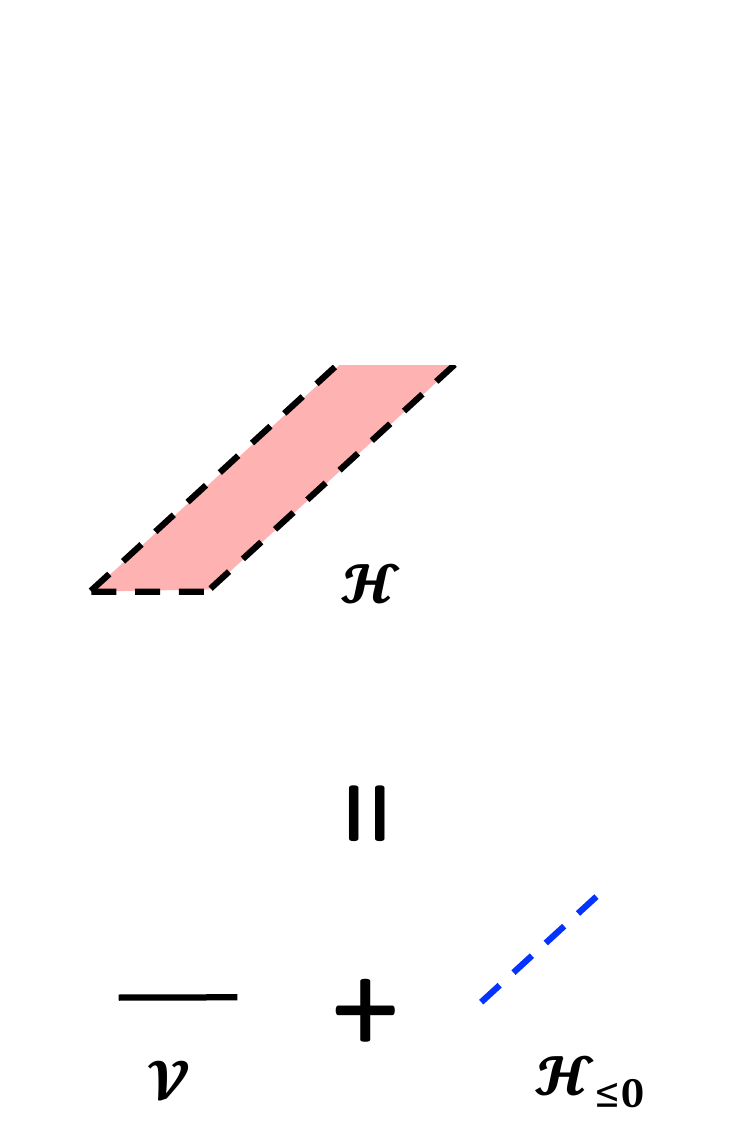} &\includegraphics[width = 0.24\textwidth]{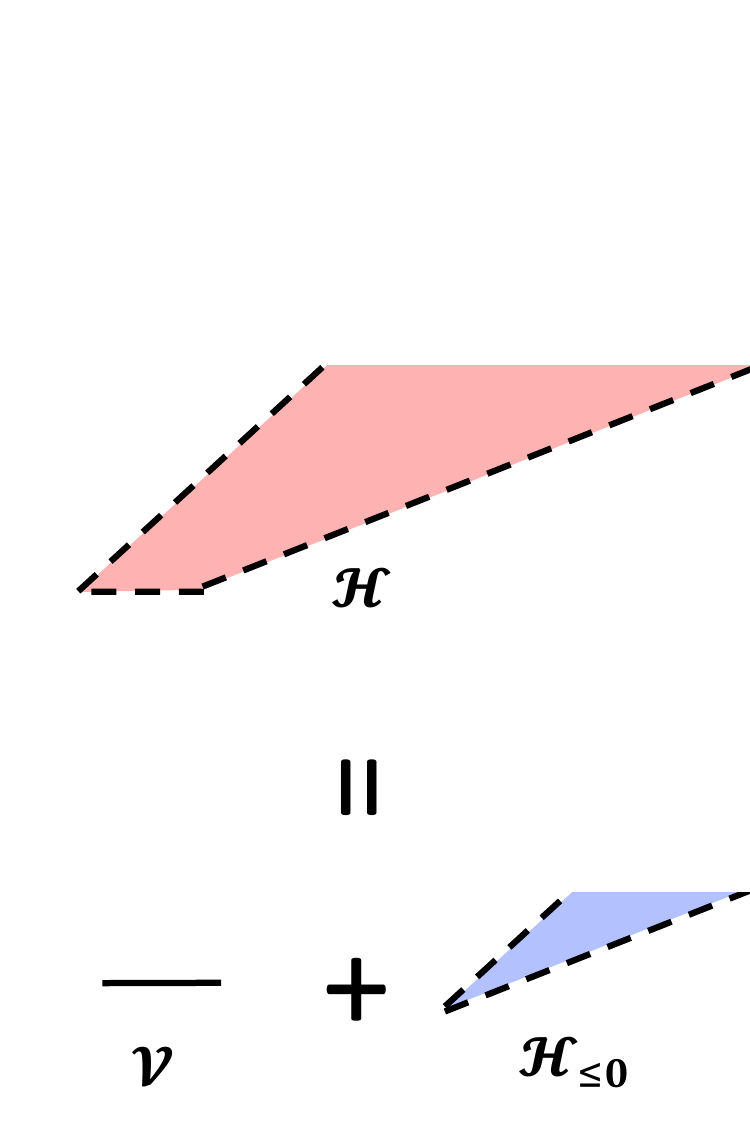} \\
\small (a) $\ba = \left[\begin{array}{rr}1&-1\\-1&1\end{array}\right]$ and $\vec b = \left[\begin{array}{r}1\\-\frac1{\sqrt 2}\end{array}\right]$. 
&\small (b) V-rep of (a). &\small (c) \small V-rep of Figure~\ref{fig:mainthmex} (b). \end{tabular}
\caption{\small An  example  of $\dim(\poly)>\dim{\polyz}$ and V-representations of $\poly$ in (a) and $\poly$ in Figure~\ref{fig:mainthmex} (b).\label{fig:exthin}}
\end{figure} 
It turns out that the dimension of $\polyz$ is the right measure. For example, $\dim(\polyz)=1$ in Figure~\ref{fig:exthin} (a) and $\dim(\polyz)=2$ in Figure~\ref{fig:mainthmex} (b). The rest of the proof leverages the interplay between the matrix representation and the {\em V-representation} of $\poly$, which is defined by the Minkowski-Weyl theorem (see, e.g., \citep[p.~100]{Schrijver1998:Theory}). More precisely, the  V-representation of $\poly$ is  $\poly = \mV+\polyz$, where $\mV$ is a finitely generated polyhedron and $\polyz$ is the characterization cone of $\poly$. See Figure~\ref{fig:exthin} (b) and (c) for the V-representations of $\poly$ in Figure~\ref{fig:exthin} (a) and  in Figure~\ref{fig:mainthmex} (b), respectively.

\paragraph{\bf  Polynomial upper bound  on Sup.}  To accurately upper-bound $\Pr(\vXp\in \poly)$, we partition the $q$ dimensions of $\vXp$ into two sets: $I_0$ and $I_1$, such that vectors in $\polynint$ can be enumerated by first enumerating their $I_1$ components with high flexibility, conditioned on which the $I_0$ components are more or less determined. More precisely,  the following two conditions are satisfied. 

{\bf  Condition (1).} {\em For any $\vec h_1\in {\mathbb Z}_{\ge 0}^{I_1}$, the restriction of $\polynint$ on $\vec h_1$, denoted by $\polynint|_{\vec h_1} = \{\vec h_0\in {\mathbb Z}_{\ge 0}^{I_0}:(\vec h_0,\vec h_1)\in\polynint\}$, contains a constant number (in $n$) of integer vectors.}

{\bf  Condition (2).} {\em With high (marginal) probability on the $I_1$ components of $\vXp$, the conditional probability for the $I_0$ components of $\vXp$ to be in $\polynint|_{\vec h_1}$ is $O\left(n^{\frac{\dim(\polyz)-q}{2}}\right)$.}

Once such $I_0$ and $I_1$ are defined, the upper bound follows after applying the law of total probability.

We use the matrix representation of $\poly$ to define $I_0$ and $I_1$ as follows, which is similar to the definitions in the proof of~\cite[Lemma 1]{Xia2020:The-Smoothed} except that our definition works for general $\ba$. Let $\ba^{=}$ denote the {\em implicit equalities} of $\ba$, which is the maximum set of rows of $\ba$ such that for all $\vec x\in \polyz$, we have $\ba^{=}\cdot \invert{\vec x} = \invert{\vec 0}$. We note that $\ba^=$ does not depend on $\vec b$, and $\rank(\ba^=)=q-\dim(\polyz)$~\citep[Equation (9), p.~100]{Schrijver1998:Theory}.  For example, in  Figure~\ref{fig:mainthmex} (b), $\ba^{=}=\emptyset$ and $\dim(\polyz) = 2$.

We then use the reduced row echelon form (a.k.a.~row canonical form)~\citep{Meyer2000:Matrix} of $\left[\begin{array}{c}\ba^{=}\\ \vec 1\end{array}\right]$ to define $I_0$ and $I_1$. More precisely, we apply the Gauss-Jordan elimination method to convert  the system of linear equations  $\left[\begin{array}{c}\ba^{=}\\ \vec 1\end{array}\right] \cdot \invert{\vec x} = \invert{\vec 0, n}$ to another system of linear equations $\invert{\vec x_{I_0}} = \bd\cdot \invert{\vec x_{I_1}, n}$, where $\vec x_{I_0}$ are the $I_0$ components of $\vec x$, $I_0\cup I_1 = [q]$, $|I_0| = \rank(\ba^=)+1 =q-\dim(\polyz) +1$, and  $\bd$  is an $(q-\dim(\polyz)+1)\times \dim(\polyz)$ rational matrix that does not depend on $n$.    For example, in  Figure~\ref{fig:mainthmex} (b), $I_0 = \{1\}$, $I_1=\{2\}$, and $\bd = \left[\begin{array}{c}-1\ \ 1\end{array}\right]$. See~\citep[Example~4 in the Appendix]{Xia2020:The-Smoothed} for a more informative example.

Then, we prove in Claim~\ref{claim:bounded} in Appendix~\ref{app:maintech-full-proof} that {\bf Condition (1)}  above holds. {\bf  Condition (2)} is proved by applying the point-wise anti-concentration bound~\citep[Lemma 3 in the Appendix]{Xia2020:The-Smoothed} and an alternative representation of the PMV as a simple Bayesian network as done in~\citep{Xia2020:The-Smoothed}.
 
\setlength{\columnsep}{7pt}
\setlength{\intextsep}{5pt}
\begin{wrapfigure}{R}{0.55\textwidth}
\centering
  \includegraphics[width = \linewidth]{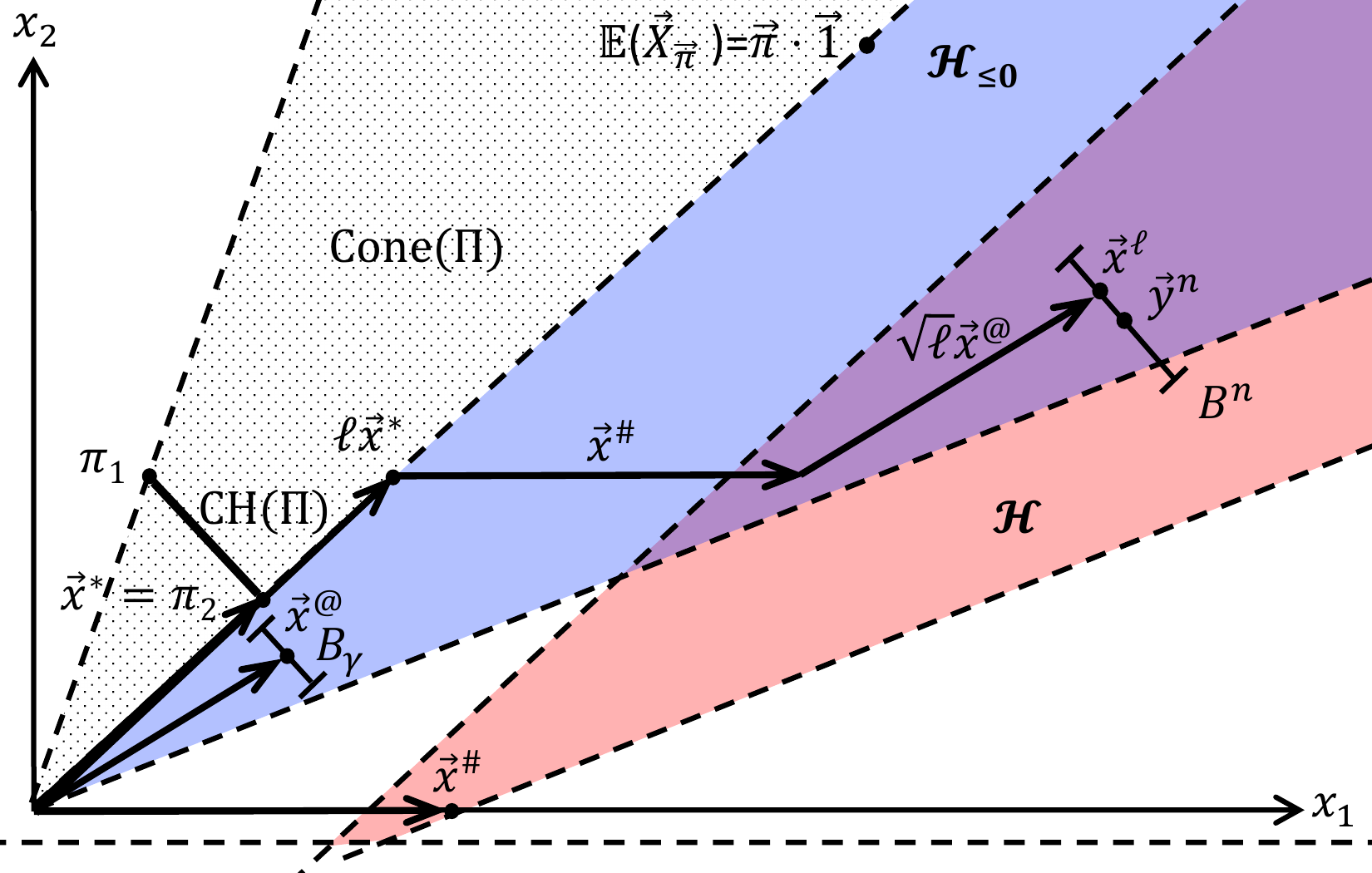}
\vspace{-8mm}
\caption{\small Proof of the poly lower bound in Theorem~\ref{thm:maintechh}. \label{fig:ex-poly-lower}
}
\end{wrapfigure}
\paragraph{\bf  Polynomial lower bound  on Sup.}  The proof of this part is the hardest and drastically different from the proofs in~\cite{Xia2020:The-Smoothed}. We will specify $\vec \pi$ and a $(\dim(\polyz)-1)$-dimensional region $B^n$  in $\polyn$  that is $O(\sqrt n)$ away from $\expect(\vXp)$ based on the V-representation of $\poly$, as illustrated in Figure~\ref{fig:ex-poly-lower}, which continues the setting of Figure~\ref{fig:mainthmex} (b). 
More precisely, we first choose the following three vectors arbitrarily and then fix them throughout the proof: let $\vec x^*\in \polyz\cap \conv(\Pi)$, $\vec x^{\#}\in \poly\cap {\mathbb Z}_{\ge 0}^{q}$, and let $\vec x^@$ be an inner point of $\polyz$. For example, in Figure~\ref{fig:ex-poly-lower}  we let $\vec x^* = \pi_2$ because $\pi_2$ is the only vector in $\polyz\cap \conv(\Pi)$.

Given any $n\in\mathbb N$, we define $\vec x^{\ell} \triangleq \ell\vec x^* + \vec x^{\#}+\sqrt\ell  \vec x^@$, where $\ell\in \mathbb R_{\ge 0}$ is chosen to  guarantee that  $\vec x^{\ell}\cdot \vec 1 = n$.  Then, we define the following vectors. 

$\bullet$ An integer vector $\vec y^n\in \polynint$ that is $O(1)$ away from $\vec x^{\ell}$ in $L_\infty$. The existence of such $\vec y^n$ is guaranteed by the sensitivity analysis of integer programming (\citep[Theorem~1(i)]{Cook86:Sensitivity} with $w = \vec 0$).

$\bullet$ A vector $\vec \pi = (\pi_1,\ldots,\pi_n)\in \Pi^n$ such that $\expect(\vXp) =  \sum_{j=1}^n \pi_j $  is $O(\sqrt n)$ away from $\vec x^{\ell}$. This is done by rounding $n$ multiplied by the representation of $\vec x^*$ as the convex combination of no more than $q$ distributions in $\Pi$, which is guaranteed by the Carathéodory's theorem for convex hulls.

$\bullet$ An $\Omega(\sqrt n)$ neighborhood of $\vec y^n$ in $\polyn$, denoted by $B^n$. This is done by first defining an $\gamma$ neighborhood of $\vec x^@$ in $\polyz$, denoted by $B_\gamma$, and then letting $B^n \triangleq \vec y^n+\sqrt\ell (B_\gamma - \vec x^@)$. 

The construction guarantees that $B^n$ contains $\Omega\left(n^{\frac{\dim(\polyz)-1}{2}}\right)$ many integer vectors, each of which is $O(\sqrt n)$ away from $\expect(\vXp)$. Then, we prove a point-wise concentration bound in Lemma~\ref{lem:point-wise-concentration-PMV} (which works for arbitrary strictly positive PMVs and is thus stronger than~\cite[Lemma~4 in the Appendix]{Xia2020:The-Smoothed}, which only holds for i.i.d.~PMVs) to show that for each integer vector $\vec x\in B^n$, the probability for $\vXp$ to take $\vec x$ is $\Omega(n^{\frac{1-q}{2}})$. The lower bound then becomes 
$$\Omega\left(n^{\frac{\dim(\polyz)-1}{2}}\right)\times \Omega\left(n^{\frac{1-q}{2}}\right) = \Omega\left(n^{\frac{\dim(\polyz)-q}{2}}\right),$$
which matches the upper bound asymptotically.

\appLem{lem:point-wise-concentration-PMV}{Point-wise concentration bound for PMVs}{For any $q\in \mathbb N$, any $\epsilon>0$, and any $\alpha>0$, there exists  $C_{q,\epsilon,\alpha}>0$  such that for any $n\in\mathbb N$, any $(n,q)$-PMV $\vXp$ where $\vec \pi$ is above $\epsilon$,  and any integer vector $\vec x \in {\mathbb Z^{q}_{\ge 0}}$  with $\vec x\cdot 1 = n$ and  $|\vec x - \expect(\vXp)|_\infty<\alpha\sqrt n$, we have:

 $\hfill\Pr(\vXp=\vec x)>C_{q,\epsilon,\alpha}\cdot n^{\frac{1-q}{2}}\hfill$
}

Lemma~\ref{lem:point-wise-concentration-PMV} is proved by extending the idea of solving a constrained optimization problem for Poisson binomial variables~\citep{Hoeffding1956:On-the-Distribution} to PMVs. Lemma~\ref{lem:point-wise-concentration-PMV}  might be of independent interest.

\subsubsection{Full Proof of  Theorem~\ref{thm:maintechh}}
\label{app:maintech-full-proof}
\begin{proof} It suffices to prove the theorem holds for all sufficiently large $n$. In other words, we will prove that given $\poly$ and $\Pi$, there exists $N\in\mathbb N$ such that the theorem holds for all $n>N$. This is because given any constant $N\in\mathbb N$, the theorem trivially holds for  and any $n\le N$---notice that any $\sup_{\vec\pi\in\Pi^n}\Pr\left(\vXp \in \poly\right) \ne 0$ (respectively, $\inf_{\vec\pi\in\Pi^n}\Pr\left(\vXp \in \poly\right)\ne 0$) can be viewed as $\exp(-\Theta(n))$ or $\Theta\left(n^{\frac{\dim(\polyz)-q}{2}}\right) $, and the zero case is true for any $n$. 


\paragraph{\bf Proof of the exponential upper bound on Sup.}  For any $n\in\mathbb N$ and any $\vec \pi\in \Pi^n$, let $\vec \mu_{\vec\pi} =(\mu_{\vec\pi,1},\ldots,\mu_{\vec\pi,q}) = \expect( \vXp/n)$ denote the mean of $\vXp/n$ and let $\vec\sigma_{\vec\pi}=(\sigma_{\vec\pi,1},\ldots,\sigma_{\vec\pi,q})$ denote the standard deviations of each component in $\vXp/n$. 
Because $\Pi$ is strictly positive, there exists $\epsilon_1 >0, \epsilon_2>0$ such that for all $n$, all $\vec \pi\in \Pi^n$, and all $i\le q$, we have $\epsilon_1< \mu_{\vec\pi,i}<\epsilon_2$ and $\frac{\epsilon_1}{\sqrt n}< \sigma_{\vec\pi,i}<\frac{\epsilon_2}{\sqrt n}$.  

We first prove that $\polyz$ is  sufficiently separated from $\conv(\Pi)$.  Notice that $\polyz$ is convex and closed by definition. Because $\Pi$ is closed and bounded, $\conv(\Pi)$ is convex, closed and compact. Because $\polyz\cap \conv(\Pi)=\emptyset$, by the strict hyperplane separation theorem, there exists a hyperplane that strictly separates  $\polyz$ and $\conv(\Pi)$. In other words, there exists $\epsilon'>0$ such that for any $\vec x_1\in \polyz$ and any $\vec x_2\in \conv(\Pi)$, we have $|\vec x_1-\vec x_2|_\infty>\epsilon'$, where $|\cdot|_\infty$ is the $L_\infty$ norm. 

We then prove that any $\vec x\in \polynint$ is $\Theta(n)$ away from $\expect(\vXp) = n \vec \mu_{\vec \pi}$ when $n = \vec x\cdot \vec 1$ is sufficiently large. By the Minkowski-Weyl theorem, we can write $\poly = \mV+\polyz = \{\vec v+\vec h: \vec v\in \mV, \vec h\in\polyz\}$. Let $C_{\max} = \max_{\vec x'\in \mV}|\vec x'|_\infty$. For any $\vec x\in \poly$, let $\vec x = \vec x'+\vec x_1$ where $\vec x'\in\mV$ and $\vec x_1\in \polyz$. We have:
$$|\vec x - \expect(\vXp)|_\infty = |\vec x - n \vec \mu_{\vec \pi}|_\infty =|\vec x' +\vec x_1 - n \vec \mu_{\vec \pi}|_\infty \ge n|\frac{\vec x_1}{n} - \vec \mu_{\vec \pi}|- C_{\max} \ge n\epsilon' -  C_{\max},$$ 
Therefore, when $\vXp \in \polynint$, $\vXp$ must be away from $\expect(\vXp)$ by at least $n\epsilon' -  C_{\max}$ in L$_\infty$. For any $n>\frac{2C_{\max}}{\epsilon ' }$, we have $n\epsilon' -  C_{\max}>\frac{\epsilon'}{2}n$. Therefore, 
\begin{align*}
&\Pr\left(\vXp \in \polynint\right) \le
\Pr\left( |\vXp -\expect(\vXp)|_\infty> n\epsilon' -  C_{\max}\right)
\le 
 \Pr\left( |\vXp - n\vec \mu_{\vec \pi}|_\infty> \frac{\epsilon'}{2}n\right)\\
  \le&\sum_{i=1}^q \Pr\left(|X_{\vec \pi,i} -n\mu_{\vec \pi,i}|>\frac{\epsilon'}{2} n\right)\le 2q\exp\left(-\frac{(\epsilon')^2 n }{4(1-2\epsilon)^2}\right)
\end{align*}
The last inequality follows after  Hoeffding's inequality (Theorem 2 in~\citep{Hoeffding63:Probability}), where recall that  $\epsilon$ is a constant such that $\Pi$ is above $\epsilon$. 

\paragraph{\bf Proof of the exponential lower bound  on Sup.}  In fact the lower bound can be achieved by any $\vec \pi\in \Pi^n$. Let $\vec y\in [q]^n$ denote an arbitrary vector such that $\hist(\vec y)\in\polynint$. Because $\Pi$ is above $\epsilon$, we have $\Pr(\vXp\in\poly) \ge \Pr(\vec Y = \vec y)  \ge \epsilon^{n} = \exp(n\log \epsilon)$, which is $\exp(- O(n))$.

\paragraph{\bf Proof of the polynomial upper bound on Sup.} We use the V-representation of $\poly = \mV+\polyz$ in this part of the proof.  Moreover, we will use the equivalent representation of $\ba$ as the {\em implicit equalities}, denoted by $\ba^=$, and {\em other inequalities}, denoted by $\ba^+$, formally defined as follows.
\begin{dfn}[\bf (2) on page 99 of~\citep{Schrijver1998:Theory}] For any integer matrix $\ba$, let $\ba^{=}$ denote the {\em implicit equalities}, which is the maximal set of rows of $\ba$ such that for all $\vec x\in \polyz$, we have $\ba^{=}\cdot \invert{\vec x} = \invert{\vec 0}$. Let $\ba^{+}$ denote the remaining rows of $\ba$.
\end{dfn}

We note that $\ba^=$ and $\ba^+$ do not depend on $\vec b$. As we will see soon, $\ba^=$ is the main constraint for  $\vXp$ to be in $\polyz$. To simplify notation, throughout the proof we let $o=\rank(\ba^=)=q-\dim(\polyz)$, where the equation holds due to~\citep[p.~100, Equation (9)]{Schrijver1998:Theory}. The following running example illustrates these notions and its setting will be used throughout this proof.

\begin{ex}[\bf\boldmath Running example: $\poly$ and $\ba^=$ for Borda winners being $\{1,2\}$ under IC]\label{ex:bordak2}We use a sub-case of  $k=2$ way ties over $m=3$ alternatives under Borda w.r.t.~IC for example. Notice that $q= m! = 6$. Each of the six outcomes is a linear order. Let  $1\succ 2\succ 3, 1\succ 3\succ 2, 2\succ 1\succ 3, 2\succ 3\succ 1, 3\succ 1\succ 2, 3\succ 2\succ1 $ denote outcomes $1,2,3,4,5,6$, respectively. 

For any $n\in \mathbb N$, let $\vec X_{\text{IC}}$ denote the histogram of random profile under IC. 
Then, the Borda co-winners are $\{1,2\}$ if and only if $\vec X_{\text{IC}}$ is in  polyhedron $\poly$ represented by the following linear inequalities, where the variables are $\vec x = (x_{123},x_{132},x_{213},x_{231},x_{312},x_{321})$:
\begin{align}
&x_{123}+2 x_{132} - x_{213}-2 x_{231} + x_{312}  -  x_{321}\le 0 \label{eq:borda1}\\
-&x_{123}-2 x_{132} + x_{213}+2 x_{231} - x_{312}  +  x_{321}\le 0  \label{eq:borda2}\\
-&2x_{123}- x_{132} - x_{213}+ x_{231} + x_{312}  + 2 x_{321}\le -1\label{eq:borda3}
\end{align}
Equation~(\ref{eq:borda1}) states that the Borda score of alternative $1$ is no more than the Borda score of alternative $2$. Equation~(\ref{eq:borda2}) states that the Borda score of  $2$ is no more than the Borda score of  $1$. Equation~(\ref{eq:borda3}) states that the Borda score of  $3$ is at least one less than the Borda score of alternative $1$. Because Borda scores are always integers, Equation~(\ref{eq:borda3})  is equivalent to requiring that the Borda score of  $3$ is strictly smaller than the Borda score of $1$. Then, we have:
$$\ba = \left[\begin{array}{rrrrrr}
1& 2& -1& -2&1&-1\\
-1& -2& 1& 2&-1&1\\
-2& -1& -1& 1&1&2
\end{array}\right]\text{ and }\vbb = \left[\begin{array}{r}0\\0\\-1\end{array}\right]$$
It is not hard to verify that $\ba^=$ consists of the first two rows, i.e.,
$$\ba^= = \left[\begin{array}{rrrrrr}
1& 2& -1& -2&1&-1\\
-1& -2& 1& 2&-1&1
\end{array}\right], \text{ and } o = \rank(\ba^=)=1
$$
\end{ex}

To calculate $\Pr(\vXp\in\poly)$, we will focus on the reduced row echelon form (a.k.a.~row canonical form)~\citep{Meyer2000:Matrix} of $\left[\begin{array}{c}\ba^{=}\\ \vec 1\end{array}\right]$, which can be computed by Gauss-Jordan elimination: there exists $I_0\subseteq [q]$ with $|I_0|=\rank\left(\left[\begin{array}{c}\ba^{=}\\ \vec 1\end{array}\right]\right) $ and a rational matrix $\bd$ such that $\left[\begin{array}{c}\ba^{=}\\ \vec 1\end{array}\right]\cdot \invert{\vec x} = \left[\begin{array}{c}\invert{\vec 0}\\ n\end{array}\right]$ if and only if $\invert{\vec x_{I_0}} = \bd\cdot \invert{\vec x_{I_1}, n}$, where 
$I_1=[q]\setminus I_0$. In other words, $\vec x_{I_1}$ can be viewed as ``free'' variables whose value can be quite flexible, and for any $\vec x\in \polyz$, $\vec x_{I_0}$ is completely determined by $\vec x_{I_1}$. 

We have $|I_0|=\rank\left(\left[\begin{array}{c}\ba^{=}\\ \vec 1\end{array}\right]\right) = \rank(\ba^=)+1$ because  $\vec 1$ is linearly independent with the rows in $\ba^=$. To see why this is true, suppose for the sake of contradiction that $\vec 1$ is linear dependent with rows in $\ba^=$. Then, by the definition of $\ba^=$, for all $\vec x\in\polyz$ we have $\vec x\cdot \vec 1 = 0$. Therefore,  for any $\vec x'\in \poly$, according to Minkowski-Weyl theorem, we can write $\vec x' = \vec v +\vec h$, where $\vec h\in \polyz$ and $\vec v$ is in a finitely generated polyhedron. This means that 
$$\vec x' \cdot\vec 1 = \vec v \cdot\vec 1+\vec h\cdot\vec 1 =\vec v \cdot\vec 1,$$
which means that 
$\vec x' \cdot\vec 1$ is upper bounded by a constant. However, this contradicts the premise of the polynomial case, because when $n$ is sufficiently large, $\polyn=\polynint=\emptyset$.  W.l.o.g.~let $I_0=\{1,\ldots, o+1\}$ and  $I_1=\{o+2,\ldots, q\}$. We also note that $\bd$ does not depend on $n$, which means that for any $\vec x\in{\mathbb R}^q$, $\ba^=\cdot\invert{\vec x} = \invert{\vec 0}$ if and only if $\invert{\vec x_{I_0}} = \bd\cdot \invert{\vec x_{I_1},  \vec x\cdot\vec 1}$. 

The following example illustrates Gauss-Jordan elimination in the setting in Example~\ref{ex:bordak2}.

\begin{ex}[\bf\boldmath Running example: Gauss-Jordan elimination, $\bd$, $I_0$, and $I_1$]
\label{ex:gaussian}
Continuing  Example~\ref{ex:bordak2}, we run Gauss-Jordan elimination as follows. 
\begin{align*}
\left[\begin{array}{r|c}\ba & 0\\ \vec 1 & n\end{array}\right] = &\left[\begin{array}{r r r r r r|c}
1& 2 &-1&-2&1&-1&0\\-1& -2 &1&2&-1&1&0\\
1&1&1&1&1&1& n\end{array}\right]\\
\xrightarrow{R1; R2+R1;R3-R1}&\left[\begin{array}{r r r r r r|c}
1& 2 &-1&-2&1&-1&0\\0& 0 &0&0&0&0&0\\
0&-1&2&3&0&2& n\end{array}\right]\\
\xrightarrow{R1+2R3; R2;-R3}& \left[\begin{array}{r r r r r r|c}
1& 0 &3&4&1&3&2n\\0& 0 &0&0&0&0&0\\
0&1&-2&-3&0&-2& -n\end{array}\right]
\end{align*}
Let $\vec x = [x_{123},x_{132},x_{213},x_{231},x_{312},x_{321}]$, we have 
$$ \left[\begin{array}{r r r r r r}
1& 0 &3&4&1&3 \\   
0&1&-2&-3&0&-2  \end{array}\right]\cdot \invert{\vec x } =\left[\begin{array}{r} 2n \\-n \end{array}\right],$$
which is equivalent to
$$\left[\begin{array}{r } x_{123}\\x_{132}
  \end{array}\right] = \left[\begin{array}{r r  }
1& 0  \\   
0&1  \end{array}\right]\times \left[\begin{array}{r } x_{123}\\x_{132}
  \end{array}\right] = \left[\begin{array}{r r r r r r}
  -3&-4&-1&-3 &2\\   
 2&3&0&2 &-1  \end{array}\right] \times \left[\begin{array}{c}x_{213}\\x_{231}\\x_{312}\\x_{321}\\n \end{array}\right]$$
Therefore, we have $\bd = \left[\begin{array}{r r r r r r}
  -3&-4&-1&-3 &2\\   
 2&3&0&2 &-1  \end{array}\right]$, $I_0 = \{1,2\}$, $I_1=\{3,4,5,6\}$. 
 \end{ex}

As in~\citep{Xia2020:The-Smoothed}, we adopt the following alternative representation of $Y_1,\ldots,Y_n$. For each $j\le n$, we use a random variable $Z_j\in\{0,1\}$ to represent whether the outcome of $Y_j$ is in $I_0$ (corresponding to $Z_j=0$) or is in $I_1$ (corresponding to $Z_j=1$). Then, we use another random variable $W_j\in [q]$ to represent the outcome of $Y_j$ conditioned on $Z_j$.

\begin{dfn}[\bf Alternative representation of $\bm{Y_1,\ldots,Y_{n}}$~\citep{Xia2020:The-Smoothed}]\label{dfn:altfory} For each $j\le n$, we define a Bayesian network with two random variables $Z_j \in \{0,1\}$ and $W_j\in [q]$, where $Z_j$ is the parent of $W_j$. The conditional probabilities are as follows.
\begin{itemize}
\item For each $\ell\in \{0,1\}$, $\Pr(Z_j = \ell) = \Pr(Y_j \in I_\ell)$. 
\item For each $\ell\in \{0,1\}$ and each $t\le q$, $\Pr(W_j = t|Z_j=\ell) = \Pr(Y_j = t|Y_j\in I_\ell)$.
\end{itemize} In particular, if $t\not\in I_\ell$ then $\Pr(W_j = t|Z_j=\ell)=0$.
\end{dfn}

\begin{ex}[\bf\boldmath Running example: alternative representation of  uniformly distributed $Y_j$]
\label{ex:y} For the purpose of presentation, we present $Z_j$ and $W_j$ for $Y_j$ that corresponds to the uniform distribution over $\ml(\ma)$. We have $\Pr(Z_j = 0) = 1/3$ and $\Pr(Z_j = 1) = 2/3$.
\begin{align*}
&\Pr(W_j = 1\succ 2\succ 3 | Z_j=0)=\Pr(W_j = 1\succ 3\succ 2 | Z_j=0) =\frac 12, \text{ and}\\
&\Pr(W_j = 2\succ 1\succ 3 | Z_j=1)=\Pr(W_j = 2\succ 3\succ1 | Z_j=1)\\
&=\Pr(W_j = 3\succ 1\succ 2 | Z_j=1)=\Pr(W_j =3\succ 2\succ 1 | Z_j=1)= \frac 14
\end{align*} 
All conditional probabilities not defined above are zeros.
\end{ex}

Applying the law of total probability, it is not hard to verify  that for any $j\le n$, $W_j$ follows the same distribution as $Y_j$. For any $\vec z\in \{0,1\}^n$, we let $\ind_0(\vec z)\subseteq [n]$ denote the indices of components of $\vec z$ that equal to $0$. Given $\vec z$, we define the following random variables. 
\begin{itemize}
\item Let $\vec W_{\ind_0(\vec z)}=\{W_j:j\in\ind_0(\vec z)\}$. That is, $\vec W_{\ind_0(\vec z)}$ consists of random variables $\{W_j: z_j = 0\}$.
\item Let $\hist(\vec W_{\ind_0(\vec z)})$ denote the vector of the $o+1 = |I_0|$ random variables that correspond to the histogram of $\vec W_{\ind_0(\vec z)}$ restricted to $I_0$. Technically, the domain of  every random variable in $\vec W_{\ind_0(\vec z)}$ is $[q]$, but since  they only receive positive probabilities on $I_0$, they are treated as random variables over $I_0$ when $\hist(\vec W_{\ind_0(\vec z)})$ is defined. 
\item Similarly, let $\vec W_{\ind_1(\vec z)}=\{W_j:j\in\ind_1(\vec z)\}$ and let $\hist(\vec W_{\ind_1(\vec z)})$ denote  the vector of $|I_1| = q-o-1$ random variables that correspond to the histogram of $\vec W_{\ind_1(\vec z)}$.
\end{itemize}
\begin{ex}[\bf \boldmath Running example: $\ind_0(\vec z)$ and $\ind_1(\vec z)$]
\label{ex:z}
Continuing Example~\ref{ex:y}, suppose $n = 5$ and $\vec z = (0,1,1,0,1)$. We have 
\begin{itemize}
\item $\ind_0(\vec z) = \{1,4\}$, $\vec W_{\ind_0(\vec z)} =\vec W_{\{1,4\}}= \{W_1,W_4\}$, and $\hist(\vec W_{\ind_0(\vec z)})$ represents the histogram of two i.i.d.~uniform distributions over $\{1\succ2\succ 3,1\succ 3\succ 2\}$.
\item  $\ind_1(\vec z) = \{2,3,5\}$,  $\vec W_{\ind_1(\vec z)}  =\vec W_{\{2,3,5\}}= \{W_2,W_3, W_5\}$,   and $\hist(\vec W_{\ind_1(\vec z)})$ represents the histogram of three i.i.d.~uniform distributions over $\{2\succ1\succ 3,2\succ3\succ 1,3\succ1\succ 2,3\succ2\succ 1\}$.
\end{itemize}
\end{ex}

 For any $\vec h_{1}\in {\mathbb Z}_{\ge 0}^{q-o-1}$, we let $\polynint|_{\vec h_{1}}$ denote the $I_0$ components of $\vec h\in \polynint$ whose $I_1$ components are $\vec h_1$. 
Formally,  
$$\polynint|_{\vec h_{1}}=\{\vec h_0\in {\mathbb Z}_{\ge 0}^{o+1}: (\vec h_0,\vec h_1)\in \polynint\}$$ 
By definition, $\vec x\in \polynint$ if and only if $\vec x_{I_0}\in \polynint|_{\vec x_{I_1}}$.  We note that $\polynint|_{\vec h_{1}}$ can contain two or more elements, because even though $\vec x_{I_0}$ is completely determined by $\vec x_{I_1}$ for any $\vec x\in \polyz$, this relationship may not hold for $\vec x\in \polynint$. Later in Claim~\ref{claim:bounded} we will prove that the number of vectors in $\polynint|_{\vec h_{1}}$ is bounded above by a constant that does not depend on $n$.

\begin{ex}[\bf \boldmath Running example: $\polynint|_{\vec h_{1}} $]\label{ex:h} Continuing Example~\ref{ex:z}, let $\vec h_1 = (1,1,1,0)$. Then, we have $\polynint|_{\vec h_{1}} = \{(2,0)\}$. Notice that in this example $|\polynint|_{\vec h_{1}}|\le 1$ for all $\vec h_1$, because the $\vbb$ components corresponding to $\ba^=$ are $0$'s, which means that  $\vec h_0$ is determined by $\vec h_1$, i.e., $\vec h_0 = \bd \cdot \invert{\vec h_1,n}$. 

It is possible that $\polynint|_{\vec h_{1}}=\emptyset$, because $\vec h_0 = \bd\cdot\invert{\vec h_1,n}$  may not be a vector of non-negative integers. For example, when $\vec h_1 = (1,2,0,0)$, we have $\vec h_0=\bd\cdot\invert{\vec h_1,n} = (-1,3)$, which means that $\polynint|_{\vec h_{1}}=\emptyset$.
\end{ex}

Next, we apply the law of total probability to the ($\vec Z$, $\vec W$) representation of $\vXp$ and $\polynint|_{\vec h_{1}}$, to obtain the following estimate on $\Pr(\vXp\in \poly)=\Pr(\vXp\in \polyn)=\Pr(\vXp\in \polynint)$.
\begin{align}
&\Pr(\vXp\in \poly) =\Pr(\vXp\in \polynint)\notag\\
 =&\sum_{\vec z\in \{0,1\}^n}\Pr(\vec Z = \vec z) \Pr\left(\vXp\in \polynint\;\middle\vert\; \vec Z = \vec z \right) \hspace{15mm}(\text{\bf The law of total probability})\notag\\
=&\sum_{\vec z\in \{0,1\}^n}\Pr(\vec Z = \vec z) \Pr\left(\hist(\vec W_{\ind_0(\vec z)}) \in \polynint|_{\hist(\vec W_{\ind_1(\vec z)})} \;\middle\vert\; \vec Z = \vec z \right)\notag\\
=&\sum_{\vec z\in \{0,1\}^n}\Pr(\vec Z = \vec z)\sum_{\vec h_1\in {\mathbb Z}_{\ge 0}^{q-o-1}}\Pr\left(\hist(\vec W_{\ind_1(\vec z)}) =\vec h_1 \;\middle\vert\; \vec Z = \vec z\right)\notag\\
&\times \Pr\left(\hist(\vec W_{\ind_0(\vec z)}) \in \polynint|_{\vec h_1} \;\middle\vert\; \vec Z = \vec z, \hist(\vec W_{\ind_1(\vec z)}) =\vec h_1\right) \hspace{5mm}(\text{\bf The law of total probability}) \notag
\\
=&\sum_{\vec z\in \{0,1\}^n}\Pr(\vec Z = \vec z)\sum_{\vec h_1\in {\mathbb Z}_{\ge 0}^{q-o-1}}\Pr\left(\hist(\vec W_{\ind_1(\vec z)}) =\vec h_1 \;\middle\vert\; \vec Z = \vec z\right)\times \Pr\left(\hist(\vec W_{\ind_0(\vec z)}) \in \polynint|_{\vec h_1} \;\middle\vert\; \vec Z = \vec z\right)\label{eq:independence}
\\
\le &\sum_{\vec z\in \{0,1\}^n: |\ind_0(\vec z)| \ge 0.9 \epsilon n} \Pr(\vec Z = \vec z) \sum_{\vec h_1\in {\mathbb Z}_{\ge 0}^{q-o-1}}\Pr\left(\hist(\vec W_{\ind_1(\vec z)}) =\vec h_1 \;\middle\vert\; \vec Z = \vec z\right)\notag\\
&\hspace{40mm}\times \Pr\left(\hist(\vec W_{\ind_0(\vec z)}) \in \polynint|_{\vec h_1} \;\middle\vert\; \vec Z = \vec z\right) + \Pr(|\ind_0(\vec z)| < 0.9\epsilon n)\label{eq:histz}
\end{align}
where we recall that $|\ind_0(\vec z)| $ denotes the number of $0$'s in $\vec z$. Equation (\ref{eq:independence}) holds because according to the Bayesian network structure, $W_j$'s are independent of each other given $Z_j$'s, which means that for any $\vec z\in\{0,1\}^n$, $\hist(\vec W_{\ind_0(\vec z)})$ and $\hist(\vec W_{\ind_1(\vec z)})$ are independent given $\vec z$. The following example illustrates the summand in (\ref{eq:histz}) when $\vec z = (0,1,1,0,1)$ and $0.9\epsilon<\frac 25$, following the setting of Example~\ref{ex:h}. 
\begin{ex}[\bf \boldmath Running example: summand in (\ref{eq:histz})]
\label{ex:sum}
Continuing Example~\ref{ex:h}, the summand in (\ref{eq:histz})  becomes the following:
\begin{align*}
& \Pr(\vec Z = \vec z) \sum_{\vec h_1\in {\mathbb Z}_{\ge 0}^{q-o-1}}\Pr\left(\hist(\vec W_{\ind_1(\vec z)}) =\vec h_1 \;\middle\vert\; \vec Z = \vec z\right) \times \Pr\left(\hist(\vec W_{\ind_0(\vec z)}) \in \polynint|_{\vec h_1} \;\middle\vert\; \vec Z = \vec z\right) \\
=&  \Pr(\vec Z = (0,1,1,0,1)) \sum_{\vec h_1\in {\mathbb Z}_{\ge 0}^{q-o-1}}\Pr\left(\hist(\vec W_{\{2,3,5\}}) =\vec h_1 \;\middle\vert\; \vec Z = (0,1,1,0,1)\right)\\
&\hspace{20mm} \times \Pr\left(\hist(\vec W_{\{1,4\}}) \in \polynint|_{\vec h_1} \;\middle\vert\; \vec Z = (0,1,1,0,1)\right)
\end{align*}
As a more concrete example,  let $\vec h_1 = (1,1,1,0)$ as in Example~\ref{ex:h}, we summand in the equation above becomes the following:
\begin{align*} 
&  \Pr\left(\hist(\vec W_{\{2,3,5\}}) = (1,1,1,0) \;\middle\vert\; \vec Z = (0,1,1,0,1)\right) \times \Pr\left(\hist(\vec W_{\{1,4\}}) \in \{(2,0)\} \;\middle\vert\; \vec Z = (0,1,1,0,1)\right)
\end{align*}
In words, it is the product of the following two terms:
\begin{itemize}
\item [(1)] the probability for the histogram of $\vec W_{\{2,3,5\}}$ to be $(1,1,1,0)$. I.e.,  the second, third, and fifth agents' votes are $\{2\succ 1\succ 2,2\succ 3\succ 1,3\succ 1\succ 2\}$ in any order, and 
\item [(2)] the probability for the histogram of $\vec W_{\{1,4\}}$ to be $(2,0)$. I.e., both the first and the fourth agents vote for $1\succ 2\succ 3$.
\end{itemize}
We emphasize that  this example only illustrates (\ref{eq:histz}) for $\vec z = (0,1,1,0,1)$ and $\vec h_1 = (1,1,1,0)$.  (\ref{eq:histz}) requires summing over other combinations of $\vec z$ and $\vec h_1$.
\end{ex}

To upper-bound (\ref{eq:histz}), we will show that  given $|\ind_0(\vec z)| \ge 0.9 \epsilon n$, for any $\vec h_1 \in {\mathbb Z}_{\ge 0}^{q-o-1}$,
\begin{equation}
\label{eq:histo}
\Pr\left(\hist(\vec W_{\ind_0(\vec z)}) \in \polynint|_{\vec h_1} \;\middle\vert\; \vec Z = \vec z\right)=O((0.9 \epsilon n)^{-\frac{o}{2}})= O(n^{-\frac{o}{2}})
\end{equation}
(\ref{eq:histo}) follows after combining the following two parts. 
\begin{itemize}
\item {\bf Part 1: Claim~\ref{claim:bounded} below,} which states that  the number of integer vectors in $\polynint|_{\vec h_1}$ is upper bounded by a constant that only depends on $\poly$, which means that it does not depend on $n$.
\begin{claim}\label{claim:bounded} There exists a constant $C_{\poly}>0$ such that for each $\vec h_1\in {\mathbb Z_{\ge 0}^{q-o-1}}$, $|\polynint|_{\vec h_1}|\le C_{\poly}$.
\end{claim}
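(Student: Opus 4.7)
The plan is to leverage the V-representation $\poly = \mV + \polyz$ of the polyhedron, which decomposes each $\vec h \in \poly$ as $\vec h = \vec v + \vec z$ with $\vec v$ in a fixed compact polytope $\mV$ and $\vec z \in \polyz$. The key observation is that every row of $\ba^{=}$ vanishes on $\polyz$ by definition, so $\ba^{=} \cdot \invert{\vec h} = \ba^{=} \cdot \invert{\vec v}$ for every $\vec h \in \poly$. Consequently, defining the slack vector $\vec\epsilon(\vec h) := \invert{\vec b^{=}} - \ba^{=} \cdot \invert{\vec h}$, we have $\vec\epsilon(\vec h) \in \{\invert{\vec b^{=}} - \ba^{=} \cdot \invert{\vec v} : \vec v \in \mV\}$, which is a compact set because $\mV$ is compact. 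Moreover, $\vec\epsilon(\vec h) \ge 0$ since $\vec h \in \poly$ implies $\ba^{=} \cdot \invert{\vec h} \le \invert{\vec b^{=}}$. Thus as $\vec h$ ranges over $\poly$, $\vec\epsilon(\vec h)$ is confined to a bounded subset of the non-negative orthant whose size depends only on $\poly$, independent of $\vec h_1$ and $n$.

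Next, I plan to combine this with integrality. For $\vec h \in \polynint \subseteq \mathbb{Z}^{q}_{\ge 0}$ and an integer matrix $\ba^{=}$, the product $\ba^{=} \cdot \invert{\vec h}$ is an integer vector, so $\vec\epsilon(\vec h) \in \invert{\vec b^{=}} + \mathbb{Z}^{o}$ lives in a fixed coset of the integer lattice. Intersecting this coset with the bounded non-negative region from the previous step yields a finite set of size at most some constant $C_\poly$ depending only on $\ba$, $\vec b$, and $\mV$ (equivalently, on $\poly$).

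Finally, once $\vec h_1$ and $\vec\epsilon$ are fixed, $\vec h_0$ is uniquely determined: the $o$ equations $\ba^{=}_{I_0} \cdot \invert{\vec h_0} = \invert{\vec b^{=}} - \ba^{=}_{I_1} \cdot \invert{\vec h_1} - \vec\epsilon$ together with the balancing equation $\vec 1_{I_0} \cdot \vec h_0 = n - \vec 1_{I_1} \cdot \vec h_1$ assemble into a square linear system $\bd_{I_0} \cdot \invert{\vec h_0} = \vec r$ for a known column vector $\vec r$, whose coefficient matrix $\bd_{I_0}$ is invertible by the choice of $I_0$ from the reduced row echelon form. Hence the map $\vec h_0 \mapsto \vec\epsilon((\vec h_0, \vec h_1))$ is injective on $\polynint|_{\vec h_1}$, yielding $|\polynint|_{\vec h_1}| \le C_\poly$. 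The main obstacle is the first step, uniformly bounding $\ba^{=} \cdot \invert{\vec h}$ over all $\vec h \in \poly$; the essential insight is that the unbounded directions of $\poly$ are exactly $\polyz$, on which $\ba^{=}$ vanishes by definition, so $\ba^{=} \cdot \invert{\vec h}$ can only vary with the compact polytope $\mV$ in the Minkowski--Weyl decomposition. The remaining steps reduce to routine lattice counting and inversion of an $(o+1)\times(o+1)$ integer matrix.
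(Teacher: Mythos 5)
Your proof is correct and follows essentially the same route as the paper's: both use the Minkowski--Weyl decomposition $\poly = \mV + \polyz$ together with the fact that $\ba^=$ vanishes on $\polyz$, so the deviation of any $\vec x\in\poly$ from the RREF relations is governed entirely by the compact polytope $\mV$, and both finish by counting integer solutions. The only cosmetic difference is that the paper bounds $|\invert{\vec x_{I_0}}-\bd\cdot\invert{\vec x_{I_1},n}|_\infty$ directly and counts integer points in that ball, while you parameterize by the slack $\vec\epsilon = \invert{\vec b^=}-\ba^=\cdot\invert{\vec x}$ (bounded, confined to a lattice coset) and then appeal to full column rank of the pivot-column submatrix of $\left[\begin{smallmatrix}\ba^= \\ \vec 1\end{smallmatrix}\right]$ to recover $\vec h_0$ uniquely.
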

\begin{proof}
We first prove an observation, which states that there exists a constant $C^*$ that only depends on $\poly$, such that for any $\vec x\in \polynint$, we have $|\invert{\vec x_{I_0}}- \bd\cdot \invert{\vec x_{I_1},n}|_\infty<C^*$.

According to the V-representation of $\poly$, we can write $\vec x = \vec v + \vec x'$, where $\vec v\in\mV$  is bounded, and $\vec x'\in \polyz$ which means that $\invert{\vec x'_{I_0}}- \bd\cdot \invert{\vec x'_{I_1}, \vec x'\cdot\vec 1} = \invert{\vec 0}$. Recall that  $C_{\max}$ is the maximum $L_\infty$ norm of  vectors in $\mV$. It follows that $|\vec v|_\infty\le C_{\max}$, which means that 
$$|\vec v\cdot \vec 1| = \left|\vec x\cdot\vec 1-\vec x'\cdot\vec1\right| = \left|n-\vec x'\cdot\vec1\right|\le qC_{\max}$$ 

 Let $C'$ denote the maximum absolute value of entries in $\bd$ and let $C^*=C_{\max} + 2qC_{\max}C'$. We have:
\begin{align*}
& |\invert{\vec x_{I_0}}- \bd\cdot \invert{\vec x_{I_1},n}|_\infty =  |\invert{\vec v_{I_0}+\vec x'_{I_0}}- \bd\cdot \invert{\vec v_{I_1}+\vec x'_{I_1},n}|_\infty \\
=& |\invert{\vec v_{I_0}}- \bd\cdot \invert{\vec v_{I_1},n-\vec x'\cdot \vec 1}|_\infty  \le C_{\max} + 2qC_{\max}C' = C^*
\end{align*}

For any $\vec h_0\in \polynint|_{\vec h_1}$, because $ (\vec h_0,\vec h_1)\in \polynint$, according to the observation above, we have 
$$|\invert{\vec h_0}- \bd\cdot \invert{\vec h_1,n}|_\infty<C^*$$
Therefore, $\polynint|_{\vec h_1}$ is contained in an $|I_0|$-dimensional cube whose edge length is $2C^*$ and is centered at $\bd\cdot \invert{\vec h_1,n}$. It is not hard to verify that the cube contains no more than $(2C^*+1)^q$ integer points. This proves claim by letting $C_{\poly} = (2C^*+1)^q$. \end{proof}

\item {\bf Part 2: The point-wise anti-concentration bound~\cite[Lemma 3 in the Appendix]{Xia2020:The-Smoothed}.} For completeness, we recall the lemma in our notation below. 

\paragraph{\bf \boldmath Lemma$'$ (\cite[Lemma~3 in the Appendix]{Xia2020:The-Smoothed}). }{\em Given $q^*\in\mathbb N$ and $\epsilon>0$, there exists a constant $C^*>0$ such that for any $n^*\in \mathbb N$ and strictly positive (by $\epsilon$) vector $\vec \pi^*$ of $n^*$ distributions over $[q^*]$,   and any vector $\vec x^* \in {\mathbb Z^{q^*}_{\ge 0}}$, we have $\Pr(\vec X_{\vec \pi^*}=\vec x^*)<C^*(n^*)^{\frac{1-q^*}{2}}$. 
}

We note that the constant in $O(n^{-\frac{o}{2}})$ in (\ref{eq:histo}) only depends on $\poly$ (therefore $q$) and $\epsilon$ but not on $\Pi$ or $n$.
\end{itemize}
Then, (\ref{eq:histo}) follows after applying  Lemma$'$ to (constantly many) vectors in $\polynint|_{\vec h_1}$ (guaranteed by Claim~\ref{claim:bounded}),  by letting $q^* = |I_0|=o+1$ and $n^* = |\ind_0(\vec z)|$. The next example illustrates the application of Lemma$'$ in the setting of Example~\ref{ex:sum}.
\begin{ex}[\bf \boldmath Running example: Equation (\ref{eq:histo})]
\label{ex:running10}
Continuing Example~\ref{ex:sum}, recall that in this running example $\vec z = (0,1,1,0,1)$ (Example~\ref{ex:z}) and $\vec h_1 = (1,1,1,0)$ (Example~\ref{ex:h}). Therefore, (\ref{eq:histo}) becomes:
$$\Pr\left(\hist(\vec W_{\{1,4\}}) \in \{(2,0)\} \;\middle\vert\; \vec Z = (0,1,1,0,1)\right)$$
Then, we let $q^* = |I_0| = 2$ and $n^* = 2$ in Lemma$'$ and apply it to $\vec x^* = (2,0)$, which is the only vector in  $\polynint|_{\vec h_1}$.
\end{ex} 


Back to (\ref{eq:histz}), we now upper-bound the $ \Pr(|\ind_0(\vec z)| < 0.9\epsilon n)$ part in (\ref{eq:histz}). Because random variables in $\vec Y$ are above $\epsilon$, for all $j\le n$, $Z_j$ takes $0$ with probability at least $\epsilon$. Therefore, $\expect(|\ind_0(\vec Z)|) \ge \epsilon n$. By Hoeffding's inequality, $\Pr(|\ind_0(\vec Z)| < 0.9\epsilon n)$ is exponentially small in $n$, which is $O(n^{-\frac{o}{2}})$ when $n$ is sufficiently large. 

Putting all together,  we have:
\begin{align*}
&\Pr(\vXp\in \poly) \le \sum_{\vec z\in \{0,1\}^n: |\ind_0(\vec z)|\ge 0.9\epsilon n}  \Pr(\vec Z = \vec z)\sum_{\vec h_1\in {\mathbb Z}_{\ge 0}^{q-o-1}}\Pr\left(\hist(\vec W_{\ind_1(\vec z)}) =\vec h_1 \;\middle\vert\; [\vec Z]_{\ind_1(\vec z)} = \vec 1\right)\\
&\hspace{40mm}\times  O(n^{-\frac{o}{2}}) + O(n^{-\frac{o}{2}}) = O(n^{-\frac{o}{2}}) 
\end{align*}
This proves the polynomial upper bound when  $\polynint\ne\emptyset$  and $\polyz\cap \conv(\Pi)\ne\emptyset$, where the constant in $O(n^{-\frac{o}{2}})$ depends on $\poly$ and $\epsilon$ (but not on $\Pi$ or $n$).

\paragraph{\bf Proof of the polynomial lower bound on Sup.}  
The proof is done in the following five steps. In {\bf Step 1}, for any $n\in \mathbb N$ that is sufficiently large, we  define  a non-negative integer vector $\vec y^n\in \polynint$  and its neighborhood $B^n\subseteq \polyn$  such that vectors in $B^n$ are $O(\sqrt n)$ away from $\cone(\Pi)$.  In {\bf Step 2}, we  define a vector $\vec\pi\in \Pi^{n}$ chosen by the adversary to achieve the lower bound. In {\bf Step 3}, we prove that $B^n$ contains $\Theta\left(n^{\frac{\dim(\polyz)-1}{2}}\right)$ many non-negative integer vectors. In {\bf Step 4}, we show that when preferences are generated according to $\vec\pi$, for any non-negative integer $\vec x\in B^n$, the probability for $\vXp$ to be $\vec x$ is $\Theta(n^{\frac{(1-q)}{2}})$. Finally, in {\bf Step 5} we show that the probability for $\vXp$ to be in $\poly$ is at least $\Omega\left(n^{\frac{\dim(\polyz)-1}{2}}\right)\times \Omega\left(n^{\frac{(1-q)}{2}}\right) = \Omega(n^{-\frac o 2})$.

\paragraph{\bf Step~1. For any sufficiently large $\bm{n\in\mathbb N}$, define a non-negative integer vector $\bm{\vec y^n\in \polynint}$ and its neighborhood $\bm{B^n\subseteq \polyn}$.} 
$\vec y^n$ will be defined  in {\bf Step 1.1} as an integer approximation to $\vec x^{\ell} = \ell\vec x^*+\sqrt\ell  \vec x^{@}+\vec x^{\#}$, whose components are defined as follows. See Figure~\ref{fig:ex-poly-lower} for an illustration.
\begin{itemize}
\item $\vec x^*\in \polyz\cap \conv(\Pi)$, which may not be integral. 
\item $\vec x^@$ is an inner point of $\polyz$, which means that $\ba^{=}\cdot \invert{\vec x^@} = \invert{\vec 0}$ and $\ba^{+}\cdot \invert{\vec x^@} < \invert{\vec 0}$. Note that $\vec x^@$ may not be integral, non-negative, in $\poly$, or in $\conv(\Pi)$.
\item $\vec x^{\#} \in \poly$ is an integer vector, which may not be in $\cone(\Pi)$. For example, $\vec x^{\#}$ can be any integer vector in $\poly_{n^{\#}}^{\mathbb Z}$ for the smallest $n^{\#}\in\mathbb N$ such that $\poly_{n^{\#}}^{\mathbb Z}\ne\emptyset$. Such $n^{\#}$ exists because otherwise the $0$ case of the theorem holds.
\item $\ell\in \mathbb R_{\ge 0}$ is a number that is used to guarantee  $\vec y^n\cdot \vec 1 =\vec x^{\ell}\cdot \vec 1 = n$.  That is, $\vec x^{\ell}\cdot\vec 1 = \ell +\sqrt \ell (\vec x^@\cdot\vec 1)+ \vec x^{\#}\cdot\vec 1= n$, or equivalently, because $\ell\ge 0$, we have $\ell = \left(-\frac{\vec x^@\cdot\vec 1}{2}+\sqrt{n-\vec x^{\#}\cdot\vec 1+(\frac{\vec x^@\cdot\vec 1}{2})^2}\right)^2$. We note that $\ell$ may not be an integer.
\end{itemize}
$\vec x^*$, $\vec x^@$, and $\vec x^{\#}$ are arbitrarily chosen but fixed throughout the proof. Let us illustrate  a choice of them in the following example.
\begin{ex}[\bf\boldmath Running example: $\vec x^*, \vec x^@, \vec x^{\#}$, and $\vec x^\ell$]
\label{ex:running11}
Continuing Example~\ref{ex:running10}, recall that $\Pi$ only contains the uniform distribution. Therefore,  $\vec x^*=(\frac16,\frac16,\frac16,\frac16,\frac16,\frac16)$, which is the only distribution in $  \polyz\cap \conv(\Pi)$. Let $\vec x^@ = (1.2,-0.2,1.2,-0.2,0,0)$ and $\vec x^{\#} = (2,0,1,1,1,0)$ (where $n^{\#}=\vec x^{\#}\cdot\vec 1=5$). Then,
$$\vec x^l = \ell\cdot \underbrace{(\frac16,\frac16,\frac16,\frac16,\frac16,\frac16)}_{\vec x^*}+\sqrt\ell  \cdot \underbrace{(1.2,-0.2,1.2,-0.2,0,0)}_{\vec x^@}+\underbrace{(2,0,1,1,1,0)}_{\vec x^{\#} }$$

When $n=100$,  $\ell +  2\sqrt \ell+ 5=100$, which means that $\ell = (\sqrt{96}-1)^2$. 
\end{ex}

\paragraph{\bf Step 1.1~Define a non-negative integer approximation $\bm{\vec y^n}$ to $\bm{\vec x^{\ell}}$.} 
First, we note that $\ell$ only depends on $\poly$ (because of the choices of $\vec x^@$ and $\vec x^{\#}$) and $n$ but not on $\Pi$ or $\epsilon$. Notice that while $\vec x^{\ell}\cdot\vec 1 = n$, it may contain negative components because $\vec x^@$ may contain negative components. Let  $n$ to be sufficiently large so that $\vec x^{\ell}\in \mathbb R_{\ge 0}^q$, which means that $\vec x^{\ell}\in \polyn$. This can be done because (i) each component of $\vec x^{\ell}$ is $\Theta(n)$, because $\vec x^{\ell}$ is largely determined by $\ell\vec x^*$ and $\ell = \Theta(n)$, and (ii) each component of $\vec x^*$ is at least $\epsilon$, because $\vec x^*\in \conv(\Pi)$.

Let $\ba _n = \left[\begin{array}{r}\ba\\ \vec 1\\ -\vec{1}\\ -{\mathbb I}_q\end{array}\right]$ and $\vec b _n = (\vec b, n, -n,\underbrace{0,\ldots,0}_q)$, where ${\mathbb I}_q$ is the $q\times q$ identity matrix. It follows that $\polyn$ is defined by $\ba_n$ and $\vec b _n$. In other words, 
$$\polyn = \{\vec x: \ba_n\cdot\invert{\vec x}\le \invert{\vec b_n}\}$$
Let $C_{\poly}^1 = q\Delta(\ba_n)$, where $\Delta(\ba_n)$ is the maximum absolute value among the determinants of all square submatrices of $\ba_n$. Notice that $C_{\poly}^1$ only depends on $\ba$ but not on $n$, $\epsilon$, or $\Pi$. It is not hard to verify that $\vec x^{\ell}\in \polyn$ and  recall that we have assumed $\polynint\ne\emptyset$ as a condition for the polynomial bound on the sup part of the theorem, which means that $\polyn$ contains an integer vector. Therefore, by \citep[Theorem~1(i)]{Cook86:Sensitivity} (where we let $\vec w = \vec 0$), there exists a (non-negative) integer vector $\vec y^n\in \polyn$ such that $|\vec y^n - \vec x^{\ell}|_\infty\le C_{\poly}^1$. 
Note that by definition  $\vec y^n\in \polynint$. For completeness, we recall \citep[Theorem~1(i)]{Cook86:Sensitivity} in our notation as follows.

\paragraph{\bf Theorem 1(i) in~\citep{Cook86:Sensitivity}.}{\em  Let $A$ be an $L\times q$ integer matrix, $\vec b$ be a $q$-dimensional vector, and let $\vec w$  be a $q$-dimensional vector such that $A\cdot \invert{\vec x}\le \invert{\vec b}$ has an integer solution and $\max\{\vec w\cdot \vec x: A\cdot \invert{\vec x}\le \invert{\vec b}\}$ exists. Then, for each optimal solution $\vec x_{\text{opt}}$ to $\max\{\vec w\cdot \vec x: A\cdot \invert{\vec x}\le \invert{\vec b}\}$, there exists an optimal solution $\vec z_{\text{opt}}\in \mathbb Z^q$ to $\max\left\{\vec w\cdot \vec x: A\cdot \invert{\vec x}\le \invert{\vec b}, \vec x\text{ is integral}\right\}$ with $|\vec x_{\text{opt}}-\vec z_{\text{opt}}|_\infty\le q\Delta(A)$.}

\paragraph{\bf \boldmath Step 1.2~Define $\vec y^n$'s neighborhood $B^n\subseteq \polyn$.} For any $\gamma>0$, we first define a neighborhood $B_\gamma\subseteq \mathbb R^q$ of $\vec x^@$ that consists of vectors $\vec x$ such that (i) $\ba^= \cdot \invert{\vec x} = (\vec 0)^\top$, (ii) $|\vec x_{I_1}-\vec x^@_{I_1}|_\infty\le \gamma$, and (iii) $\vec x\cdot \vec 1=\vec x^@\cdot \vec 1$. Formally,
$$B_\gamma \triangleq\{ \vec x: \forall i\in I_1, |x_i - x^@_i|\le \gamma  \text{ and } \invert{\vec x_{I_0}} = \bd\cdot \invert{\vec x_{I_1}, \vec x^@\cdot \vec 1}\}$$
Recall that $\vec x^@$ is an inner point of $\polyz$, which means that $\ba^= \cdot\invert{\vec x^@} = \invert{\vec 0}$, or equivalently, $\invert{\vec x_{I_0}^@} = \bd\cdot \invert{\vec x_{I_1}^@, \vec x^@\cdot \vec 1}$. Therefore, for any $\vec x \in B_\gamma$, we have 
$$\vec x_{I_0}  - \vec x^@_{I_0} = \bd\cdot \left[\invert{\vec x_{I_1} , \vec x \cdot \vec 1}-\invert{\vec x_{I_1}^@, \vec x^@\cdot \vec 1}\right] = \bd\cdot\invert{\vec x_{I_1} - \vec x_{I_1}^@,0}$$
Therefore, it is not hard to verify that  
\begin{equation}
\label{dfn:Bgamma-alt}
B_\gamma - \vec x^@ = \{ \vec \Delta\in {\mathbb R}^q: \forall i\in I_1, |\Delta_i|\le \gamma  \text{ and } \invert{\vec \Delta_{I_0}} = \bd\cdot \invert{\vec \Delta_{I_1}, 0}\}
\end{equation}
Recall that $\vec x^@$ is an inner point of $\polyz$, which means  that
$$\ba^= \cdot\invert{\vec x^@} = \invert{\vec 0}\text{ and }\ba^+ \cdot\invert{\vec x^@} < \invert{\vec 0}$$ 
Therefore, there exists  $\gamma>0$ such that for any $\vec x\in B_\gamma$, we have $\ba^= \cdot\invert{\vec x} = \invert{\vec 0}$ and $\ba^+ \cdot\invert{\vec x} < \invert{\vec 0}$.  In other words, all vectors in $B_\gamma$ are inner points in $\polyz$. It is not hard to verify that $\dim(B_\gamma) = q-o-1 = \dim(\polyz)-1$, where the $-1$ comes from the additional linear constraint $\vec x\cdot \vec 1 = \vec x^@\cdot \vec 1$. Notice that  $B_\gamma$ only depends on $\polyz$ but not on $n$, $\epsilon$, or $\Pi$. Let 
$$B^n \triangleq \vec y^n + \sqrt \ell (B_\gamma - \vec x^@) = \{\vec y^n + \sqrt \ell \vec \Delta:\vec \Delta\in B_\gamma-\vec x^@\}$$
Intuitively, $B^n$ is defined by first scaling up $(B_\gamma - \vec x^@)$ by $\sqrt\ell$, and then add it on top of $\vec y^n$. This means that at a high level $B^n$ consists of inner points of a local space that is similar to $\polyz$  and is centered at $\vec y^n$, plus an additional linear  constraint that requires $\vec x\cdot \vec 1 = n$. Note that $\vec x^\ell$ may not be in $B^n$. 

\begin{ex}[\bf \boldmath Running example: $B_\gamma$ and $B_n$]
\label{ex:running12}
Continuing Example~\ref{ex:running11}, let $\gamma=1$, we have
\begin{align*}
B_1 =\{ (-3 x_{213}-4 x_{231} &- x_{312}-3 x_{321} +4,  2 x_{213}+3 x_{231} +2 x_{321}-2, x_{213},x_{231},x_{312},x_{321}): \\
& x_{213}\in [0.2, 2.2], x_{231}\in [-1.2, 0.8], x_{312}\in [-1,1],x_{321}\in [-1,1]\}
\end{align*}
\begin{align*}
B_1 -\vec x^@=\{ (-3 \Delta_{213}-4 \Delta_{231} - \Delta_{312}-3 \Delta_{321},  2 \Delta_{213}+&3 \Delta_{231} +2 \Delta_{321}, \Delta_{213},\Delta_{231},\Delta_{312},\Delta_{321}): \\
& (\Delta_{213},\Delta_{231},\Delta_{312},\Delta_{321})\in [-1,1]^4\}
\end{align*}
Let $n=100$, we have $B^{100} = \vec y^n +  (\sqrt{96}-1)^2\cdot (B_1 -\vec x^@) $. 
\end{ex}

In the following three steps, we prove  $B^n\subseteq \polyn$ for any sufficiently large $n$.
\begin{itemize}
\item First, we prove $B^n\subseteq \poly$. Because $\vec y^n \in\polynint$, we have  $\ba^=\cdot \invert{\vec y^n}\le  \invert{\vec b^=}$, where $\vec b^=$ is the subvector of $\vec b$ that corresponds to $\ba^=$. For all $\vec \Delta\in B_\gamma - \vec x^@$, recall from (\ref{dfn:Bgamma-alt}) that $\vec \Delta_{I_0} = \bd\cdot \invert{\vec \Delta_{I_1}, 0}$, which means that $\ba^=\cdot \invert{\Delta} =  \invert{\vec 0}$. Therefore, 
$$\ba^=\cdot \invert{\vec y^n +\sqrt\ell \vec\Delta} = \ba^=\cdot \invert{\vec y^n} +\ba^=\cdot \invert{ \sqrt\ell \vec\Delta}= \ba^=\cdot \invert{\vec y^n}\le  \invert{\vec b^=}$$ 
Also notice that: 
$$\vec y^n +\sqrt\ell \vec\Delta = \vec y^n- \vec x^\ell +\vec x^\ell+\sqrt\ell \vec\Delta
=  (\vec y^n- \vec x^\ell) +  \ell \vec x^* + \sqrt\ell(\vec x^@+\vec \Delta) +\vec x^{\#}\notag$$
Therefore, we have the following bound on $\ba^+\cdot \invert{\vec y^n +\sqrt\ell \vec\Delta}$:
\begin{align}
 \ba^+\cdot \invert{\vec y^n +\sqrt\ell \vec\Delta} =& \ba^+\cdot \invert{ \vec y^n- \vec x^\ell} + \ba^+\cdot \invert{ \ell \vec x^*} + \ba^+\cdot \invert{\sqrt\ell(\vec x^@+\vec \Delta) }+ \ba^+\cdot \invert{\vec x^{\#}}\notag\\
\le &\invert{O(1)\cdot \vec{1}+\vec 0 - \Omega(\sqrt\ell)\cdot \vec{1}}+\invert{\vec b^+} \label{eq:Bn}
\end{align}
(\ref{eq:Bn}) follows after noticing that $|\vec y^n- \vec x^\ell|_\infty = O(1)$, $\vec x^*\in \polyz$, $\vec x^@+\vec\Delta$ is an inner point in $\polyz$ (which mean that $\ba^+\cdot \invert{\vec x^@+\vec \Delta }\le -\Omega(1)\cdot \vec 1$), and $\vec x^{\#}\in \poly$. Therefore, when $\ell$ is sufficiently large, we have:
$$\ba^+\cdot \invert{\vec y^n +\sqrt\ell \vec\Delta} \le \invert{\vec b^+}$$
This means that $\vec y^n +\sqrt\ell \vec\Delta\in \poly$ and therefore $B^n\in\poly$.  
\item Second, for any $\vec \Delta\in B_\gamma - \vec x^@$ we have $\vec \Delta\cdot \vec 1 = 0$. Therefore, $(\vec y^n +\sqrt\ell \vec\Delta)\cdot \vec 1 = n$.  
\item Third, we prove $B^n\subseteq {\mathbb R}_{\ge 0}^q$. Recall that $\vec y^n$ is $O(1)$ away from $\vec x^\ell$ (as constructed in Step 1.1). Each component of $\vec x^\ell$ is $\Omega(\ell)$, because each component of $\vec x^*$ is at least $\epsilon>0$. Also note that for any $\vec\Delta\in  B_\gamma - \vec x^@$, $|\vec\Delta|_\infty = O(1)$. Therefore, each component of each vector in $B^n$ is $\Omega(\ell) - O(\sqrt\ell)-O(1)$, which is strictly positive when $\ell$ is sufficiently large.
\end{itemize}

\paragraph{\bf \boldmath Step~2. Define $\vec \pi = (\pi_1,\ldots,\pi_n) \in \Pi^{n}$ s.t.~$\expect(\vXp) = \sum_{j=1}^n\pi_j $ is $O(\sqrt n)$ away from $B^n$ in $L_\infty$.} Because $\vec x^*\in \polyz\cap \conv(\Pi)$, by Carathéodory's theorem for convex/conic hulls (see e.g., \cite[p.~257]{Lovasz2009:Matching}), we can write  $\vec x^*$ as the convex combination of $1\le t\le q$ distributions in $\Pi$ regardless of the cardinality of $\Pi$, which can be infinity. Formally, let $\vec x^* = \sum_{i=1}^t \alpha_i \pi_i^*$, where for each $i\le t$, $\alpha_i>0$ and $\pi_i^*\in \Pi$, and $\sum_{i=1}^t \alpha_i = 1$.  We note that $\vec x^*\ge \epsilon\cdot \vec 1$, because $\Pi$ is strictly positive (by $\epsilon$).  

We now define $\vec\pi\in \Pi^n$ and then prove that $\vec \pi\cdot \vec 1$  is $O(\sqrt n)$ away from $\ell\vec x^* = \sum_{i=1}^t \ell\alpha_i \pi_i^*$ in $L_\infty$. Formally, for each $i\le t-1$, let $\vec \pi_i^\ell$ denote the vector of $\beta_i=\lfloor \ell \alpha_i\rfloor$ copies of $\pi_i^*$. Let $\vec \pi_k^\ell$ denote the vector of $\beta_t=n-\sum_{i=1}^{t-1} \beta_i$ copies of $\pi_t^*$. It follows that for any $i\le t-1$, $|\beta_i-\ell \alpha_i| \le 1$, and $|\beta_t-\ell \alpha_t|\le t + (\vec y^n\cdot \vec 1 - \ell \vec x^{*}\cdot \vec 1) = O(\sqrt\ell) = O(\sqrt n)$, where the constant in $O(\sqrt n)$ depends on $\poly$ (because $\vec y^n\cdot \vec 1 - \ell \vec x^{*}\cdot \vec 1$ depends on $\poly$) but not on $\Pi$ (because $t\le q$), $\epsilon$, or $n$. 

Let $\vec \pi = (\vec \pi_1^\ell,\ldots,\vec \pi_t^\ell)$, or equivalently
$$\vec \pi = (\underbrace{\pi_1^*,\ldots,\pi_1^*}_{\beta_1},\underbrace{\pi_2^*,\ldots,\pi_2^*}_{\beta_2},\ldots,  \underbrace{ \pi_t^*, \ldots, \pi_t^*}_{\beta_t})$$
It follows that $\expect(\vXp) = \vec\pi\cdot\vec 1 = \sum_{i=1}^t \beta_i\pi_i^*$, which means that  
$$|\expect(\vXp)-\ell\vec x^*|_\infty = \left|\sum_{i=1}^t (\beta_i-\ell\alpha_i)\pi_i^*\right|_\infty = O(\sqrt n)$$

Recall that $\vec x^\ell -\ell\vec x^* = \sqrt\ell \vec x^@+\vec x^{\#}$. This means that $\expect(\vXp) = \vec\pi\cdot\vec 1$  is also $O(\sqrt n)$ away from $\vec x^{\ell}$ in $L_\infty$.  Recall that any vector in $B^n$ is $O(\sqrt\ell)$ away from $\vec y^n$, which is $O(1)$ away from $\vec x^\ell$. Therefore, $\vec \pi\cdot \vec 1$  is $O(\sqrt n)$ away from any vector in $B^n$ in $L_\infty$, where the constant in the asymptotic bound depends on $\poly$ but not on $\Pi$, $\epsilon$, or $n$.


\paragraph{\bf\boldmath Step~3.~$B^n$ contains $\Omega\left({\sqrt{n}}^{\dim(\polyz)-1}\right)$-many integer vectors.} Intuitively, this is true because $B^n$ consists of enough vectors from an  neighborhood of $\vec y^n$ that looks like $\polyz$ with the additional linear constraint $\vec x\cdot \vec 1 = n$. Therefore, the $I_1$ components of vectors in  $B^n$ can be viewed as flexible variables, each of which can take any value in an $\Omega(\sqrt n)$ interval, which  contains  $\Omega(\sqrt n)$  integers. Once the $I_1$ component of a vector in $B^n$ is given, its $I_0$ components are more or less determined (see, e.g., Claim~\ref{claim:bounded} and its proof).
 
More precisely, we will enumerate $\Omega\left({\sqrt{n}}^{\dim(\polyz)-1}\right)$ many integer vectors in $B^n$ of the form 
$$\vec y^n + \left(\bd\cdot \invert{\vec \Delta_{I_1}, 0},\vec \Delta_{I_1}\right)\text{, where }\vec \Delta_{I_1}\in {\mathbb Z}_{\ge 0}^{I_1}$$
Recall that we have assumed w.l.o.g.~that $I_0= \{1,\ldots,o+1\}$ and $I_1=\{o+1,\ldots,q\}$. Let $\rho$ be the least common multiple of the denominators of entries in $\bd$. For example, $\rho=1$ in Example~\ref{ex:gaussian} because all entries in $\bd$ are integers. 

Then, we enumerate $\vec \Delta_{I_1} = (\Delta_{o+2},\ldots,\Delta_{q})\in {\mathbb Z}_{\ge 0}^{I_1}$ that satisfies the following two conditions.
\begin{itemize}
\item {\bf Condition~1.}  For each $o+2\le j\le q$,  $\rho$ divides $\Delta_{j}$, and 
\item {\bf Condition~2.}  For each $o+2\le j\le q$,  $|\Delta_{j}|< \frac{\gamma}{2}\sqrt n$.  Recall that $\gamma>0$ is the constant used to define $B_\gamma$. 
\end{itemize}
Condition~1 guarantees that $\vec \Delta_{I_0} = \bd\cdot \invert{\vec \Delta_{I_1}, 0}\in {\mathbb Z}^{I_0}$. Condition~2 guarantees that $\vec y^n +\vec \Delta \in B^n$ when $\frac{\gamma}{2}\sqrt n <\gamma\sqrt \ell$, which holds for any sufficiently large $n$ because $\lim_{n\ra\infty}\frac{n}{\ell}=1$. Notice that the total number of combinations of  $\vec \Delta_{I_1}$'s that satisfy both conditions is at least $\left(\lfloor \frac{\gamma\sqrt n}{\rho}\rfloor\right)^{|I_1|}$. When $n$ is sufficiently large so that $ \frac{\gamma\sqrt n}{\rho}>1$ and $\frac{\gamma}{2}\sqrt n <\gamma\sqrt \ell$, the number of integer vectors in $B^n$ is $\Omega\left({\sqrt{n}}^{|I_1|}\right) = \Omega\left({\sqrt{n}}^{\dim(\polyz)-1}\right)$, where the constant in the asymptotic lower bound depends on $\poly$  but not on $\Pi$, $\epsilon$, or $n$. This proves Step 3.

\paragraph{\bf \boldmath Step~4.~The probability for $\vXp$ to be any given vector in $\bm{B^n}$ is $\Omega\left(n^{\frac{(1-q)}{2}}\right)$.}  
This follows after Step~2 (${\vec\pi}\cdot \vec 1$ is $O(\sqrt n)$ away from $B^n$) and Lemma~\ref{lem:point-wise-concentration-PMV} below, which extends the point-wise concentration bound for i.i.d.~Poisson multinomial variables~\citep[Lemma~4 in the Appendix]{Xia2020:The-Smoothed} to general Poisson multinomial variables that correspond to strictly positive (but not necessarily identical) distributions.

\begin{lem}[\bf Point-wise concentration bound for Poisson multinomial variables]\label{lem:point-wise-concentration-PMV} For any $q\in \mathbb N$, any $\epsilon>0$, and any $\alpha>0$, there exists  $C_{q,\epsilon,\alpha}>0$  such that for any $n\in\mathbb N$, any $(n,q)$-Poisson multinomial random variable $\vXp$ where $\vec \pi$ is above $\epsilon$,  and any integer vector $\vec x \in {\mathbb Z^{q}_{\ge 0}}$  with $\vec x\cdot 1 = n$ and  $|\vec x - \expect(\vXp)|_\infty<\alpha\sqrt n$, we have:
$$\Pr(\vXp=\vec x)>C_{q,\epsilon,\alpha}\cdot n^{\frac{1-q}{2}}$$
\end{lem}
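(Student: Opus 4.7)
The plan is to prove Lemma~\ref{lem:point-wise-concentration-PMV} via a multivariate lattice local central limit theorem (local CLT). Write $\vXp = \sum_{j=1}^n Y_j$, where each $Y_j$ is the indicator of a single draw from $\pi_j$ and is supported on $\{e_1, \ldots, e_q\} \subset \mathbb Z^q$. Since $Y_j \cdot \vec 1 = 1$ almost surely, $\vXp$ takes values on the affine lattice $L_n = \{\vec x \in \mathbb Z_{\ge 0}^q : \vec x \cdot \vec 1 = n\}$, which is intrinsically $(q-1)$-dimensional. A local CLT on this lattice will yield, pointwise, a Gaussian-like expression whose leading factor is $\Theta(n^{(1-q)/2})$, and the lemma follows once we verify that the exponential factor at $\vec x$ is bounded below by a constant depending only on $q, \epsilon, \alpha$.

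First, I would verify uniform non-degeneracy of the covariance. Since each $\pi_j$ satisfies $\pi_j(i) \ge \epsilon$, the single-draw covariance $\Sigma_j = \mathrm{diag}(\pi_j) - \pi_j \pi_j^\top$ has $\vec 1$ as its unique kernel direction, and on $\vec 1^\perp$ its eigenvalues lie in a fixed interval $[c_1(q,\epsilon), c_2(q)] \subset (0,\infty)$. Summing, $\Sigma_n := \mathrm{Cov}(\vXp)$ satisfies $\det(\Sigma_n|_{\vec 1^\perp}) = \Theta(n^{q-1})$ with constants depending only on $q$ and $\epsilon$. Next, I would invoke a uniform lattice local CLT for sums of independent (but not necessarily identical) bounded lattice vectors, obtaining
\[
\Pr(\vXp = \vec x) = \frac{1}{(2\pi)^{(q-1)/2}\sqrt{\det(\Sigma_n|_{\vec 1^\perp})}}\exp\!\left(-\tfrac{1}{2}(\vec x - \mu_n)^\top \Sigma_n^{+} (\vec x - \mu_n)\right) + O(n^{-q/2}),
\]
where $\mu_n = \expect(\vXp)$ and $\Sigma_n^{+}$ is the Moore-Penrose pseudo-inverse of $\Sigma_n$ restricted to $\vec 1^\perp$. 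Because $|\vec x - \mu_n|_\infty < \alpha\sqrt n$ and $\Sigma_n^{+}$ has operator norm $O(1/n)$ on $\vec 1^\perp$, the quadratic form in the exponent is at most $q\alpha^2/c_1$, a constant depending only on $q,\epsilon,\alpha$. Hence the Gaussian term is $\Omega(n^{(1-q)/2})$ and dominates the $O(n^{-q/2})$ error for every $n \ge N_{q,\epsilon,\alpha}$. For the finitely many remaining $n < N_{q,\epsilon,\alpha}$, the crude bound $\Pr(\vXp = \vec x) \ge \epsilon^n$, which holds because $\vec\pi$ is above $\epsilon$ and any $\vec x \in L_n$ is realizable by some assignment of the $n$ draws, suffices after possibly shrinking $C_{q,\epsilon,\alpha}$.

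The main obstacle is establishing the local CLT with uniformity in $\vec\pi$, specifically a uniform Cram\'er-type decay $|\expect[e^{i\xi \cdot Y_j}]| \le 1 - \delta(q,\epsilon)$ for $\xi \in \vec 1^\perp$ bounded away from the dual lattice, uniformly in $j$. Since each $\pi_j$ is a convex combination of $e_1,\ldots,e_q$ with every weight at least $\epsilon$, the characteristic function $\phi_j(\xi) = \sum_i \pi_j(i) e^{i\xi \cdot e_i}$ cannot approach the unit circle unless the phases $\xi \cdot e_i$ nearly coincide mod $2\pi$, and this can be ruled out by a compactness argument on the torus. An alternative route, more in the spirit of Hoeffding's constrained-optimization method~\citep{Hoeffding1956:On-the-Distribution} cited in the paper, would exploit the multilinearity of $\Pr(\vXp = \vec x)$ in each $\pi_j$: on the polytope $\{\pi_j : \pi_j(i) \ge \epsilon\}$ with the mean constraint fixed, the infimum is attained at a vertex corresponding to a highly structured PMV, which reduces to a multinomial coefficient estimate controllable by Stirling's formula.
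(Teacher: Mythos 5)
Your primary route through a multivariate lattice local CLT is sound in outline but has a genuine gap at the pivotal step: the ``uniform lattice local CLT for sums of independent (but not necessarily identical) bounded lattice vectors'' with error $O(n^{-q/2})$ that you invoke is not an off-the-shelf theorem. You correctly identify what is needed — a uniform Cram\'er-type bound $\sup_j|\phi_j(\xi)| \le 1-\delta(q,\epsilon)$ for $\xi$ in a fundamental domain of the dual lattice bounded away from its origin, and your compactness-on-the-torus sketch does establish that bound — but deriving the local CLT from it requires the full Fourier inversion argument (splitting the integral near and far from the dual lattice, controlling the remainder terms uniformly in the non-identical $\pi_j$'s), which is most of the work and which you do not carry out. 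Your surrounding observations are all correct: the affine-lattice structure, the $\Theta(n^{q-1})$ determinant of $\Sigma_n$ on $\vec 1^\perp$, the $O(1/n)$ operator norm of $\Sigma_n^{+}$ bounding the exponent by $q\alpha^2/c_1$, and the crude $\epsilon^n$ fallback for the finitely many small $n$. But the lemma is not proved without the local CLT itself.

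The paper's actual proof is the ``alternative route'' you mention at the very end. It first shows, via the constrained optimization (\ref{equ:optpmv}) and a perturbation argument (Claim~\ref{claim:optpi}), that the minimizing $\vec\pi$ can be taken to use at most $2^q$ distinct distributions, one per tight-constraint pattern $B\subseteq[q]$. It then picks the most frequent of these, $\pi^*$, appearing at least $n/2^q$ times, writes $\vXp=\vec X_1+\vec X_2$ with $\vec X_1$ the i.i.d.~block from $\pi^*$, conditions on $\vec X_2$ landing in an $O(\sqrt n)$ ball (a Hoeffding event of probability at least $1/2$), and applies the i.i.d.~point-wise anti-concentration bound of~\citep{Xia2020:The-Smoothed} to $\vec X_1$ (Claim~\ref{claim:anticonmultitype}). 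This sidesteps any need for a non-i.i.d.~local CLT. As written, your proposal reduces the lemma to such a CLT rather than proving it; to close the gap you would need to either cite a sufficiently uniform non-identically-distributed multivariate lattice local CLT or prove one from the Cram\'er bound you sketched.
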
 
\begin{proof} The proof proceeds in three steps. In {\bf Step (i)}, we prove that it suffices to prove the lemma for a special  $\vec \pi$, where at most $2^q$ types of distributions are used. This is achieved by analyzing the following linear program given $\epsilon>0$, $\vec x\in {\mathbb Z}_{\ge 0}^q$, and $\vec\mu\in {\mathbb R}_{\ge 0}^q$ are given, and the variables are $\vec \pi = (\pi_1,\ldots,\pi_n)$, where each $\pi_j$ is a distribution over $[q]$ that is above $\epsilon$.
\begin{equation}
\label{equ:optpmv}
\begin{split}
\min_{\vec\pi}&\Pr(\vXp = \vec x)\\
s.t.~& \sum_{j=1}^n \pi_j = \vec \mu\\
&\text{for each }j\le n, \pi_j\ge \epsilon\cdot \vec 1 \text{ and }\pi_j\cdot\vec 1 = 1
\end{split}
\end{equation}
At a high level, (\ref{equ:optpmv}) can be viewed as an extension of ideas and techniques for Poisson binomial variables developed by Hoeffding~\citep{Hoeffding1956:On-the-Distribution} to PMVs. In {\bf Step (ii)}, we prove the lemma for any $\vec \pi'$ that consists of a constant number of different distributes (each distribution may appear multiple times in $\vec \pi'$). This can be viewed as an extension of the point-wise concentration bound for i.i.d.~Poisson multinomial variables~\cite[Lemma~4 in the Appendix]{Xia2020:The-Smoothed} to PMVs of constant number of different distributions. In {\bf Step (iii)} we combine results in Step 1 and 2 to prove the lemma.

\paragraph{\bf Step (i).} For any $B\subseteq [q]$, let $\restrict{\vec \pi}{B}$ denote the collection (equivalently, subvector) of distributions in $\vec\pi$ whose $B$-components are exactly $\epsilon$. 
\begin{ex}\label{ex:pi-B}
For example, let  $q=5, \epsilon = 0.1, n=4$, $\vec \pi = (\pi_1,\pi_2,\pi_3,\pi_4)$, where 

\begin{center}
\begin{tabular}{|c|c|c|}
\hline $\pi_1$ & $\pi_1$& $\pi_3=\pi_4 =\piuni$\\
\hline  $(0.25,0.2,0.3,0.15, 0.1)$ & $(0.2,0.4,0.15,0.15, 0.1)$& $(0.2,0.2,0.2,0.2, 0.2)$\\
\hline
\end{tabular}
\end{center}
We have $\restrict{\vec \pi}{{\emptyset}} = (\pi_3,\pi_4)$, $\restrict{\vec \pi}{\{5\}} = (\pi_1,\pi_2)$, and for any other $B\subseteq [5]$,  $\restrict{\vec \pi}{B} = \emptyset$.
\end{ex}

By definition, $\vec \pi$ and $\bigcup_{B\subseteq [q]}\restrict{\vec \pi}{B}$ contain the same (multi-)set of distributions. We have the following claim about the optimal solutions to (\ref{equ:optpmv}).
\begin{claim}\label{claim:optpi}
(\ref{equ:optpmv}) has an optimal solution $\vec \pi^*$ where for each $B\subseteq [q]$ with $\restrict{\vec\pi^*}{B}\ne\emptyset$, all distributions in $\restrict{\vec\pi^*}{B}$ are the same.
\end{claim}
\begin{proof} Let $\Pi_\epsilon$ denote the set of all distributions over $[q]$ that are above  $\epsilon$. It follows that $\Pi_\epsilon$ is compact and $\Pr(\vXp = \vec x)$ is continuous, because it can be viewed as  a polynomial in $\vec \pi$. Therefore, due to the extreme value theorem, (\ref{equ:optpmv}) has solutions. Let $\vec \pi^*$ denote an arbitrary solution with the maximum total number of probabilities  in distributions in $\vec \pi^*$ that equal to $\epsilon$, that is,
$$\vec\pi^* =(\pi_1^*,\ldots, \pi_n^*) \in \arg\max\nolimits_{\vec \pi \text{ is a solution to } (\ref{equ:optpmv})}|\{ j\le n,i\le q: \pi_{j}(i) = \epsilon\}|$$
Suppose for the sake of contradiction the claim is not true. Then, there exists $B\subseteq [q]$ such that $\restrict{\vec\pi^*}{B}$ contains at least two different distributions. W.l.o.g.~let $B = \{q'+1,\ldots,q\}$ and let the two distributions be $\pi_1^*$ and $\pi_2^*$ such that for some $q^*\le q'$, we have that for each $1\le i\le q^*$, $\pi_1^*(i) \ne \pi_2^*(i)$ and for each $q^*+1\le i\le q'$, $\pi_1^*(i) = \pi_2^*(i)$. It follows that $q^*\ge 2$.

Let $\vec \pi = (\pi_1,\ldots, \pi_n)\in \Pi_\epsilon^n$ denote an arbitrary vector of  $n$ distributions in $\Pi_\epsilon$ (i.e., $\vec \pi$ may not be a solution to (\ref{equ:optpmv})) such that  for  every $1\le i\le q^*$, $\pi_1(i) \ne \pi_2(i)$ and for each $q^*+1\le i\le q'$, $\pi_1(i) = \pi_2(i)$.  For any $\vec \psi = (\psi_1,\ldots, \psi_{q^*})$ such that $\vec \psi\cdot\vec 1 =  0$ and $|\vec\psi|_\infty$ is sufficiently small, we let $\vec\pi_{\vec\psi}$ denote the vector of distributions that is obtained from $\vec \pi$ by replacing $\pi_1$ by $\pi_1+ (\vec \psi, \vec 0)$ and replacing $\pi_2$ by $\pi_2- (\vec \psi, \vec 0)$. 

\begin{ex}\label{ex:pi-psi}
Continuing Example~\ref{ex:pi-B}, we let $B=\{5\}$. Then, $q' = 4$, $q^*=3$, $\vec\psi = (\psi_1,\psi_2,\psi_3)$ and
$$\vec\pi_{\vec\psi} = ((0.25+\psi_1,0.2+\psi_2,0.3+\psi_3,0.15, 0.1), (0.2-\psi_1,0.4-\psi_2,0.15-\psi_3,0.15, 0.1), \pi_3,\pi_4)$$
\end{ex}

For any $\vec x\in {\mathbb Z_{\ge 0}^q}$ with $\vec x\cdot\vec 1 = n$ and any $\vec \pi$, we will prove that $\Pr\left(\vec X_{\vec\pi_{\vec\psi}} = \vec x\right)-\Pr\left(\vXp = \vec x\right)$ can be calculated as follows.
\begin{align}
&\Pr\left(\vec X_{\vec\pi_{\vec\psi}} = \vec x\right)-\Pr\left(\vXp = \vec x\right)\notag\\
=&\sum_{i\le q^*} F_{ii} (-\psi_i^2 + \psi_i(\pi_2(i) - \pi_1(i)))+ \sum_{1\le i<t\le q^*} F_{it} (-2\psi_i\psi_t + \psi_i(\pi_2(t) - \pi_1(t))+ \psi_t(\pi_2(i) - \pi_1(i)))\label{eq:pfdiff}
\end{align}
where for any $1\le i\le t\le q^*$, $F_{it}$ is the probability for $\vec X_{ \{\pi_3,\ldots,\pi_n\}}$ to be the vector that is obtained from $\vec x$ by subtracting $1$ from the $i$-th element and the $t$-th element (and from $i$-th element twice if $i=t$). Note that  some $F_{it}$'s can be $0$. 

\begin{ex}\label{ex:diff}
Continuing Example~\ref{ex:pi-psi}, we let $\vec x = (2,1,1,0,0)$. Recall that $\pi_3$ and $\pi_4$ are uniform distributions. We have $F_{11} = \frac{1}{25}$, $F_{12} = F_{13} = F_{23} = \frac{2}{25}$, and $F_{22}=F_{33}=0$.
Then,  (\ref{eq:pfdiff}) becomes:
\begin{align*}
\Pr\left(\vec X_{\vec\pi_{\vec\psi}} =\vec x\right)-\Pr\left(\vXp = \vec x\right) 
&= \frac{1}{25}(-\psi_1^2-0.05\psi_1) + \frac{2}{25}(-2\psi_1\psi_2+0.2\psi_1-0.05\psi_2)\\\
&+ \frac{2}{25}(-2\psi_1\psi_3-0.15\psi_1-0.05\psi_3)+ \frac{2}{25}(-2\psi_2\psi_3-0.15\psi_2+0.2\psi_3)
\end{align*}
\end{ex}

Formally, (\ref{eq:pfdiff}) is proved by applying the law of total probability to the histogram of  $(\pi_3,\ldots,\pi_n)$. For any $1\le i< t\le q^*$,  let $\vec e_{i,t}\in \{0,1\}^q$ denote the vector that takes $1$ on the $i$-th component and the $t$-th component, and takes $0$ on other components. For any $1\le i\le q^*$, let  $\vec e_{ii}\in \{0,2\}^q$ denote the  vector that takes $2$ on the $i$-th component and takes $0$ on other components.  According to the law of total probability, we have:
\begin{align*}
\Pr\left(\vXp = \vec x\right) = & \sum_{1\le i\le t\le q^*} \Pr\left(\vec X_{(\pi_3,\ldots,\pi_n)} = \vec x - \vec e_{it}\right)\times \Pr\left(X_{(\pi_1,\pi_2)} = \vec e_{it}|\vec X_{(\pi_3,\ldots,\pi_n)} = \vec x - \vec e_{it}\right)\\
=&  \sum_{1\le i\le t\le q^*} F_{it}\times \Pr\left(X_{(\pi_1,\pi_2)} = \vec e_{it}|\vec X_{(\pi_3,\ldots,\pi_n)} = \vec x - \vec e_{it}\right)\\
=&  \sum_{1\le i\le t\le q^*} F_{it}\times \Pr\left(X_{(\pi_1,\pi_2)} = \vec e_{it}\right)\end{align*}
The last equation holds because the $n$ random variables are independent. A similar formula can be obtained for  $\Pr\left(\vec X_{\vec\pi_{\vec\psi}} = \vec x\right)$. That is, 
$$\Pr\left(\vec X_{\vec\pi_{\vec\psi}} = \vec x\right)  =  \sum_{1\le i\le t\le q^*} F_{it}\times \Pr\left(X_{(\pi_1+ (\vec \psi, \vec 0),\pi_2- (\vec \psi, \vec 0))} = \vec e_{it}\right)$$
Therefore,
\begin{equation}
\label{eq:pfdiff-total}
\Pr\left(\vec X_{\vec\pi_{\vec\psi}} = \vec x\right)  - \Pr\left(\vXp = \vec x\right) =   \sum_{1\le i\le t\le q^*} F_{it}\times \left[\Pr(X_{(\pi_1+ (\vec \psi, \vec 0),\pi_2- (\vec \psi, \vec 0))} = \vec e_{it}) - \Pr(X_{(\pi_1,\pi_2)} = \vec e_{it})\right]
\end{equation}

Next, we calculate $\Pr\left(X_{(\pi_1+ (\vec \psi, \vec 0),\pi_2- (\vec \psi, \vec 0))} = \vec e_{it}\right) - \Pr\left(X_{(\pi_1,\pi_2)} = \vec e_{it}\right)$ for $i=t$ and $i<t$, respectively.
\begin{itemize}
\item When $i=t$, we have:
\begin{align}
&\Pr\left(X_{(\pi_1+ (\vec \psi, \vec 0),\pi_2- (\vec \psi, \vec 0))} = \vec e_{ii}\right) - \Pr\left(\vec X_{(\pi_1,\pi_2)}\right)\notag\\ 
=& (\pi_1(i)+\psi_i)(\pi_2(i)-\psi_i) - \pi_1(i)\pi_2(i)\notag\\
 =& -\psi_i^2+\psi_i(\pi_2(i)-\pi_1(i))\label{eq:ii}
\end{align} 
\item When $1\le i<t\le q^*$, we have:
\begin{align}
&\Pr\left(X_{(\pi_1+ (\vec \psi, \vec 0),\pi_2- (\vec \psi, \vec 0))} = \vec e_{ii}\right) - \Pr\left(\vec X_{(\pi_1,\pi_2)}\right)\notag\\ 
=& \left[(\pi_1(i)+\psi_i)(\pi_2(t)-\psi_t) + (\pi_1(t)+\psi_t)(\pi_2(i)-\psi_i)\right] - \left[ \pi_1(i)\pi_2(t) + \pi_1(t)\pi_2(i)\right]\notag\\
=& -2\psi_i\psi_t + \psi_i(\pi_2(t) - \pi_1(t))+ \psi_t(\pi_2(i) - \pi_1(i)) \label{eq:it}
\end{align}
(\ref{eq:pfdiff}) follows after combining (\ref{eq:pfdiff-total}),  (\ref{eq:ii}), and (\ref{eq:it}).
\end{itemize}

For convenience, we rewrite (\ref{eq:pfdiff}) in matrix form. For any $1\le i<t\le q^*$, let $F_{ti} = F_{it}$ and let
$\bF = (F_{it})_{q^*\times q^*}$ denote the $q^*\times q^*$ symmetric matrix. $\vec\delta = (\pi_2(1) - \pi_1(1), \ldots, \pi_2(q^*) - \pi_1(q^*))$. According to the definition of $q^*$, no component of $\vec\delta $ is $0$. With the matrix notation, (\ref{eq:pfdiff}) becomes
\begin{align}
&\Pr\left(\vec X_{\vec\pi_{\vec\psi}} = \vec x\right)-\Pr\left(\vXp = \vec x\right)
= -\vec \psi \cdot \bF\cdot \invert{ \vec \psi} + \vec \psi \cdot \bF\cdot \invert{\vec\delta}\label{eq:pfdiffmatrix}
\end{align}

\begin{ex}\label{ex:F}
Continuing Example~\ref{ex:diff}, we have
$\bF = \left [ \begin{array}{ccc}
\frac{1}{25}&\frac{2}{25} &\frac{2}{25}  \\
\frac{2}{25}&0 &\frac{2}{25}  \\
\frac{2}{25}&\frac{2}{25} &0
 \end{array}\right]$
 and $\vec \delta = (-0.05,0.2,-0.15)$.
\end{ex}

Let $A$ denote the $(q^*-1)\times q^*$ matrix $\left [ \begin{array}{cccc}
1& & & -1\\
&\ddots & & \vdots\\
& & 1& -1
 \end{array}\right]$ and $\vec\psi'= (\psi_1,\ldots,\psi_{q^*-1})$. Recall that $\vec\psi \cdot\vec 1=0$, we have $\psi_{q^*} = -\psi_1-\cdots-\psi_{q^*-1}$, which means that $\vec\psi = \vec\psi' \cdot A$.  Because $\pi_1$ and $\pi_2$ are probability distributions, we have $\vec\delta \cdot\vec 1 = \vec 0$, which means that $\delta_{q^*} = -\delta_1-\cdots-\delta_{q^*-1}$. Therefore,  let $\vec\delta' = (\pi_2(1) - \pi_1(1), \ldots, \pi_2(q^*-1) - \pi_1(q^*-1))$, we have $\vec\delta = \vec\delta'\cdot A$.  Let $\bF' = A\cdot \bF\cdot \invert{A}$. Then, (\ref{eq:pfdiffmatrix}) becomes:
 \begin{align}
 \label{eq:pfdiffsimp}
\Pr\left(\vec X_{\vec\pi_{\vec\psi}} = \vec x\right)-\Pr\left(\vXp = \vec x\right)= -\vec\psi' \cdot \bF'\cdot \invert{\vec\psi'} + \vec\psi' \cdot \bF'\cdot \invert{\vec\delta'} 
 \end{align}
 
\begin{ex}
\label{ex:Fprime}
Continuing Example~\ref{ex:F}, we have  $A = \left [ \begin{array}{ccc}
1& 0& -1\\0& 1& -1
 \end{array}\right]$, $\bF' = \left [ \begin{array}{cc}
-\frac{3}{25}&-\frac{2}{25}   \\
-\frac{2}{25}&-\frac{4}{25}
 \end{array}\right]$
 and $\vec \delta' = (-0.05,0.2)$. Therefore, $\bF'\cdot \invert{\vec\delta'} =\left[\begin{array}{c}-\frac{1}{100}\\ -\frac{7}{250}\end{array} \right]$ and (\ref{eq:pfdiffsimp}) becomes:
$$\Pr\left(\vec X_{\vec\pi_{\vec\psi}} = \vec x\right)-\Pr\left(\vXp = \vec x\right)= -[\psi_1,\psi_2]\cdot \left [ \begin{array}{cc}
-\frac{3}{25}&-\frac{2}{25}   \\
-\frac{2}{25}&-\frac{4}{25}
 \end{array}\right] 
 \cdot \left [\begin{array}{c}\psi_1\\ \psi_2\end{array} \right]+ [\psi_1,\psi_2]\cdot \left [\begin{array}{c}-\frac{1}{100}\\ -\frac{7}{250}\end{array} \right] $$
\end{ex}
 
Notice that $\bF'$ depends on both $\vec x$ and $\vec\pi$. Next, we consider the case for $\vec \pi^*$, which we recall is an optimal solution to (\ref{equ:optpmv}) and $q^*$ is defined based on $\vec\pi^*$. We will prove that $\bF'\cdot \invert{\vec\delta'} = \invert{\vec 0}$, where $\bF'$ is the matrix corresponding to $\vec x$ and $\vec\pi^*$. Suppose for the sake of contradiction that this is not true. W.l.o.g.~suppose the first component of $\bF'\cdot \invert{\vec\delta'}$ is non-zero. Then, by letting $\psi_2=\cdots=\psi_{q^*-1}=0$, (\ref{eq:pfdiffsimp}) becomes $A\psi_1^2+B\psi_1$ for some constants $A$ and $B$ with $B\ne 0$, which means that  there exists $\psi_1\ne 0$ such that  (\ref{eq:pfdiffsimp}) is strictly less than zero.  This contradicts the assumption that $\vec \pi^*$ is an optimal solution to (\ref{equ:optpmv}). The following example shows how to choose $\vec\psi$ when $\bF'\cdot \invert{\vec\delta'} \ne \invert{\vec 0}$, to obtain another feasible solution with smaller objective value. Notice that in this example $\vec x$ is not an optimal solution to (\ref{equ:optpmv}).

\begin{ex} Continuing Example~\ref{ex:Fprime}, notice that the first component of $\bF'\cdot \invert{\vec\delta'} $ is non-zero. Therefore, by letting $\psi_2=0$, we have 
$$\Pr\left(\vec X_{\vec\pi_{\vec\psi}} = \vec x\right)-\Pr\left(\vXp = \vec x\right) = \frac{3}{25}\psi_1^2-\frac{1}{100}\psi_1$$ 
Let $\psi_1 = 0.05$, which means that $\psi_3 = -0.05$. It is not hard to verify that $\vec\pi_{\vec\psi}$ is a feasible solution to  (\ref{equ:optpmv}) with a smaller objective value.   
\end{ex}

Therefore, for any $\gamma\in \mathbb R$, if we let $\vec\psi =\gamma  \vec\delta$, then (\ref{eq:pfdiffsimp}) becomes zero, which means that if $\vec \pi_{\gamma  \vec\delta}^*$ is strictly positive (by $\epsilon$), then it is also an optimal solution to (\ref{equ:optpmv}).
Recall that all components of $\vec\delta$ are non-zero. Therefore, we can start from $\gamma = 0$ and gradually increase the value of $\gamma$  until any of the first $q^*$ components in $\pi_1^*+(\gamma  \vec\delta,\vec 0)$ or in $\pi_2^*-(\gamma  \vec\delta,\vec 0)$ becomes $\epsilon$. Then, it is not hard to verify that $\vec \pi_{\gamma  \vec\delta}^*$ is an optimal solution to (\ref{equ:optpmv}) with strictly more probabilities that equal to $\epsilon$. This contradicts the assumption that  $\vec \pi^*$ contains maximum number of probabilities that equal to $\epsilon$ among optimal solutions to (\ref{equ:optpmv}), and therefore concludes the proof of Claim~\ref{claim:optpi}.
\end{proof}

\paragraph{\bf Step (ii).} We prove the following special case of the lemma.
\begin{claim}\label{claim:anticonmultitype}
 For any $\epsilon>0,\alpha>0$, $q\in\mathbb N$, and $Q\in\mathbb N$, there exist constants $C_{q,\epsilon,\alpha,Q}$ and $N$ such that for any set $\Pi_Q$ of $Q$ distributions over $[q]$ that are strictly positive by $\epsilon$, any $\vec\pi'\in \Pi_Q^n$, and any integer vector $\vec x \in {\mathbb Z^{q}_{\ge 0}}$ with $\vec x\cdot 1 = n$ and $|\vec x - \expect(\vec X_{\vec \pi'})|_\infty<\alpha\sqrt n$, we have
$$\Pr\left(\vec X_{\vec \pi'}=\vec x\right)>C_{q,\epsilon,\alpha,Q}\cdot n^{\frac{1-q}{2}}$$
\end{claim}
\begin{proof} Because $|\Pi_Q|=Q$, there exists $\pi^*\in \Pi_Q$ that appears in $\vec \pi'$ for at least $\lceil\frac nQ\rceil$ times. Let $n' = \lceil \frac nQ\rceil$ and let $\vec X_1$ denote the $(n',q)$-PMV that corresponds to $(\underbrace{\pi^*,\ldots,\pi^*}_{n'})$. Let  $\vec X_2$ denote the $(n-n',q)$-PMV that corresponds to the remaining distributions in $\vec\pi'$.  Recall that each distribution in $\vec\pi'$ is  strictly positive by $\epsilon$. By Hoeffding's inequality, let  $\alpha' = \frac{(1-\epsilon)^2}{2}\log (4q)$, for each $i\le q$, we have
$$\Pr\left(|[\vec X_2]_i-[\expect(\vec X_2)]_i|>\alpha'\sqrt{n-n'}\right)\le \frac1{2q}$$
Therefore, by the union bound, we have:
$$\Pr\left(|\vec X_2-\expect(\vec X_2)|_\infty>\alpha'\sqrt{n-n'}\right)\le \frac12$$
Notice that $\vec X_1$ and $\vec X_2$ are independent. Now we can calculate $\Pr\left(\vec X_{\vec \pi'}=\vec x\right)$ by the law of total probability, by enumerating the target values for $\vec X_2$, denoted by $\vec x_2$, as follows.
\begin{align}
&\Pr\left(\vec X_{\vec \pi'}=\vec x\right)\\
= & \sum_{\vec x_2\in {\mathbb Z}_{\ge 0}^q: \vec x_2\cdot \vec 1 = n-n'}\Pr\left(\vec X_1 = \vec x-\vec x_2 | \vec X_2 = \vec x_2\right)\cdot \Pr\left(\vec X_2 = \vec x_2\right) \hspace{10mm} \text{\bf (The law of total probability)}\notag\\
= &\sum_{\vec x_2\in {\mathbb Z}_{\ge 0}^q: \vec x_2\cdot \vec 1 = n-n'}\Pr\left(\vec X_1 = \vec x-\vec x_2\right)\cdot \Pr\left(\vec X_2 = \vec x_2\right)\hspace{10mm} \text{\bf\boldmath ($\vec X_1$ and $\vec X_2$ are independent)}\notag\\
\ge & \sum_{\vec x_2\in {\mathbb Z}_{\ge 0}^q: \vec x_2\cdot \vec 1 = n-n' \text{ and }|\vec x_2 - \expect(\vec X_2)|_\infty\le \alpha'\sqrt{n-n'}}\Pr\left(\vec X_1 = \vec x-\vec x_2\right)\cdot \Pr\left(\vec X_2 = \vec x_2\right)\notag\\
=&  \sum_{\vec x_2\in {\mathbb Z}_{\ge 0}^q: \vec x_2\cdot \vec 1 = n-n' \text{ and }|\vec x_2 - \expect(\vec X_2)|_\infty\le \alpha'\sqrt{n-n'}}\Omega\left(n^{\frac{1-q}{2}}\right)\cdot \Pr\left(\vec X_2 = \vec x_2\right)\label{eq:xoneconcentration}\\
=& \Omega\left(n^{\frac{1-q}{2}}\right)\cdot \Pr\left(|\vec X_2 - \expect(\vec X_2)|_\infty\le \alpha'\sqrt{n-n'}\right) = \Omega\left(n^{\frac{1-q}{2}}\right)\notag
\end{align}
(\ref{eq:xoneconcentration}) follows after applying the point-wise concentration bound for i.i.d.~Poisson multinomial  variables~\citep[Lemma~4 in the Appendix]{Xia2020:The-Smoothed}  to $\vec X_1$ in the following way.  Recall that $\expect(\vXp)= \expect(\vec X_1)+\expect(\vec X_2)$, $|\vec x - \expect(\vXp)|_\infty <\alpha \sqrt n$, and $|\vec x_2 - \expect(\vec X_2)|_\infty\le \alpha'\sqrt{n-n'}$. We have: 
$$| \vec x_1-\expect(\vec X_1)|_\infty = |\vec x - \expect(\vec X_1) - (\vec x_2-\expect(\vec X_2))|_\infty  =   \alpha \sqrt n + \le \alpha'\sqrt{n-n'}\le (\alpha Q+\alpha'(Q-1))\sqrt {n'}$$
This completes the proof of Claim~\ref{claim:anticonmultitype}.
\end{proof}

\paragraph{\bf Step (iii).} By Claim~\ref{claim:optpi}, there exists a vector $\vec\pi^*$ of $n$ distributions over $[q]$, each of which is above $\epsilon$, such that (i) $\Pr(\vXp = \vec x) \ge \Pr(\vec X_{\vec \pi^*} = \vec x)$ and $\expect(\vXp) = \expect(\vec X_{\vec \pi^*})$, and (ii) $\vec\pi^*$ consists of no more than $Q=2^q$ different distributions that correspond to different subsets of $[q]$. Note that $\vec \pi^*$ may not be in $\Pi^n$. Lemma~\ref{lem:point-wise-concentration-PMV} follows after applying Claim~\ref{claim:anticonmultitype} with $\vec \pi' = \vec \pi^*$.\end{proof}

\noindent {\bf Step~5. Final calculations.} Finally, combining Step~3 and 4, we  have: 
\begin{align*}
&\Pr\left(\vXp\in \poly\right)\ge   \Pr\left(\vXp\in B^n\right) \ge  |B^n|\times \min_{\vec x\in B^n} \Pr\left(\vXp=\vec x\right)\\
 \ge &
\Omega\left(n^{\frac{\dim(\polyz)-1}{2}}\right)\times \Omega\left(n^{\frac{(1-q)}{2}}\right) = \Omega\left(n^{-\frac o 2}\right),\end{align*}
where the constants in the asymptotic bounds depend on $\poly$ (and therefore $q$) and $\epsilon$ but not on other parts of $\Pi$ or $n$. We require $n$ to be sufficiently large to guarantee the existence of $\vec x^{*}$ and $\vec x^{@}$, and $\frac{\gamma\sqrt n}{\rho}>1$ and $\frac{\gamma}{2}\sqrt n <\gamma\sqrt \ell$. 

This proves the polynomial lower bound on Sup.

\paragraph{\bf Proof of the exponential bounds on Inf.}  The proof  is similar to the proof of the exponential case in Theorem~\ref{thm:maintechh}. The lower bound is straightforward and the upper bound is proved by choosing an arbitrary $\pi \in \conv(\Pi)$ such that $\pi\not\in\polyz$, and define $\vec\pi$ in a similar way as in the proof of the polynomial lower bound of the Sup part.

\paragraph{\bf Proof of the polynomial upper bound on Inf.} Notice that the condition for this case implies that the polynomial case for Sup holds, which means that there exists $\vec\pi\in \Pi^n$ such that $\Pr(\vXp\in\poly) = \Theta\left(n^{\frac{\dim(\polyz)-q}{2}}\right)$. This immediately implies an $O\left(n^{\frac{\dim(\polyz)-q}{2}}\right)$ upper bound on Inf.

\paragraph{\bf Proof of the polynomial low bound on Inf.} The  proof is similar to the proof of the polynomial lower bound on the Sup part in Theorem~\ref{thm:maintechh}, except that when defining $\vec y^n$, we let $\vec x^* = \sum_{j=1}^n\pi_j/n\in \conv(\Pi)\cap \polyz$. More precisely, we have the following lemma that will be used in the proof of other propositions in this paper. Notice that the constant $C_{\Pi,\poly}$ in the lemma only depends on $\Pi$ (therefore $q$ and $\epsilon$) and $\poly$ but not on $\vec\pi$.
\begin{lem}
\label{lem:lower-any-pi}
For any $q\in\mathbb N$, any closed and strictly positive $\Pi$ over $[q]$, and any polyhedron $\poly$ with integer matrix $\ba$, there exists  $C_{\Pi,\poly}>0$ such that for any $n\in\mathbb N$ with $\polynint\ne\emptyset$ and any $\vec\pi=(\pi_1,\ldots,\pi_n)\in \Pi^n$ with $\sum_{j=1}^n\pi_j\in \polyz$, 
$$\Pr\left(\vXp\in\poly\right) \ge C_{\Pi,\poly}\cdot n^{-\frac{\dim(\polyz)}{2}}$$
\end{lem}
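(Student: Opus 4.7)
The plan is to adapt the five-step proof of the polynomial lower bound on the Sup part of Theorem~\ref{thm:maintechh}, with one essential change: rather than constructing $\vec\pi$ from a chosen $\vec x^*\in \polyz\cap\conv(\Pi)$, I would use the given $\vec\pi$ directly and set $\vec x^* = \frac{1}{n}\sum_{j=1}^n \pi_j$, which lies in $\conv(\Pi)\cap\polyz$ by hypothesis. This ensures $\expect(\vXp) = n\vec x^*$ is automatically aligned with the construction, so Step~2 of the original proof (the Carath\'eodory rounding used to approximate $\vec x^*$ by elements of $\Pi^n$) becomes unnecessary --- we simply inherit $\vec\pi$.

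Concretely, I would first fix, independently of $n$ and $\vec\pi$, an inner point $\vec x^@\in\polyz$ and an integer vector $\vec x^{\#}\in\poly\cap\mathbb Z_{\ge 0}^q$; such $\vec x^{\#}$ exists because $\polynint\ne\emptyset$ for some $n$, as in the original proof. Then, following Step~1, I would define $\vec x^\ell = \ell\vec x^* + \sqrt\ell\,\vec x^@ + \vec x^{\#}$ with $\ell=\Theta(n)$ chosen so that $\vec x^\ell\cdot\vec 1 = n$, and invoke \cite[Theorem~1(i)]{Cook86:Sensitivity} to obtain an integer $\vec y^n\in\polynint$ with $|\vec y^n - \vec x^\ell|_\infty \le C_{\poly}^1$. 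I would then form the neighborhood $B^n = \vec y^n + \sqrt\ell\,(B_\gamma - \vec x^@)\subseteq \polyn$ exactly as in Step~1.2, where $\gamma$ and hence $B_\gamma$ depend only on $\polyz$.

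The key (and now simplified) estimate is
\[
|\expect(\vXp) - \vec x^\ell|_\infty \;=\; |(\ell-n)\vec x^* + \sqrt\ell\,\vec x^@ + \vec x^{\#}|_\infty \;=\; O(\sqrt n),
\]
with constant depending only on $\poly$, since the components of $\vec x^*$ lie in $[0,1]$ and $|\ell - n| = O(\sqrt n)$. Combined with $|\vec y^n - \vec x^\ell|_\infty = O(1)$ and $|B^n - \vec y^n|_\infty = O(\sqrt n)$, every vector in $B^n$ lies within $O(\sqrt n)$ of $\expect(\vXp)$ in $L_\infty$. Step~3 of the Sup proof (purely geometric, independent of $\vec\pi$) shows that $B^n$ contains $\Omega(n^{(\dim(\polyz)-1)/2})$ integer vectors, and Step~4 applies Lemma~\ref{lem:point-wise-concentration-PMV} --- which holds uniformly in $\vec\pi$ provided $\Pi$ is above some $\epsilon>0$ --- to give $\Pr(\vXp = \vec x) \ge \Omega(n^{(1-q)/2})$ for each integer $\vec x\in B^n$. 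Multiplying the counts yields the claimed polynomial lower bound.

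The main obstacle is bookkeeping uniformity: every constant involved --- the sensitivity bound $C_{\poly}^1$, the radius $\gamma$ defining $B_\gamma$, the parameter $\alpha$ used to invoke Lemma~\ref{lem:point-wise-concentration-PMV}, and the implicit dependence on the strict-positivity constant $\epsilon$ of $\Pi$ --- must depend only on $\Pi$ and $\poly$, not on the particular $\vec\pi\in\Pi^n$. Pre-fixing $\vec x^@$ and $\vec x^{\#}$ before $\vec\pi$ is seen is essential for this, as is the fact that Lemma~\ref{lem:point-wise-concentration-PMV} was designed precisely to be uniform over $\vec\pi$ rather than requiring i.i.d.\ structure. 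Once this uniformity is checked, we package all constants into the single $C_{\Pi,\poly}$ asserted in the statement.
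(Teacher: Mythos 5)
Your proposal is correct and matches the paper's proof essentially step for step: the paper also takes $\vec x^* = \sum_{j=1}^n\pi_j/n \in \conv(\Pi)\cap\polyz$, re-runs Steps~1, 3, 4, and 5 of the Sup-part polynomial lower bound while skipping the Carath\'eodory construction of Step~2, and concludes by checking that all constants depend only on $\Pi$ and $\poly$. Your explicit calculation that $|\expect(\vXp)-\vec x^\ell|_\infty = O(\sqrt n)$ with constant independent of $\vec\pi$ (since $\vec x^*\in[0,1]^q$ and $|\ell-n|=O(\sqrt n)$) is the same uniformity observation the paper makes, stated slightly more concretely.
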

\begin{proof}  For any $n\in \mathbb N$ that is sufficiently large, we define  a non-negative integer vector $\vec y^n\in \polynint$  and  a neighborhood $B^n\subseteq \polyn$ of $\vec y^n$ that is similar to Step 1 of proof for the polynomial lower bound in the Sup case of Theorem~\ref{thm:maintechh}. The only difference is that we let $\vec x^* = \sum_{j=1}^n\pi_j/n = \expect(\vXp)/n\in \conv(\Pi)\cap \polyz$. The rest of the proof is the same as Step 3, 4, and 5 of the proof for the polynomial lower bound in the Sup case of Theorem~\ref{thm:maintechh}.

To see that the constant $C_{\Pi,\poly}$ does not depend on $\vec \pi$, we first notice that there exists a constant $\alpha_{\Pi,\poly}$ such that $\expect(\vXp) = n\vec x^*$ is no more than $\alpha_{\Pi,\poly}(\sqrt n)$ away from $\vec x^\ell = \ell\vec x^* + \sqrt \ell \vec x^{@}+\vec x^{\#}$, which is $O(1)$ away from  $\vec y^n$. Recall that $\vec y^n$ is the ``center'' of  $B^n$, whose ``radius'' only depends on $\ell$ and $B_\gamma$ but not on $\vec x^*$. Therefore, $\vec y^n$ (which is $O(1)$ close to $\vec x^\ell$ and the constant only depends on $\ba$ but not on $\vec x^*$) is no more than $\alpha_{\Pi,\poly}'\sqrt n$ away from any vector in $B^n$ for some $\alpha_{\Pi,\poly}'>0$. Finally, we note that the constant in the point-wise concentration bound (Lemma~\ref{lem:point-wise-concentration-PMV}) does not depend on $\vec\pi$.
\end{proof}

This completes the proof of Theorem~\ref{thm:maintechh}.
\end{proof}

\subsection{Proof of Theorem~\ref{thm:union-poly}}
\label{sec:maintech-union-proof}

\appThm{thm:union-poly}{\boldmath Smoothed Likelihood of PMV-in-$\upoly$}{Given any $q, I\in\mathbb N$, any closed and strictly positive $\Pi$ over $[q]$, and any $\upoly = \bigcup_{i\in I}\cpoly{i}$  characterized by integer matrices, for any $n\in\mathbb N$, 
\begin{align*}
&\sup_{\vec\pi\in\Pi^n}\Pr\left(\vXp \in \upoly\right)=\left\{\begin{array}{ll}0 &\text{if } \alpha_n = -\infty\\
\exp(-\Theta(n)) &\text{if } -\infty<\alpha_n<0\\
\Theta\left(n^{\frac{\alpha_n-q}{2}}\right) &\text{otherwise (i.e. } \alpha_n>0\text{)}
\end{array}\right.,\\
&\inf_{\vec\pi\in\Pi^n}\Pr\left(\vXp \in \upoly\right)=\left\{\begin{array}{ll}0 &\text{if } \beta_n = -\infty\\
\exp(-\Theta(n)) &\text{if } -\infty<\beta_n<0\\
\Theta\left(n^{\frac{\beta_n-q}{2}}\right) &\text{otherwise (i.e. } \beta_n>0\text{)}\end{array}\right..
\end{align*}}
\begin{proof} For convenience, we first recall Inequality (\ref{eq:upoly-high-level}) in the main text. 
$$
\max\nolimits_{i\le I}\Pr\left(\vXp \in \cpoly{i}\right)\le \Pr\left(\vXp \in \upoly\right)\le \sum\nolimits_{i\le I}\Pr\left(\vXp \in \cpoly{i}\right) \text{\ \ \ \ \ (\ref{eq:upoly-high-level})}
$$
The theorem is proved by combining (\ref{eq:upoly-high-level}) and applications of Theorem~\ref{thm:maintechh} to $\Pi$ and $\cpoly{i}$. We first introduce some notation.  Let  $\upolynint$ denote the set of all non-negative integer vectors in $\upoly$ whose $L_1$ norm is $n$.  That is,
$$\upolynint = \bigcup\nolimits_{i\le I} \cpolynint{i}$$
Given a distribution $\pi$ over $[q]$, let  $\calI_{\upoly,n}^\pi\subseteq\{1,\ldots, I\}$ denote the set of indices  $i$ in $\cpoly{i}$ such that the weight on $(\pi,\cpoly{i})$ in the activation graph $\calG_{\Pi,\upoly,n}$ is positive at $n$. Equivalently, $\calI_{\upoly,n}^\pi$ can be defined as follows.
$$\calI_{\upoly,n}^\pi = \{i\le I: \cpolynint{i}\ne\emptyset \text{ and } \pi\in\cpolyz{i}\}$$

For example,  $\calI_{\upoly,n}^\pi = \{2\}$ in Figure~\ref{fig:upoly-graph}. 

Next, we let   $\Pi_{\upoly,n}\subseteq \conv(\Pi)$ denote the  distributions $\pi$ in $\conv(\Pi)$ such that $\calI_{\upoly,n}^\pi \ne\emptyset$, or equivalently, $\md{\upoly}{\pi} >0$. Namely, 
$$\Pi_{\upoly,n} = \{\pi\in\conv(\Pi): \calI_{\upoly,n}^\pi\ne\emptyset\}$$

We start with the proof for the $\sup$ part.

\paragraph{\bf \boldmath Proof for the $\sup$ part.} We  prove the $\sup$ part by discussing the three cases ($0$, exponential, and polynomial)  as follows. 
\begin{itemize}
\item {\bf  \boldmath The $0$ case of sup.} This case is straightforward because $\alpha_n=-\infty$ means that no polyhedron in $\upoly$ is active at $n$, which means that  $\upoly$ does not contain any non-negative integer vector whose size is $n$, while any outcome of the PMV $\vXp$ is a non-negative integer vector whose size is $n$. 
\item {\bf The exponential case of sup.} We recall that in this case $\upolynint \ne \emptyset$ and $\Pi_{\upoly,n}=\emptyset$. To prove the {\bf exponential lower bound of sup}, it suffices to prove that  there exists $\vec\pi\in\Pi^n$ such that $\Pr\left(\vXp \in \upoly\right) = \exp(-O(n))$. Because $\upolynint = \bigcup_{i\le I} \cpolynint{i}\ne \emptyset$, there exists $i^*\le I$ such that $\cpoly{i^*}$ is active at $n$, i.e., $\cpolynint{i^*}\ne \emptyset$. The exponential lower bound follows after applying the lower bound of (\ref{eq:upoly-high-level}) to an arbitrary $\vec\pi\in\Pi^n$: 
$$\Pr\left(\vXp \in \upoly\right) \ge \max\nolimits_{i\le I}\Pr\left(\vXp \in \cpoly{i}\right) \ge \Pr\left(\vXp \in \cpoly{i^*}\right) \ge \epsilon^{n}=\exp(-O(n))$$
To prove the {\bf exponential upper bound of sup}, it suffices to prove that for every $\vec\pi\in\Pi^n$, we have $\Pr\left(\vXp \in \upoly\right) = \exp(-\Omega(n))$. This will be proved by combining the upper bound in (\ref{eq:upoly-high-level}) and the $0$ or exponential upper bounds on the sup part of Theorem~\ref{thm:maintechh} when applied to $\Pi$ and each $\cpoly{i}$. 

More precisely, because $\Pi_{\upoly,n}=\emptyset$, for every $\pi\in\Pi^n$, we have $\calI_{\upoly,n}^\pi=\emptyset$. This mean that for every $\pi\in\conv(\Pi)$ and every $i\le I$, either $\cpolynint{i}=\emptyset$ or $\pi\not\in\cpolyz{i}$ (or both hold). If the former holds, then the $0$ case of Theorem~\ref{thm:maintechh} can be applied to $\Pi$ and $\cpoly{i}$, which means that for any $\vec\pi\in\Pi^n$, we have $\Pr\left(\vXp \in \cpoly{i}\right)=0$. If the former does not hold and the latter holds, then the exponential case of Theorem~\ref{thm:maintechh} applies to $\Pi$ and $\cpoly{i}$, which means that for every $\vec\pi\in\Pi^n$, we have $\Pr\left(\vXp \in \cpoly{i}\right) = \exp(-\Omega(n))$. Notice that $I$ is a constant. Therefore, following the upper bound of (\ref{eq:upoly-high-level}), for every $\vec\pi\in\Pi^n$, we have:
$$\Pr\left(\vXp \in \upoly\right) \le \sum\nolimits_{i\le I}\Pr\left(\vXp \in \cpoly{i}\right) \le I \cdot\exp(-\Omega(n))=\exp(-\Omega(n))$$

\item {\bf The polynomial bounds of sup.} We recall that in this case $\upolynint \ne \emptyset$ and $\Pi_{\upoly,n}\ne \emptyset$. Like the exponential case, the proof is done by combining (\ref{eq:upoly-high-level}) and the applications of Theorem~\ref{thm:maintechh} to $\Pi$ and each $\cpoly{i}$. 
To prove the {\bf polynomial lower bound of sup}, it suffices to prove that  there exists $\vec\pi\in\Pi^n$ such that $\Pr\left(\vXp \in \upoly\right) = \Omega(n^{\frac{\alpha_n-q}{2}})$. Because $\Pi_{\upoly,n}\ne \emptyset$, we let $(\pi^*,i^*)$ denote the pair that achieves $\alpha_n$. More precisely, let
$$(\pi^*,i^*)=\arg \max\nolimits_{\pi\in\Pi_{\upoly,n}}\max\nolimits_{i\in \calI_{\upoly,n}^\pi}\dim(\cpolyz{i})$$
Following the definition of $(\pi^*,i^*)$, we have $\cpolynint{i^*}\ne \emptyset$, $\pi^*\in\cpolyz{i^*}$ (which means that $\cpolyz{i^*}\cap\conv(\Pi)\ne \emptyset$), and $\dim(\cpoly{i^*})=\alpha_n$. This means that the polynomial case of the sup part of Theorem~\ref{thm:maintechh} holds when the theorem is applied to $\Pi$ and $\cpoly{i^*}$. Therefore, there exists 
$\vec\pi\in\Pi^n$ such that $\Pr\left(\vXp \in \cpoly{i^*}\right) = \Omega(n^{\frac{\dim(\cpoly{i^*})-q}{2}})$. Notice that  we cannot immediately let $\vec\pi = (\pi^*,\ldots,\pi^*)$ because it is possible that $\pi^*\notin \Pi$.
It follows after the lower bound of (\ref{eq:upoly-high-level}) that 
\begin{equation*}
\label{eq:sup-poly-lower}\Pr\left(\vXp \in \upoly\right) \ge \max\nolimits_{i\le I}\Pr\left(\vXp \in \cpoly{i}\right) \ge \Pr\left(\vXp \in \cpoly{i^*}\right)= \Omega\left(n^{\frac{\dim(\cpoly{i^*})-q}{2}}\right) = \Omega\left(n^{\frac{\alpha_n-q}{2}}\right)
\end{equation*}
This proves the polynomial lower bound of sup.

To prove the {\bf polynomial upper bound  of sup}, it suffices to prove that for every $\vec\pi\in\Pi^n$, we have $\Pr\left(\vXp \in \upoly\right) = O\left(n^{\frac{\alpha_n-q}{2}}\right)$. This will be proved by combining the upper bound in (\ref{eq:upoly-high-level}) and the ($0$, exponential, or polynomial) upper bounds on the sup part of Theorem~\ref{thm:maintechh} when the theorem is applied to $\Pi$ and every $\cpoly{i}$. More precisely, for each $i\le I$, applying the sup part of Theorem~\ref{thm:maintechh} to $\Pi$ and $\cpoly{i}$ results in the following three cases. 
\begin{itemize}
\item {\bf\boldmath The $0$ case.} If $\cpolynint{i}=\emptyset$, then for every $\vec\pi\in\Pi^n$, we have $\Pr\left(\vXp \in \cpoly{i}\right)=0$.
\item {\bf The exponential case.}  If $\cpolynint{i}\ne \emptyset$ and $\cpolyz{i}\cap \conv(\Pi)=0$, then for every $\vec\pi\in\Pi^n$, we have $\Pr\left(\vXp \in \cpoly{i}\right)=\exp(-\Omega(n))$.
\item {\bf The polynomial case.} Otherwise ($\cpolynint{i}\ne \emptyset$ and $\cpolyz{i}\cap \conv(\Pi)\ne 0$),  for every $\vec\pi\in\Pi^n$, we have $\Pr\left(\vXp \in \cpoly{i}\right)=O\left(n^{\frac{\dim(\cpolyz{i})-q}{2}}\right)$.
\end{itemize}
In the polynomial case, for any $\pi\in \cpolyz{i}\cap \conv(\Pi)$, we have $i\in \calI_{\upoly,n}^\pi\ne \emptyset$, which means that $\pi\in \Pi_{\upoly,n}$. This means that $\dim(\cpolyz{i})\le \alpha_n$. Therefore, in all  three cases we have $\Pr\left(\vXp \in \cpoly{i}\right)=O\left(\frac{\alpha_n-q}{2}\right)$. 
Again, recall that $I$ is a constant. Therefore, following the upper bound in (\ref{eq:upoly-high-level}), for every $\vec\pi\in\Pi^n$, we have:
$$\Pr\left(\vXp \in \upoly\right) \le \sum\nolimits_{i\le I}\Pr\left(\vXp \in \cpoly{i}\right) \le I \cdot O\left(n^{\frac{\alpha_n-q}{2}}\right)=O\left(n^{\frac{\alpha_n-q}{2}}\right)$$
This proves the polynomial upper bound of sup.
\end{itemize}

\paragraph{\bf \boldmath Proof for the  $\inf$ part.} We now turn to the $\inf$ part by discussing the three cases ($0$, exponential, and polynomial) as follows. At a high level, the proofs for the  upper (respectively, lower) bounds of inf are similar to the proofs for the lower (respectively, upper) bounds of sup. We include the formal proof below for completeness.
\begin{itemize}
\item {\bf  \boldmath The $0$ case of $\inf$.} Like the $0$ case of the $\sup$ part, this case is straightforward. 
\item {\bf The exponential case of inf.} We recall that in this case $\upolynint \ne \emptyset$ and $\Pi_{\upoly,n}\ne \conv(\Pi)$.  To prove the {\bf exponential lower bound of inf}, it suffices to prove that for every $\vec\pi\in\Pi^n$, we have $\Pr\left(\vXp \in \upoly\right) = \exp(-O(n))$.  Because $\upolynint = \bigcup_{i\le I} \cpolynint{i}\ne \emptyset$, by definition there exists $i^*\le I$ such that $\cpolynint{i^*}\ne \emptyset$. The exponential lower bound follows after the lower bound of (\ref{eq:upoly-high-level}): 
$$\Pr\left(\vXp \in \upoly\right) \ge \max\nolimits_{i\le I}\Pr\left(\vXp \in \cpoly{i}\right) \ge \Pr\left(\vXp \in \cpoly{i^*}\right) \ge \epsilon^{n}=\exp(-O(n))$$

To prove the {\bf exponential upper bound of inf}, it suffices to prove that  there exists $\vec\pi\in\Pi^n$ such that $\Pr\left(\vXp \in \upoly\right) = \exp(-\Omega(n))$.  Because $\Pi_{\upoly,n}\ne \conv(\Pi)$, there exists $\pi\in \conv(\Pi)$ such that $\pi\not\in \Pi_{\upoly,n}$, which means that $\calI_{\upoly,n}^\pi = \emptyset$.  Let $\vec \pi  = (\pi_1,\ldots,\pi_n) \in \Pi$ denote an arbitrary vector such that $\sum_{j=1}^n \pi_j $ is $\Theta(1)$ away from $n \cdot \pi$. Because $\calI_{\upoly,n}^\pi = \emptyset$, for all $i\le I$, either $\cpolynint{i}=\emptyset$ or $\pi\not\in \cpolyz{i}$ (or both hold). If the former holds, then by applying the $0$ upper bound on the {sup} part of Theorem~\ref{thm:maintechh} to $\Pi$ and $\cpoly{i}$,  we have $\Pr\left(\vXp \in \cpoly{i}\right) = 0$.  If the former does not hold and the latter holds, then following Hoeffding's inequality and the union bound (applied to each of the $q$ coordinate of $\vXp$),  we have $\Pr\left(\vXp \in \cpoly{i}\right) = \exp(-\Omega(n))$.  Therefore, following the upper bound of (\ref{eq:upoly-high-level}),  we have:
$$\Pr\left(\vXp \in \upoly\right) \le \sum\nolimits_{i\le I}\Pr\left(\vXp \in \cpoly{i}\right) \le I \cdot\exp(-\Omega(n))=\exp(-\Omega(n))$$
This proves the exponential upper bound of inf.

\item {\bf The polynomial bounds of inf.} We recall that in this case $\upolynint \ne \emptyset$ and $\Pi_{\upoly,n}= \conv(\Pi)$. Like in the exponential case, the proof is done by combining (\ref{eq:upoly-high-level}) and the applications of Theorem~\ref{thm:maintechh} to $\Pi$ and each $\cpoly{i}$. 
To prove the {\bf polynomial lower bound of inf}, it suffices to prove that for every $\vec\pi = (\pi_1,\ldots,\pi_n)\in\Pi^n$, we have $\Pr\left(\vXp \in \upoly\right) = \Omega(n^{\frac{\beta_n-q}{2}})$. Let $\pi^* = \frac 1n \sum_{i=1}^n\pi_i\in \conv(\Pi)$. Because $\Pi_{\upoly,n}= \conv(\Pi)$, we have $\pi^*\in \Pi_{\upoly,n}$, which means that $\calI_{\Pi,n}^{\pi^*}\ne \emptyset$. Therefore, there exists $i^*\in \calI_{\Pi,n}^{\pi^*}$ such that $\dim(\cpolyz{i^*})\ge \beta_n$. Additionally, recall that for all $i\in \calI_{\Pi,n}^{\pi^*}$ we have $\cpolynint{i^*}\ne \emptyset$ and $\pi^*\in\cpolyz{i^*}$. Therefore, Lemma~\ref{lem:lower-any-pi} can be applied to $\vec \pi$ and  $\cpolynint{i^*}$, giving us $\Pr\left(\vXp \in \cpoly{i^*}\right) = \Omega\left(n^{\frac{\beta_n-q}{2}}\right)$.   It follows after the lower bound of (\ref{eq:upoly-high-level}) that: 
$$\Pr\left(\vXp \in \upoly\right) \ge \max\nolimits_{i\le I}\Pr\left(\vXp \in \cpoly{i}\right) \ge \Pr\left(\vXp \in \cpoly{i^*}\right)\ge \Omega\left(n^{\frac{\dim(\cpoly{i^*})-q}{2}}\right) = \Omega\left(n^{\frac{\beta_n-q}{2}}\right)
$$
This proves the polynomial lower bound of inf.

To prove the {\bf polynomial upper bound of inf}, it suffices to prove that  there exists $\vec\pi\in\Pi^n$ such that $\Pr\left(\vXp \in \upoly\right) = O(n^{\frac{\beta_n-q}{2}})$. Because $\Pi_{\upoly,n}=\conv(\Pi)$, we define $(\pi^*,i^*)$ to be the pair that achieves $\beta_n$. More precisely, let
$$(\pi^*,i^*)=\arg \min\nolimits_{\pi\in\Pi_{\upoly,n}}\max\nolimits_{i\in \calI_{\upoly,n}^\pi}\dim(\cpolyz{i})$$
It follows that $\cpolynint{i^*}\ne \emptyset$, $\pi^*\in\cpolyz{i^*}$ (which means that $\cpolyz{i^*}\cap\conv(\Pi)\ne \emptyset$), and $\dim(\cpoly{i^*})=\beta_n$.  Let $\vec \pi = (\pi_1,\ldots,\pi_n)\in\Pi^n$ denote an arbitrary vector such that $\sum_{j=1}^n \pi_n$ is $\Theta(1)$ from $n\pi^*$.  The polynomial upper bound of inf is proved by combining the upper bound in (\ref{eq:upoly-high-level}) and the ($0$, exponential, or polynomial) upper bounds on the sup part of Theorem~\ref{thm:maintechh} when the theorem is applied to $\Pi$ and every $\cpoly{i}$. More precisely, for each $i\le I$, applying the sup part of Theorem~\ref{thm:maintechh} to $\vec \pi$ and $\cpoly{i}$ results in the following three cases. 
\begin{itemize}
\item {\bf\boldmath The $0$ case.} If $\cpolynint{i}=\emptyset$, then   $\Pr\left(\vXp \in \cpoly{i}\right)=0$.
\item {\bf The exponential case.}  If $\cpolynint{i}\ne \emptyset$ and $\pi^*\not\in \cpolyz{i}$, then $\expect(\vXp)$  is $\Theta(n)$ away from $\cpolyz{i}$. Following Hoeffding's inequality and the union bound (applied to each of the $q$ coordinate of $\vXp$), we have $\Pr\left(\pvX{\vec\pi^*} \in \cpoly{i}\right)=\exp(-\Omega(n))$.
\item {\bf The polynomial case.} Otherwise ($\cpolynint{i}\ne \emptyset$ and $\pi^*\in \cpolyz{i}$, which means that $i\in \calI_{\Pi,n}^{\pi^*}$), we have $$\Pr\left(\vXp \in \cpoly{i}\right)=O\left(n^{\frac{\dim(\cpolyz{i})-q}{2}}\right) \le O\left(n^{\frac{\beta_n-q}{2}}\right),$$
because $$\dim(\cpoly{i})\le \dim(\cpoly{i^*})=\beta_n$$
\end{itemize}
Therefore, in all  three cases above, we have $\Pr\left(\pvX{\vec\pi^*} \in \cpoly{i}\right)=O(\frac{\beta_n-q}{2})$. 
Again, recall that $I$ is a constant. Therefore, following the upper bound in (\ref{eq:upoly-high-level}), we have:
$$\Pr\left(\pvX{\vec\pi^*} \in \upoly\right) \le \sum\nolimits_{i\le I}\Pr\left(\vXp \in \cpoly{i}\right) \le I \cdot O\left(n^{\frac{\beta_n-q}{2}}\right)=O\left(n^{\frac{\alpha_n-q}{2}}\right)$$
This proves the polynomial upper bound of inf.
\end{itemize}\end{proof}

\section{Appendix for Section~\ref{sec:posrules}: Integer Positional Scoring Rules}

\subsection{Proof of Theorem~\ref{thm:score}}
\label{app:proof-score}

\appThm{thm:score}{Smoothed likelihood of ties: positional scoring rules}{
Let $\mm= (\Theta,\ml(\ma),\Pi)$ be a strictly positive and closed single-agent preference model and let $\vec s$ be an integer scoring vector. For any $2\le k\le m$ and any $n\in\mathbb N$, 
$$\slt{\Pi}{r_{\vec s}}{m}{k}{n}=\left\{\begin{array}{ll}0 &\text{if } \forall P\in \ml(\ma)^n, |\cor_{\vec s}(P)|\ne k\\
\exp(-\Theta(n)) &\text{otherwise, if } \forall\pi\in \conv(\Pi), |\cor_{\vec s}(\pi)|< k\\
\Theta(n^{-\frac{k-1}{2}}) &\text{otherwise}
\end{array}\right.$$
$$\ilt{\Pi}{r_{\vec s}}{m}{k}{n}=\left\{\begin{array}{ll}0 &\text{if } \forall P\in \ml(\ma)^n, |\cor_{\vec s}(P)|\ne k\\
\exp(-\Theta(n)) &\text{otherwise, if }\exists\pi\in \conv(\Pi), |\cor_{\vec s}(\pi)|< k\\
\Theta(n^{-\frac{k-1}{2}}) &\text{otherwise}
\end{array}\right..$$
}
\begin{proof}  The theorem is proved by modeling the set of profiles with $k$ winners as the union of constantly many polyhedra, then applying Theorem~\ref{thm:union-poly}. More precisely, we have the following three steps.  In Step 1, for each potential winner set $T\subseteq \ma$,  we define a polyhedron $\ppoly{\vec s, T}$ that characterizes profiles whose winners are $T$. In Step 2, we prove properties of $\ppoly{\vec s, T}$, and in particular,  $\dim(\polyz^{\vec s, T}) = m! - |T|+1$. In Step 3 we formally apply Theorem~\ref{thm:union-poly} to $\upoly = \bigcup_{T\subseteq \ma: |T|=k}\ppoly{\vec s, T}$.

\paragraph{\bf \boldmath Step 1: Define $\bm{\ppoly{\vec s, T}}$.} For any $T\subseteq \ma$, $\ppoly{\vec s, T}$ consists of (i) equations that represent the scores of alternatives in $T$ being equal, and (ii) inequalities that represent the score of any  alternative in $T$ being strictly larger than the score of any alternative in $\ma\setminus T$. Formally, we first define a set of constraints to model the score difference between two alternatives. 

\begin{dfn}[\bf Score difference vector]\label{dfn:varcons} For any scoring vector $\vec s = (s_1,\ldots,s_m)$ and any pair of different alternatives $a,b$, let $\score_{a,b}^{\vec s}$ denote the $m!$-dimensional vector indexed by rankings in $\ml(\ma)$: for any $R\in\ml(\ma)$, the $R$-element of $\score_{a,b}^{\vec s}$ is $s_{j_1}-s_{j_2}$, where $j_1$ and $j_2$ are the ranks of $a$ and $b$ in $R$, respectively.
\end{dfn}
Let $\vec x_\ma = (x_{R}:R\in\ml(\ma))$ denote the vector of $m!$ variables, each of which represents the multiplicity of a linear order in a profile. Therefore, $\score_{a,b}^{\vec s}\cdot \vec x_\ma$ represents the score difference between $a$ and $b$ in the profile whose histogram is $\vec x_\ma$.  

For any $T\subseteq \ma$, we define the polyhedron $\poly^{\vec s,T}$ as follows.

\begin{dfn}\label{dfn:convscore}
For any integer scoring vector $\vec s$ and any $T\subseteq \ma$, we let $\pbe{\vec s, T}$ denote the matrix whose row vectors are $\{\score_{a,b}^{\vec s}: a\in T, b\in T, a\ne b\}$. Let  $\pbs{\vec s, T}$ denote the matrix whose row vectors are $\{\score_{a,b}^{\vec s}: a\not\in T, b\in T\}$. Let $\pba{\vec s, T} = \left[\begin{array}{c}\pbe{\vec s, T} \\ \pbs{\vec s, T}\end{array}\right]$,  $\vec b = (\vec 0, {-\vec 1})$, and let $\ppoly{\vec s, T}$ denote the corresponding polyhedron.
\end{dfn}

\paragraph{\bf \boldmath Step 2: Prove properties about $\ppoly{\vec s, T}$.} We prove the following claim about $\ppoly{\vec s, T}$.
\begin{claim}\label{claim:scoresolumg}
For any integer scoring vector $\vec s$ and any $T\subseteq\ma$,  we have:
\begin{enumerate}[label=(\roman*)]
\item for any integral profile $P$,  $\hist(P)\in \poly^{\vec s,T}$ if and only if $\cor_{\vec s}(P) = T$;
\item for any $\vec x\in\mathbb R^q$, $\pi\in {\ppolyz{\vec s,T}}$ if and only if $T\subseteq \cor_{\vec s}(\vec x)$;
\item $\dim(\ppolyz{\vec s,T})= m! - |T|+1$.
\end{enumerate}
\end{claim}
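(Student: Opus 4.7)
The plan is to prove (i) and (ii) by directly unpacking the matrix definition of $\ppoly{\vec s, T}$ and $\ppolyz{\vec s, T}$, and to prove (iii) by computing the rank of the implicit equalities of $\ppolyz{\vec s, T}$. For (i), since $\pbe{\vec s, T}$ contains both $\score_{a,b}^{\vec s}$ and $\score_{b,a}^{\vec s}=-\score_{a,b}^{\vec s}$ as rows, the condition $\pbe{\vec s, T}\cdot\invert{\hist(P)}\le\invert{\vec 0}$ forces $\score_{a,b}^{\vec s}\cdot\invert{\hist(P)}=0$ for every pair $a,b\in T$, i.e.~every alternative in $T$ receives the same $\vec s$-score under $P$. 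The remaining inequalities from $\pbs{\vec s, T}$ read $\score_{a,b}^{\vec s}\cdot\invert{\hist(P)}\le -1$ for $a\notin T$ and $b\in T$; since $\vec s$ and $\hist(P)$ are both integer, this is equivalent to the score of $a$ being strictly below the score of $b$. Hence $\hist(P)\in\ppoly{\vec s, T}$ exactly when $\cor_{\vec s}(P)=T$. Part (ii) follows from the same argument with $\vec b$ replaced by $\vec 0$: the strict inequalities relax to $\score_{a,b}^{\vec s}\cdot\invert{\vec x}\le 0$, which together with the equalities on $T$ characterizes $T\subseteq\cor_{\vec s}(P)$.

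For (iii), I plan to show that the implicit equalities of $\ppolyz{\vec s, T}$ are exactly those forced by $\pbe{\vec s, T}$, and that they have rank $|T|-1$. Fix an anchor $a_0\in T$ and consider the linear map $f:\mathbb R^{m!}\to\mathbb R^{|T|-1}$ whose coordinates are $\score_{a,a_0}^{\vec s}\cdot\invert{\vec x}$ as $a$ ranges over $T\setminus\{a_0\}$. The subspace cut out by the equality constraints is exactly $\ker f$, so it suffices to check that $f$ is surjective. For each $a\in T\setminus\{a_0\}$ I would pick two rankings $R_a,R_a'$ that differ only by swapping $a_0$ (placed at position $1$) with $a$ (placed at position $m$), and, identifying each ranking with its standard basis vector in $\mathbb R^{m!}$, compute $f(R_a-R_a')=(s_m-s_1)(\vec 1+e_a)$, where $e_a$ is the basis vector of $\mathbb R^{T\setminus\{a_0\}}$ indexed by $a$. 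Stacking these rows yields the $(|T|-1)\times(|T|-1)$ matrix $(s_m-s_1)(J+I)$ with $J$ the all-ones matrix, whose determinant $(s_m-s_1)^{|T|-1}\cdot|T|$ is nonzero since $s_1>s_m$; thus $f$ is surjective and $\dim\ker f = m!-(|T|-1)$.

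It remains to rule out any extra implicit equality coming from $\pbs{\vec s, T}$. For this I would exhibit a single $\vec x^\star\in\ppolyz{\vec s, T}$ on which every inequality $\score_{a,b}^{\vec s}\cdot\invert{\vec x^\star}\le 0$ (with $a\notin T$, $b\in T$) is strict: take $\vec x^\star$ to be the uniform fractional profile over those rankings that place the alternatives of $T$ in the top $|T|$ positions. Symmetry within $T$ equalizes the scores of all members of $T$, and since $s_1>s_m$ the average of the top $|T|$ positional scores strictly exceeds the average of the bottom $m-|T|$ scores, so every $a\notin T$ has strictly lower score than every $b\in T$ under $\vec x^\star$. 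The main obstacle is the surjectivity of $f$ in (iii); the transposition construction together with the closed-form determinant $(s_m-s_1)^{|T|-1}|T|$ makes the computation go through uniformly for every integer scoring vector with $s_1>s_m$.
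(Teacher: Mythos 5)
Your parts (i) and (ii) match the paper's treatment: both read off directly from the definition of $\ppoly{\vec s,T}$ together with the integrality of $\vec s$ and of histograms. For part (iii), however, you take a genuinely different and arguably cleaner route. The paper establishes $\rank(\pbe{\vec s,T})\ge |T|-1$ by contradiction: assuming a smaller rank, it encodes a putative spanning set as edges of a graph on $T$, observes the graph must be disconnected, and then constructs a profile (via two cyclic permutations respecting the bipartition) to contradict the spanning claim. You instead verify surjectivity of the map $f(\vec x)=(\score_{a,a_0}^{\vec s}\cdot\vec x)_{a\in T\setminus\{a_0\}}$ directly by exhibiting, for each $a$, the transposition difference $R_a-R_a'$ and computing $f(R_a-R_a')=(s_m-s_1)(\vec 1+e_a)$; stacking these gives $(s_m-s_1)(J+I)$ with nonzero determinant $(s_m-s_1)^{|T|-1}|T|$. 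This sidesteps the contradiction and the graph machinery entirely. Both approaches also need a witness in $\ppolyz{\vec s,T}$ making every $\pbs{\vec s,T}$ row strict so that $\ba^==\pbe{\vec s,T}$: the paper builds a finite profile from two cyclic permutations, while you take the uniform fractional profile over all rankings with $T$ on top; these are essentially the same idea (a $T$-symmetric, $(\ma\setminus T)$-symmetric mixture with $T$ dominant), and your version of the strict-separation argument ($s_1>s_m$ forces the top-$|T|$ average to strictly exceed the bottom-$(m-|T|)$ average) is correct. Overall your proposal is correct, and for (iii) it trades the paper's indirect combinatorial argument for an explicit rank-one-perturbation determinant, which is more self-contained and makes the dependence on $s_1>s_m$ transparent.
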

\begin{proof} (i)  and (ii) follow after Definition~\ref{dfn:convscore}. To prove (iii), we first note that $\dim(\ppolyz{\vec s,T})=m!-\rank(\ba^=)$, where $\ba^=$ is the essential equalities of $\pba{\vec s,T}$, according to equation (9) on page 100 in~\citep{Schrijver1998:Theory}. We note that $\ba^= = \be^{\vec s,T}$, because by definition $\be^{\vec s,T}\subseteq \ba^=$. To prove that no row in $\bs^{\vec s,T}$ is in $\ba^=$, it suffices to show that there exists $\vec x\in \ppolyz{\vec s,T}$ such that $\bs^{\vec s,T}\cdot\invert{\vec x}< \invert{\vec 0}$. This is proved by constructing a profile $P$ such that $\cor_{\vec s}(P) = T$ and then let $\vec x = \hist(P)$. We first define two cyclic permutations: $\sigma_1$ over $T$ and $\sigma_2$ over $\ma\setminus T$. More precisely, let 
$$\sigma_1=1\ra 2\ra\cdots \ra |T|\ra 1\text{ and } \sigma_2 = |T|+1\ra |T|+2\ra\cdots \ra m\ra |T|+1$$
Then we let 
$$P = \{\sigma_1^{i}(\sigma_2^j(1\succ 2\succ \cdots \succ m)): i\le |T|, j\le m-|T|\}$$
It is not hard to see that $\cor_{\vec s}(P) = T$ and consequently $\be^{\vec s,T} \supseteq \pba{=}$, which means that $\be^{\vec s,T} = \pba{=}$.

Therefore, it suffices to prove that $\rank(\be^{\vec s,T}) = |T|-1$. We first prove that $\rank(\be^{\vec s,T})\le |T|-1$. W.l.o.g.~let $T = \{1,\ldots, k\}$, where $k=|T|$. It is not hard to verify that all rows of $\be^{\vec s,T}$ can be represented by linear combinations of $|T|-1$ rows $\left[\begin{array}{c}\score_{1,2}\\ \vdots \\ \score_{1,k}\end{array}\right]$. This means that $\rank(\be^{\vec s,T})\le |T|-1$.

Next, we prove $\rank(\be^{\vec s,T})\ge |T|-1$ by contradiction. Suppose for the sake of contradiction that $\rank(\be^{\vec s,T})\le |T|-2$. Then, there exists $\ell\le k-2$ rows of $\be^{\vec s,T}$, denoted by $\{\score_{a_i,b_i}^{\vec s}: i\le \ell\}$, such that any $\score_{a,b}^{\vec s}$ is a linear combination of them. Let $G$ denote the unweighted undirected graph over $\{1,\ldots,k\}$ with the following $\ell$ edges: $\{\{a_i,b_i\}:i\le \ell\}$. Because $G$ contains $k-2$ edges and $k$ nodes, $G$ is not connected. W.l.o.g.~suppose there is no edge in $G$ between $\{1,\ldots, t\}$ and $\{t+1,\ldots, k\}$ for some $1\le t\le k-1$. We now construct a profile $P$ to show that $\score_{1,k}^{\vec s}$ is not a linear combination of $\{\score_{a_i,b_i}^{\vec s}:i\le \ell\}$. We first define a linear order $R$ and two cyclic permutations.
$$R=[1\succ \ldots\succ t\succ k+1\succ \cdots \succ m\succ t+1\succ \cdots \succ k]$$
Let $\sigma_1=1\ra\cdots\ra t\ra 1$ denote the cyclic permutation among $\{1,\ldots, t\}$ and let $\sigma_2=t+1\ra\cdots\ra k\ra t+1$ denote the cyclic permutation among $\{t+1,\ldots, k\}$. Let 
$$P=\{\sigma_1^i(\sigma_2^j(R)): 1\le i\le t, 1\le j\le k-t\}$$ 
It is not hard to verify that in $P$, the total scores of $\{1,\ldots, t\}$ are the same, the scores of $\{t+1,\ldots, k\}$ are the same, and the former is strictly larger than the latter. Therefore, for all $i\le \ell$, $\score_{a_i,b_i}^{\vec s}\cdot\hist(P)=0$, but $\score_{1,k}^{\vec s}\cdot \hist(P)> 0$, which means that $\score_{1,k}^{\vec s}$ is not a linear combination of $\{\score_{a_i,b_i}^{\vec s}:i\le \ell\}$, which is a contradiction. This means that $\rank(\be^{\vec s,T})\ge |T|-1$.

This completes the proof of Claim~\ref{claim:scoresolumg}.
\end{proof}

\paragraph{\bf \boldmath Step 3: Apply Theorem~\ref{thm:union-poly}.}  Let $\upoly = \bigcup_{T\subseteq \ma: |T|=k}\ppoly{\vec s, T}$. It follows that for any profile $P$, $|\cor_{\vec s}(P)| = k$ if and only if $\hist(P)\in \upoly$. Therefore, for any $n$ and any $\vec\pi\in\Pi^n$, we have:
$$\Pr\nolimits_{P\sim\vec\pi} (|\cor_{\vec s}(P)| = k) = \Pr(\vXp\in \upoly)$$
Recall that for all $T_k\subseteq \ma$ with $|T_k|=k$, from Claim~\ref{claim:scoresolumg} (iii) we have $\dim(\ppoly{\vec s,T_k}) = m!-|T_k|+1 = m!-k+1$. This means that $\alpha_n = \beta_n = m!-k+1$ when they are non-negative (which holds for certain $n$). 

Therefore, to prove Theorem~\ref{thm:score}, it suffices to prove that the conditions for the $0$, exponential, and polynomial cases in Theorem~\ref{thm:score} are equivalent to the conditions for the $0$, exponential, and polynomial cases in Theorem~\ref{thm:union-poly} (applied to $\upoly$ and $\Pi$), respectively.  The $0$ case is straightforward. Recall from Claim~\ref{claim:scoresolumg}(iii) that $\dim(\ppolyz{\vec s,T})= m! - |T|+1$. Therefore, to apply Theorem~\ref{thm:union-poly} to obtain  Theorem~\ref{thm:score}, it suffices to prove that for any $\pi\in\conv(\Pi)$, $|\cor_{\vec s}(\pi)|\ge k$ if and only if $\pi\in \upolyz$ (which is equivalent to $\alpha_n\ne -\infty$). The ``if'' direction holds because if $\pi\in \upolyz$, then there exists $T\subseteq\ma$ with $|T|=k$ such that $\pi\in \ppolyz{\vec s,T}$, which means that  $T\subseteq \cor_{\vec s}(\pi)$ by Claim~\ref{claim:scoresolumg}(ii). Therefore, $|\cor_{\vec s}(\pi)|\ge k|$. The ``only if'' direction holds because if $|\cor_{\vec s}(\pi)|\ge k|$, then let $T\subseteq \cor_{\vec s}(\pi)$ denote an arbitrary set with $|T|=k$. It follows from Claim~\ref{claim:scoresolumg}(ii) that $\pi\in \ppolyz{\vec s,T}$, which implies that $\pi\in \upolyz$. 
\end{proof}

\section{Appendix for Section~\ref{sec:eorules}: EO-Based Rules}
\label{app:smoothedties}

\subsection{Proof of Proposition~\ref{prop:eo}}
\label{appendix:proof-lemeo}
\appPropnoname{prop:eo}{}{
 For any $\ma$ and any $n\ge m^4$,
$\{\eo(P):P\in \ml(\ma)^n\} = \left\{\begin{array}{ll} \pale &\text{if }2\mid n\\
\palo &\text{if }2\nmid n\\
\end{array}\right.$.
}
\begin{proof}
Let $\eo_n = \{\eo(P):P\in \ml(\ma)^n\}$. We first prove the $2\mid n$ case.  By definition $\eo_n\subseteq \pale$. We prove $\pale\subseteq \eo_n$ by explicitly constructing an $n$-profile $P$ such that $\eo(P) = O$ for any $O\in \pale$. Let $O=T_1\rhd \cdots\rhd T_t \rhd T_0\rhd T_{t+1}\rhd\cdots \rhd T_{2t}$ denote the tier representation of $O$. We define a weighted directed graph $G_O$ such that $$w_{G_O}(e) = \left\{\begin{array}{rl}
2(t+1-i)&\text{if }e\in T_i \text{ for some }1\le i\le t\\
-2(t+1-i)&\text{if }\bar e\in T_i \text{ for some }1\le i\le t\\
0&\text{if }e\in T_0
\end{array}\right.$$
By McGarvey's theorem~\cite{McGarvey53:Theorem}, there exists a profile $P_O$ of no more than $m(m+1)t<m^4$ votes such that $\wmg(P_O) = G_O$. Let  $R$ be an arbitrary linear order and let $\bar R$ denote its reverse order. Let $P_2 = \{R,\bar R\}$. It follows that $\wmg(P_2)$ is the empty graph. Let $P = P_O+\frac{n-|P_O|}{2}\times P_2$. It follows that $|P|=n$ and $\wmg(P)=\wmg(P_O) = G_O$, which means that $\eo(P) = O$. This means that $\pale\subseteq \eo_n$. Therefore, the lemma holds for $2\mid n$.

Next, we prove the $2\nmid n$ case. For any $n$-profile $P$, the middle tier of $\eo(P)$ is empty because the weight on any edge must be odd. This means that $\eo_n\subseteq \palo$. Like the $2\mid n$ case, we prove $\palo\subseteq \eo_n$ by explicitly constructing an $n$-profile $P$ such that $\eo(P) = O$ for any $O\in \palo$. More precisely, let $O=T_1\rhd \cdots\rhd T_t \rhd T_0\rhd T_{t+1}\rhd\cdots \rhd T_{2t}$ denote the tier representation of $O$, where $T_0=\emptyset$. We define a weighted directed graph $G_O$ such that $$w_{G_O}(e) = \left\{\begin{array}{rl}
2(t-i)+1&\text{if }e\in T_i \text{ for some }1\le i\le t\\
-2(t-i)-1&\text{if }\bar e\in T_i \text{ for some }1\le i\le t\end{array}\right.$$
By McGarvey's theorem~\cite{McGarvey53:Theorem}, there exists a profile $P_O$ of no more than $m(m+1)t<m^4$ votes such that $\wmg(P_O) = G_O$. Let $P = P_O+\frac{n-|P_O|}{2}\times P_2$. It follows that $|P|=n$ and $\wmg(P)=\wmg(P_O) = G_O$, which means that $\eo(P) = O$. This means that $\palo\subseteq \eo_n$. Therefore, the lemma holds for $2\nmid n$.
\end{proof}

\subsection{Formal Definition of Edge-Order-Based Rules (Section~\ref{sec:eorules})}
\label{app:def-eo}

\begin{dfn}Given an edge-order-based rule $\cor$, any $2\le k\le m$, any $n\in\mathbb N$, and any distribution $\pi$ over $\ml(\ma)$, let 
$$\mo_{\cor,k,n}^{\pi} = \{O\in\pal{n}: |\cor(O)|=k\text{ and }O\text{ refines }\eo(\pi)\}$$ 
For any set of distributions $\Pi$ over $\ml(\ma)$, let $\mo_{\cor,k,n}^{\Pi} = \bigcup_{\pi\in\conv(\Pi)}\mo_{\cor,k,n}^{\pi}$. When $\mo_{\cor,k,n}^{\Pi}\ne \emptyset$, we let  $\ell_{\min} = \min\nolimits_{O\in  \mo_{\cor,k,n}^{\Pi}}\ties(O)$. When $\mo_{\cor,k,n}^{\pi}\ne \emptyset$ for all $\pi\in\conv(\Pi)$, we let 
$\ell_{\text{mm}} = \max\nolimits_{\pi\in\conv(\Pi)}\min\nolimits_{O\in  \mo_{\cor,k,n}^{\pi}}\ties(O)$.
\end{dfn}

\subsection{Proof of Theorem~\ref{thm:eorule}}
\label{app:proof-eorule}

\appThm{thm:eorule}{Smoothed likelihood of ties: edge-order-based rules}{
Let $\mm= (\Theta,\ml(\ma),\Pi)$ be a strictly positive and closed single-agent preference model and let $\cor$ be an edge-order-based rule. For any $2\le k\le m$ and any $n\in\mathbb N$,  
\begin{align*}
&\slt{\Pi}{r}{m}{k}{n}=\left\{\begin{array}{ll}0 &\text{if } \forall O\in \pal{n}, |\cor(O)|\ne k\\
\exp(-\Theta(n)) &\text{otherwise if } \mo_{\cor,k,n}^{\Pi} = \emptyset\\
\Theta\left(n^{-\frac{\ell_{\min}}{2}}\right) &\text{otherwise}
\end{array}\right.\\
&\ilt{\Pi}{r}{m}{k}{n}=\left\{\begin{array}{ll}0 &\text{if } \forall O\in \pal{n}, |\cor(O)|\ne k\\
\exp(-\Theta(n)) &\text{otherwise if } \exists \pi\in\conv(\Pi) \text{ s.t. }\mo_{\cor,k,n}^{\pi} = \emptyset\\
\Theta\left(n^{-\frac{\ell_{\text{mm}}}{2}}\right) &\text{otherwise}
\end{array}\right.
\end{align*}
}
\begin{proof} Like in the proof of Theorem~\ref{thm:score}, the theorem is proved by modeling the set of profiles with $k$ winners as the union of constantly many polyhedra, then applying Theorem~\ref{thm:union-poly}. More precisely, we have the following three steps.  In Step 1, for each palindromic order $O$, we define a polyhedron $\ppoly{O}$ that characterizes the profiles whose palindromic orders are $O$. In Step 2, we prove properties about $\ppoly{O}$, in particular $\dim(\ppolyz{O}) = m! - \ties(O)$. In Step 3 we formally apply Theorem~\ref{thm:union-poly} to $\upoly = \bigcup_{O\in \pal{n}: |\cor(O)|=k}\ppoly{O}$.

We first recall the definition of pairwise difference vector in~\citep{Xia2020:The-Smoothed}.
\begin{dfn}[\bf Pairwise difference vector~\citep{Xia2020:The-Smoothed}]\label{dfn:pairdiff} For any pair of different alternatives $a,b$, let $\pair_{a,b}$ denote the $m!$-dimensional vector indexed by rankings in $\ml(\ma)$: for any $R\in\ml(\ma)$, the $R$-element of $\pair_{a,b}$ is $1$ if $a\succ_R b$; otherwise it is $-1$. 
\end{dfn}

\paragraph{\bf \boldmath Step 1: Define $\ppoly{O}$.} For any palindromic order $O\in \pale$, we define a polyhedron $\ppoly{O}$ that represents profiles whose edge order is $O$.
\begin{dfn}\label{dfn:convpalindromic}
For any $O\in \pale$, we let $\pbe{O}$ denote the matrix whose row vectors are $\{\pair_{a,b} - \pair_{c,d}: (a,b)\equiv_O(c,d)\}$; we let $\pbs{O}$ denote the matrix whose row vectors are $\{\pair_{c,d} - \pair_{a,b}: (a,b)\rhd_O(c,d)\}$.   Let $\pba{O} = \left[\begin{array}{c}\pbe{O} \\ \pbs{O}\end{array}\right]$,  $\vec b = (\vec 0, {-\vec 1})$, and let $\ppoly{O}$ denote the corresponding polyhedron.
\end{dfn}
$\pbe{O}$ contains redundant rows and is defined as in Definition~\ref{dfn:convpalindromic} for notational convenience. For any $1\le i\le t$, rows in $\pbe{O}$ that correspond to $T_i$ are the same as the rows  in $\pbe{O}$ that correspond to $T_{2t+1-i}$. 

\paragraph{\bf \boldmath Step 2: Prove properties about $\ppoly{O}$.} We have the following claim about $\ppoly{O}$.
\begin{claim}\label{claim:solpalindromic}
For any profile $P$ and any $O\in \pale$, we have:
\begin{enumerate}[label=(\roman*)]
\item $\hist(P)\in \ppoly{O}$ if and only if $\eo(P) = O$.
\item $\hist(P)\in {\ppolyz{O}}$ if and only if $O$  refines $\eo(P)$.  
\item $\dim(\ppolyz{O})= m! - \ties(O)$.
\end{enumerate}
\end{claim}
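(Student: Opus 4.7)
The key identity, used throughout, is that $\pair_{a,b}\cdot\hist(P)=P[a\succ b]-P[b\succ a]=w_P(a,b)$, the weight of edge $(a,b)$ in $\wmg(P)$. For part (i), with $\vec b=(\vec 0,-\vec 1)$, each row of $\pbe{O}$ enforces $w_P(a,b)=w_P(c,d)$ whenever $(a,b)\equiv_O(c,d)$, and each row of $\pbs{O}$ enforces $w_P(a,b)\ge w_P(c,d)+1$ whenever $(a,b)\rhd_O(c,d)$; since WMG weights are integer-valued on integer profiles, the latter is equivalent to $w_P(a,b)>w_P(c,d)$. Together these are precisely the conditions defining $\eo(P)=O$. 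For part (ii), replacing $-\vec 1$ by $\vec 0$ weakens the strict inequalities to $w_P(a,b)\ge w_P(c,d)$ while keeping the equalities, which is exactly the statement that $O$ refines $\eo(P)$ (equivalently, $\eo(P)$ is coarser than $O$).

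For part (iii), I will invoke \citep[Eq.~(9), p.~100]{Schrijver1998:Theory}, so that $\dim(\ppolyz{O})=m!-\rank(\ba^=)$ where $\ba^=$ denotes the set of implicit equalities of $\pba{O}$, and then prove $\rank(\ba^=)=\ties(O)$. The first step is to rule out any strict-inequality row being an implicit equality: assigning real weights $w_1>w_2>\cdots>w_t>0$ to $T_1,\ldots,T_t$, weight $0$ to $T_0$, and $w(\bar e)=-w(e)$ throughout, the real-valued form of McGarvey's theorem (i.e., surjectivity of the linear map $x\mapsto(\pair_e\cdot x)_{e\in\me}$) produces some $x\in\mathbb R^{m!}$ realizing exactly these weights, which lies in $\ppolyz{O}$ and makes every strict inequality strict. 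Hence $\ba^=$ consists exactly of the within-tier equality rows. The second step counts their rank inside the subspace $V=\mathrm{span}\{\pair_{(a,b)}:a\ne b\}\subseteq\mathbb R^{m!}$, which has a basis $\{\pair_{\{a,b\}}:a<b\}$ of dimension $\binom m2$ (independence again by McGarvey's surjectivity). Each non-middle tier $T_i$ ($1\le i\le t$) contributes $|T_i|-1$ within-tier equalities on distinct unordered pairs; its palindromic twin $T_{2t+1-i}$ produces the same constraints on the same unordered pairs once $\pair_{\bar e}=-\pair_e$ is applied, and is therefore redundant; the middle tier forces $\pair_{\{a,b\}}\cdot x=0$ for each of its $|T_0|/2$ unordered pairs, since $e,\bar e\in T_0$ combined with within-tier equality and $\pair_{\bar e}=-\pair_e$ yields $\pair_e\cdot x=0$. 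Because distinct tier-pairs involve disjoint subsets of the basis $\{\pair_{\{a,b\}}\}$, the full collection of constraints is linearly independent, giving total rank $\sum_{i=1}^t(|T_i|-1)+|T_0|/2=\ties(O)$.

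The main obstacle will be the linear-independence bookkeeping in (iii): carefully justifying that the palindromic twin $T_{2t+1-i}$ adds no new constraints beyond those from $T_i$, that the interaction between within-tier equality in $T_0$ and the relation $\pair_{\bar e}=-\pair_e$ collapses to exactly $|T_0|/2$ independent constraints (not $|T_0|-1$), and that the constraints from distinct tier-pairs remain linearly independent in $\mathbb R^{m!}$ rather than only in $V$. Parts (i) and (ii) are almost immediate once the WMG-weight identity is recorded.
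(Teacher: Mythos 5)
Your proof is correct. Parts (i) and (ii) spell out what the paper compresses into ``follows from the definition,'' and the key observation $\pair_{a,b}\cdot\hist(P)=w_P(a,b)$ together with the integrality of WMG weights on integer profiles is exactly right; so is the observation that dropping $-\vec 1$ turns strict pairwise order constraints into weak ones, which is the refinement condition.

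For part (iii), you take a genuinely cleaner route than the paper. Both you and the paper invoke \citep[Eq.~(9), p.~100]{Schrijver1998:Theory} and both use a McGarvey-type surjectivity argument to show $\ba^==\pbe{O}$; the difference is the rank computation. The paper argues $\rank(\pbe{O})\le\ties(O)$ by noting the span of each tier's rows is at most $|T_i|-1$ dimensional, and argues $\rank\ge\ties(O)$ by a contradiction-plus-McGarvey step. As written, the $\le$ direction gives $\sum_{i=0}^{t}(|T_i|-1)$, which exceeds $\ties(O)$ whenever $|T_0|>2$, and the contradiction step asserts (without a full justification) that low rank forces some single tier to have fewer than $|T_i|-1$ independent rows --- a claim that would actually \emph{hold} for $T_0$ (whose rows have rank $|T_0|/2$, not $|T_0|-1$) even when the overall rank is exactly $\ties(O)$. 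Your proposal sidesteps both issues: working in the basis $\{\pair_{\{a,b\}}:a<b\}$ of $V=\mathrm{span}\{\pair_{a,b}\}$ (whose independence is exactly McGarvey's surjectivity), you observe that rows from tier-pair $(T_i,T_{2t+1-i})$ collapse to $|T_i|-1$ independent differences of distinct basis vectors, rows from $T_0$ collapse to $|T_0|/2$ constraints $\pair_e\cdot x=0$ because $\pair_{\bar e}=-\pair_e$, and that distinct tier-pairs touch disjoint coordinates, so ranks add. That block-decomposition gives $\rank(\ba^=)=\ties(O)$ in one step and handles $T_0$ exactly, which the paper leaves implicit. Your closing worry about ``independence in $\mathbb R^{m!}$ rather than only in $V$'' is moot: linear independence of a set of vectors lying in a subspace $V\subseteq\mathbb R^{m!}$ is the same assertion whether viewed inside $V$ or in the ambient space. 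One small wording nit: the map $x\mapsto(\pair_e\cdot x)_{e\in\me}$ is not surjective onto $\mathbb R^{|\me|}$ --- its image is the skew-symmetric subspace, which is all you need --- but your intended use is correct.
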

\begin{proof} Part (i) and (ii) follow after the definition of $\ppoly{O}$ and $\ppolyz{O}$, respectively. For part (iii), let $O=T_1\rhd\cdots\rhd T_t\rhd T_0\rhd T_{t+1}\rhd\cdots\rhd T_{2t}$ denote the tier representation of $O$. Let $\ba^=$ denote the essential equalities of $\pba{O}$. Again, it is not hard to verify that $\ba^= = \pbe{O}$ due to McGarvey's theorem~\citep{McGarvey53:Theorem}. Therefore, it suffices to prove  $\rank(\pbe{O}) = \ties(O)$. 

We first note that $\rank(\pbe{O})\le \ties(O)$, because for any $0\le i\le t$ such that $T_i=\{e_1,\ldots, e_\ell\}$ and any profile $P$, the following $\ell-1$ equations 
$$(\pair_{e_1}-\pair_{e_2})\cdot \hist(P)=0, (\pair_{e_1}-\pair_{e_3})\cdot \hist(P)=0,\ldots, (\pair_{e_1}-\pair_{e_\ell})\cdot \hist(P)=0$$
 imply that all edges in $T_i$ have the same weights in $\wmg(P)$.  To prove that $\rank(\pbe{O})\ge \ties(O)$, suppose for the sake of contradiction that $\rank(\pba{O})< \ties(O)$. Then, there exists a tier $T_i= \{e_1,\ldots, e_\ell\}$ where $0\le i\le t$ and no more than $s<\ell-1$ rows in $\pbe{O}$, denoted by  $\be' = \invert{\pair_{e_{i_1}}-\pair_{e_{j_1}}, \ldots,\pair_{i_s}-\pair_{j_s}}$, whose linear combinations include other rows that correspond to $T_i$. Let $\be_i$ denote the rows of $\pbe{O}$ that correspond to $T_i$. By McGarvey's theorem~\citep{McGarvey53:Theorem}, there exists a profile $P$ such that the $w_P(e_{i_1})=w_P(e_{j_1}),\ldots, w_P(e_{i_s})=w_P(e_{j_s})$, but not all $e_1,\ldots,e_\ell$ have the same weights in $\wmg(P)$. This means that $\be'\cdot \invert{\hist(P)}= \invert{\vec 0}$ but $\be_i\cdot \invert{\hist(P)}\ne \invert{\vec 0}$, which means that not all rows in $\be_i$ are linear combinations of rows in $\be'$, which is a contradiction. This proves that $\rank(\pbe{O}) = \ties(O)$, which implies part (iii) of the claim.
\end{proof}

\paragraph{\bf \boldmath Step 3: Apply Theorem~\ref{thm:union-poly}.}  Let $\upoly = \bigcup_{O\in \pal{n}: |\cor(O)|=k}\ppoly{O}$. It follows that for any profile $P$, $|\cor(P)| = k$ if and only if $\hist(P)\in \upoly$. Therefore, for any $n$ and any $\vec\pi\in\Pi^n$, we have:
$$\Pr\nolimits_{P\sim\vec\pi} (|\cor(P)| = k) = \Pr(\vXp\in \upoly)$$
Recall that for all $O\in \pal{n}$ with $|\cor(O)|=k$, from Claim~\ref{claim:solpalindromic} (iii) we have $\dim(\ppoly{O}) = m!-\ties(P)+1$. This means that $\alpha_n =m!-\ell_{\min}$ and $\beta_n = m!-\ell_{\text{mm}}$ when they are non-negative (which holds for certain $n$). 

Therefore, to prove Theorem~\ref{thm:eorule}, it suffices to prove that the conditions for the $0$, exponential, and polynomial cases in Theorem~\ref{thm:eorule} are equivalent to the conditions for the $0$, exponential, and polynomial cases in Theorem~\ref{thm:union-poly} (applied to $\upoly$ and $\Pi$), respectively.  This follows a similar reasoning as in Step 3 of the proof of Theorem~\ref{thm:score} combined with Claim~\ref{claim:solpalindromic} (ii). This completes the proof for Theorem~\ref{thm:eorule}.
\end{proof}

\subsection{Proof of Proposition~\ref{prop:copeland}}
\label{app:proof-copeland}

\appProp{prop:copeland}{Max smoothed likelihood of ties: Copeland$_\alpha$}{
Let $\mm= (\Theta,\ml(\ma),\Pi)$ be a strictly positive and closed single-agent preference model with $\piuni\in \conv(\Pi)$. Let $l_\alpha = \min\{t\in{\mathbb N}: t\alpha\in\mathbb Z\}$. For any $2\le k\le m$ and any $n\in\mathbb N$,  
$$\slt{\Pi}{\copeland}{m}{k}{n}=\left\{\begin{array}{ll}
0 &\text{if } 2\nmid n, 2\mid k, \text{and }\left\{\begin{array}{l}k=m\text{, or}\\k=m-1\end{array}\right.\\
\Theta(n^{-\frac{k}{4}}) &\text{if } 2\mid n,  2\mid  k,  { and }\left\{\begin{array}{l}(1) k=m, \text{ or} \\ 
(2) k=m-1 \text{ and } \alpha\ge \frac 12, \text{ or}\\
(3) k=m-1 \text{ and } k\le l_\alpha(l_\alpha+1)
\end{array}\right.\\
\Theta(n^{-\frac{l_\alpha(l_\alpha+1)}{4}}) &\text{if } 2\mid n,  2\mid  k,  k=m-1, \alpha<\frac12,\text{ and }k> l_\alpha(l_\alpha+1)\\
\Theta(1) &\text{otherwise (i.e., if }2\nmid k \text{ or }k\le m-2\text{)}
\end{array}\right.$$
}

\begin{proof} The proposition is proved by applying Theorem~\ref{thm:eorule}. 
Let $\uge$ denote the set of all directed unweighted graphs over $\ma$. For any graph $G\in \uge$ and any pair of alternatives $a,b$, if there is no edge between $a$ and $b$ then we say that $a$ and $b$ are {\em tied} in $G$. Let $\ugo\subset \uge$ denote the set of all tournament graphs over $\ma$, i.e., $\ugo$ consists of graphs without ties. 

For any profile $P$, let $\umg(P)$ denote the {\em unweighted majority graph} of $P$, which is a graph in $\uge$ such that there is an edge $a\ra b$ if and only if $w_P(a,b)>0$.  It is not hard to see that Copeland$_\alpha$ can be defined over $\uge$. 

Like Proposition~\ref{prop:eo}, the set of all graphs in $\uge$ resulted from UMGs of $n$-profiles can be characterized by $\ug{n}$ that is defined in the following claim.
\begin{claim}
\label{claim:ug}
For any $n>m^4$, we have $\ug{n} =\{\umg(P):P\in \ml(\ma)^n\} =  \left\{\begin{array}{ll} \uge &\text{if } 2\mid n\\ \ugo & \text{if } 2\nmid n \end{array}\right.$.
\end{claim}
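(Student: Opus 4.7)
The plan is to mirror the proof of Proposition~\ref{prop:eo}, since $\umg$ plays the role of $\eo$ and the two cases are driven by the same parity phenomenon. The forward containment is a one-line parity argument: for any $n$-profile $P$ and any distinct $a,b\in\ma$, one has $w_P(a,b) = P[a\succ b] - P[b\succ a]$ while $P[a\succ b] + P[b\succ a] = n$, so $w_P(a,b) \equiv n \pmod 2$. If $n$ is odd, every pairwise margin is therefore odd and in particular nonzero, so no edge is missing in $\umg(P)$ and $\umg(P) \in \ugo$; if $n$ is even, trivially $\umg(P) \in \uge$.

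For the reverse containment, I would reuse the padding gadget from the proof of Proposition~\ref{prop:eo}. Fix an arbitrary $R\in \ml(\ma)$, let $\bar R$ be its reverse, and set $P_2 = \{R,\bar R\}$; then $\wmg(P_2)$ is the empty graph, so appending copies of $P_2$ to any profile preserves its $\umg$. Given a target $G$, I would produce a short profile $P_G$ with $\umg(P_G) = G$ via McGarvey's theorem~\cite{McGarvey53:Theorem}, then take $P = P_G + \tfrac{n-|P_G|}{2}\cdot P_2$, which has exactly $n$ votes and $\umg(P) = G$. The hypothesis $n\ge m^4$ ensures $n-|P_G| \ge 0$ since McGarvey's construction uses at most $m(m-1)\cdot(\text{number of edges}) < m^4$ votes.

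The one piece that requires a little care, and is where I expect to spend most of the work, is matching the parity of $|P_G|$ to that of $n$ so that $(n-|P_G|)/2$ is a nonnegative integer. For the even case, build the target WMG with all weights in $\{-2,0,2\}$: each required $\pm 2$ edge is obtained by a pair of McGarvey votes $(V,V')$, so $|P_G|$ is even. For the odd case we have $G \in \ugo$, and the target WMG should have all weights in $\{-1,+1\}$; using the standard McGarvey construction for tournaments yields $|P_G|$ of the same parity as the number of edges, but after further padding with $P_2$-pairs one may always adjust by including one extra vote drawn from the profile, which flips parity without changing the signs of margins when combined with one additional $P_2$-pair — equivalently, one can directly invoke the odd-$n$ construction in Proposition~\ref{prop:eo}, whose palindromic-order output $\eo(P)$ has empty middle tier and hence determines a tournament UMG extending $G$.

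With both containments in hand, Claim~\ref{claim:ug} follows. The argument is essentially a streamlined corollary of Proposition~\ref{prop:eo}, and no new ingredients beyond McGarvey's theorem and parity are required.
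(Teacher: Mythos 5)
Your argument is correct in outline, and it mirrors the construction underlying Proposition~\ref{prop:eo}; the paper's actual proof of Claim~\ref{claim:ug}, however, is a single sentence that cites Proposition~\ref{prop:eo} and observes that the middle tier of a palindromic order is empty if and only if its UMG has no ties. The forward containment is exactly your parity argument ($w_P(a,b)\equiv n\pmod 2$). For the reverse containment, given $G\in\uge$ (resp.\ $G\in\ugo$), pick any $O\in\pale$ (resp.\ $O\in\palo$) whose above-middle-tier edges are exactly the directed edges of $G$; Proposition~\ref{prop:eo} then hands you an $n$-profile $P$ with $\eo(P)=O$, hence $\umg(P)=G$. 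No new construction is needed.

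The one place your explicit re-derivation goes wrong is the odd-$n$ parity fix. You propose a target WMG with all weights in $\{-1,+1\}$ and then ``include one extra vote.'' But a vote can push a margin of $\pm 1$ to $0$, which creates a tie and changes the UMG; and the claim that the McGarvey profile has $|P_G|$ ``of the same parity as the number of edges'' is not right either (if the target margins are $\pm 1$ then $|P_G|$ must be odd, period). The clean version of your idea is: realize $G\in\ugo$ with a McGarvey profile $P_G$ whose margins are all $\pm 2$ (so $|P_G|$ is even), then append one arbitrary vote so that margins land in $\{\pm 1,\pm 3\}$, the UMG is preserved, $|P_G|+1$ is odd, and $n-(|P_G|+1)$ is even, after which $\{R,\bar R\}$-pairs close the gap to $n$. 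You do note, correctly, that one may instead simply invoke the odd-$n$ case of Proposition~\ref{prop:eo}; that is precisely what the paper does and is the shortest route.
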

\begin{proof} The claim is proved by directly applying Proposition~\ref{prop:eo} and noticing that the middle tier of a palindromic order is empty if and only if its UMG does not contain any ties.
\end{proof}

\noindent{\bf Applying the $\bm 0$ case in Theorem~\ref{thm:eorule}.} In this part we prove that the $0$ case of the proposition can be obtained from applying the $0$ case in Theorem~\ref{thm:eorule}. By Claim~\ref{claim:ug}, it suffices to prove that for any $n>m^4$, there exists $G\in \ug{n}$ such that $|\copeland(G)|=k$ if and only if $2\mid n$, or $2\nmid k$, or $k\le m-2$. 
\begin{itemize}
\item  {\bf The ``if'' direction.} We first prove the ``if'' direction by construction. When $2\mid n$, by Claim~\ref{claim:ug} we have $\ug{n} = \uge$, and we can choose a graph $G$ such that all alternatives in $\{1,\ldots, k\}$ are tied to each other, and all of them are strictly preferred to other alternatives. It follows that $\copeland(G) = \{1,\ldots,k\}$, which means that $|\copeland(G)|=k$. When $2\nmid k$ or $k\le m-2$, we can use the complete graphs in the $\Theta(1)$ case as illustrated in Figure~\ref{fig:copeland} (a) and (b). 

\item  {\bf The ``only if'' direction.}  Now we prove the ``only if'' direction for all $n>m^4$. When $2\nmid n$, $2\mid k$, and $m=k$, we have $\ug{n} = \ugo$. Suppose for the sake of contradiction that there exists $G\in \ugo$ such that $|\copeland(G)|=k$. Notice that $G$ is a complete graph. Therefore, the total Copeland$_\alpha$ score for all alternatives is $m\choose 2$, which means that each alternative must get $\frac{m+1}{2}\not\in\mathbb N$ points. This is a contradiction because the Copeland$_\alpha$ score of each alternative must be an integer as  $G$ is a complete graph.  When $2\nmid n$, $2\mid k$, and $m=k+1$, again we have $\ug{n} = \ugo$. Suppose for the sake of contradiction that there exists $G\in \ugo$ such that $|\copeland(G)|=k$. Again, the total Copeland$_\alpha$ score for all alternatives is $m\choose 2$. Suppose each winner gets $u\in\mathbb N$ points. Then, the total Copeland$_\alpha$ score must be between $(m-1)u$ and $mu-1$. Notice that the total Copeland$_\alpha$ score for all alternatives is $m\choose 2$. If $u\le \frac {m+1}{2}$, then we have ${m\choose 2}> mu-1$; and if $u\ge \frac {m+1}{2}$, then ${m\choose 2}<(m-1)u$. Either case leads to a contradiction.
\end{itemize}

\noindent{\bf Applying the exponential case in Theorem~\ref{thm:eorule}.} We now prove that the exponential case of Theorem~\ref{thm:eorule} do not occur. 
We note that $\umg(\piuni)$ is the empty graph, which means that $\eo(\piuni)$ only contains the middle tier $T_0$. Therefore, for any palindromic order $O$ with $|\copeland(O)|=k$, we have that $O$ refines $\eo(\piuni)$, which means that $\mo_{\cor,k,n}^{\Pi}\ne\emptyset$.

\noindent{\bf Applying the polynomial case in Theorem~\ref{thm:eorule}.}  The remainder of the proof focuses on characterizing $\ell_{\min}$ in the polynomial case of Theorem~\ref{thm:eorule}. We first prove an alternative characterization of  $\ell_{\min}$ that will be frequently used. It states that $\ell_{\min}$ equals to  the minimum number of tied pairs in graphs in $\ug{n}$ where there are exactly $k$ Copeland$_\alpha$ winners. Recall from Claim~\ref{claim:ug} that when $n$ is sufficiently large ($>m^4$), which is the case we will focus in the rest of the proof, for any even number $n$, $\ug{n}=\uge$, which is the set of all unweighted directed graphs over $\ma$; and for any odd number $n$, $\ug{n}=\ugo$, which is the set of all  unweighted tournament graphs over $\ma$. With a little abuse of notation, for any unweighted directed graph $G$ over $\ma$, we let $\ties(G)$ denote the number of tied pairs in $G$. 
\begin{claim}\label{claim:copelandwinner}For any $\Pi$ with $\piuni\in \conv(\Pi)$, $n$, and $k$ such that $\mo_{\cor,k,n}^{\Pi}\ne \emptyset$ under Copeland$_\alpha$, we have:
$$\ell_{\min} = \min\{\ties(G): G\in \ug{n} \text{ s.t. }|\copeland(G)|=k\}$$
\end{claim}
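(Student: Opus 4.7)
\textbf{Proof plan for Claim~\ref{claim:copelandwinner}.} The plan is to reduce the minimum over palindromic orders that define $\ell_{\min}$ to a minimum over unweighted majority graphs. The key observations are: (i) Copeland$_\alpha$ is a function only of the UMG, so for any palindromic order $O$, $\copeland(O)$ depends only on which pairs of alternatives are tied (those in the middle tier $T_0$) versus strictly ordered (those in $T_1,\ldots,T_t,T_{t+1},\ldots,T_{2t}$); (ii) since $\pi_\text{uni}\in\conv(\Pi)$ and $\eo(\pi_\text{uni})$ consists of a single middle tier containing all of $\me$, every palindromic order in $\pal{n}$ trivially refines $\eo(\pi_\text{uni})$, so the refinement condition in $\mo_{\copeland,k,n}^{\Pi}$ is vacuous. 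Hence $\ell_{\min}=\min\{\ties(O):O\in\pal{n},\,|\copeland(O)|=k\}$, and the claim becomes a statement equating the minimum of $\ties(O)$ over palindromic orders to the minimum of $\ties(G)$ over UMGs with exactly $k$ Copeland winners. Recall $\ties(O)=\sum_{i=1}^{t}(|T_i|-1)+|T_0|/2$ and $\ties(G)$ equals the number of tied pairs, which corresponds precisely to $|T_0|/2$ for any palindromic order $O$ with $\umg(O)=G$.

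For the ``$\ge$'' direction, I would take any $O\in\pal{n}$ with $|\copeland(O)|=k$ and set $G=\umg(O)\in\ug{n}$. By observation (i), $|\copeland(G)|=|\copeland(O)|=k$, and $\ties(O)\ge |T_0|/2=\ties(G)$, so the RHS is an upper bound on any element of the LHS set.

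For the ``$\le$'' direction, given any $G\in\ug{n}$ with $|\copeland(G)|=k$, I construct a palindromic order $O$ with $\umg(O)=G$ and $\ties(O)=\ties(G)$ by placing every directed edge of $G$ into its own singleton tier (and the tied pairs of $G$ into the middle tier). Concretely, list the non-middle directed edges of $G$ in any order compatible with the palindromic symmetry (each edge $e$ goes into its own $T_i$ and its reverse $\bar e$ goes into the symmetric tier $T_{2t+1-i}$) and put both directions of every tied pair into $T_0$. Then all $|T_i|=1$ for $i\ge 1$, so $\ties(O)=|T_0|/2=\ties(G)$. To apply this constructively we must verify $O\in\pal{n}$: if $2\mid n$, Proposition~\ref{prop:eo} gives $\pal{n}=\pale$ and we are done; if $2\nmid n$, then $\ug{n}=\ugo$ by Claim~\ref{claim:ug} so $G$ has no tied pairs, whence $T_0=\emptyset$ and $O\in\palo=\pal{n}$.

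The main obstacle is verifying that the constructed palindromic order $O$ indeed lies in $\pal{n}$ in the odd-$n$ case; this is resolved cleanly by invoking Claim~\ref{claim:ug} to ensure that any admissible $G$ is tournament-free of ties. A secondary routine point is (i), that Copeland winners depend only on the UMG, which follows directly from the Copeland$_\alpha$ score definition in terms of pairwise wins, losses, and ties. Putting the two inequalities together yields the claimed equality.
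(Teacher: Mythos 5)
Your proof is correct and follows essentially the same route as the paper's: both reduce the minimization over palindromic orders in $\mo_{\copeland,k,n}^{\Pi}$ to a minimization over unweighted majority graphs via the same two inequalities, with the identical construction (singleton non-middle tiers, all ties in $T_0$) giving the upper bound. You are slightly more explicit in justifying $O\in\pal{n}$ by case-splitting on the parity of $n$ via Proposition~\ref{prop:eo} and Claim~\ref{claim:ug}, whereas the paper simply asserts the equivalence $O\in\pal{n}\iff\umg(O)\in\ug{n}$.
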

\begin{proof} Because $\eo(\piuni)$ only contains the middle tier $T_0$, any palindromic order refines it, which means that $\mo_{\cor,k,n}^{\Pi} =\{ O\in \pal{n}: |\copeland(O)|=k\}$.

For any palindromic order $O$, let $\umg(O)\in \uge$ be the graph such that there is an edge $a\ra b$ if and only if $a\ra b$ is ranked before the middle tier. It is not hard to verify that $\copeland(O)=\copeland(\umg(O))$. Also it is not hard to verify that $O\in \pal{n}$ if and only if $\umg(O)\in \ug{n}$. Therefore, for any $O\in \mo_{\cor,k,n}^{\Pi}$ such that  $\ties(O) = \ell_{\min}$, we have $\umg(O)\in \ug{n}$ and $|\copeland(\umg(O))|=k$. We also have $\ties(O)\ge \ties(\umg(O))$ because the latter corresponds to the $T_0$ part in $\ties(O)$. Therefore, we have $\ell_{\min} \ge \min\{\ties(G): G\in \ug{n} \text{ s.t. }|\copeland(G)|=k\}$.

For any $G^*=\arg\min_{G\in \ug{n}: |\copeland(G)|=k} \ties(G)$, we construct a palindromic order $O$ whose $T_0$ consists of all tied edges in $G^*$, and each of the other tiers contains exactly one edge. It is not hard to verify that $\ties(O) = \ties(G^*)$ and $\umg(O) = G^*$, which means that $O\in\pal{n}$ and $|\copeland(O)|=k$. Therefore, $\ties(G^*)\ge \ell_{\min}$. This proves the claim. 
\end{proof}
\paragraph{\bf \boldmath The $\bm{\Theta(1)}$ case: $\bm{2\nmid k \text{ or }k\le m-2}$.} We first prove the {$\Theta(1)$ case} as a warm up. Part of the proof will be reused in the proof of other cases. In fact, the $\Theta(1)$ case covers the most number of combinations of $m$ and $k$: it happens when $k$ is odd or $k\le m-2$. In light of Claim~\ref{claim:copelandwinner}, it suffices to prove that in this case there exists a tournament graph $G$ such that $|\copeland(G)| = k$, because $\ties(G) = 0$ and any tournament graph is in $\ug{n}$. We will explicitly construct such $G$ in the two subcases.
\begin{itemize}
\item {$\bm{2\nmid k}$.} The following graph $G$ has exactly $k$ Copeland$_\alpha$ winners $\{1,\ldots,k\}$, see Figure~\ref{fig:copeland} (a) for an example of $m=6,k=5$. For any $1\le i\le k$ and any $1\le s\le \frac{k-1}{2}$ there is an edge $i\ra (1+ (i+s-1 \mod k))$, for example the black edges in Figure~\ref{fig:copeland} (a).  For any $1\le i\le k$ and any $k+1\le j\le m$, there is an edge $i\ra j$, for example the blue edges in Figure~\ref{fig:copeland} (a). 
\item {\bf $\bm{2\mid k}$ and $\bm{k\le m-2}$.} The following  graph $G$ has exactly $k$ Copeland$_\alpha$ winners $\{1,\ldots,k\}$, see Figure~\ref{fig:copeland} (b) for an example of $m=6, k=4$. The edges are defined in the following steps.
\begin{itemize}
\item []{\bf \boldmath Step 1: edges within $\{1,\ldots,k\}$.} These edges are colored black in Figure~\ref{fig:copeland}  (b) and is the similar to the graph in Figure~\ref{fig:copeland} (a), except that there are $\frac k2$ ``diagonal'' edges whose directions need to be assigned because $k$ is an even number.  Formally, for any $1\le i\le\frac{k}{2}$ and any $1\le s\le \frac{k}{2}$, there is an edge $i\ra i+s$. For any $\frac{k}{2}+1\le i\le k$ and any $1\le s\le \frac{k}{2}-1$, there is an edge $i\ra (1+ (i+s-1 \mod k))$. 
\item []{\bf \boldmath Step 2: edges between $\{1,\ldots,k\}$ and $ \{k+1,\ldots,m\}$.} These edges are colored blue in Figure~\ref{fig:copeland}  (b). For any $1\le i\le \frac{k}{2}$ there is an edge $(k+1)\ra i$; and for any other $(i,j)\in \{1,\ldots,k\}\times  \{k+1,\ldots,m\} $, there is an edge $i\ra j$. 
\item []{\bf \boldmath  Step 3: edges within $ \{k+1,\ldots,m\}$.} There is an edge $(k+2)\ra (k+1)$, colored red in Figure~\ref{fig:copeland}  (b). Directions of other edges are assigned arbitrarily to make $G$ a complete graph. 
\end{itemize}
\end{itemize}
\begin{figure}[htp]
\centering
\begin{tabular}{ccc}
    \includegraphics[width=.33\textwidth]{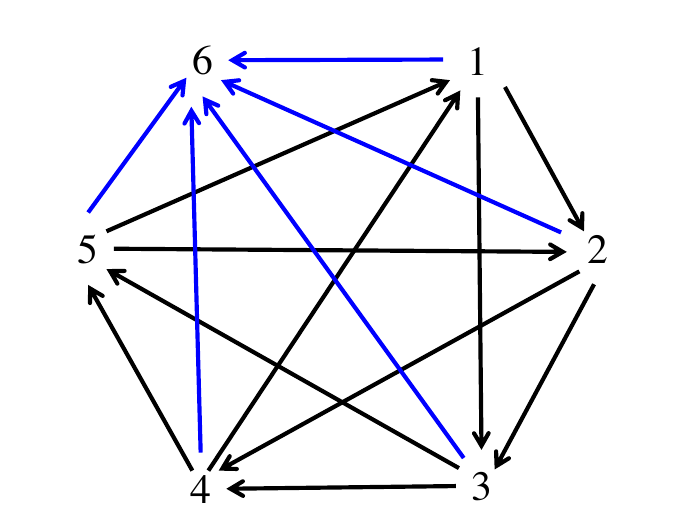} &    \includegraphics[width=.33\textwidth]{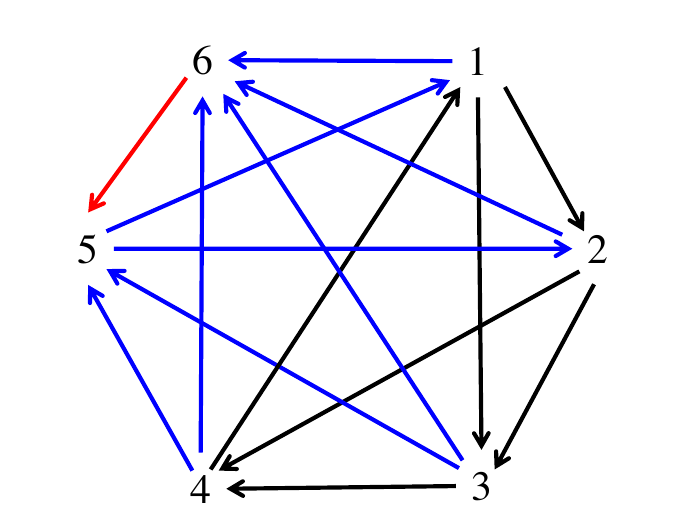} 
    &  \includegraphics[width=.33\textwidth]{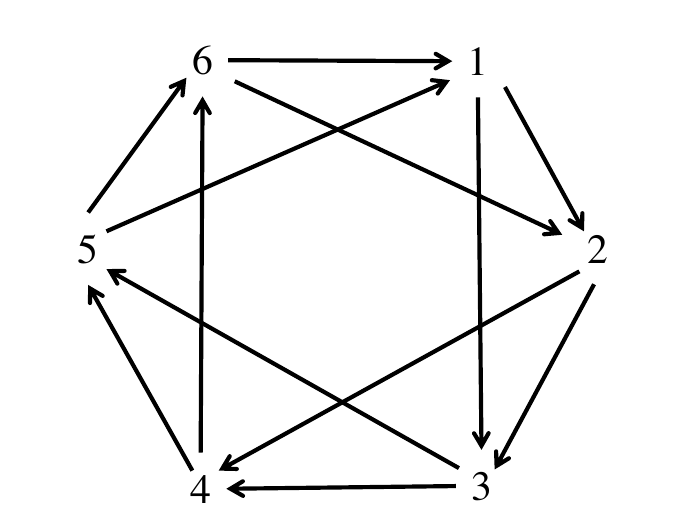} \\
    (a) $m=6$ and $k=5$. &     (b) $m=6$ and $k=4$. & (c) $m=k=6$. 
\end{tabular}
    \caption{Constructions for Copeland$_\alpha$, where the winners are $\{1,\ldots, k\}$.\label{fig:copeland}}
\end{figure}
 
\paragraph{\bf The $\bm{\Theta(n^{-\frac{k}{4}})}$ case:} $\bm{2\mid n,  2\mid  k, } \text{ \bf and }\left\{\begin{array}{l}\bm{(1) k=m,} \text{\bf or} \\ 
\bm{(2) k=m-1 \text{\bf  and } \alpha\ge \frac 12, \text{\bf or}}\\
\bm{(3) k=m-1 \text{\bf and } k\le l_\alpha(l_\alpha+1)}
\end{array}\right.$. We prove that $\ell_{\min} = \frac{k}{2}$ by applying Claim~\ref{claim:copelandwinner} in the three subcases indicated in the statement of the proposition.
\begin{itemize}
\item {\bf (1) $\bm{2\mid n}$ , $\bm{2\mid k}$, and $\bm{k=m}$.}  We first prove $\ell_{\min} \le \frac{k}{2}$. By Claim~\ref{claim:copelandwinner}, it suffices to construct an unweighted graph $G\in\ug{n}=\uge$ such that $|\copeland(G)|=k$ and $\ties(G) = \frac k2$. In fact, the following unweighted directed graph $G$ with $\frac{k}{2}$ ties has  $k$ Copeland winners $\{1,\ldots,k\}$, see Figure~\ref{fig:copeland} (c) for an example. For any $1\le i\le k$ and any $1\le s\le \frac{k}{2}-1$, there is an edge $i\ra (1+ (i+s-1 \mod k))$. There are $\frac{k}{2}$ ties in $G$, namely $\{1,\frac{k}{2}+1\}, \{2,\frac{k}{2}+2\},\ldots, \{\frac{k}{2},k\}$. 

Next, we prove $\ell_{\min}  \ge \frac{k}{2}$. By Claim~\ref{claim:copelandwinner}, it suffices to prove that it is impossible for any unweighted directed graph $G$ with exactly $k=m$ Copeland$_\alpha$ winners to have strictly less than $\frac{k}{2}$ ties. Suppose for the sake of contradiction that such graph exists, denoted by $G'$. Then, there must exist an alternative, w.l.o.g.~alternative $1$, that is not tied to any other alternative in $G'$, otherwise at least $\frac{k}{2}$ ties are necessary to ``cover'' all alternatives. This means that the Copeland$_\alpha$ score of alternative $1$ must be an integer, denoted by $u$, which means that the Copeland$_\alpha$ score of all alternatives must be $u$. Therefore, the total Copeland$_\alpha$ score of all alternatives is $uk$.

Notice that each directed each edge contributes $1$ to the total Copeland$_\alpha$ score and each tie contributes $2\alpha\in [0,2]$ to the total Copeland$_\alpha$ score. Therefore, the total Copeland$_\alpha$ score is between ${m\choose 2}-(\frac{k}{2}-1)$ (which corresponds to the case where there are $\frac{k}{2} -1$ ties and $\alpha =0$) and ${m\choose 2}+(\frac{k}{2}-1)$ (which corresponds to the case where there are $\frac{k}{2} -1$ ties and $\alpha =1$). However, if $u\ge \frac{k}{2}$ then $uk>{m\choose 2}+(\frac{k}{2}-1)$; and if $u\le  \frac{k}{2}-1$ then $uk<{m\choose 2}-(\frac{k}{2}-1)$. In either case $uk\not\in [{m\choose 2}-(\frac{k}{2}-1), {m\choose 2}+(\frac{k}{2}-1)]$, which is a contradiction. Therefore, $\ell_{\min}  = \frac{k}{2}$.


\item {\bf (2) \bm{$2\mid n$}, \bm{$2\mid k$}, $\bm{k=m-1}$, and $\bm{\alpha\ge \frac 12}$.} $\ell_{\min} \le \frac{k}{2}$ is proved by constructing a graph $G\in\ug{n}=\uge$ whose subgraph over $\{1,\ldots, m-1\}$ is the same as the graph constructed in case (1) above, and then we add $m-1$ edges $i\ra m$ for each $i\le m-1$. $\ell_{\min} \ge \frac{k}{2}$ is proved by a similar argument as in case (1) above. More precisely, suppose for the sake of contradiction that there exists a graph $G'$ such that $\copeland(G')=k=m-1$ and $\ties(G')<\frac k2$. Then, there must exist an alternative, w.l.o.g.~alternative $1$, that is not tied to any other alternative in $G'$. This means that the Copeland$_\alpha$ score of alternative $1$ must be an integer, denoted by $u$.  Because $\alpha\ge \frac 12$, the total Copeland$_\alpha$ score is between $m\choose 2$ and ${m\choose 2} + \frac k2 -1$, and is also between $u k$ and $u(k+1)-1$. However, if $u\le \frac k2$, then $u(k+1)-1<{m\choose 2} = \frac{k(k+1)}{2}$; and if $u\ge \frac k2+1$, then $uk>{m\choose 2} + \frac k2 -1$. In either case $[{m\choose 2}, {m\choose 2} + \frac k2 -1]\cap [uk,u(k+1)-1] =\emptyset$, which is a contradiction. This means that $\ell_{\min} = \frac{k}{2}$.

\item {\bf (3) \bm{$2\mid n$}, \bm{$2\mid k$}, \bm{$k=m-1$}, and \bm{$k\le l_\alpha(l_\alpha+1)$}.} $\ell_{\min} \le \frac{k}{2}$ is proved by the same graph as in case (2) above. $\ell_{\min} \ge \frac{k}{2}$ is proved by a similar argument as in case (1) above. More precisely,  for the sake of contradiction suppose there exists a graph $G'$ such that $\copeland(G')=k=m-1$ and $\ties(G')<\frac k2$.  Then, there must exist a winner that is not tied to any other alternative in $G'$. Therefore, the Copeland$_\alpha$ score of the $k$ winners is an integer. However, in order for the Copeland$_\alpha$ score of any alternative $a$ who is tied in $G'$ (with any other alternative) to be an integer, $a$ must be tied with at least $l_\alpha$ alternatives. Among these $l_\alpha$ alternatives, at least $l_\alpha-1$ must be co-winners, each of which must be tied to at least $l_\alpha$ alternatives for their total Copeland$_\alpha$ scores to be integral. It is not hard to verify that the total number of ties in $G'$ is $\frac{l_\alpha(l_\alpha+1)}{2}\ge \frac k2$, which is a contradiction.  This means that $\ell_{\min} = \frac{k}{2}$.
\end{itemize}

\paragraph{\bf \boldmath The $\bm{\Theta(n^{-\frac{l_\alpha(l_\alpha+1)}{4}})}$ case: $\bm{2\mid n,  2\mid  k,  k=m-1, \alpha<\frac12,\text{ and }k> l_\alpha(l_\alpha+1)}$.} Like the $\Theta(n^{-\frac k4})$ case, we apply Claim~\ref{claim:copelandwinner} to prove that $\ell_{\min} = \frac{l_\alpha(l_\alpha+1)}{2}$. We first prove $\ell_{\min} \le  \frac{l_\alpha(l_\alpha+1)}{2}$ by explicitly constructing an unweighted directed graph $G\in\ug{n}=\uge$ such that the Copeland$_\alpha$ winners are $\{1,\ldots,k\}$ whose Copeland$_\alpha$ scores are $\frac k2$. Let $\ma = \ma_1\cup \ma_2\cup\{m\}$, where $\ma_1 = \{1,\ldots, l_{\alpha}\}$ and $\ma_1 = \{l_{\alpha}+1,\ldots, m-1\}$.  The construction of $G$ depends on the parity of $l_\alpha$, discussed in the following two cases.
\begin{itemize}
\item {\bf $\bm{l_\alpha}$ is odd.} We construct  $G\in\ug{n}=\uge$ in the following three steps,  illustrated in Figure~\ref{fig:copeland-alpha-odd}. Note that Figure~\ref{fig:copeland-alpha-odd} is only used for the purpose of illustration but does not corresponds to the $\Theta(n^{-\frac{l_\alpha(l_\alpha+1)}{4}})$ case, because $k = 8< 12 = l_\alpha(l_\alpha +1)$. A real example requires $k\ge 14$, which contains too many edges in $G$ to be easily visible.
\begin{itemize}
\item []{\bf Step 1.~Edges within $\bm{A_1\cup \{m\}}$ and edges within $\bm{A_2}$.} All alternatives in $A_1\cup\{m\}$ are tied, i.e., there is no edge between any pair of alternatives in $A_1\cup \{m\}$. It is not hard to verify that $|A_2|$ is an odd number. Therefore, we let the subgraph of $G$ over $A_2$ to be the completely graph as the one shown in Figure~\ref{fig:copeland} (a). After this step, the Copeland$_\alpha$ score of each alternative in $A_1\cup \{m\}$ becomes $\alpha l_\alpha$, and the Copeland$_\alpha$ score of each alternative in $A_2$ becomes $\frac{k-l_\alpha-1}{2}$.
\item []{\bf Step 2.~Edges between $\bm{A_1}$ and $\bm{A_2}$.} We assign all edges between $A_1$ and $A_2$ an direction such that each alternative in $A_1$ has exactly $\frac k2- \alpha l_\alpha$ outgoing edges, and each alternative in $A_2$ has $\frac{l_\alpha-1}{2}$ or $\frac{l_\alpha+1}{2}$ outgoing edges. This can be done by assigning edges from $A_1$ to $A_2$ evenly across $A_2$. In the graph that combines edges in Step 1 and Step 2, the Copeland$_\alpha$ score of each alternative in $A_1\cup \{m\}$ becomes exactly $\frac{k}{2}$, the Copeland$_\alpha$ score of $\frac{k-l_\alpha(l_\alpha+1)}{2} +\alpha l_\alpha^2$  alternatives in $A_2$ becomes $\frac{k}{2}$, and the Copeland$_\alpha$ score of the remaining $\frac{k+l_\alpha(l_\alpha+1)}{2} -\alpha l_\alpha^2$  alternatives in $A_2$ becomes $\frac{k}{2} -1$.
\item []{\bf Step 3.~Edges between $\bm{m}$ and $\bm{A_2}$.} Finally, the directions of edges in this step are used to guarantee the Copeland$_\alpha$ score of all alternatives in $A_2$ is $\frac k2$. 
\end{itemize}
Let $G$ denote the union of edges defined in Step 1-3. It is not hard to verify that the Copeland$_\alpha$ score of all alternatives in $A_1\cup A_2$ is $\frac k2$, and the Copeland$_\alpha$ score of alternative $m$ is $\alpha l_\alpha + \frac{k-l_\alpha(l_\alpha+1)}{2} +\alpha l_\alpha^2 < \frac k2$, because $\alpha<\frac 12$. This proves $\ell_{\min} \le  \frac{l_\alpha(l_\alpha+1)}{2}$ when $l_\alpha$ is odd.
\begin{figure}[htp]
\centering
\begin{tabular}{ccc}
    \includegraphics[width=.33\textwidth]{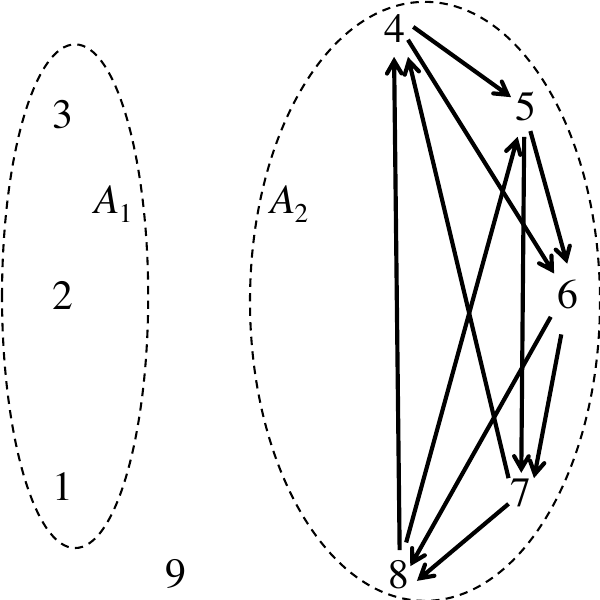} &    \includegraphics[width=.33\textwidth]{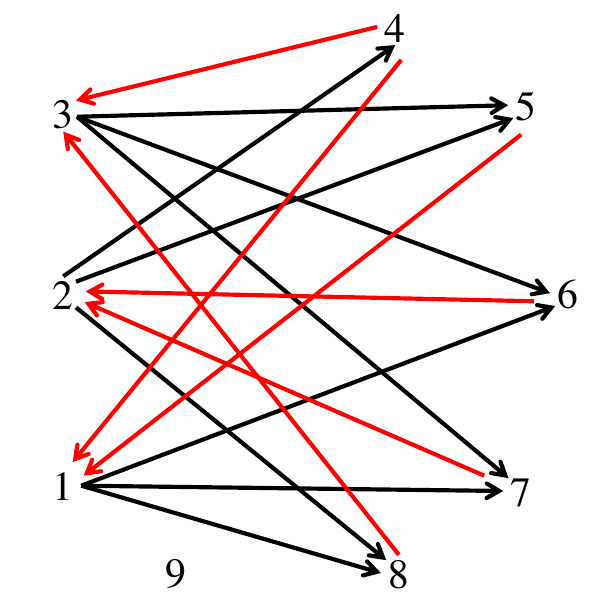} &    \includegraphics[width=.33\textwidth]{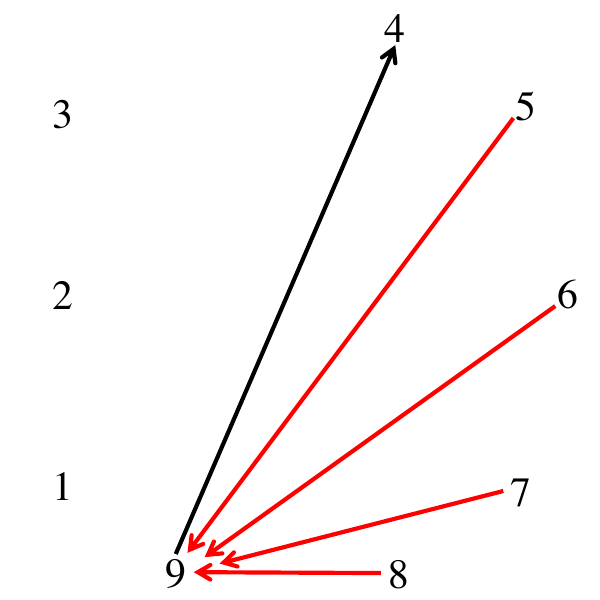}\\
    (a) Step 1. &     (b)   Step 2. &     (b)   Step 3.
\end{tabular}
    \caption{An illustration of the constructions for the $\Theta(n^{-\frac{l_\alpha(l_\alpha+1)}{4}})$ case with odd $l_\alpha$, where $m=9$, $k=8$, $\alpha = \frac 13$, and $l_\alpha =3$. Note that this figure is only used for illustrating the three-step construction. It does not correspond to the $\Theta(n^{-\frac{l_\alpha(l_\alpha+1)}{4}})$ case because $k<l_\alpha(l_\alpha+1)$. In (b) and (c), edges from right to left are colored red. \label{fig:copeland-alpha-odd}}
\end{figure}

\item {\bf $\bm{l_\alpha}$ is even.} Like the $2\nmid l_\alpha$ case, we construct  $G\in\ug{n}=\uge$ in the following three steps. 
\begin{itemize}
\item []{\bf Step 1.~Edges within $\bm{A_1\cup \{m\}}$ and edges within $\bm{A_2}$.} All alternatives in $A_1\cup\{m\}$ are tied, i.e., there is no edge between any pair of alternatives in $A_1\cup \{m\}$. It is not hard to verify that $|A_2|$ is an odd number. Therefore, we let the subgraph of $G$ over $A_2$ to be isomorphic to the (complete) subgraph over $\{1,\ldots,k\}$ as the one in Figure~\ref{fig:copeland} (b). After this step, the Copeland$_\alpha$ score of each alternative in $A_1\cup \{m\}$ becomes $\alpha l_\alpha$, the Copeland$_\alpha$ score of $\frac{k-l_\alpha}{2}$ alternative in $A_2$ becomes $\frac{k-l_\alpha}{2}$, and the Copeland$_\alpha$ score of the remaining $\frac{k-l_\alpha}{2}$ alternative in $A_2$ becomes $\frac{k-l_\alpha}{2}-1$.
\item []{\bf Step 2.~Edges between $\bm{A_1}$ and $\bm{A_2}$.} We assign all edges between $A_1$ and $A_2$ an direction such that each alternative in $A_1$ has exactly $\frac k2- \alpha l_\alpha$ outgoing edges, and each alternative in $A_2$ has $\frac{l_\alpha}{2}-1$ or $\frac{l_\alpha}{2}$ outgoing edges. This can be done by assigning edges from $A_1$ to $A_2$ uniformly across $A_2$, starting with alternatives whose Copeland$_\alpha$ scores are $\frac{k-l_\alpha}{2}$ in Step 1. In the graph that combines edges in Step 1 and Step 2, the Copeland$_\alpha$ score of each alternative in $A_1\cup \{m\}$ becomes exactly $\frac{k}{2}$, the Copeland$_\alpha$ score of $\frac{l_\alpha^2}{2} +\alpha l_\alpha$  alternatives in $A_2$ becomes $\frac{k}{2}$, and the Copeland$_\alpha$ score of the remaining $k-l_\alpha - \frac{l_\alpha^2}{2} -\alpha l_\alpha$ (which is strictly positive because $k>l_\alpha(l_\alpha+1)$ and $l_\alpha\ge 4$ because $\alpha<\frac 12$)  alternatives in $A_2$ becomes $\frac{k}{2} -1$.
\item []{\bf Step 3.~Edges between $\bm{m}$ and $\bm{A_2}$.} Finally, the directions of edges in this step are used to guarantee the Copeland$_\alpha$ score of all alternatives in $A_2$ is $\frac k2$. 
\end{itemize}
Let $G$ denote the union of edges defined in Step 1-3. It is not hard to verify that the Copeland$_\alpha$ score of all alternatives in $A_1\cup A_2$ is $\frac k2$, and the Copeland$_\alpha$ score of alternative $m$ is $\frac{l_\alpha^2}{2} +\alpha l_\alpha + \alpha l_\alpha\le \frac{l_\alpha^2}{2} + 2 \le \frac {l_\alpha^2+l_\alpha}{2} < \frac k2$, because $l_\alpha \ge 4$ and $k>l_\alpha(l_\alpha+1)$. This proves $\ell_{\min} \le  \frac{l_\alpha(l_\alpha+1)}{2}$ when $l_\alpha$ is even.
\end{itemize}

We now prove that $\ell_{\min} \ge  \frac{l_\alpha(l_\alpha+1)}{2}$. The proof is similar to the proof of the $\Theta(n^{-\frac k4})$ case (3). More precisely,  for the sake of contradiction that there exists a graph $G'$ such that $\copeland(G')=k=m-1$ and $\ties(G')<\frac{l_\alpha(l_\alpha+1)}{2}<\frac k2$.  Then, there must exist a winner that is not tied to any other alternative in $G'$. Therefore, the Copeland$_\alpha$ score of  the $k$ winners is an integer. However, in order for the Copeland$_\alpha$ score of any alternative $a$ who is tied in $G'$ (with any other alternative) to be an integer, following the same argument as in the proof of the $\Theta(n^{-\frac k4})$ case (3), we must have $\ties(G')\ge \frac{l_\alpha(l_\alpha+1)}{2}$, which is a contradiction. This means that $\ell_{\min} =\frac{l_\alpha(l_\alpha+1)}{2}$.

This completes the proof for Proposition~\ref{prop:copeland}.
\end{proof}

\subsection{Proof of Proposition~\ref{prop:maximin}}
\label{app:proof-mm}

\appProp{prop:maximin}{Max smoothed likelihood of ties: maximin}{
Let $\mm= (\Theta,\ml(\ma),\Pi)$ be a strictly positive and closed single-agent preference model where $\piuni\in \conv(\Pi)$. For any $2\le k\le m$ and any $n\in\mathbb N$, 
$$\slt{\Pi}{\maximin}{m}{k}{n}=\Theta(n^{-\frac{k-1}{2}})$$
}
\begin{proof} The proposition is proved by applying Theorem~\ref{thm:eorule}. 

\paragraph{\bf The $\bm 0$ case and the exponential case.} We first prove that when $n>m^4$, the $0$ case and the exponential case of Theorem~\ref{thm:eorule} do not occur. By Proposition~\ref{prop:eo} and notice that $\eo(\piuni)$ only contains $T_0$, it suffices to prove that for any $k$, there exists $O\in \palo$ such that $|\mm(O)|=k$. This is done by first defining a directed weighted graph $G$, and then let $O=\eo(G)$. Let $G$ denote the following directed weighted graph over $\ma$, as illustrated in Figure~\ref{fig:mm}.
\begin{figure}[htp]\centering
  \includegraphics[width=.4\textwidth]{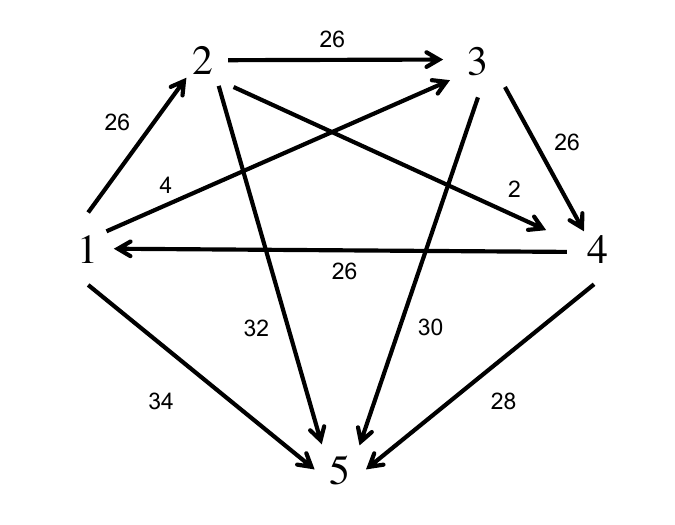}     \caption{Construction for maximin where $m=5$ and $k=4$.}
\label{fig:mm}
\end{figure}
\begin{itemize}
\item $G$ contains the following cycle of $k$ edges, the weight of each of which is $m^2+1$.
$$E=\{1\ra 2, 2\ra 3,\ldots, (k-1)\ra k, k\ra 1\}$$
\item The other edges within $\{1,\ldots k\}$ have different weights, whose parity is the same as $m^2+1$ and whose absolute values are strictly smaller than $m^2+1$ and strictly larger than $0$.
\item The absolute values of any other edge has the same parity as $m^2+1$ and are strictly larger than $m^2+1$. In addition, for any $1\le i\le k$ and any $k+1\le j\le m$, the  weight on $i\ra j$ is positive.
\end{itemize}
It is not hard to check that $\maximin(G)=\{1,\ldots,k\}$. Let $O=\eo(G)$. Because there is no tie in $G$, the middle tier $T_0$ in $\eo(G)$ is empty, which means that $O\in \palo\subseteq \pal{n}$.

\paragraph{\bf The polynomial case.}   We first prove  that $\ell_{\min} \le k-1$ by using $O$ defined above for the $0$ case and exponential case. There are only two tiers with more than one elements, namely the tiers corresponding to $E$ and the inverse $E$, which means that $\ties(O)= k-1$. 

We now prove that $\ell_{\min}\ge k-1$. For any $O\in \mo_{\cor,k,n}^{\Pi}$, suppose w.l.o.g.~that the maximin winners are $\{1,\ldots,k\}$. For each $1\le i\le k$, let $i\ra a_i$ denote an arbitrary edge that corresponds to the min score of $i$. Because $\{1,\ldots,k\}$ have the same min score, $\{i\ra a_i:1\le i\le k\}$ must be in the same tier. If they are in $T_0$, then $T_0$ contains at least $2k$ edges $\{i\ra a_i, a_i\ra i:1\le i\le k\}$, which means that $\ties(O)\ge |T_0|/2\ge k$. If they are not in $T_0$, then $\ties(O)\ge k-1$. In either case we have $\ell_{\min}\ge k-1$. 

Therefore, $\ell_{\min}=k-1$. This completes the proof for Proposition~\ref{prop:maximin}.
\end{proof}

\subsection{Proof of Proposition~\ref{prop:schulze}}
\label{app:proof-schulze}

\appProp{prop:schulze}{Max smoothed likelihood of ties: Schulze}{
Let $\mm= (\Theta,\ml(\ma),\Pi)$ be a strictly positive and closed single-agent preference model  with $\piuni\in \conv(\Pi)$. For any $2\le k\le m$ and any $n\in\mathbb N$,  
$$\slt{\Pi}{\schulze}{m}{k}{n}=\Theta(n^{-\frac{k-1}{2}})$$}
\begin{proof} The proposition is proved by applying Theorem~\ref{thm:eorule}. 

\paragraph{\bf The $\bm 0$ case and the exponential case.} When $n>m^4$, the $0$ case and the exponential case of Theorem~\ref{thm:eorule} do not occur. This is prove by the same graph $G$ as in the proof of Proposition~\ref{prop:maximin} that is illustrated in Figure~\ref{fig:mm}. 

\paragraph{\bf The polynomial case.}   $\ell_{\min} \le k-1$ is proved by using $O=\eo(G)$ for the same $G$ used in Proposition~\ref{prop:maximin}.  

We now prove that $\ell_{\min}\ge k-1$. For any $O\in \mo_{\cor,k,n}^{\Pi}$, suppose w.l.o.g.~that the Schulze winners are $A_1=\{1,\ldots,k\}$. We define the following $t$ unweighted directed graphs. For any $1\le i\le t+1$, let $G_i$ denote the graph over $\ma$ whose edges are $T_1\cup\ldots\cup T_{t+1-i}$. Let $G_{t+1}$ denote the graph without any edges. By definition $G_1 = \umg(O)$. Given any unweighted directed graph $G'$, a node $a$ is {\em undominated}, if for any node $b$ that there exists a directed path from $b$ to $a$, there exists a directed path from $a$ to $b$.

\begin{claim}
\label{claim:undominated}
For each $1\le i\le t$, $\{1,\ldots, k\}$ are undominated in $G_i$.
\end{claim}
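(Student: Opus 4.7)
The plan is to reduce the claim to the defining property of Schulze winners: for any $a \in \{1,\ldots,k\}$ and any $b \in \ma$, we have $s[a,b] \ge s[b,a]$. The key observation is that $G_i$ is essentially a threshold subgraph of the underlying WMG. Specifically, let $w_j$ denote the common weight of edges in tier $T_j$ for $1 \le j \le t$; since tiers $T_1, \ldots, T_t$ lie strictly above the middle tier $T_0$ in the palindromic order, we have $w_1 > w_2 > \cdots > w_t > 0$. Then the edge set of $G_i = T_1 \cup \cdots \cup T_{t+1-i}$ is exactly the set of WMG edges whose weight is at least $w_{t+1-i}$.

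The first step I would carry out is a ``path-strength translation'' lemma: a directed path from $x$ to $y$ exists in $G_i$ if and only if $s[x,y] \ge w_{t+1-i}$ in the underlying WMG. This is immediate from the Schulze definition of $s[x,y]$ as the maximum over all paths from $x$ to $y$ of the minimum edge weight along the path, combined with the threshold description of $G_i$ given above.

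With this translation in hand, the proof of the claim is essentially one line. Fix $i \in \{1,\ldots,t\}$, any $a \in \{1,\ldots,k\}$, and any $b \in \ma$, and suppose there is a directed path from $b$ to $a$ in $G_i$. By the translation, $s[b,a] \ge w_{t+1-i}$. Since $a$ is a Schulze winner, $s[a,b] \ge s[b,a] \ge w_{t+1-i}$, and applying the translation in reverse produces a directed path from $a$ to $b$ in $G_i$. Thus $a$ is undominated in $G_i$.

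I do not anticipate a serious obstacle here, since the argument reduces to unpacking definitions once the translation is set up correctly. The only subtle points to be careful about are (i) confirming that the edges listed in the positive tiers $T_1, \ldots, T_t$ have strictly decreasing, strictly positive WMG weights (so that the threshold $w_{t+1-i}$ is well-defined), which follows from the definition of the palindromic order $O = \eo(\cdot)$ and the fact that $O$ refines $\eo(\pi)$ for some $\pi$ whose WMG has no zero-weight edges above $T_0$; and (ii) verifying that the claim is stated for $1 \le i \le t$, which is exactly the range in which $G_i$ is non-empty and $w_{t+1-i}$ is a valid positive threshold.
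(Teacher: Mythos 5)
Your proof is correct and uses exactly the same idea as the paper: $G_i$ is the threshold subgraph of the WMG at level $w_{t+1-i}$, so reachability in $G_i$ is equivalent to $s[\cdot,\cdot] \ge w_{t+1-i}$, and the Schulze-winner property $s[a,b] \ge s[b,a]$ then forces the reverse path to exist. The only cosmetic difference is that you argue directly while the paper phrases it as a proof by contradiction.
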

\begin{proof}
Suppose for the sake of contradiction that the claim is not true and w.l.o.g.~alternative $1$ is dominated by $j$ in $G_i$ for some $1\le i\le t$. Let $w$ denote the weight of the edges in $T_{t+1-i}$. This means that $s[j,1]\ge w$ and in each path from $1$ to $j$ there is an edge whose weight is strictly less than $w$. Therefore, $s[j,1]>s[1,j]$, which contracts the assumption that $1$ is a winner.
\end{proof}

For each $1\le i\le t$, let $\calS_i\subset 2^{A_1}$ denote the connected components in $G_i$ that  intersect $A_1=\{1,\ldots, k\}$ and let $\eta_i = |\calS_i|$. By Claim~\ref{claim:undominated}, each set $S$ in $\calS_i$ has no incoming edges from $\ma\setminus S$ in $G_i$, otherwise $S\cap A_1$ are dominated in $G_i$. If $\eta_1 >1$, then  we have $|T_0|\ge 2(k-1)$, which means that $\ties(O)\ge k-1$ and therefore $\ell_{\min}\ge k-1$, which proves the proposition. In the rest of the proof we assume that $\eta_1 = 1$.

For each $1\le i\le t$, because $G_{i+1}$ is obtained from $G_i$ by removing some edges, each set in $\calS_{i+1}$ must be a subset of a set in $\calS_i$.  If $\eta_{i+1}>\eta_i$, then there are at least $\eta_{i+1}-\eta_i+1$ sets in $\calS_{i+1}$, denoted by $\calS_{i+1}'$, that are strict subsets of some sets in $G_i$. 
For any $S'_{i+1}\in \calS_{i+1}'$, let $S'_i\in \calS_i'$ be the set such that $S'_{i+1}\subsetneq S'_i$. 
By Claim~\ref{claim:undominated}, $S'_{i+1}$ does not contain any incoming edges from $\ma\setminus S'_{i+1}$ in $G_{i+1}$, and in particular, no edges from $S'_i\setminus S'_{i+1}$ to $S_{i+1}'$ (which are in $G_i$ and there is at least one such edge because $S_i'$ is a connected component in $G_i$) are in $G_{i+1}$. This means that $T_{t+1-i}$ contains at least one incoming edge to each set in $\calS_{i+1}'$, and such sets of edges are non-overlapping because $\calS_{i+1}'$ consists of non-overlapping subsets of $\ma$. Therefore, $|T_{t+1-i}|\ge \eta_{i+1}-\eta_i+1$. If $\eta_{i+1}=\eta_i$, then apparently we have $|T_{t+1-i}|\ge \eta_{i+1}-\eta_i+1 = 1$. Recall that $\eta_1 = 1$ and $\eta_{t+1} = k$, we have:
$$\ties(O) = \sum_{i=1}^t(|T_{t+1-i}|-1) + |T_0|/2 \ge \sum_{i=1}^t(\eta_{i+1}-\eta_i) = \eta_{t+1} - \eta_1=k-1$$
Again, we have $\ell_{\min}\ge k-1$.

This completes the proof for Proposition~\ref{prop:schulze}.
\end{proof}

\subsection{Proof of Proposition~\ref{prop:rp}}
\label{app:proof-rp}

\appProp{prop:rp}{Max smoothed likelihood of ties: ranked pairs}{
Let $\mm= (\Theta,\ml(\ma),\Pi)$ be a strictly positive and closed single-agent preference model  with $\piuni\in \conv(\Pi)$. For any $2\le k\le m$ and any $n\in\mathbb N$, 
$$\Omega(n^{-\frac{k-1}{2}}) \le \slt{\Pi}{\rp}{m}{k}{n} \le n^{-\Omega(\frac{\log k}{\log\log k})}$$
Moreover, when $m\ge  k+5\lceil \log k\rceil$, 
$$\slt{\Pi}{\rp}{m}{k}{n}\le \Omega(n^{-\frac{\lceil \log k\rceil}{2}})$$
When $k=2$, 
$$\slt{\Pi}{\rp}{m}{2}{n}=\Theta(n^{-0.5})$$
}
\begin{proof} 
The proposition is proved by applying Theorem~\ref{thm:eorule}. It is not hard to verify that the $0$ case and the exponential case of Theorem~\ref{thm:eorule} do not occur when $n$ is sufficiently large.  Let $s_k$ denote the largest integer such that $s_k!\le k$. It follows that $s_k = O(\frac{\log k}{\log \log k})$.  To prove the first inequality in the statement of the proposition, it suffices to prove that $s_k-1\le \ell_{\min} \le k-1$. 

We first prove that $\ell_{\min} \le k-1$ by constructing a weighted directed graph $G$ such that $\ties(\eo(G)) = k-1$, $|\rp(G)|=k$, and $\eo(G)\in \palo\subseteq \pale$. In fact, we can use the graph in the proof of Proposition~\ref{prop:maximin} illustrated in Figure~\ref{fig:mm}. 
The winner is determined by the tie-breaking order among edges in the cycle $1\ra 2\ra\cdots\ra k\ra 1$.

We now  prove that $\ell_{\min} \ge s_k-1$.  Let $O\in \mo_\ma^{n}$ denote the palindromic order such that $\ties(O) = \ell_{\min}$. The total number of ways to break ties in ranked pairs is no more than $(\ell_{\min}+1)!$, and the maximum is achieved when a tier before $T_0$ contains all $l+1$ tied edges. Therefore, we must have $(\ell_{\min}+1)!\ge k$, which means that  $\ell_{\min}\ge s_k-1$. 

\paragraph{\bf The ``moreover'' part.} This part is proved by construction. When $m\ge 5\lceil \log k\rceil + k$, we  construct a weighted directed graph $G$ such that (1) $\eo(G)\in\palo$, (2) $|\rp(G)|=k$, and (3) $\ties(\eo(G))=\lceil \log k\rceil$. The construction is illustrated in Figure~\ref{fig:rp} (b).  We let $\ma$ denote the union of the following three sets of alternatives.
\begin{itemize}
\item $\{0,\ldots, k-1\}$. These are the ranked pairs winners. Notice that the index starts at $0$ instead of $1$.
\item For each $i\le \lceil \log k\rceil$, there are five alternatives $\{a_i,b_i,c_i,d_i,f_i\}$.
\item Let $A$ denote the remaining $m-k-5\lceil \log k\rceil$ alternatives.
\end{itemize}
There are $\lceil \log k\rceil$ pairs of edges with same weights in $G$: for any $i\le \lceil \log k\rceil$, $w_G(a_i\ra c_i)=w_G(b_i\ra d_i)$.
The edges are divided into the following groups such that weights in an earlier group are higher than weights in an later group. We further require that no pair of edges have the same weight except $a_i\ra c_i$ and $b_i\ra d_i$, and in particular, no edge has weight $0$.
\begin{itemize}
\item Group 1. For each $i\le \lceil \log k\rceil$, we have the following edges: $c_i\ra b_i$, $d_i\ra a_i$, $c_i\ra f_i$, $d_i\ra f_i$.  For each $0\le j\le k-1$, there is an edge from $j$ to each alternative in $A$. For each $0\le j\le k-1$ and each $i\le \lceil \log k\rceil$, if the $i$-th digit from right in $j$'s binary representation is $0$, then there is an edge $j\ra a_i$; otherwise there is an edge $j\ra b_i$. 
\item Group 2.  For each $i\le \lceil \log k\rceil$ and each $0\le j\le k-1$, there is an edge $f_i\ra j$.
\item Group 3. All remaining edges are in Group 3.
\end{itemize} 

\begin{figure}[htp]
\centering
    \includegraphics[width=.4\textwidth]{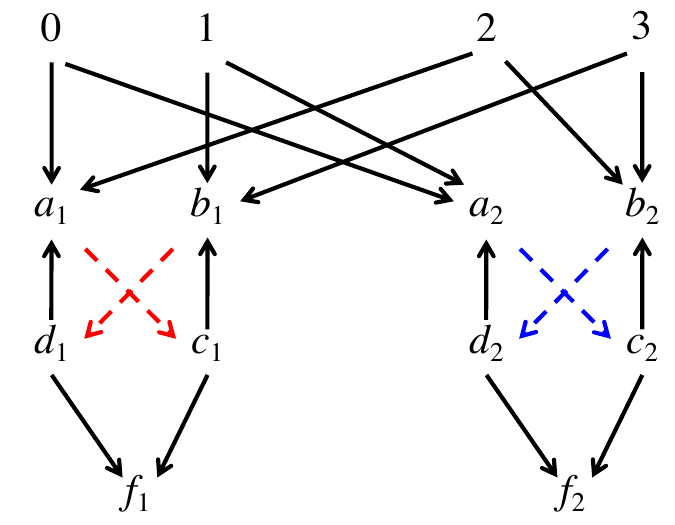} \\
    \caption{Construction for ranked pairs. $a_i\ra c_i$ and $b_i\ra d_i$ have the same weights. The figure only shows Group 1 edges.}
\label{fig:rp}
\end{figure}
For each $i\le \lceil \log k\rceil$, if $a_i\ra c_i$ is fixed before (respectively, after) $b_i\ra d_i$, then there is a path from each $0\le j\le k-1$ whose $i$-th digit from right in binary is $0$ (respectively, $1$) to $f_i$. Therefore, after edges in Group 1 are fixed by breaking ties between $\{a_i\ra c_i, b_i\ra d_i\}$ for all $i\le \lceil \log k\rceil$, there is a unique alternative $0\le j\le k-1$ that has a path to all $f_i$'s. Then, each alternative in $\{0,\ldots, k-1\}$ except $j$ will have an incoming edge from at least one $f_i$. It follows that $j$ will be the ranked pairs winner. We note that each alternative in $\{0,\ldots, k-1\}$ can be made to win by breaking ties in some way. 

Therefore, in $\eo(G)$ we have $T_0=\emptyset$ and there are $\lceil \log k\rceil$ tiers containing more than one edge, each of which contain exactly two edges. This means that $\ties(\eo(G))=\lceil \log k\rceil$, which means that  $\ell_{\min}\le \lceil \log k\rceil$. 

\paragraph{\bf \boldmath When $k=2$.} The $\Omega(n^{-0.5})$ lower bound has already been proved for general $k$. Therefore, it suffices to prove $\ell_{\min} \ge 1$. Suppose for the sake of contradiction that this is not true. Then, there exists a palindromic order $O$ such that $\ties(O)<1$ and $|\rp(O)|=2$. However, $\ties(O)=0$ means that no edges are tied in $\wmg(P)$. Therefore, $|\rp(O)|=1\ne 2$, which is a contradiction.  

This completes the proof of Proposition~\ref{prop:rp}.
\end{proof}

\section{Appendix for Section~\ref{sec:mrerules}: STV and Coombs}
\label{app:mrse}

\subsection{Formal  Definitions and Statements of Results}

We first formally define multi-round score-based elimination (MRSE) rules, which are generalizations of STV and Coombs.
\begin{dfn}
\label{dfn:mrerule}
A {\em multi-round score-based elimination (MRSE)}  rule $\cor$ for $m$ alternatives is defined as a vector of  $m-1$ rules $(\cor_2,\ldots, \cor_{m})$ such that for each $2\le i\le m$, $\cor_i$ is an integer positional scoring voting rule over $i$ alternatives that outputs a {\em total preorder} over them according to their scores. Given a profile $P$ over $m$ alternatives, $\cor(P)$ is chosen in $m-1$ steps. For each $1\le i\le m-1$, in round $i$ a loser (alternative in the lowest tier) under $\cor_{m+1-i}$ is eliminated from the election. An alternative $b$ is a winner if there is a way to break ties so that $b$ is the remaining alternative after $m-1$ rounds of elimination.
\end{dfn}
We now define the counterpart to palindromic orders in Section~\ref{sec:eorules}, called {\em PUT structures}, that contain all information needed to determine the winners for MRSE rules.
\begin{dfn}[\bf PUT structure]
\label{dfn:PUT}
A {\em PUT structure} over $\ma$ is a mapping $\puts$ that maps each $B\subsetneq \ma$ to a total pre-order over $\ma\setminus B$. Let $\calW_{\ma}$ denote the set of all PUT structures over $\ma$.

 For any MRSE rule $\cor$ and any profile $P$, let $\pc{\cor}{P}$ denote the PUT structure $\puts$ such that for any $B\subsetneq \ma$, $\puts(B)=\cor_{m-|B|}(P|_{\ma\setminus B})$ is a total preorder over $\ma\setminus B$, where $P|_{\ma\setminus B}$ is the profile over $\ma\setminus B$ that is obtained from $P$ by removing all alternatives in $B$.
\end{dfn}
We note that $\pc{\cor}{P}$ depends on both $\cor$ and $P$. A PUT structure $\puts$ can be equivalently represented by a {\em PUT graph}, where the nodes are non-empty subsets of $\ma$, each node $B$ is labeled by $\puts(B)$, and there is an edge between node $B$ and $B'$ if $B' = B\cup\{a\}$ and $a$ is in the lowest tier of $\puts(B)$. With a little abuse of notation, we will also use $W$ to denote its PUT graph.

The {\em major component} of $\puts$ is its subgraph whose nodes are reachable from $\emptyset$, denoted by $\mc(\puts)$. For example, a profile $P$ and the major component of $\pc{\stv}{P}$ are shown in Figure~\ref{fig:PUTtree}. We note that the winners only depend on the major  component of the PUT structure of the profile.

\begin{dfn}[\bf Tier representation and refinement of PUT structures]
For any PUT structure $\puts\in\calW_\ma$ and any $B\subsetneq \ma$, let $T_1\succ \cdots\succ T_s$ denote the {\em tier representation} of $\puts(B)$, where for each $i\le s$, alternatives in $T_i$ are tied. Let $\ties(\puts(B)) = \sum_{i\le s}(|T_i|-1)$ and let $\ties(\puts)=\sum_{B\subsetneq \ma} \ties(\puts(B))$. 

A PUT structure $\puts_1$ {\em refines} another PUT structure $\puts_2$, if for all $B\subsetneq \ma$, $\puts_1(B)$ refines $\puts_2(B)$ in the total preorder sense.
\end{dfn}

\begin{ex}Figure~\ref{fig:PUTtree} shows a profile $P$ and the PUT graph of $\pc{\stv}{P}$. $\mc(\pc{\stv}{P})$ is the subgraph whose nodes are $\{\emptyset, \{1\}, \{2\}, \{1,2\}, \{1,3\}, \{2,3\}, \{1,2,4\}, \{1,3,4\}, \{2,3,4\}\}$. Let $\puts=\pc{\stv}{P}$. We have $\ties(\puts(\emptyset))=\ties(\puts(\{1\}))=\ties(\puts(\{2\}))=1$, and $\ties(\puts(B))=0$ for any other $B\subsetneq \ma$,  which means that $\ties(\puts)=3$. 

\begin{figure}[htp]
\centering
\begin{tabular}{cc}
\begin{minipage}[t]{0.4\linewidth}
\centering
\begin{tabular}{|c|c|}
\hline  preferences & multiplicity \\
\hline $1\succ 3\succ 2\succ 4$ & $1$ \\
\hline $1\succ 2\succ 3\succ 4$ & $2$ \\
\hline $2\succ 1\succ 3\succ 4$& $3$ \\
\hline $3\succ 2\succ 1\succ 4$& $4$ \\
\hline $4\succ 1\succ 2\succ 3$& $6$ \\
\hline
\end{tabular}
\end{minipage}
&
\begin{minipage}{0.55\linewidth}
\centering
\includegraphics[width = \textwidth]{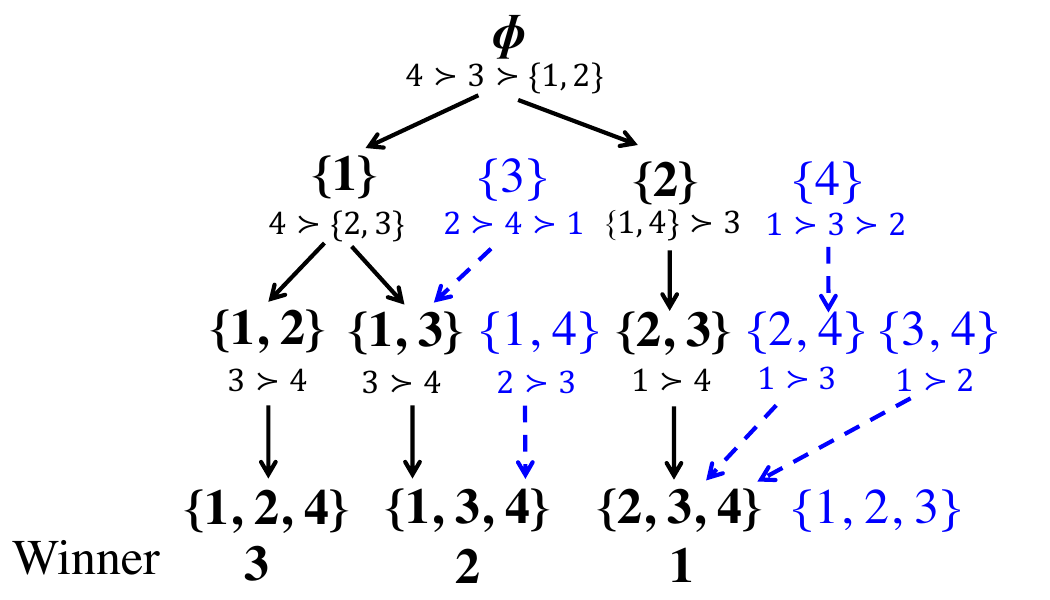}
\end{minipage}

\\
(a) A profile $P$. & (b) PUT graph of $\pc{\stv}{P}$ and its major component.
\end{tabular}
\caption{A preference profile $P$, its PUT structure $\pc{\stv}{P}$ under STV and the corresponding PUT graph. 
The major component consists of the subgraph with the nine nodes in black. 
\label{fig:PUTtree}}
\end{figure}
\end{ex}
Next, we define notation that will be used to present the theorem on the smoothed likelihood of ties under STV.

For any weak order $O$ with tier representation $T_1\succ \cdots \succ T_t$, where elements in any $T_i$ are tied, we let $\GCD{O}$ denote the greatest common divisor of the sizes of the $t$ tiers in $O$. That is,
$$\GCD{O}= \gcd (|T_1|,\ldots,|T_t|)$$
\begin{ex}
Let $W$ denote the PUT structure in Figure~\ref{fig:PUTtree} (b). We have $\GCD{W(\emptyset)} = \GCD{3\succ 4\succ \{1,2\}} =\gcd(1,1,2)=1$. As another example, $\GCD{\{3,4\}\succ \{1,2\}}=\gcd(2,2)=2$.
\end{ex}
The next lemma states that for any sufficiently large $n$, $W$ can be represented by the PUT structure under STV or Coombs of some $n$-profile if and only if $n$ is divisible by $\GCD{W(B)}$ for all $B\subsetneq \ma$.

\begin{lem}[\bf PUT structures under STV and Coombs]
\label{lem:construction-STV-Coombs} 
Let $\cor\in\{\stv,\coombs\}$. For any $m\ge 3$, there exists $N\in\mathbb N$ such that for any PUT structure $W$ over $\ma$ and any $n\ge N$, 
$$\left(\exists P\in\ml(\ma)^n\text{ s.t. } \pc{\cor}{P}=W \right) \Longleftrightarrow \left(\forall B\subsetneq \ma, \GCD{W(B)}\mid n\right)$$
\end{lem}
The proof can be found in Appendix~\ref{app:stv-construction-proof}. In light of Lemma~\ref{lem:construction-STV-Coombs}, for any $\ma$ and any $n\in\mathbb N$, we define  $\calW_n$ to be the PUT structures over $\ma$ such that for every $B\subsetneq \ma$, $\gcd(W(B))\mid n$. 
\begin{dfn}
Let $\cor\in\{\stv,\coombs\}$.  Given any $\ma$, any $n\in\mathbb N$,   any $2\le k\le m$, and any distribution $\pi$ over $\ml(\ma)$, let
$$\calW_{\cor,k,n}^{\pi} = \{\puts\in\calW_n: |\cor(\puts)|=k\text{ and }\puts\text{ refines }\pc{\cor}{\pi}\}$$
 For any set of distributions $\Pi$ over $\ml(\ma)$, let $\calW_{\cor,k,n}^{\Pi} = \bigcup_{\pi\in\conv(\Pi)}\calW_{\cor,k,n}^{\pi}$.  When $\calW_{\cor,k,n}^{\Pi}\ne \emptyset$, we let 
 $$w_{\min} = \min\nolimits_{\puts\in  \calW_{\cor,k,n}^{\Pi}}\ties(\puts).$$ 
 When $\calW_{\cor,k,n}^{\pi}\ne \emptyset$ for all $\pi\in\conv(\Pi)$, we let 
 $$w_{\text{mm}} = \max\nolimits_{\pi\in\conv(\Pi)}\min\nolimits_{\puts\in  \calW_{\cor,k,n}^{\pi}}\ties(\puts)$$
\end{dfn}

In words, $\calW_n$ consists of all PUT structures resulted from all $n$-profiles under $\cor\in \{\stv,\coombs\}$.  $\calW_{\cor,k,n}^{\pi}$ is the set of PUT structures $\puts$ that satisfies three conditions:  (1) $\puts$ is the PUT structure of an $n$-profile under $\cor$; (2) there are exactly $k$ winners in $\puts$ according to $\cor$, and (3) $\puts$ refines $\pc{\cor}{\pi}$. $\calW_{\cor,k,n}^{\Pi}$ is the union of all PUT structures corresponding to all $\pi\in \conv(\Pi)$. $w_{\min}$ is the minimum number of ties in PUT structures in $\calW_{\cor,k,n}^{\Pi}$. $w_{\text{mm}}$ is the maximin number of ties, where the maximum is taken for all $\pi\in \conv(\Pi)$, and for any given $\pi$, the minimum is taken for all PUT structures in $\calW_{\cor,k,n}^{\pi}$.

We note that $w_{\min}$ and $w_{\text{mm}}$ depend on  $\cor\in \{\stv,\coombs\}$, $\Pi$,  $k$, and $n$, which are clear from the context. In fact, $w_{\min}$ and $w_{\text{mm}}$ correspond to $m!-\alpha_n$ and $m!-\beta_n$ in Theorem~\ref{thm:union-poly}, respectively, and we have the following theorem for STV and Coombs.


\appThm{thm:STV-Coombs}{\bf Smoothed likelihood of ties: STV and Coombs}{Let $\cor\in\{\stv,\coombs\}$ and let $\mm= (\Theta,\ml(\ma),\Pi)$ be a strictly positive and closed single-agent preference model. For any $2\le k\le m$ and any $n\in\mathbb N$,  
\begin{align*}
&\slt{\Pi}{\cor}{m}{k}{n}=\left\{\begin{array}{ll}0 &\text{if } \forall W\in \calW_n, |\cor(W)|\ne k
\\
\exp(-\Theta(n)) &\text{otherwise if } \calW_{\cor,k,n}^{\Pi} = \emptyset\\
\Theta\left(n^{-\frac{w_{\min}}{2}}\right) &\text{otherwise}
\end{array}\right.\\
&\ilt{\Pi}{\cor}{m}{k}{n}=\left\{\begin{array}{ll}0 &\text{if } \forall W\in \calW_n, |\cor(W)|\ne k\\
\exp(-\Theta(n)) &\text{otherwise if } \exists \pi\in\conv(\Pi) \text{ s.t. }\calW_{\cor,k,n}^{\pi} = \emptyset\\
\Theta\left(n^{-\frac{w_{\text{mm}}}{2}}\right) &\text{otherwise}
\end{array}\right.
\end{align*}
}
The proof can be found in Appendix~\ref{appendix:proof-mre}. 
Like in Section~\ref{sec:eorules}, we next  characterize max smoothed likelihood of ties for STV and Coombs  for   distributions $\Pi$  where $\piuni\in \conv(\Pi)$. 



\appProp{prop:stv-Coombs}{\bf Max smoothed likelihood of ties: STV and Coombs}
{Let $\cor\in\{\stv,\coombs\}$ and  let $\mm= (\Theta,\ml(\ma),\Pi)$ be a strictly positive and closed single-agent preference model  with $\piuni\in \conv(\Pi)$. For any $2\le k\le m$ and any $n\in\mathbb N$,   
$$\slt{\Pi}{\cor}{m}{k}{n}=\left\{\begin{array}{ll}\Theta(n^{-\frac{k-1}{2}})&\text{if }\left\{\begin{array}{l} (1) m\ge 4 \text{, or } \\
(2) m=3\text{ and }k=2, \text{ or }\\
(3) m=k=3\text{ and }(2\mid n \text{ or } 3\mid n)\end{array}\right.\\
0 &\text{otherwise (i.e., }m=k=3\text{,  }2\nmid n,\text{ and }3\nmid n\text{)}\end{array}\right.
$$
}

The proof of Proposition~\ref{prop:stv-Coombs}  can be found in Appendix~\ref{app:proof-stv}.

\subsection{Proof of Lemma~\ref{lem:construction-STV-Coombs}}
\label{app:stv-construction-proof}

\appLem{lem:construction-STV-Coombs}{PUT structures under STV and Combs}{
Let $\cor\in\{\stv,\coombs\}$. For any $m\ge 3$, there exists $N\in\mathbb N$ such that for any PUT structure $W$ over $\ma$ and any $n\ge N$, 
$$\left(\exists P\in\ml(\ma)^n\text{ s.t. } \pc{\cor}{P}=W \right) \Longleftrightarrow \left(\forall B\subsetneq \ma, \GCD{W(B)}\mid n\right)$$
}
\begin{proof} We first prove the Lemma for $\cor = \stv$, then comment on how to modify the proof for $\coombs$.

\paragraph{\bf Proof for STV.} Notice The $\Rightarrow$ direction for STV follows after noticing that the total plurality score of all alternatives after $B$ is removed, which is $n$, is divisible by $\GCD{W(B)}$. More precisely, for any $n$-profile $P$ such that $\pc{\stv}{P}=W$ and any $B\subsetneq \ma$, let the tier representation of $W(B)$ be: 
$$W(B)=T_1\succ T_2\succ \cdots \succ T_t$$ 
For every $s\le t$, let $\gamma_s$ denote the plurality score of any alternative in $T_s$ after $B$ is removed. That is,
$$\gamma_s = \score_{\plu}(P|_{\ma\setminus B}, a_s)  \text{ for any }a_s\in T_s$$
Therefore, we have 
$$n = \sum_{a\in (\ma\setminus B)}\score_{\plu}(P|_{\ma\setminus B}, a)= \sum_{1\le s\le t}\gamma_s|T_s|,$$
which is divisible by $\GCD{W(B)}$.

The $\Leftarrow$ direction is proved in the following steps.

\paragraph{\bf \boldmath Step 1 for $\Leftarrow$.} For any $B\subseteq \ma$  and any pair of alternatives $a,b\in \ma\setminus B$, in this step we define an $m!$-profile $P_{B,a,b}$ as the building block.

\begin{dfn}[\bf\boldmath  $P_{B,a,b}$]
\label{dfn:P-a-b}
For any pair of different alternatives $a,b$ and any $B\subsetneq (\ma\setminus\{a,b\})$, $P_{B,a,b}$ is obtained from $\ml(\ma)$ as follows. For each $B^*$ such that $B\subseteq B^*\subseteq (\ma\setminus\{a,b\})$,
\begin{itemize}
\item If $|B^*\setminus B|$ is an even number, then we replace $[B^*\succ b\succ a\succ \others]$  by $[B^*\succ a\succ b\succ \others]$, where ``$\others$'' represents the other alternatives, and alternatives in $B^*$ and ``$\others$'' are ranked w.r.t.~the lexicographic order.
\item If  $|B^*\setminus B|$ is an odd number, then we replace $[B^*\succ a\succ b\succ \others]$ by $[B^*\succ b\succ a\succ \others]$.
\end{itemize}
\end{dfn}

An example is shown in Table~\ref{tab:stv-construction}, where only the differences between $\ml(\ma)$ and $P_{\{1\},2,3}$ are shown.

\begin{table}[htp]
\centering
\begin{tabular}{|c|c|c|c|c|}
\hline
$B^*$ & $\{1\}$ & $\{1,4\}$ & $\{1,5\}$ & $\{1,4,5\}$\\
\hline $\ml(\ma)$& $1\succ 3\succ 2\succ 4\succ 5$& $1\succ 4\succ 2\succ 3\succ 5$&  $1\succ 5\succ 2\succ3 \succ 4$&  $1\succ 4\succ 5\succ 3\succ 2$\\
\hline $P_{\{1\},2,3}$& $1\succ 2\succ 3\succ 4\succ 5$& $1\succ 4\succ 3\succ 2\succ 5$&  $1\succ 5\succ 3\succ2 \succ 4$&  $1\succ 4\succ 5\succ 2\succ 3$\\
\hline 
\end{tabular}
\caption{Differences between $\ml(\ma)$ and $P_{\{1\},2,3}$ for $m=4$, $B=\{1\}$, $a=2$, and $b=3$. \label{tab:stv-construction}}
\end{table}

We now prove that $P_{B,a,b}$  satisfies two properties described in the following claim. The first property states that for any $B'\ne B$, after $B'$ is removed, the plurality scores of the remaining alternatives in $P_{B,a,b}|_{\ma\setminus B'}$ are equal. The second property states that after $B$ is removed, the plurality score of $a$ is increased by $1$, the plurality score of $b$ is decreased by $1$, and the plurality scores of  other alternatives are the same,  after $B$ is removed from $P_{B,a,b}$.

\begin{claim}[\bf\boldmath Properties of $P_{B,a,b}$]
\label{claim:P-a-b}
Let $P_{B,a,b}$ denote the profile defined in Definition~\ref{dfn:P-a-b}, 
\begin{itemize}
\item[]{(i)} for any $B'\subsetneq \ma$ with $B'\ne B$ and any $c\in \ma\setminus (B\cup\{a,b\})$, we have 
$$\score_{\plu}(P_{B,a,b}|_{\ma\setminus B'},c) = \score_{\plu}(P_{B,a,b}|_{\ma\setminus B'},a) = \score_{\plu}(P_{B,a,b}|_{\ma\setminus B'},b)$$
\item[]{(ii)} for any $c\in \ma\setminus (B\cup\{a,b\})$, we have 
$$\score_{\plu}(P_{B,a,b}|_{\ma\setminus B},c) = \score_{\plu}(P_{B,a,b}|_{\ma\setminus B},a)-1 = \score_{\plu}(P_{B,a,b}|_{\ma\setminus B},b)+1$$
\end{itemize}
\end{claim}
\begin{proof}
We first verify that (i) holds after $B' \subsetneq \ma$ is removed, for the following cases of $B'$.
\begin{itemize}
\item If $B\not\subseteq B'$, then after $B'$ is removed, at least one alternative in $B$ is not removed. According to the definition of $P_{B,a,b}$ (Definition~\ref{dfn:P-a-b}), the plurality scores of all remaining alternatives are the same, which verifies (i).
\item If $a\in B'$, then notice that $P_{B,a,b}$ only switch positions of $a$ and $b$ in some rankings where they are adjacent. Therefore, after $B'$ is removed, which means that $a$ is removed, (i) holds.
\item Similarly, we can show that if  $b\in B'$, then (i) holds.
\end{itemize}
The only remaining case is $B\subseteq B'\subseteq(\ma\setminus\{a,b\})$. Next, we prove that assuming $B\subseteq B'\subseteq(\ma\setminus\{a,b\})$, (ii) holds if and only if $B'=B$, and otherwise (i) holds (i.e., when $B\subsetneq B'\subseteq(\ma\setminus\{a,b\})$).

First, for any $c\in\ma\setminus (B'\cup\{a,b\}) $, notice that the difference between $P_{B,a,b}$ and $\ml(\ma)$ only affects the plurality scores of $a$ and $b$ after $B'$ is removed. Therefore, for any $B'$, we have
$$\score_{\plu}(P_{B,a,b}|_{\ma\setminus B'},c) = \score_{\plu}(\ml(\ma)|_{\ma\setminus B'},c)$$
Next, we calculate the difference between the plurality score of $a$ in $P_{B,a,b}$ after $B'$ is removed and the plurality score of $a$ in $\ml(\ma)$ after $B'$ is removed. This is done by examining the effect of switching the (adjacent) locations of $a$ and $b$ in some votes as described in  Definition~\ref{dfn:P-a-b}. More precisely, for every $B^*$ with $B\subseteq B^*\subseteq(\ma\setminus\{a,b\})$, we have the following observations.
\begin{itemize}
\item If $B^*\nsubseteq B'$, then  $a$ is not ranked in the top positions of $[B^*\succ a\succ b\succ \others]$ or  $[B^*\succ b\succ a\succ \others]$ after $B'$ is removed. Therefore, the difference between $P_{B,a,b}$ and $\ml(\ma)$ in the vote that corresponds to $B^*$ does not affect the difference in the plurality scores of $a$ in $P_{B,a,b}$ and $\ml(\ma)$ after $B'$ is removed.
\item If $B^*\subseteq B'$ and $|B^*\setminus B|$ is an even number, then according to the definition of $P_{B,a,b}$ (Definition~\ref{dfn:P-a-b}), $P_{B,a,b}$  replaces  $[B^*\succ b\succ a\succ \others]$ in $\ml(\ma)$ by $[B^*\succ a\succ b\succ \others]$. Notice that after $B'$ is removed, $b$ is ranked at the top position in the former ranking and $a$ is ranked at the top position in the latter ranking. Therefore, the plurality score of $a$ (respectively, $b$) is increased (respectively, decreased) by $1$ in 
$P_{B,a,b}$ after $B'$ is removed, compared to the plurality score of $a$ (respectively, $b$) in $\ml(\ma)$.
\item Similarly, if $B^*\subseteq B'$ and $|B^*\setminus B|$ is an odd number, then the plurality score of $a$ (respectively, $b$) is decreased (respectively, increased) by $1$ in 
$P_{B,a,b}$ after $B'$ is removed, compared to the plurality score of $a$ (respectively, $b$) in $\ml(\ma)$.
\end{itemize}
Therefore, we have the following calculation.
\begin{align*}
&\score_{\plu}(P_{B,a,b}|_{\ma\setminus B'},a) - \score_{\plu}(\ml(\ma)|_{\ma\setminus B'},a) \\
=&  \sum\nolimits_{B^*:B\subseteq B^*\subseteq B'} (-1)^{|B^*\setminus B|} = \sum\nolimits_{i=0}^{|B'|-|B|}(-1)^i{|B'|-|B| \choose i} =\left\{\begin{array}{ll}1&\text{if }B' =B\\
0&\text{otherwise}\end{array}\right.
\end{align*}
Similarly, we have the following calculation for $b$.
\begin{align*}
&\score_{\plu}(P_{B,a,b}|_{\ma\setminus B'},b) - \score_{\plu}(\ml(\ma)|_{\ma\setminus B'},b) = -\sum_{B^*:B\subseteq B^*\subseteq B'} (-1)^{|B^*\setminus B|} =\left\{\begin{array}{ll}-1&\text{if }B' =B\\
0&\text{otherwise}\end{array}\right.
\end{align*}
It follows that (ii) holds when $B'=B$, otherwise (i) holds. This proves Claim~\ref{claim:P-a-b}.
\end{proof}

\paragraph{\bf\boldmath  Step 2 for $\Leftarrow$.} For every $B\subsetneq \ma$, we use $P_{B,a,b}$  to fine-tune the plurality scores of alternatives when $B$ is removed, so that the weak order among them   becomes $W(B)$. For every $0\le n'\le m!-1$ such that for every $B\subsetneq \ma$, $\GCD{W(B)}\mid n'$, we construct a profile $P_{n'}$ such that $|P_{n'}|\equiv n' \pmod{m!}$ and $\pc{\stv}{P_{n'}}=W$ in the following steps.

\begin{itemize}
\item {\bf Step 2.1.} Let $n' = \lfloor \frac{n}{m!}\rfloor \times m!$. Initialize $P_{n'}$ to be $n'$ copies of $[1\succ2\succ\cdots\succ m]$.
\item {\bf Step 2.2.}  For every $B\subsetneq \ma$, let $a_B\in (\ma\setminus B)$ denote an arbitrary alternative. For every $b\in \ma\setminus (B\cup\{a_B\})$, we add $\score_{\plu}(P_{B,a_B,b}|_{\ma\setminus B},b)$ copies of $P_{B,a_B,b}$ to $P_{n'}$, which can be seen as transferring the plurality score of $b$ after to $a_B$. After this step, for every $B\subsetneq \ma$ and every $b\in \ma\setminus (B\cup\{a_B\})$, we have:
$$\score_{\plu}(P_{n'}|_{\ma\setminus B},a_B) = \score_{\plu}(P_{n'}|_{\ma\setminus B},b) + n'$$
\item {\bf Step 2.3.}  For every $B\subsetneq \ma$, we add multiple copies of $P_{B,a,b}$ for certain combinations of $(a,b)$ such that the plurality scores for the same tier are the same, while the order between tiers may not be consistent with $W(B)$. Notice that before Step 2.3, the plurality scores of alternatives in each tier are already the same, except the tier that contains the distinguished alternative $a_B$  defined in Step 2.3. Given $B\subsetneq \ma$, let $T_1\succ T_2\succ\cdots\succ T_t$ denote the tier representation of $W(B)$. Following the definition of $\GCD{W(B)}$, there exists $t$ integers $\eta_1,\ldots,\eta_t$ such that 
$$\GCD{W(B)} = \sum\nolimits_{s=1}^t \eta_s\times |T_s|$$
For every $1\le s \le t$, and every $b\in T_s$, we add the following profiles to $P_{n'}$, where we let  $P_{B,a_B,a_B}=\emptyset$ for convenience. 
\begin{itemize}
\item if $\eta_t\ge 0$, then we add $\frac{\eta_t n'}{\GCD{W(B)}}$ copies of $P_{B,a_B,b}$;
\item if $\eta_t<0$, then we add $\frac{\eta_t n'}{\GCD{W(B)}}$ copies of $P_{B,b,a_B}$.
\end{itemize}

\item {\bf Step 2.4.}  Finally, we will add multiple copies of $P_{B,a,b}$ for certain combinations of $(a,b)$ to $P_{n'}$ so that the order between tiers are the same as in $W$. Formally, for every $B\subsetneq \ma$ such that the tier representation of $W(B)$ is $T_1\succ \cdots\succ T_t$, and every $1\le s\le t-1$, we add $C_s$ copies of $\bigcup\nolimits_{a\in T_s, b\in T_t} P_{B,a,b}$ to $P_{n'}$, where $C_1,\ldots,C_t-1$ are constants to guarantee $W(B)$. This can be achieved because the effect of $\bigcup\nolimits_{a\in T_s, b\in T_t} P_{B,a,b}$ to $P_{n'}$ increases the plurality score of every alternative in $T_s$ by $|T_t|$ and reduces the plurality score of every alternative in $T_t$ by $|T_s|$, after $B$ is removed. 
\end{itemize}
It follows the construction that $|P_{n'}|\equiv n' \pmod{m!}$ and $\pc{\stv}{P_{n'}}=W$. 

\paragraph{\bf\boldmath  Step 3 for $\Leftarrow$.} Let $N=\max_{1\le n'\le m!-1}P_{n'}$. For any $n>N$, let $n'=n \pmod{m!}$, and 
$$P=P_{n'}+ \frac{n-n'}{m!}\times \ml(\ma)$$
It is not hard to verify that $P$ is an $n$-profile and $\pc{\stv}{P}=W$, which proves the $\Leftarrow$ part of Lemma~\ref{lem:construction-STV-Coombs}, and therefore completes the proof of the STV part of Lemma~\ref{lem:construction-STV-Coombs}.

\paragraph{\bf Proof for Coombs.} Notice that Coombs uses veto in each round, and the veto score of a ranking is the plurality score of the reverse ranking. In light of this connection, the proof  is similar to the STV part and the main difference is that  for Coombs, we use $P_{B,a,b}'$ that is defined from $P_{B,a,b}$ (Definition~\ref{dfn:P-a-b}) by reversing all rankings.  This completes the proof   of   Lemma~\ref{lem:construction-STV-Coombs}.
\end{proof}

\subsection{Proof of Theorem~\ref{thm:STV-Coombs}}
\label{appendix:proof-mre}
\appThm{thm:STV-Coombs}{Smoothed likelihood of ties: STV}{
Let $\mm= (\Theta,\ml(\ma),\Pi)$ be a strictly positive and closed single-agent preference model. There exists $N\in\mathbb N$ such that for any $2\le k\le m$ and any $n> N$,  
\begin{align*}
&\slt{\Pi}{\stv}{m}{k}{n}=\left\{\begin{array}{ll}0 &\text{if } \forall W\in \calW_n, |\stv(W)|\ne k
\\
\exp(-\Theta(n)) &\text{otherwise if } \calW_{\stv,k,n}^{\Pi} = \emptyset\\
\Theta\left(n^{-\frac{w_{\min}}{2}}\right) &\text{otherwise}
\end{array}\right.\\
&\ilt{\Pi}{\stv}{m}{k}{n}=\left\{\begin{array}{ll}0 &\text{if } \forall W\in \calW_n, |\stv(W)|\ne k\\
\exp(-\Theta(n)) &\text{otherwise if } \exists \pi\in\conv(\Pi) \text{ s.t. }\calW_{\stv,k,n}^{\pi} = \emptyset\\
\Theta\left(n^{-\frac{w_{\text{mm}}}{2}}\right) &\text{otherwise}
\end{array}\right.
\end{align*}
}

\begin{proof}  Like in the proof of Theorem~\ref{thm:score}, the theorem is proved by modeling the set of profiles with $k$ winners as the union of constantly many polyhedra, then applying Theorem~\ref{thm:union-poly}. As in the proof of Lemma~\ref{lem:construction-STV-Coombs}, we first present the proof for $\cor=\stv$, then comment on how to modify the proof for $\coombs$.

\paragraph{\bf Proof for STV.}  The proof proceeds in the following three steps.   In Step 1, for each PUT structure $\puts$, we define a polyhedron $\ppoly{\puts}$ that characterizes the profiles whose PUT structures are $\puts$. In Step 2, we prove properties about $\ppoly{\puts}$, in particular $\dim(\ppolyz{\puts}) = m! - \ties(\puts)$. In Step 3 we formally apply Theorem~\ref{thm:union-poly} to $\upoly = \bigcup_{W\in \calW_n: |\stv(W)|=k}\ppoly{W}$.

\paragraph{\bf \boldmath  Step 1 for STV:  Define $\bm{\ppoly{\puts}}$.} We first define the vectors that will be used in the construction.

\begin{dfn}[\bf Pairwise difference vector for STV]
\label{dfn:pairwise-difference-STV}
For any pair of different alternatives $a,b$ and any  $B\in\ma\setminus\{a,b\}$, we let $\pair_{B,a,b}$ denote the vector such that for any $R\in\ml(\ma)$, the $R$-th component of $\pair_{B,a,b}$ is $\score_{\plu}(R|_{\ma\setminus B}, a)-\score_{\plu}(R|_{\ma\setminus B}, b)$.
\end{dfn}
In words, $\pair_{B,a,b}$ represents the difference between the plurality score of $a$ and the plurality score of $b$ after $B$ is removed. We now use these vectors to construct a polyhedron that corresponds to a given PUT structure $W$.

\begin{dfn}[\boldmath $\ppoly{W}$] Given a PUT structure $W$, for any $B\subsetneq \ma$, $\ppoly{W}$ is characterized by the following constraints $\pba{W}\cdot\invert{\vec x_\ma}\le \invert{\vec b}$: for any pair of alternatives $a\succ_{\puts(B)} b$, there is an inequality $\pair_{B,b,a}\cdot\vec x_\ma\le -1$; and for any pair of alternatives $a\equiv_{W_{B}} b$, there is an inequality $\pair_{B,b,a}\cdot\vec x_\ma\le 0$.
\end{dfn}

\paragraph{\bf \boldmath  Step 2 for STV: Prove properties about $\ppoly{\puts}$.} We prove the following claim about $\ppoly{W}$. 
\begin{claim}[\boldmath \bf Properties of $\ppoly{W}$]\label{claim:solPUTconfiguration}
For any PUT structure $\puts$, we have:
\begin{enumerate}[label=(\roman*)]
\item for any profile $P$, $\hist(P)\in \ppoly{W}$ if and only if $\pc{\stv}{P} = \puts$;
\item for any $\vec x\in\mathbb R^{m!}$, $\vec x\in {\ppolyz{W}}$ if and only if $\puts$ refines $\pc{\stv}{\vec x}$;
\item $\dim(\ppolyz{W})= m! - \ties(W)$.
\end{enumerate}
\end{claim}
\begin{proof}
(i) and (ii) follow after the definition of $\ppoly{\puts}$. To prove part (iii), again, according to equation (9) on page 100 in~\citep{Schrijver1998:Theory}, it suffices to prove that the rank of the essential equalities of $\pba{\puts}$, denoted by $\ba^{=}$, is $\ties(W)$. We first prove that 
\begin{equation}
\label{equ:mrse-essential-equalities}\ba^{=} =\{\pair_{B,a,b}:\forall B\subsetneq \ma\text{ and }a\equiv_{\puts(B)} b\},
\end{equation}
where $a\equiv_{\puts(B)} b$ means that $a$ and $b$ are tied in $\puts(B)$. Clearly $\ba^{=}$ contains the right hand side of Equation (\ref{equ:mrse-essential-equalities}). By Lemma~\ref{lem:construction-STV-Coombs}, there exists an $(m!)$-profile $P^*$ such that $\pc{\stv}{P^*}=W$. Therefore,   $\hist(P^*)$ is an inner point of $\ppoly{W}$ in the sense that all inequalities not mentioned in the right hand side of Equation (\ref{equ:mrse-essential-equalities}) are strict under $\hist(P^*)$, which means that Equation (\ref{equ:mrse-essential-equalities}) holds.

We now prove that $\rank(\ba^=) = \ties(\puts)$.  For any $B\subsetneq \ma$, let $\puts(B)  = T_1\succ \cdots\succ T_s$ denote the tier presentation of $\puts(B)$. Let $\ba_B$  be the sub-matrix of $\ba^=$ that consists of $\pair_{B,b,a}$ for all pairs of alternatives $a\equiv_{\puts(B)}b$. 
It is not hard to check that $\rank(\ba_B) \le  \ties(\puts)$, because for any $T_i$, each vector in $\{\pair_{B,a,b}:a,b\in T_i\}$ can be represented as the linear combination of a subset of $|T_i|-1$ linear orders. This means that $\rank(\ba^=) \le \sum_{B\subsetneq \ma}\ties(\puts(B)) =\ties(\puts)$.

$\rank(\ba^=) \ge \ties(\puts)$ follows after Lemma~\ref{lem:construction-STV-Coombs}. More precisely, suppose for the sake of contradiction that $\rank(\ba^=) < \ties(\puts)$ and let $A$ denote an arbitrary row basis of $\ba^=$. Due to the pigeon hole principle, there exists $B\subsetneq \ma$ such that $A$ contains no more than $\ties(\puts(B))$ rows in $\ba_B$, which means that there exists a tier $T_i$ such that $A$ contains no more than $|T_i|-2$ rows in $\{\pair_{B,a,b}:a,b\in T_i\}$. This means that $T_i$ can be partitioned into two sets $\{a_1,\ldots,a_{t_1}\}$ and $\{b_1,\ldots,b_{t_2}\}$ such that $A$ does not contain $\pair_{B,a_{i_1},b_{i_2}}$ for any combination  of $1\le i_1\le t_1$ and $1\le i_2\le t_2$.   By Lemma~\ref{lem:construction-STV-Coombs}, there exists an $(m!)$-profile $P$ such that 
\begin{itemize}
\item for all $B'\ne B$, all alternatives are tied after $B'$ is removed, and
\item $\pc{\cor}{P}(B)$ consists of two tiers: the first tier is $T_1\cup\cdots\cup T_{i-1}\{a_1,\ldots,a_{t_1}\}$ and the second tier is $ \{b_1,\ldots,b_{t_2}\}\cup T_{i+1}\cup\cdots\cup T_{t}$. 
\end{itemize}
It is not hard to check that 
$$ A\cdot\invert{\hist(P)} = \invert{\vec 0} \text{ but }\pair_{B,a_1,b_1}\cdot \hist(P)\ne 0.$$ 
This means that $\pair_{B,a_1,b_1}\in \ba^=$ is not a linear combination of rows in $A$, which contradicts the assumption that $A$ is a basis of $\ba^=$.

Therefore, $\rank(\ba^=) = \ties(W)$. This completes the proof of Claim~\ref{claim:solPUTconfiguration}.\end{proof}

\paragraph{\bf \boldmath Step 3 for STV: Apply Theorem~\ref{thm:union-poly}.}  Let $\upoly = \bigcup_{\puts\in \calW_n: |\stv(\puts)|=k}\ppoly{\puts}$. It follows that for any profile $P$, $|\stv(P)| = k$ if and only if $\hist(P)\in \upoly$. Therefore, for any $n$ and any $\vec\pi\in\Pi^n$, we have:
$$\Pr\nolimits_{P\sim\vec\pi} (|\stv(P)| = k) = \Pr(\vXp\in \upoly)$$
Recall that for all PUT structure $\puts$ over $\ma$, by Claim~\ref{claim:solPUTconfiguration} (iii) we have $\dim(\ppoly{\puts}) = m!-\ties(P)+1$. This means that $\alpha_n =m!-\ell_{\min}$ and $\beta_n = m!-\ell_{\text{mm}}$ when $\puts\in \calW_n$. 

Therefore, to prove Theorem~\ref{thm:STV-Coombs}, it suffices to prove that the conditions for the $0$, exponential, and polynomial cases in Theorem~\ref{thm:STV-Coombs} are equivalent to the conditions for the $0$, exponential, and polynomial cases in Theorem~\ref{thm:union-poly} (applied to $\upoly$ and $\Pi$), respectively.  This follows a similar reasoning as in Step 3 of the proof of Theorem~\ref{thm:score} combined with Claim~\ref{claim:solPUTconfiguration} (ii).
This completes the proof of the STV part of Theorem~\ref{thm:STV-Coombs}.

\paragraph{\bf Proof for Coombs.} As in the proof of the Coombs part of Lemma~\ref{lem:construction-STV-Coombs}, notice that Coombs uses veto in each round, and the veto score of a ranking is the plurality score of the reverse ranking. Therefore, the proof is similar to the STV part and the main difference is that  for Coombs, when defining the pairwise score difference vectors (Definition~\ref{dfn:pairwise-difference-STV}), we use the veto rule instead of the plurality rule.  This completes the proof   of   Lemma~\ref{lem:construction-STV-Coombs}.
\end{proof}

\subsection{Proof of Proposition~\ref{prop:stv-Coombs}}
\label{app:proof-stv}

\appProp{prop:stv-Coombs}{Max smoothed likelihood of ties: STV and Coombs}{
Let $\cor\in\{\stv,\coombs\}$ and  let $\mm= (\Theta,\ml(\ma),\Pi)$ be a strictly positive and closed single-agent preference model  with $\piuni\in \conv(\Pi)$. For any $2\le k\le m$ and any $n\in\mathbb N$,   
$$\slt{\Pi}{\cor}{m}{k}{n}=\left\{\begin{array}{ll}\Theta(n^{-\frac{k-1}{2}})&\text{if }\left\{\begin{array}{l} (1) m\ge 4 \text{, or } \\
(2) m=3\text{ and }k=2, \text{ or }\\
(3) m=k=3\text{ and }(2\mid n \text{ or } 3\mid n)\end{array}\right.\\
0 &\text{otherwise (i.e., }m=k=3\text{,  }2\nmid n,\text{ and }3\nmid n\text{)}\end{array}\right.
$$
}
\begin{proof} According to  Theorem~\ref{thm:STV-Coombs}, we will prove that (1) the exponential case does not happen,  and (2) $w_{\min} = k-1$ in the polynomial cases.

To see (1), notice that if the $0$ case of Theorem~\ref{thm:STV-Coombs} does not hold, then there exists $W\in \calW_n$ such that $|\stv(W)|=k$. Also notice that $\piuni\in \conv(\Pi)$ and $\pc{\stv}{\piuni}$ is the PUT structure where all alternatives are tied after any $B\subsetneq \ma$ is removed, which means that any PUT structure is a refinement of $\pc{\stv}{\piuni}$. Therefore,  $\calW_{\cor,k,n}^{\Pi}\ne \emptyset$, which means that the exponential case of Theorem~\ref{thm:STV-Coombs} does not hold.

To see (2), notice that for any PUT structure $W$, the outgoing edges of any node labeled $B\subsetneq \ma$ with $|B|\le m-2$, the number of outgoing edges of $B$ is $\ties(W(B))+1$. Therefore,  $\ties(W)$ is at least the total number of  nodes in $W$ that do not have outgoing edges minus one. It follows that for any PUT structure $W$ with $|\stv(W)|=k$, we must have $\ties(W)\ge k-1$, which means that $w_{\text{mm}}\ge k-1$. 

In the sequel, we will prove $w_{\text{mm}}=k-1$ by construction for the polynomial cases described  in  Proposition~\ref{prop:stv-Coombs}. More precisely, we will prove that there exists an $n$-profile $P$ such that $\ties(\pc{\stv}{P})=k-1$ and $|\stv(\pc{\stv}{P})|=k$. In light of Lemma~\ref{lem:construction-STV-Coombs}, it suffices to construct a PUT structure $W$ such that 
$$\text{(1) }\ties(W)=k-1\text{, (2) }|\stv(W)|=k\text{,  and (3) for every }B\subsetneq \ma, \GCD{W(B)}\mid n$$

\paragraph{\bf \boldmath  $ \Theta(n^{-\frac{k-1}{2}})$ subcase (1): $m\ge 4$.} When $m\ge 4$, we let $W$ denote the PUT structure such that 
\begin{align*}
&W(\emptyset) = k\succ k+1\succ\cdots\succ m\succ \{1,\ldots,k-1\}\\
&W(\{1\}) = 2\succ \cdots \succ k-2\succ k+1\succ \cdots \succ m\succ \{k-1,k\},
\end{align*}
and $\puts(B)$ for any other $B$ is a linear order so that in the graph representation of $\puts$, there are $k$ nodes without outgoing edges, whose winners are $\{1,\ldots,k\}$. See Figure~\ref{fig:STVPUTgraph} (a) for an example of $m=k=4$.

\begin{figure}[htp]
\centering
\begin{tabular}{cc}
\begin{minipage}[b]{0.58\linewidth}
\centering
\includegraphics[width = \textwidth]{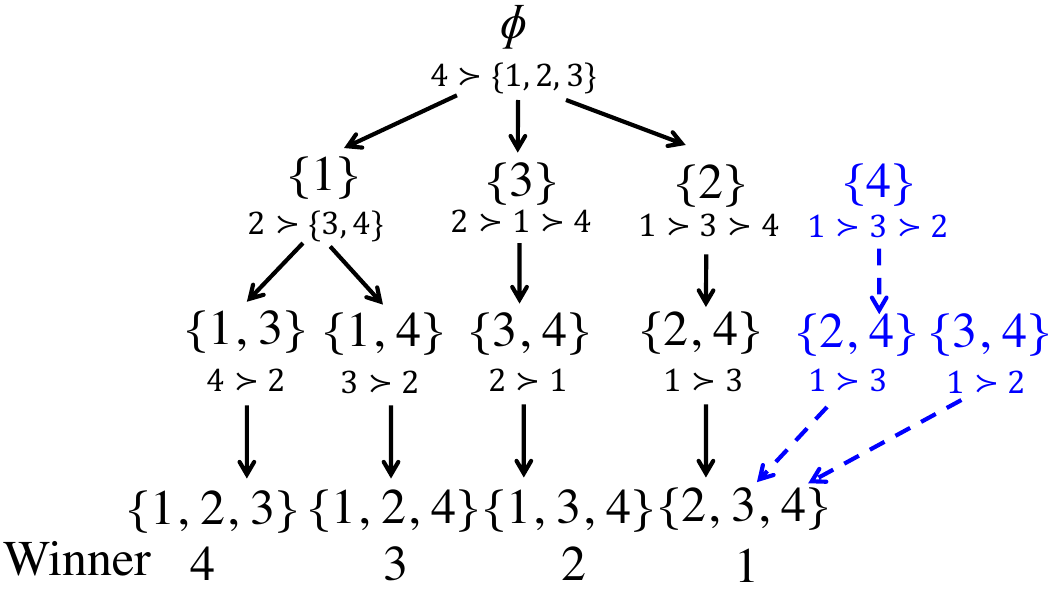}
\end{minipage}
&
\begin{minipage}[b]{0.42\linewidth}
\centering
\includegraphics[width = \textwidth]{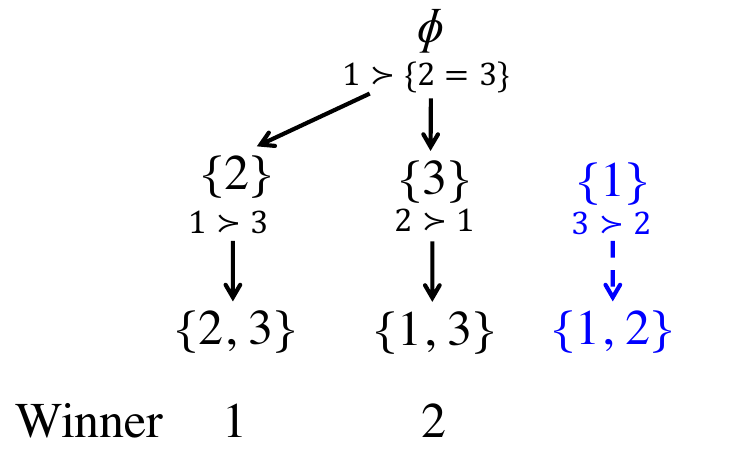}
\end{minipage}
\\
(a) $m=k=4$. & (b) $m=3$ and $k=2$.
\end{tabular}
\caption{PUT structures for STV.
\label{fig:STVPUTgraph}}
\end{figure}

It is not hard to verify that $\ties(W)=k-1$ and $|\stv(W)|=k$. Notice that for every $B\subsetneq \ma$, there exists a tier in $W(B)$ that consists of a single alternatives, which means that $\GCD{W(B)}=1$ and therefore, $W\in\calW_n$.  This proves the proposition for the $m\ge 4$ case.

\paragraph{\bf \boldmath  $ \Theta(n^{-\frac{k-1}{2}})$ subcase (2): $m =3$ and $k=2$.} The  $m =3$ and $k=2$ case is proved by the PUT structure illustrated in Figure~\ref{fig:STVPUTgraph} (b).

\paragraph{\bf \boldmath  $ \Theta(n^{-\frac{k-1}{2}})$ subcase (3): $m =k=3$ and ($2\mid n$ or $3\mid n$).}  The case where $m = k=3$ and $2\mid n$ is proved by the PUT structure $W_1$ illustrated in Figure~\ref{fig:STVPUTgraphm=3} (a). 
The case where $m = k=3$ and $3\mid n$ is proved by the PUT structure $W_2$ illustrated in Figure~\ref{fig:STVPUTgraphm=3} (b). 

\begin{figure}[H]
\centering
\begin{tabular}{cc}
\begin{minipage}[b]{0.4\linewidth}
\centering
\includegraphics[width = \textwidth]{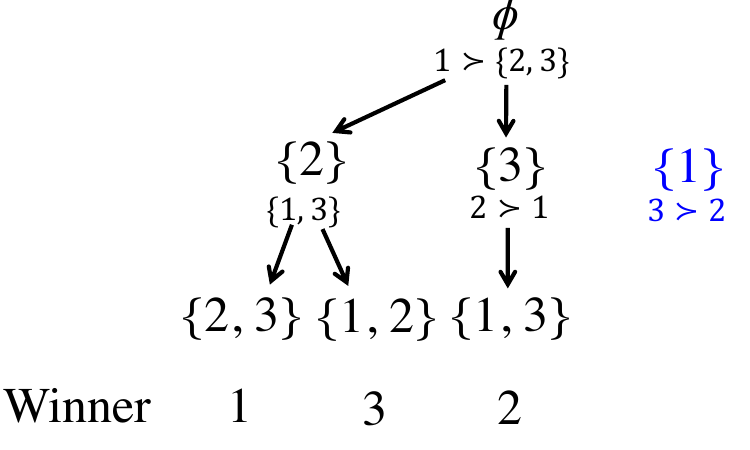}
\end{minipage}
&
\begin{minipage}[b]{0.4\linewidth}
\centering
\includegraphics[width = \textwidth]{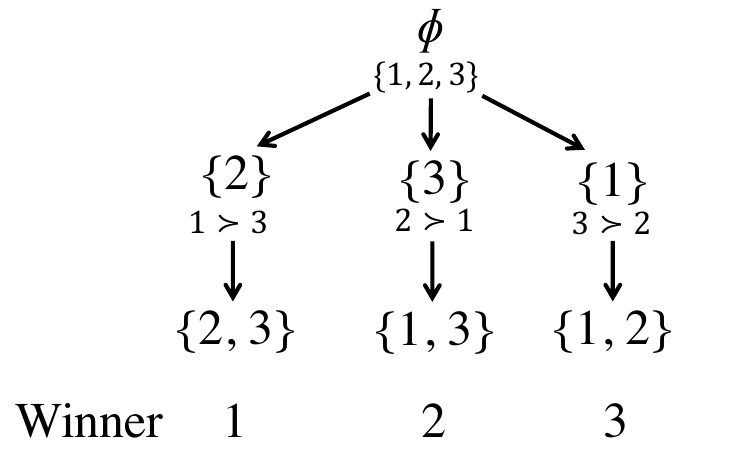}
\end{minipage}

\\
(a) $2\mid n$. & (b) $3\mid n$.
\end{tabular}
\caption{PUT structures for STV.
\label{fig:STVPUTgraphm=3}}
\end{figure}

\paragraph{\bf \boldmath  The $0$ case.} When $m = k=3$, $2\nmid n$, and $3\nmid n$, suppose for the sake of contradiction there exists an $n$-profile $P$ such that $|\stv(P)|=3$. Let $W = \pc{\stv}{P}$. Then, either $\puts(B)$  is a tie among the three alternatives for some $B\subsetneq \ma$ (which means that $\GCD(W(B))=3$, and therefore $3\mid n$), or $\puts(B)$ is a tie among two alternatives for some $|B|=1$ (which means that $\GCD(W(B))=2$, and therefore $2\mid n$). Either case leads to a contradiction. Therefore, $\calW_{\cor,k,n}^{\Pi}=\emptyset$ in this case.
\end{proof}

\section{Experimental Results for All Rules}
\label{sec:exp-full}
\begin{figure}[htp]
\centering
  \includegraphics[width = 0.8\linewidth]{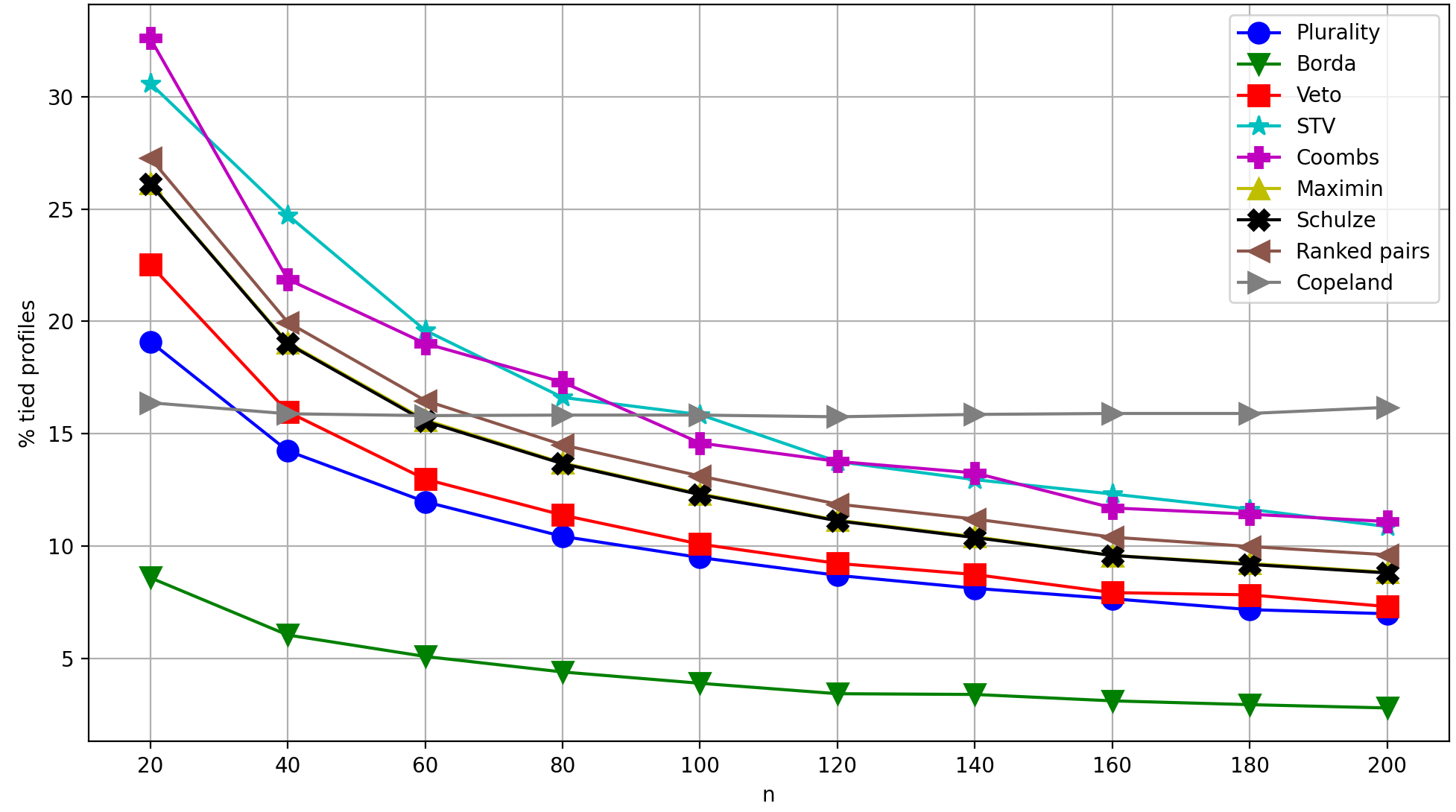}
\caption{\small Percentage of tied profiles under IC. \label{fig:exp-full}
}
\end{figure}

\begin{figure}[htp]
\centering
\begin{tabular}{cc}
  \includegraphics[width = 0.5\linewidth]{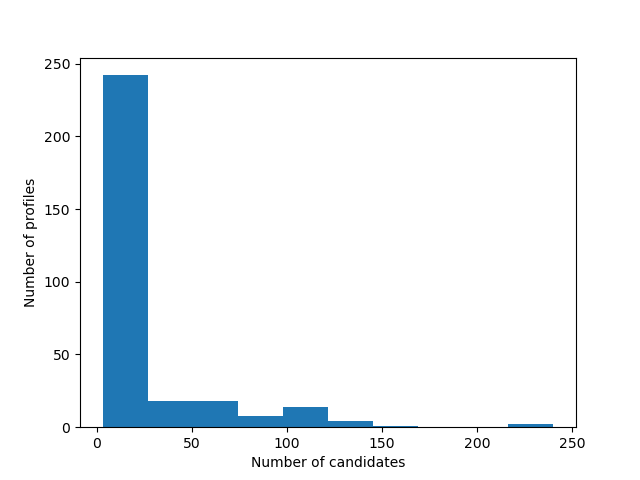} &
    \includegraphics[width = 0.5\linewidth]{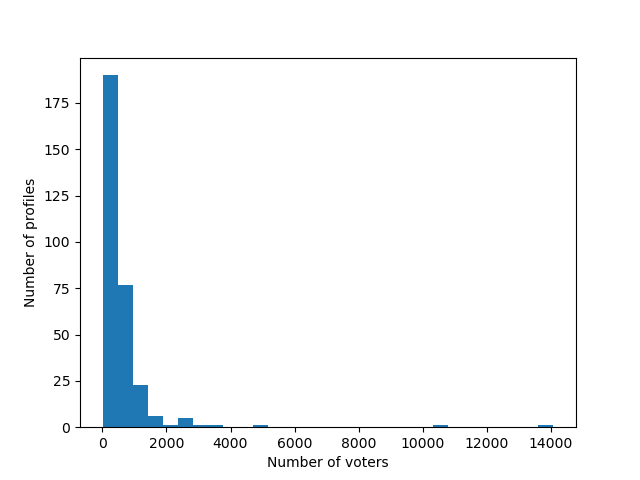} 
\end{tabular}
\caption{\small Histograms of number of candidates and number of voters in the 307 Preflib SOC data studied in this paper. \label{fig:ex-histograms}}
\end{figure}

\end{document}